\documentclass{article}
\usepackage{amssymb,amsmath,amsthm}
\usepackage{mathrsfs}
\usepackage{bbm}   
\usepackage{graphicx}
\usepackage[left=2.6cm,right=2.6cm]{geometry}
\usepackage{tikz}
\usepackage{subfigure}
\usepackage{multirow}


\definecolor{myurlcolor}{rgb}{0,0,0.7}
\definecolor{myrefcolor}{rgb}{0.8,0,0}
\usepackage{hyperref}
\hypersetup{colorlinks,
linkcolor=myrefcolor,
citecolor=myurlcolor,
urlcolor=myurlcolor}


\usepackage[all]{xy}
\input xy
\xyoption{all}
\xyoption{arc}
\xyoption{line}

\relpenalty=9999
\binoppenalty=9999

\newcommand{\circledentry}[1]{%
  \tikz \node[anchor=south west, draw,circle, inner sep=1pt, minimum size=4mm,
    text height=2mm]{#1} ;}
\newcommand{\squaredentry}[1]{%
  \tikz \node[anchor=south west, draw, inner sep=3pt, minimum size=4mm,
    text height=2mm]{#1} ;}
\newcommand{\uncircledentry}[1]{%
  \tikz \node[anchor=south west, circle, inner sep=1pt, minimum size=4mm,
    text height=2mm]{#1} ;}

\newcommand{\beq}{\begin{displaymath}}
\newcommand{\eeq}{\end{displaymath}}
\newcommand{\beqn}{\begin{equation}}
\newcommand{\eeqn}{\end{equation}}
\newcommand{\beqa}{\begin{eqnarray*}}
\newcommand{\eeqa}{\end{eqnarray*}}
\newcommand{\beqna}{\begin{eqnarray}}
\newcommand{\eeqna}{\end{eqnarray}}

\newcommand{\eq}[1]{~(\ref{#1})}
\newcommand{\eps}{\varepsilon}
\newcommand{\N}{\mathbb{N}}
\newcommand{\Z}{\mathbb{Z}}
\newcommand{\R}{\mathbb{R}}
\newcommand{\C}{\mathbb{C}}
\newcommand{\F}{\mathbb{F}}

\renewcommand{\H}{\mathcal{H}}
\newcommand{\B}{\mathcal{B}}
\newcommand{\id}{\mathrm{id}}
\newcommand{\ra}{\rightarrow}

\newcommand{\lra}{\longrightarrow}

\newcommand{\tr}{\mathrm{tr}}

\newtheorem{prop}{Proposition}[section]
\newtheorem{thm}[prop]{Theorem}
\newtheorem{defn}[prop]{Definition}
\newtheorem{cor}[prop]{Corollary}
\newtheorem{lem}[prop]{Lemma}

\theoremstyle{definition} 
\newtheorem{ex}[prop]{Example}
\newtheorem{rem}[prop]{Remark}
\newtheorem{notn}[prop]{Notation}

\setcounter{secnumdepth}{1}
\numberwithin{equation}{section}


\hyphenation{-ho-mo-mor-phism}

\title{Tsirelson's Problem and Kirchberg's Conjecture}
\author{Tobias Fritz\\[.3cm]
\small{ICFO--Institut de Ciencies Fotoniques, Mediterranean Technology Park, 08860 Castelldefels (Barcelona), Spain}\\[.3cm]
\small{\texttt{tobias.fritz@icfo.es}}}

\begin{document}

\maketitle

\begin{abstract}
Tsirelson's problem asks whether the set of nonlocal quantum correlations with a tensor product structure for the Hilbert space coincides with the one where only commutativity between observables located at different sites is assumed. Here it is shown that Kirchberg's QWEP conjecture on tensor products of $C^*$-algebras would imply a positive answer to this question for all bipartite scenarios. This remains true also if one considers not only spatial correlations, but also spatiotemporal correlations, where each party is allowed to apply their measurements in temporal succession; we provide an example of a state together with observables such that ordinary spatial correlations are local, while the spatiotemporal correlations reveal nonlocality. Moreover, we find an extended version of Tsirelson's problem which, for each nontrivial Bell scenario, is equivalent to the QWEP conjecture. This extended version can be conveniently formulated in terms of steering the system of a third party. Finally, a comprehensive mathematical appendix offers background material on complete positivity, tensor products of $C^*$-algebras, group $C^*$-algebras, and some simple reformulations of the QWEP conjecture.
\end{abstract}

\setcounter{tocdepth}{1}
\tableofcontents

\section{Introduction}

\paragraph{Quantum correlations in composite quantum systems.} A composite physical system is a physical system which we think of as made out of parts. These parts may correspond to different degrees of freedom---like polarization and frequency of a photon---or be explicitly realized by spatially separated and physically distinct components. In any case, standard quantum theory posits that the Hilbert space of states modeling the total system is given by the tensor product $\H_A\otimes\H_B$, when a system is made out of component systems modeled by Hilbert spaces $\H_A$ and $\H_B$. 

This applies in particular to the study of quantum correlations, by which we mean the study of Bell inequalities and their quantum violations. In the case of a two-component system, the correlations take on the form of a conditional probability distribution
\beqn
\label{corrtensor}
P(a,b|x,y)=\langle\psi,(A_x^a\otimes B_y^b)\psi\rangle.
\eeqn
where $|\psi\rangle\in\H_A\otimes\H_B$ is the initial state of the total system and $A^x_a$ and $B^y_b$ are positive operators forming a POVM with outcome index $a$ (resp. $b$) for measurement setting $x$ (resp. $y$). See section~\ref{qcdefs} for more detail on this.

How is the postulate that the total Hilbert space is $\H_A\otimes\H_B$ justified from physical principles? Can we really be sure that this tensor product assumption is appropriate? In order to get an intuition for this, one should keep in mind that the quantum-mechanical description of atoms, photons and many other systems is an \emph{effective description}: it should in principle be derivable from the quantum field theory constituting the Standard Model\footnote{Of course, the Standard Model itself may ultimately be derivable from an underlying even more fundamental theory, but this is of little relevance here.}. Although at the present time, there is no known mathematically rigorous formulation of the Standard Model, we have at least some rigorous candidate frameworks for doing so in terms of the Wightman Axioms~\cite[II.1.2]{Haag} and the related Haag-Kastler axioms~\cite[III]{Haag} of Algebraic Quantum Field Theory (AQFT). With both axiom sets, causality is incorporated by postulating that observables localized on spacelike separated spacetime regions commute. However, a priori there is only a single large Hilbert space on which all field operators act, so no tensor product assumption seems to exist at this stage. As it turns out, for observable algebras localized in spacelike separated regions the existence of a tensor product splitting (more precisely, the ``split property'') can be \emph{derived} under certain reasonable conditions; see~\cite{RS,Yng} and the references therein. So as far as AQFT is concerned, the tensor product assumption is justified; but then again, since the Standard Model is not known to fit into this paradigm, let alone any potential theory of quantum gravity, deriving the tensor product assumption from the AQFT axioms may be a vain endevaour unrelated to actual physics.

Given this slightly dubious status of the tensor product assumption, one may wonder what happens to the set of quantum correlations upon relaxing the tensor product structure. One possibility of doing this is the \emph{commutativity assumption}: now we assume that there is a Hilbert space $\H$ for the total system, and all observables act on this total Hilbert space. The subsystems are defined by specifying observable algebras: these are assumed to be $C^*$-algebras $A\subseteq\B(\H)$ and $B\subseteq\B(\H)$ which mutually commute, $ab=ba$ for all $a\in A$, $b\in B$. In particular, this situation arises $\H=\H_A\otimes\H_B$ with $A=\B(\H_A)\otimes\mathbbm{1}$ and $B=\mathbbm{1}\otimes\B(\H_B)$, so that the tensor product assumption is a special case of the commutativity assumption. However, such a splitting does not always exist, and this is a recurrent theme in the theory of $C^*$-algebra tensor products; see section~\ref{tensor} and in particular example~\ref{minmax}.

Now in analogy to~(\ref{corrtensor}), with the commutativity assumption one can also write down quantum correlations
\beqn
\label{corrcomm}
P(a,b|x,y)=\langle\psi,A_x^a B_y^b\psi\rangle   
\eeqn
in which the commutativity assumption $[A^x_a,B^y_b]=0$ is relevant for ensuring that the imaginary part of this expectation value vanishes. It is easily verified that such correlations satisfy the usual no-signaling requirement, and we will study in the rest of this paper how the set of correlations of the form\eq{corrcomm} relates to those of the form\eq{corrtensor}.

There is another fundamental reason to consider the commutativity assumption as an alternative to the tensor product assumption. It is our point of view that the operation of forming a composite system $\H_A\otimes\H_B$ from its subsystems $\H_A$ and $\H_B$ should not be a fundamental structure in a physical theory. The point is that nature presents us with a huge quantum system which we observe and conduct experiments on, and in some ways this total system behaves as if it were composed of smaller parts. Hence it seems that the correct question would be ``When does a physical system behave like it were composed of smaller parts?'' rather than ``How do physical systems compose to composite systems?''. Note that this is in stark contrast to many other approaches to the foundations of quantum theory, in which the operation of forming a composite system from subsystems is a fundamental structure. This applies for example to categorical quantum mechanics~\cite{Coecke} and to certain approaches of reconstructing quantum mechanics from certain axioms on the probabilistic structure of the theory~\cite{Har},~\cite{Mas2}. So from our point of view, the tensor product operation should not be a fundamental structure of quantum theory, and hence we see the need to consider other structures pertaining to physical systems which potentially make the systems behave like they were composed of subsystems. Both the tensor product assumption and the commutativity assumption may be viewed as candidates for conditions of what it means for physical systems to be composed out of parts, and one of these may eventually be derivable from the other postulates of quantum theory via a physical analysis of what it means to be made out of subsystems.

\paragraph{Tsirelson's problem.} In this paper, we consider only those aspects of the tensor product assumption vs.~commutativity assumption problem which pertain to the study of quantum correlations and quantum violations of Bell inequalities. Upon fixing the number of observables for each party and the number of outcomes of each observable, the two assumptions~\ref{corrtensor} and~\ref{corrcomm} each give rise to a set of quantum correlations as a subset of all no-signaling conditional probability distributions. Calling these sets $\mathcal{Q}_{\otimes}$ and $\mathcal{Q}_{c}$, respectively, and taking $\overline{\mathcal{Q}}_\otimes$ to be the topological closure of $\mathcal{Q}_\otimes$, we arrive at:

\begin{quote}
\textbf{Tsirelson's problem.} Is $\overline{\mathcal{Q}}_{\otimes}=\mathcal{Q}_{c}$ or $\overline{\mathcal{Q}}_{\otimes}\neq\mathcal{Q}_{c}$?
\end{quote}

Of course, the answer to this question may in principle depend on the specific Bell scenario under consideration. We therefore consider the hypothesis:

\begin{quote}
\textbf{TP conjecture}: $\overline{\mathcal{Q}}_{\otimes}=\mathcal{Q}_{c}$ holds in all bipartite Bell scenarios with fixed finite number of observables per party and fixed finite number of outcomes per observable.
\end{quote}

At present, the TP conjecture is wide open, and very little is known besides some relatively simple observations. Firstly, $\mathcal{Q}_{\otimes}\subseteq\mathcal{Q}_{c}$ holds in all scenarios, since observables acting on separate tensor factors automatically commute, so that the tensor product assumption implies the commutativity assumption. Furthermore, correlations of the form\eq{corrcomm} can also be written in the form\eq{corrtensor} provided that the Hilbert space $\mathcal{H}$ is finite-dimensional; see e.g.~\cite{SW} for a short proof. Finally, there is also a positive answer to the TP conjecture in the case of the CHSH scenario (two settings and two outcomes for each party~\cite{CHSH}); see remark~\ref{CHSH} for why the CHSH scenario is ``too simple''.

What would be the implications of an answer to the TP conjecture? Clearly, a positive answer would be a nice justification for assuming quantum correlations to have the form\eq{corrtensor}; even if the analogous question in the multipartite case would still be open. A negative answer in terms of some correlations which are of the form\eq{corrcomm} but not of the form\eq{corrtensor} however would probably have a large impact since it would mean that some of the research done since the inception of quantum information theory, where one usually takes the tensor product splitting for granted, would not be applicable to these quantum correlations; a notable exception is~\cite{BFS}, where the formalism of smooth entropy has been studied under the commutativity assumption. Also, such a negative answer would certainly raise many more questions: could these correlations be physically realistic, despite the split property of AQFT~\cite{RS}? If so, would they also be experimentally accessible? Would they be more useful for quantum communication and computation than those of the form\eq{corrtensor}? Moreover, such a negative result would also provide a physically intuitive context in which infinite-dimensional Hilbert spaces cannot always be approximated by finite-dimensional ones; see also~\cite{WID}.

Finally, besides the fundamental and philosophical considerations described above, the TP conjecture also possesses a high theoretical significance for the characterization of quantum correlations. The reason is as follows: most, if not all, of the well-understood examples of quantum correlations are based on the tensor product assumption. On the other hand, most, if not all, upper bounds on the set of quantum correlations and on quantum violations of Bell operators actually use the commutativity assumption; in particular, this applies to the hierarchy of semidefinite programs characterizing quantum correlations~\cite{NPA}. So, will this dual strategy of bounding the set of quantum correlations from below by the tensor product assumption and bounding it from above by the commutative assumption converge to a unique definite set of quantum correlations? Or will there remain a gap? This is precisely Tsirelson's problem.

\paragraph{A bit of history.} One of the pioneers of the theory of quantum correlations is Boris Tsirelson, whose seminal papers, in particular~\cite{Tsirelson}, have initiated the study of quantum correlations and introduced methods from functional analysis. In that paper, Tsirelson stated that the tensor product assumption and the commutativity assumption were equivalent. While working on the hierarchy of semidefinite programs characterizing quantum correlations~\cite{NPA}, Navascu\'es, Pironio and Ac\'in noticed that this purported equivalence had not been proven by Tsirelson, so that they contacted him and requested a proof. This was how Tsirelson noticed that he could prove the equivalence only for systems with finite-dimensional Hilbert space, but not in the infinite-dimensional case. Therefore, he subsequently issued the question to the (discontinued) website on open problems in quantum information theory which was hosted at the Institute for Mathematical Physics at the University of Braunschweig\footnote{His problem statement is presently (March 2011) still retrievable from Tsirelson's website at \url{http://www.tau.ac.il/~tsirel/download/bellopalg.pdf}.}. Since then, Scholz and Werner have written a paper~\cite{SW} reformulating the problem in the language of operator systems and relating it to finite-dimensional approximability.

\paragraph{Kirchberg's QWEP conjecture.} The dichotomy between the tensor product assumption and the commutativity assumption also prevails in the theory of tensor products of $C^*$-algebras (see e.g.~\cite{KR2}). Given $C^*$-algebras $A$ and $B$, there are in general many different $C^*$-algebras which can be regarded as a $C^*$-algebraic tensor product of $A$ and $B$; as explained in more detail in appendix~\ref{tensor}, there is a ``minimal'' way and a ``maximal'' way to take the $C^*$-algebraic tensor product, which results in $C^*$-algebras respectively denoted by
\beqn
\label{minmaxposs}
A\otimes_{\min}B\quad\textrm{and}\quad A\otimes_{\max}B,
\eeqn
In general, these two tensor products are different, and there can be many others lying ``in between''. In certain cases, in particular for sufficiently small (``nuclear'') $C^*$-algebras, the two tensor products $A\otimes_{\min}B$ and $A\otimes_{\max}B$ turn out to be identical, which makes the $C^*$-algebraic tensor product unique for the pair $(A,B)$. Determining whether such a uniqueness occurs for a particular pair of $C^*$-algebras is often a very difficult problem if neither of them is nuclear. Kirchberg~\cite{Kir} (see~\cite{Oza} for a more recent review) has proposed the following as an open problem:

\begin{quote}
\textbf{QWEP conjecture:} $\quad C^*(\F_2)\otimes_{\min}C^*(\F_2)=C^*(\F_2)\otimes_{\max}C^*(\F_2)$.
\end{quote}

Here, $\F_2$ stands for the free group on two generators, while $C^*(\F_2)$ is the corresponding maximal group $C^*$-algebra~\cite{KR2}; see appendix~\ref{maxgroup} for some background material on maximal group $C^*$-algebras.

The QWEP conjecture derives its name from another formulation of the same question, also due to Kirchberg~\cite{Kir}. This different formulation asks whether every $C^*$-algebra is a \emph{Q}uotient of one having the \emph{W}eak \emph{E}xpectation \emph{P}roperty (a property which will not discuss in this paper; see~\cite[3.19]{Oza}). The importance of the QWEP conjecture manifests itself in the large number of open problems known to be equivalent to QWEP; just to mention a few, the list of equivalent open questions contains Connes' embedding problem~\cite{Con,Cap}, a problem on positivity of noncommutative polynomials~\cite{KS} or on tensor products of operator systems~\cite{Kavruk}. More recently, also connections to quantum information theory have been found~\cite{HM}. In this paper, we will also study the connection between the QWEP conjecture and a question in quantum information theory: Tsirelson's problem.

\paragraph{Quantum correlations and group $C^*$-algebras.} It is shown in this paper---and independently in~\cite{Pal}---that the following implication holds:
\beqn
\label{main}
\boxed{\textrm{QWEP conjecture $\Longrightarrow$ TP conjecture}}
\eeqn
We do not know whether the converse implication is also true; but we will formulate a variant of Tsirelson's problem fully equivalent to the QWEP conjecture. Results of this type are important in that they provide a physical interpretation of the QWEP conjecture. Thereby it becomes possible to try and attack this purely mathematical problem using physical intuition and physical principles: for example, one might try to look for a counterexample to the TP conjecture in terms of correlations of the form\eq{corrcomm} which provably violate some physical principle like Information Causality~\cite{IC} or another one known to hold for all quantum correlations with the tensor product assumption. By\eq{main}, this would then automatically yield a disproof of the QWEP conjecture.

So how does the correspondence\eq{main} come about? The basic idea is very simple and consists in replacing a projective $m$-outcome observable, given in terms of
\beq
\textrm{projections}\quad P_1,\ldots,P_m\quad\textrm{such that}\quad\sum_jP_j=\mathbbm{1},
\eeq
by the operator
\beq
U\equiv\sum_j e^{\frac{2\pi i j}{m}}P_j,
\eeq
which is a unitary of order $m$; basically, this replacement relabels the outcomes $1,\ldots,m$ by the $m$-th roots of unity $e^{\frac{2\pi i j}{m}}$. In this way, the specification of a projective observable with $m$ outcomes is equivalent to the specification of a unitary representation of the cyclic group\footnote{The group $\Z_m$ is defined to be the integers $\{0,\ldots,m-1\}$ with addition modulo $m$ as the group operation.} $\Z_m$ . It then follows that a specification of $k$ such observables is equivalent to the specification of a unitary representation of the free product group
\beqn
\label{freeprod}
\Gamma\equiv\underbrace{\Z_m\ast\ldots\ast\Z_m}_{k\textrm{ factors}}.
\eeqn
And then by the very definition of maximal group $C^*$-algebras, such a unitary representation is nothing but a representation of the maximal group $C^*$-algebra
\beq
C^*(\Z_m\ast\ldots\ast\Z_m).
\eeq
Now it should be plausible that the conjectures TP and QWEP are intimately related.

Moreover, we believe that these group $C^*$-algebras generally provide a useful and relevant framework for the classification of quantum correlations. When studying problems like whether a given set of nonlocal correlations admits a quantum-mechanical model, or determining the maximal quantum violation of a Bell inequality, one naturally has to deal with all theoretically possible quantum systems at once. What is required is a universal quantification over all Hilbert spaces, all (entangled) states on these, and all viable observable specifications for the parties involved. Such a universal quantification over all possible instances of mathematical structures seems like a tremendous task to deal with. However, as our theorems~\ref{qcminmaxpovm} and~\ref{bominmax} will show, by formulating these problems in the language of group $C^*$-algebras, they turn into questions about a \emph{single mathematical entity}. And although this reformulation in terms of a single mathematical entity only shifts the universal quantification involved into the definition of this entity, we nevertheless believe that this approach is very useful since it allows us to profitably apply the well-developed methods and results both from the general theory of $C^*$-algebras as well as from the theory of discrete groups and their unitary representations. For example, the hierarchy of semidefinite programs characterizing the set of quantum correlations~\cite{NPA} is essentially based upon theorem~\ref{qcminmaxpovm} (see remark~\ref{hierarchy}). Also, a suitable choice of language is always crucial for gaining deeper understanding of a problem. So besides presenting and proving our results, we hope to convince the reader that the language of $C^*$-algebra tensor products is a suitable framework for Tsirelson's problem, and for the study of quantum correlations in general\footnote{Compare~\cite{2D} for an application of the same ideas to the classification of temporal quantum correlations.}. For related approaches based on the languages of \emph{operator systems} and \emph{operator spaces}, see~\cite{SW},~\cite{Pal} and~\cite{OST}.

\paragraph{Variants of Tsirelson's problem.} Besides the sets of nonlocal quantum correlations, there are many other things one can study in order to understand both the power of the quantum-mechanical formalism and its limitations. We do so by defining two extensions of the concept of quantum correlations, both motivated by our $C^*$-algebraic picture, and formulate Tisrelson's problem for these.

The first extensions of the concept of quantum correlations is the notion of \emph{spatiotemporal quantum correlations}. Here, it is assumed that the measurements of both parties are projective and do not destroy the system, so that they can be applied in temporal succession. Since any local measurement necessarily decreases the entanglement contained in the shared bipartite state, it may be surprising that spatiotemporal correlations can nevertheless be stronger than ordinary spatial ones, as example~\ref{Wstate} demonstrates. The QWEP conjecture also implies a positive answer to the spatiotemporal variant of Tsirelson's problem.

The second extension of the concept of quantum correlations is defined in terms of \emph{steering}. As originally formulated already by Schr\"odinger~\cite{Schr35}, this is the phenomenon that Alice's measurement changes the state of Bob's system, given that one postselects on one specific outcome of Alice's measurement. Our version of steering considers the case where both Alice and Bob steer the system of a third party; one can view this as replacing, in the definition of quantum correlations, the ordinary classical probabilities by unnormalized density matrices. We formulate a version of Tsirelson's problem also in this case and prove it to be \emph{equivalent} to the QWEP conjecture, \emph{for each Bell scenario separately} (except CHSH).

Figure~\ref{implications} provides an overview of all the conjectures considered in this paper, the implications which we are able to prove between them, and the corresponding references to the main text.

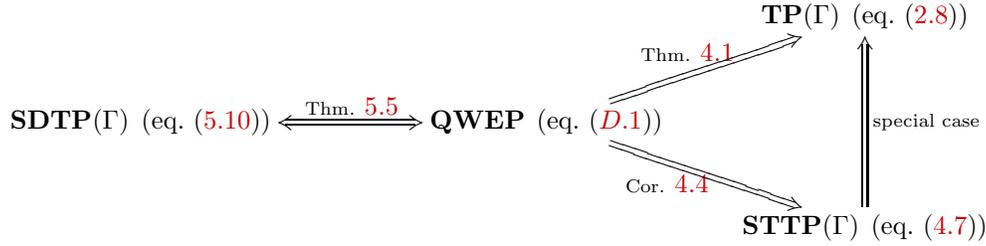
\begin{figure}
\label{implications}
\begin{center}
\beq
\xymatrix{ &&&   \mathbf{TP}(\Gamma)\:\:(\textrm{eq.}\eq{TP}) \\
\mathbf{SDTP}(\Gamma)\:\:(\textrm{eq.}\eq{SDTP})\ar@{<=>}[rr]^(.52){\textrm{Thm. }\ref{QWEPtoSDTP}} && \mathbf{QWEP}\:\:(\textrm{eq.}\eq{qwep})\ar@{=>}[rd]_(.43){\textrm{Cor. }\ref{QWEPtoSTTP}}\ar@{=>}[ru]^{\textrm{Thm. }\ref{QWEPtoTP}}  \\
&&& \mathbf{STTP}(\Gamma)\:\:(\textrm{eq.}\eq{STTP})\ar@{=>}[uu]_{\textrm{special case}} }
\eeq
\end{center}
\caption{The implications between our versions of Tsirelson's problem and the QWEP conjecture. $\Gamma$ stands for the particular Bell scenario (number of measurement settings and the number of their outputs) to be considered. The implication from $\mathbf{SDTP}(\Gamma)$ to $\mathbf{QWEP}$ does not hold when $\Gamma$ is the CHSH scenario. With this sole exception, all implications hold for all $\Gamma$.}
\end{figure}

\paragraph{Structure and summary of this paper.}
This article is structured into a main part pursuing the essential lines of thought and an appendix containing mostly standard mathematical background material. Section~\ref{qcdefs} starts by introducing bipartite Bell scenarios and introduces the two sets of quantum correlations to be considered in the sequel. Tsirelson's problem is stated again in a more formal way. Section~\ref{Tsipovm} then describes the connection between quantum correlations and group $C^*$-algebras; in particular, this provides a new proof of the fact that all extremal quantum correlations in the CHSH scenario can be achieved with two qubits. Section~\ref{mainsec} proceeds to to state our main result\eq{main} as theorem~\ref{QWEPtoTP} and continues by introducing spatiotemporal quantum correlations. The ensuing orollary~\ref{QWEPtoSTTP} states that a positive answer to the QWEP conjecture would also imply a positive answer to the spatiotemporal version of Tsirelson's problem. Example~\ref{Wstate} considers a particular state together with projective measurements in the CHSH scenario such that the resulting spatiotemporal correlations are nonlocal, although the spatial correlations alone are local. In section~\ref{reverseimp}, we consider another variant of Tsirelson's problem and show it to be \emph{equivalent} to the QWEP conjecture; this variant is defined in terms of the ability of Alice and Bob to steer the system of a third party.

We have included the mathematical appendix in order to achieve a reasonable level of self-containedness. This mathematical background material is necessary for proving our results in the main text, where references to the statements in the appendix have been included. Appendix~\ref{ucp} begins by introducing unital completely positive maps between $C^*$-algebras and discusses some of their properties. Appendix~\ref{tensor} addresses the crucial issue of $C^*$-algebra tensor products and provides a detailed exposition of these. Appendix~\ref{maxgroup} treats maximal group $C^*$-algebras, which are, in a way, the main theme of the present paper. In appendix~\ref{appqwep}, we consider the QWEP conjecture and introduce a simple reformulation where we replace the free group $\F_2$ by a free product of finite cyclic groups.

Despite the considerable length of the present paper, there is lots of material which we have not touched upon. A particularly relevant omission is the discussion of questions of approximability by finite-dimensional representations (see~\cite{SW} for this in relation to Tsirelson's problem, and~\cite{Brown} in relation to the QWEP conjecture). In fact, we have nothing sufficiently original to say about this, and hence do not touch upon these issues here except by reemphasizing that Tsirelson's problem is only an issue if one considers quantum systems with infinite-dimensional state spaces.

\paragraph{Notation and conventions.} Unfortunately, the commonly used notation in quantum information theory and quantum nonlocality theory is radically different from the one established in $C^*$-algebra theory. In this paper, we mostly follow the latter, which seems more convenient for our purposes. For example, one reason to restrain from using Dirac notation is the possible ambiguity of expressions like $\langle\psi|A|\psi\rangle$: when $A$ is not self-adjoint, this value depends on whether $A$ acts to the left or to the right. Also, instead of using the slightly clumsy notation $\tr(\rho A)$ for the expectation value of an observable $A$ on the density matrix $\rho$, we rather write $\rho(A)$ for the same quantity, where $\rho$ now is to be interpreted as a state in the $C^*$-algebraic sense, i.e.~as a linear functional on the $C^*$-algebra of observables. For us, ``state'' means ``state in the $C^*$-algebraic sense'', and, following the philosophy of Algebraic Quantum Mechanics~\cite{AQT}, this is what we have tried to reserve the word ``state'' for; concrete physical states are taken to be unit vectors in a Hilbert space, and consequently we use the term \emph{(unit) vector} for those. All our $C^*$-algebras are assumed unital, even when this is not explicitly mentioned. Here is an overview over the notation used throughout the paper:
\begin{itemize}
\item $\H$ is any Hilbert space, no separability assumption is made. $\psi\in\H$ ranges over all unit vectors.
\item In the main part of the paper, $A^x_a$ and $B^y_b$ are the POVM components of Alice and Bob (see section~\ref{qcdefs}).
\item In the appendix, $A$ and $B$ are any unital $C^*$-algebras.
\item For a $C^*$-algebra $A$, we write $\mathscr{S}(A)$ for the space of states on $A$, equipped with the weak $*$-topology. 
\item $\rho$ always stands for a state on a $C^*$-algebra.
\item $M_n(A)$ for a $C^*$-algebra $A$ is the $C^*$-algebra of $n\times n$-matrices over $A$. We also write it as $M_n(\C)\otimes A$ (see example~\ref{matrixA}).
\item The asterisk ``$\ast$'' denotes a free product of groups. For $C^*$-algebras, we only consider unital free products written as ``$\ast_1$''.
\end{itemize}

\paragraph{Relevant literature for background material.} We hope that the present paper is sufficiently self-contained in order to be accessible with some elementary background in $C^*$-algebra theory, up to familiarity with free products of $C^*$-algebras, and some interest in the problem of characterizing the set of nonlocal quantum correlations. But especially on the mathematical side of the story, there is a considerable literature on issues related to the questions discussed here. Hence, let us mention some of the advanced literature, thematically ordered:
\begin{itemize}
\item Tsirelson's problem:~\cite{SW} gives an overview and some first results, albeit using a somewhat different mathematical formalism. This work has been continued in~\cite{Pal}, where results similar to ours have been obtained. Follow-up works include~\cite{NCPV} and~\cite{Pru}.
\item Kirchberg's QWEP conjecture:~\cite{Kir} is the original paper, whereas~\cite{Oza} is a more recent review.~\cite{Pis} discusses the QWEP conjecture from the point of view of operator spaces.
\item Connes' embedding conjecture: From the many mathematical problems equivalent to the QWEP conjecture, this is the most well-known one.~\cite{Con} is the original paper, whereas~\cite{Cap} is a more recent review. The equivalence to the QWEP conjecture is proven in~\cite{Kir} and~\cite{Oza}.
\end{itemize}

\paragraph{Acknowledgments.} Carlos Palazuelos has kindly informed me about~\cite{Pal}, which contains very similar results and was publicized on the same day as an earlier version of this paper. Furthermore, I would like to thank Antonio Ac\'in, Matilde Marcolli, Miguel Navascu\'es and Carlos Palazuelos for interesting discussions and the referee for some constructive criticism. This work was partly carried out within the IMPRS graduate program at the Max Planck Institute for Mathematics and has thereafter been supported by the EU STREP QCS.

\section{Bipartite nonlocality scenarios}
\label{qcdefs}

Suppose that there are two experimenters, commonly dubbed \emph{Alice} and \emph{Bob} and referred to as \emph{parties} or \emph{sites}, located in spatially separated physics labs. From a common source, they each receive a quantum system on which they conduct their measurements. They are both free to measure any one out of a fixed number $k\in\N$ of observables on their system, with each observable having a fixed number $m\in\N$ of possible outcomes. We write $[k]$ for the set $\{1,\ldots,k\}$ which indexes the observables, and $[m]=\{1,\ldots,m\}$ for the set which indexes the possible outcomes. The pair of natural numbers $\Gamma\equiv(k,m)$ specifies a \emph{Bell scenario} $\Gamma$.\footnote{One can also consider scenarios where different observables have different numbers of outcomes, and/or scenarios where the number of choices for Alice differs from the number of choices for Bob. For the ease of notation and readability, we do not consider such scenarios explicitly, but all of our results generalize to these in the obvious way.} We take $\Gamma$ as a shorthand notation for the pair $(k,m)$; this notation will be explained by\eq{gammanot}.

\paragraph{Quantum correlations with the tensor product assumption.} We start by explaining in detail what quantum correlations in the Bell scenario $\Gamma$ with the tensor product assumption are. With the tensor product assumption, Alice's system is described by a Hilbert space $\mathcal{H}_A$, while Bob's system lives on a Hilbert space $\mathcal{H}_B$. The state of the total system is then a vector in the \emph{composite Hilbert space}
\beq
\mathcal{H}_A\otimes\mathcal{H}_B.
\eeq

For each measurement setting $x\in[k]$, Alice has access to a POVM $\{A^x_1,\ldots,A^x_m\}$, where each $A^x_a$ is a positive operator on $\mathcal{H}_A$. Similarly, for each measurement setting $y\in[k]$, Bob has access to a POVM $\{B^y_1,\ldots,B^y_m\}$, where each $B^y_b$ is a positive operator on $\mathcal{H}_B$. Then for some joint initial state
\beq
\psi\in\mathcal{H}_A\otimes\mathcal{H}_B
\eeq
the outcome probabilties for a joint measurement are given by the fundamental formula
\beqn
\label{tensorcorrs}
\boxed{P(a,b|x,y)=\langle\psi,(A^x_a\otimes B^y_b)\psi\rangle}
\eeqn
We use the assumption that the initial state is pure mainly for ease of notation. Since we do not take the Hilbert space dimension to be fixed, a mixed state can always be purified by adding an ancilla either on Alice's side or on Bob's side (or on both sides), so the assumption of purity can be made without loss of generality.

The set of all conditional probability distributions $P(a,b|x,y)$ which have a representation\eq{tensorcorrs} form a subset of $\R^{k^2m^2}$. In the following, we will denote this set by $\mathcal{Q}_{\otimes}(\Gamma)$. One may regard it as the fundamental object of study in the theory of quantum nonlocality for the Bell scenario $\Gamma$. As a first basic observation about $\mathcal{Q}_{\otimes}(\Gamma)$, one can note the following:

\begin{lem}
\label{Qconvex}
$\mathcal{Q}_{\otimes}(\Gamma)$ is convex.
\end{lem}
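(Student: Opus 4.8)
The plan is to prove convexity constructively: given two elements of $\mathcal{Q}_\otimes(\Gamma)$ and a mixing weight $\lambda\in[0,1]$, I will exhibit an explicit tensor-product model realizing the convex combination \emph{exactly} (not merely approximately, so that closure plays no role here). The construction forms direct sums of the two Hilbert spaces and uses a joint state which is a $\lambda$-weighted superposition of the two given states, together with block-diagonal POVMs.

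Concretely, suppose $P_1,P_2\in\mathcal{Q}_\otimes(\Gamma)$. For $i=1,2$ choose Hilbert spaces $\H_A^{(i)},\H_B^{(i)}$, POVMs $\{A^{x,i}_a\}_{a\in[m]}$ on $\H_A^{(i)}$ and $\{B^{y,i}_b\}_{b\in[m]}$ on $\H_B^{(i)}$, and unit vectors $\psi_i\in\H_A^{(i)}\otimes\H_B^{(i)}$ such that $P_i(a,b|x,y)=\langle\psi_i,(A^{x,i}_a\otimes B^{y,i}_b)\psi_i\rangle$. First I would set $\H_A:=\H_A^{(1)}\oplus\H_A^{(2)}$ and $\H_B:=\H_B^{(1)}\oplus\H_B^{(2)}$, and define the new POVM elements as the block-diagonal operators $A^x_a:=A^{x,1}_a\oplus A^{x,2}_a$ and $B^y_b:=B^{y,1}_b\oplus B^{y,2}_b$. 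These are legitimate POVMs: positivity is preserved by direct sums, and $\sum_a A^x_a=(\sum_a A^{x,1}_a)\oplus(\sum_a A^{x,2}_a)=\mathbbm{1}_{\H_A}$, and likewise for $B$.

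The next step is to exploit the decomposition
\beq
\H_A\otimes\H_B\;=\;\bigoplus_{i,j\in\{1,2\}}\H_A^{(i)}\otimes\H_B^{(j)}
\eeq
and to take as joint state $\psi:=\bigl(\sqrt{\lambda}\,\psi_1\bigr)\oplus 0\oplus 0\oplus\bigl(\sqrt{1-\lambda}\,\psi_2\bigr)$, i.e.\ $\psi$ is supported only on the two ``diagonal'' summands $\H_A^{(1)}\otimes\H_B^{(1)}$ and $\H_A^{(2)}\otimes\H_B^{(2)}$, with $\|\psi\|^2=\lambda\|\psi_1\|^2+(1-\lambda)\|\psi_2\|^2=1$. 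Since $A^x_a$ and $B^y_b$ are block diagonal, the operator $A^x_a\otimes B^y_b$ leaves each summand $\H_A^{(i)}\otimes\H_B^{(j)}$ invariant and acts as $A^{x,i}_a\otimes B^{y,j}_b$ there; in particular, the contributions of the two ``cross'' summands drop out because $\psi$ vanishes on them, leaving
\beq
\langle\psi,(A^x_a\otimes B^y_b)\psi\rangle=\lambda\,\langle\psi_1,(A^{x,1}_a\otimes B^{y,1}_b)\psi_1\rangle+(1-\lambda)\,\langle\psi_2,(A^{x,2}_a\otimes B^{y,2}_b)\psi_2\rangle=\lambda\,P_1(a,b|x,y)+(1-\lambda)\,P_2(a,b|x,y).
\eeq
This shows $\lambda P_1+(1-\lambda)P_2\in\mathcal{Q}_\otimes(\Gamma)$, as desired.

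I do not anticipate a real obstacle; the only point demanding a bit of care is precisely the bookkeeping above — namely that $\H_A\otimes\H_B$ splits into \emph{four} summands, that $A^x_a\otimes B^y_b$ is block diagonal with respect to this finer decomposition, and that the chosen $\psi$ has no component in the two cross summands, so that no interference between the two models contaminates the expectation value. (One could phrase the same idea with a shared control qubit, but the direct-sum formulation is the most economical.)
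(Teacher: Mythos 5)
Your proof is correct and follows essentially the same route as the paper's: direct sums of the Hilbert spaces, block-diagonal POVMs, and the vector $\sqrt{\lambda}\,\psi_1\oplus 0\oplus 0\oplus\sqrt{1-\lambda}\,\psi_2$ supported on the two diagonal summands of the four-fold decomposition of $\H_A\otimes\H_B$. You have merely spelled out the ``short calculation'' that the paper leaves implicit.
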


\begin{proof}
Suppose that $\mathcal{Q}_{\otimes}(\Gamma)$ contains the point
\beq
P_1(a,b|x,y)=\langle\psi_1,(A^x_{a,1}\otimes B^y_{b,1})\psi_1\rangle
\eeq
where the vector $\psi_1$ and the observables live on some tensor product Hilbert space $\H_{A,1}\otimes\H_{B,1}$, as well the point
\beq
P_2(a,b|x,y)=\langle\psi_2,(A^x_{a,2}\otimes B^y_{b,2})\psi_2\rangle
\eeq
where the vector $\psi_2$ and the observables live on some tensor product $\H_{A,2}\otimes\H_{B,2}$. Let $\lambda\in(0,1)$ be some coefficient. Then the claim is that the conditional probability distribution
\beqn
\label{Pconvcomb}
\lambda\cdot P_1+(1-\lambda)\cdot P_2
\eeqn
also lies in $\mathcal{Q}_{\otimes}$, i.e.~that it also can be written in the form\eq{tensorcorrs}. To this end, consider the observables 
\beq
A^x_a\equiv A^x_{a,1}\oplus A^x_{a,2},\qquad B^y_b\equiv B^y_{b,1}\oplus B^y_{b,2}
\eeq
which act on the direct sums $\H_{A,1}\oplus\H_{A,2}$ and $\H_{B,1}\oplus\H_{B,2}$, respectively. The tensor product of these direct sum Hilbert spaces can be decomposed as
\begin{align}
\begin{split}
\label{Hdecomp}
(\H_{A,1}\oplus\H_{A,2})\otimes(\H_{B,1}\oplus\H_{B,2})\:\:=&\phantom{\oplus}\:\:(\H_{A,1}\otimes\H_{B,1})\\&\oplus(\H_{A,1}\otimes\H_{B,2})\\&\oplus(\H_{A,2}\otimes\H_{B,1})\\&\oplus(\H_{A,2}\otimes\H_{B,2})
\end{split}
\end{align}
so that one can consider on this total Hilbert space the unit vector
\beq
\psi\equiv\sqrt{\lambda}\,\psi_1\oplus 0\oplus 0\oplus\sqrt{1-\lambda}\,\psi_2.
\eeq
A short calculation then verifies that this represents\eq{Pconvcomb} in the form\eq{tensorcorrs}.
\end{proof}

\begin{rem}
\label{closure}
It is unclear whether $\mathcal{Q}_{\otimes}(\Gamma)$ is a \emph{closed} convex subset of $\R^{k^nm^2}$. In other words, if $P(a,b|x,y)$ can be approximated arbitrarily well by conditional probability distributions of the form\eq{tensorcorrs}, is it then itself also of the form\eq{tensorcorrs}? Clearly this question is irrelevant for any practical purposes, since in practice $P(a,b|x,y)$ will only be known to limited accuracy anyway, so one may regard the potential non-equality between $\mathcal{Q}_{\otimes}(\Gamma)$ and its closure $\overline{\mathcal{Q}_{\otimes}(\Gamma)}$ as an issue of mathematical pedantry. From the mathematical perspective however, the upcoming theorem~\ref{qcminmaxpovm} can be interpreted as stating that the closure $\overline{\mathcal{Q}_{\otimes}(\Gamma)}$ is ``nicer'' than $\mathcal{Q}_{\otimes}(\Gamma)$ itself, as it allows a succinct characterization in $C^*$-algebraic terms. So in this paper, we always work with $\overline{\mathcal{Q}_{\otimes}(\Gamma)}$, disregarding the question whether this set coincides with $\mathcal{Q}_\otimes(\Gamma)$ itself or not.
\end{rem}

\paragraph{Quantum correlations with the commutativity assumption.} Here, there is only a single Hilbert space $\mathcal{H}$ which contains the initial state $\psi\in\H$ (again taken to be pure without loss of generality) and on which \emph{both} Alice's and Bob's POVMs act. Hence we have operators
\beq
A^x_a,B^y_b\in\B(\H),\qquad A^x_a,B^y_b\geq 0,\qquad\sum_a A^x_a=\mathbbm{1}\quad\forall x,\qquad\sum_b B^y_b=\mathbbm{1}\quad\forall y.
\eeq
The relevant assumption now is commutativity of Alice's observables with Bob's observables,
\beqn
\label{commu}
A^x_aB^y_b=B^y_bA^x_a\qquad\forall x,y\in[k],\:a,b\in[m].
\eeqn
This implies in particular that Alice's and Bob's observables are jointly measurable. The outcome probabilities for such a joint measurement on the unit vector $\psi\in\H$ then take on the form
\beqn
\label{commcorrs}
\boxed{P(a,b|x,y)=\langle\psi,A^x_aB^y_b\psi\rangle}
\eeqn
and the commutativity assumption is relevant for assuring that $A^x_aB^y_b$ is a hermitian operator, so that this joint outcome probability is guaranteed to be a real number. The set of conditional probability distributions $P(a,b|x,y)$ which can be written in this form is another subset of $\R^{k^2m^2}$, which we will denote by $\mathcal{Q}_c$, with the subscript standing for ``$c$ommuting''. We will see in theorem~\ref{qcminmaxpovm} that $\mathcal{Q}_c$ is closed and convex.

\paragraph{Other possible assumptions?} It seems conceivable in principle that there might even be more than these two ways to define sets of quantum correlations. One approach could be to try and replace the commutativity assumption by something even weaker, for example by the assumption of \emph{joint measurability} in the sense of~\cite{HW}. In this case, Alice and Bob are both assumed to have POVMs as in the previous paragraph. But instead of commutativity\eq{commu}, we only assume that for every pair of measurement choices $x,y$, there is a doubly indexed POVM
\beq
\Pi^{x,y}_{a,b}\in\B(\H),\qquad\Pi^{x,y}_{a,b}\geq 0,\qquad\sum_{a,b}\Pi^{x,y}_{a,b}=\mathbbm{1},
\eeq
which reduces to the marginal observables as
\beqn
\label{jointmeasmarg}
A^x_a=\sum_b\Pi^{x,y}_{a,b}\quad\forall x,y,a;\qquad B^y_b=\sum_a\Pi^{x,y}_{a,b}\quad\forall x,y,b.
\eeqn
By this requirement, it is clear that the outcome probabilities for the joint measurements
\beqn
\label{jointmeas}
P(a,b|x,y)=\langle\psi,\Pi^{x,y}_{a,b}\psi\rangle
\eeqn
are automatically no-signaling.

On the other hand, it is quite clear that all no-signaling correlations $P(a,b|x,y)$ can be written in the form\eq{jointmeas}: one can simply take $\mathcal{H}=\C$ and $\Pi^{x,y}_{a,b}=P(a,b|x,y)$. Hence, joint measurability does not lead to a sensible set of quantum correlations. See however~\cite{CSW} for a simple set of conditions on the operators $\Pi^{x,y}_{a,b}$ which guarantee the existence of the marginals\eq{jointmeasmarg} together with their commutativity\eq{commu}.

\paragraph{Statement of Tsirelson's problem.}

Since operators acting on separate tensor factors necessarily commute, it is obvious that quantum correlations with the tensor product assumption can directly also be written as quantum correlations with the commutativity assumption. Hence, we certainly have the inclusion $\mathcal{Q}_\otimes(\Gamma)\subseteq\mathcal{Q}_c(\Gamma)$. And since $\mathcal{Q}_c(\Gamma)$ is closed (see theorem~\ref{qcminmaxpovm}), we also have $\overline{\mathcal{Q}_\otimes(\Gamma)}\subseteq \mathcal{Q}_c(\Gamma)$. So, can $\mathcal{Q}_c(\Gamma)$ be bigger than $\overline{\mathcal{Q}_\otimes(\Gamma)}$? Can there be quantum correlations with commuting observables between the sites which cannot be approximated by quantum correlations where the total Hilbert space is the tensor product of the Hilbert spaces at each site? We take this question as the definition of \emph{Tsirelson's problem} for the Bell scenario $\Gamma$, or $\mathbf{TP}(\Gamma)$:
\beqn
\label{TP}
\boxed{\mathbf{TP}(\Gamma):\quad\overline{\mathcal{Q}_\otimes(\Gamma)}\stackrel{?}{=}\mathcal{Q}_c(\Gamma)}
\eeqn
The original formulation of Tsirelson's problem~\cite{SW} is stated without the closure operation on the left-hand side, i.e.~as the question whether quantum correlations with the commutativity assumption can always be written \emph{exactly} (without approximation) as quantum correlations with the tensor product assumption. However as discussed in remark~\ref{closure}, the distinction between $\mathcal{Q}_\otimes(\Gamma)$ and its closure $\overline{\mathcal{Q}_\otimes(\Gamma)}$ is pure mathematical pedantry and irrelevant for experiments. It seems conceivable that $\overline{\mathcal{Q}_\otimes(\Gamma)}=\mathcal{Q}_c(\Gamma)$ but $\mathcal{Q}_\otimes(\Gamma)\neq\mathcal{Q}_c(\Gamma)$; this would mean that there are some Bell inequalities for which the maximal quantum value can only be achieved with the commutativity assumption, but quantum correlations with the tensor product assumption can get arbitrarily close to this value. We would regard this as an \emph{affirmative} answer to Tsirelson's problem.

Also, it should be pointed out again that any point potentially lying in $\mathcal{Q}_c(\Gamma)\setminus\mathcal{Q}_\otimes(\Gamma)$ would require an infinite-dimensional Hilbert space for its quantum-mechanical realization~\cite{SW}.

Finally, for some comments on the multipartite analogue of this problem, see remark~\ref{multipart}.

\section{From universal $C^*$-algebras to quantum correlations}
\label{Tsipovm}

As mentioned in the introduction, the problem of determining the maximal violation of a Bell inequality, or describing the set of quantum correlations, involves a universal quantification over all Hilbert spaces, all choices of observables on each Hilbert space, and all unit vectors in these Hilbert spaces. It may not come as a surprise that this problem is very difficult in general. In this section, we offer a reformulation of this in terms of universal $C^*$-algebras, where the universal quantification over the Hilbert spaces and over the observables becomes redundant, or rather hidden inside the definition of these universal $C^*$-algebras. This allows for a mathematically elegant reformulation of the problem, albeit at the cost of introducing an additional level of abstraction. As a benefit, the ample results and techniques from $C^*$-algebra theory and group theory become available for the study of quantum correlations.

We believe that the formalism presented here provides the most natural formulation of sets of quantum correlations and quantum values of Bell inequalities.

We use the same notation as in the previous section.

\paragraph{Observable specifications.} We begin with some simple reformulations of the concept of POVM. To this end, we write $e_a$ with $a=1,\ldots,m$ for the standard basis of $\C^m$, and consider $\C^m$ as a commutative $C^*$-algebra with respect to componentwise multiplication and componentwise complex conjugation. ($\C^m$ is canonically isomorphic to the $C^*$-algebra of functions on $m$ isolated points.) Moreover, we write $e_a^x$ for the standard basis vector $e_a$ in the $x$th factor of the $k$-fold unital free product $\C^m\ast_1\ldots\ast_1\C^m$.

For the relevant background material on ucp maps, see appendix~\ref{ucp}.

\begin{prop}
\begin{enumerate}
\item\label{POVMa} For any $m$-outcome POVM $\left\{A_1,\ldots,A_m\right\}$ in $\mathcal{B}(\mathcal{H})$, there is a ucp map 
\beq
\Phi:\C^m\lra\mathcal{B}(\mathcal{H})
\eeq
such that
\beqn
\label{POVMucp}
A_a\equiv\Phi(e_a),\qquad a=1,\ldots,m.
\eeqn
Conversely, this equation defines a POVM for any such ucp map $\Phi$.
\item\label{POVMb} For any $k$-tuple of $m$-outcome POVMs $\{A_1^x,\ldots,A_m^x\}$, $x\in[k]$, on $\mathcal{B}(\mathcal{H})$, there is a ucp map
\beq
\Phi:\underbrace{\C^m\ast_1\ldots\ast_1\C^m}_{k\,\textrm{ factors}}\lra\mathcal{B}(\mathcal{H})
\eeq
such that
\beqn
\label{POVMsucps}
A_a^x\equiv\Phi(e_a^x).
\eeqn
Conversely, this equation defines a $k$-tuple of POVMs for any such ucp map $\Phi$.
\end{enumerate}
\label{POVMchar}
\end{prop}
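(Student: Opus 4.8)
The plan is to handle (a) by a direct construction and then derive (b) from (a) together with the behaviour of completely positive maps under unital free products.

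For (a): since $\{e_1,\dots,e_m\}$ is a linear basis of $\C^m$, there is a unique linear map $\Phi\colon\C^m\to\B(\H)$ with $\Phi(e_a)=A_a$, namely $\Phi(\sum_a c_ae_a)=\sum_a c_aA_a$. It is unital because $\mathbbm{1}_{\C^m}=\sum_a e_a$ while $\sum_a A_a=\mathbbm{1}$, and it is positive because a positive element of $\C^m$ is exactly a vector with nonnegative coordinates $c_a\ge0$, so that $\sum_a c_aA_a\ge0$. Finally $\Phi$ is \emph{completely} positive because its domain is commutative, and every positive linear map out of a commutative $C^*$-algebra is automatically completely positive (this classical fact is recalled in appendix~\ref{ucp}). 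For the converse, given any ucp $\Phi\colon\C^m\to\B(\H)$ set $A_a:=\Phi(e_a)$; then $A_a\ge0$ since $e_a\ge0$, and $\sum_a A_a=\Phi(\sum_a e_a)=\Phi(\mathbbm{1})=\mathbbm{1}$ since $\Phi$ is unital, so $\{A_1,\dots,A_m\}$ is a POVM.

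For (b): write $\iota_x\colon\C^m\to\C^m\ast_1\cdots\ast_1\C^m$ for the canonical inclusion of the $x$th factor, so $\iota_x(e_a)=e_a^x$; it is a unital $*$-homomorphism, hence ucp. In the forward direction, apply (a) to each POVM to get ucp maps $\Phi_x\colon\C^m\to\B(\H)$ with $\Phi_x(e_a)=A_a^x$, and then invoke the existence of the unital free product of completely positive maps (Boca's theorem): there is a ucp map $\Phi\colon\C^m\ast_1\cdots\ast_1\C^m\to\B(\H)$ with $\Phi\circ\iota_x=\Phi_x$ for all $x$, whence $\Phi(e_a^x)=\Phi_x(e_a)=A_a^x$. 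For the converse, given ucp $\Phi$ on the free product, the maps $\Phi_x:=\Phi\circ\iota_x$ are compositions of ucp maps, hence ucp, and $\Phi_x(e_a)=\Phi(e_a^x)$, so the converse half of (a) shows that $\{\Phi(e_a^x)\}_a$ is a POVM for each $x$.

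The only non-routine ingredient is the existence of the free product ucp map in the forward direction of (b); all the rest is bookkeeping. If one wishes to avoid quoting Boca's theorem, one can argue by dilation instead: use Naimark's theorem to replace each $\Phi_x$ by a projective measurement on some $\mathcal{H}_x\supseteq\H$, amalgamate these into projective measurements on a single common Hilbert space $\mathcal{K}\supseteq\H$, observe that a projective measurement is nothing but a unital $*$-homomorphism $\C^m\to\B(\mathcal{K})$, use the \emph{ordinary} universal property of the free product (for $*$-homomorphisms) to glue these into a unital $*$-homomorphism $\pi\colon\C^m\ast_1\cdots\ast_1\C^m\to\B(\mathcal{K})$, and set $\Phi:=V^*\pi(\cdot)V$ with $V\colon\H\hookrightarrow\mathcal{K}$ the inclusion isometry. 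I expect the amalgamation of several Naimark dilations onto one Hilbert space to be the only mildly delicate point on that route.
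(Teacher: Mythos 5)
Your proof is correct and follows essentially the same route as the paper: part (a) by unique linear extension plus the automatic complete positivity of positive maps out of commutative $C^*$-algebras (lemma~\ref{commucp}), and part (b) by combining the single-POVM ucp maps via the free-product extension property. The paper invokes a recursive application of proposition~\ref{extfreeprod} (the $\B(\H)$-codomain version from~\cite{OS}) where you cite Boca's theorem, but the paper itself notes these are interchangeable here, so this is not a substantive difference.
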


\begin{proof}
\begin{enumerate}
\item[\ref{POVMa}] Since $e_a\geq 0$ and $\sum_a e_a=\mathbbm{1}$ in $\C^m$, it is clear by positivity and unitality of $\Phi$ that any $A_a$ of the form\eq{POVMucp} are the components of an $m$-outcome POVM. On the other hand, for a given POVM the condition\eq{POVMucp} defines a map $\Phi$ since there is a unique linear extension to all of $\C^m$. This extension is unital since $\Phi(\mathbbm{1})=\Phi(\sum_a e_a)=\sum_a A_a=\mathbbm{1}$. A similar argument shows positivity. And by lemma~\ref{commucp} about positive maps on commutative $C^*$-algebras, this $\Phi$ is then automatically ucp.
\item[\ref{POVMb}] Again it is clear that for given $\Phi$, the assignment\eq{POVMsucps} defines a family of POVMs. Conversely for a given family of POVMs, thanks to part~\ref{POVMa} we already know that each POVM in the family induces a ucp map
\beq
\Phi^x:\C^m\ra\mathcal{B}(\mathcal{H}).
\eeq
The assertion hence follows from a recursive application of corollary~\ref{extfreeprod}.
\end{enumerate}
\end{proof}

\begin{rem}
\label{fourier}
The discrete Fourier transform relates this to maximal group $C^*$-algebras (see appendix~\ref{maxgroup}) as follows. Let $u$ be the generator of the cyclic group $\Z_m$. Then the assignment
\beqn
\label{fourier2}
u\mapsto\sum_{a=1}^m\exp\left(2\pi i\frac{a}{m}\right) e_a
\eeqn
defines an isomorphism $C^*(\Z_m)\cong\C^m$, the \emph{discrete Fourier transform}. The $m$-fold free product of this identification yields an isomorphism
\beqn
\label{Cfreeprod}
C^*(\Z_m)\ast_1\ldots\ast_1 C^*(\Z_m)\cong\C^m\ast_1\ldots\ast_1\C^m.
\eeqn
Since taking maximal group $C^*$-algebras is a left adjoint functor by the universal property (proposition~\ref{maxgroupunivprop}), it preserves coproducts; in other words, it is irrelevant whether one takes the free product on the level of groups or on the level of $C^*$-algebras:
\beq
C^*(\Z_m\ast\ldots\ast\Z_m)\cong C^*(\Z_m)\ast_1\ldots\ast_1 C^*(\Z_m)
\eeq
In total, the discrete Fourier transform\eq{fourier2} implements an isomorphism
\beqn
\label{freefourier}
C^*(\Z_m\ast\ldots\ast\Z_m)\cong\C^m\ast_1\ldots\ast_1\C^m. 
\eeqn
\end{rem}

\begin{notn}
For this free product of cyclic groups, we also use the shorthand notation
\beqn
\label{gammanot}
\Gamma\equiv\underbrace{\Z_m\ast\ldots\ast\Z_m}_{k\textrm{ factors}}.
\eeqn
The overloading of the symbol ``$\Gamma$'' standing both for this group and for the specification of the Bell scenario with $k$ observables having $m$ outcomes is deliberate: in this way, the $C^*$-algebra $C^*(\Gamma)$ can be interpreted as either the maximal $C^*$-algebra of the group $\Gamma$, or, as well will see in the following theorem, the $C^*$-algebra relevant for describing the set of quantum correlations in the Bell scenario $\Gamma$. This identification generally suggests to use group presentations as a notation for specifying Bell scenarios: e.g.~$\Gamma\equiv\Z_2\ast\Z_3$ would correspond to a scenario where each party may choose between one $2$-outcome measurement and one $3$-outcome measurement.
\end{notn}

We can now formulate the characterization of quantum correlations in terms of $C^*$-algebra tensor products. The notation $\mathcal{Q}_\otimes$ and $\mathcal{Q}_c$ is as introduced in the previous section.

\begin{prop}
\label{qcminmaxpovm}
Let $\Gamma$ be any Bell scenario. Then a given conditional probability distribution $P(a,b|x,y)$ for the scenario $\Gamma$ is in the set\ldots
\begin{enumerate}
\item\label{qmma} \ldots $\overline{\mathcal{Q}_\otimes(\Gamma)}$ if and only if there is a $C^*$-algebraic state $\rho\in\mathscr{S}(C^*(\Gamma)\otimes_{\min} C^*(\Gamma))$ such that
\beqn
\label{qrepmin}
P(a,b|x,y)=\rho(e^x_a\otimes e^y_b).
\eeqn
\item\label{qmmb} \ldots $\mathcal{Q}_c(\Gamma)$ if and only if there is a $C^*$-algebraic state $\rho\in\mathscr{S}(C^*(\Gamma)\otimes_{\max} C^*(\Gamma))$ such that
\beqn
\label{qrepmax}
P(a,b|x,y)=\rho(e^x_a\otimes e^y_b).
\eeqn
Furthermore, $\mathcal{Q}_c(\Gamma)$ is a closed and convex set, and coincides with the set of quantum correlations attainable by projective measurements satisfying the commutativity assumption\eq{commu}.
\end{enumerate}
\end{prop}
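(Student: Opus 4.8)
The plan is to prove both directions of each equivalence using the universal property of maximal group $C^*$-algebras together with the defining properties of the minimal and maximal tensor products. First I would dispose of part \ref{qmmb}, the commuting case, since it is more direct. Given $P(a,b|x,y) = \langle\psi, A^x_a B^y_b\psi\rangle$ with commuting POVMs on a single Hilbert space $\H$, proposition~\ref{POVMchar}\ref{POVMb} (via the Fourier identification\eq{freefourier}) produces ucp maps $\Phi_A, \Phi_B : C^*(\Gamma)\to\B(\H)$ with commuting ranges sending $e^x_a, e^y_b$ to $A^x_a, B^y_b$. I would then invoke the appendix's characterization of $\otimes_{\max}$ (the one saying that a pair of ucp maps with commuting ranges into a common $\B(\H)$ extends to a ucp map on the max tensor product — this is the standard fact about $\otimes_{\max}$; I am assuming it is among the appendix results) to get a ucp map $\Phi : C^*(\Gamma)\otimes_{\max} C^*(\Gamma)\to\B(\H)$ with $\Phi(e^x_a\otimes e^y_b) = A^x_a B^y_b$. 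Setting $\rho(\,\cdot\,) = \langle\psi, \Phi(\,\cdot\,)\psi\rangle$ gives the desired state. Conversely, a state $\rho$ on $C^*(\Gamma)\otimes_{\max} C^*(\Gamma)$ has a GNS representation on some $\H_\rho$; restricting the representation to the two canonical copies of $C^*(\Gamma)$ and pulling back the generators $e^x_a$ (which map to positive operators summing to $\mathbbm 1$ by the Fourier-dual relations, hence to genuine POVM elements), and noting that the two copies commute inside $\otimes_{\max}$, exhibits $P$ in the form\eq{commcorrs}.

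Next, part \ref{qmma}, the tensor case. If $P\in\mathcal{Q}_\otimes(\Gamma)$ exactly, then $P(a,b|x,y) = \langle\psi, (A^x_a\otimes B^y_b)\psi\rangle$ on $\H_A\otimes\H_B$; proposition~\ref{POVMchar}\ref{POVMb} gives ucp maps $\Phi_A : C^*(\Gamma)\to\B(\H_A)$ and $\Phi_B : C^*(\Gamma)\to\B(\H_B)$, and the functoriality/minimality of $\otimes_{\min}$ (the fact that $\Phi_A\otimes\Phi_B$ extends to a ucp map $C^*(\Gamma)\otimes_{\min}C^*(\Gamma)\to\B(\H_A)\otimes_{\min}\B(\H_B) = \B(\H_A\otimes\H_B)$) produces the state $\rho = \langle\psi, (\Phi_A\otimes\Phi_B)(\,\cdot\,)\psi\rangle$. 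Since the set of $P$ representable by a state on $C^*(\Gamma)\otimes_{\min}C^*(\Gamma)$ is weak-$*$ closed (the state space is weak-$*$ compact and $P \mapsto \rho$ is continuous in the relevant sense — I would make this precise by noting that the map $\rho\mapsto (\rho(e^x_a\otimes e^y_b))_{a,b,x,y}$ is weak-$*$-to-norm continuous with compact domain, hence has closed image), it also contains $\overline{\mathcal{Q}_\otimes(\Gamma)}$. For the converse, given such a $\rho$, take its GNS representation $\pi_\rho$ of $C^*(\Gamma)\otimes_{\min}C^*(\Gamma)$ on $\H_\rho$; the point is to realize this representation, up to the approximation afforded by the closure, on a genuine tensor product Hilbert space. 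Here I would either cite the standard structure theory of $\otimes_{\min}$-representations (every representation of $A\otimes_{\min}B$ that is "spatial" is of tensor form, and a general one can be approximated by such — this is essentially the content of the characterization of $\otimes_{\min}$ via all pairs of commuting representations that are jointly spatial) or, more self-containedly, approximate $\rho$ by states of the form $\rho_1\otimes\rho_2$ composed with appropriate finite-dimensional compressions, which land in $\mathcal{Q}_\otimes(\Gamma)$.

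The last sentence of \ref{qmmb} — that $\mathcal{Q}_c(\Gamma)$ equals the set attainable by \emph{projective} measurements — I would handle by a dilation argument: given commuting POVMs on $\H$, apply Naimark's dilation to Alice's POVMs to get projective measurements on a larger space $\H\otimes\K$, check that Bob's operators (tensored with the identity on $\K$) still commute with the dilated projections, then do the same for Bob; the vector state is carried along unchanged, so the correlations are preserved. Closedness of $\mathcal{Q}_c(\Gamma)$ then follows from the $\otimes_{\max}$ characterization exactly as for $\otimes_{\min}$ above (weak-$*$ compactness of the state space of $C^*(\Gamma)\otimes_{\max}C^*(\Gamma)$), and convexity is either inherited the same way or proved by a direct-sum construction as in lemma~\ref{Qconvex}.

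I expect the main obstacle to be the converse direction of part \ref{qmma}: passing from an abstract state on $C^*(\Gamma)\otimes_{\min}C^*(\Gamma)$ back to a vector state on an honest tensor product $\H_A\otimes\H_B$ with the correct POVMs. The subtlety is precisely why the closure appears on the left-hand side — a general $\otimes_{\min}$-representation need not be spatially of tensor form, and one must genuinely approximate. The cleanest route is to lean on whichever characterization of $\otimes_{\min}$ the appendix provides (e.g. that $\|\cdot\|_{\min}$ is the norm coming from \emph{all} pairs of representations $\pi_A, \pi_B$ on a common Hilbert space with commuting ranges such that the joint representation is a spatial tensor product, or the equivalent description via $\B(\H_A)\otimes_{\min}\B(\H_B)\hookrightarrow\B(\H_A\otimes\H_B)$); everything else in the proof is bookkeeping with ucp maps, Naimark dilation, and weak-$*$ compactness.
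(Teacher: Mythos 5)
Most of your proposal follows the paper's own route: POVMs become ucp maps out of $C^*(\Gamma)$ via proposition~\ref{POVMchar}, the forward directions use the min/max tensor products of ucp maps (propositions~\ref{minproducp} and~\ref{maxproducp}), the converse of part~\ref{qmmb} is the GNS representation (which also gives projectivity of the measurements for free, since the $e^x_a$ are projections and a $*$-homomorphism preserves them --- your Naimark dilation detour is workable but unnecessary), and closedness/convexity of $\mathcal{Q}_c(\Gamma)$ come from weak-$*$ compactness and convexity of the state space. Your explicit remark that the image of the compact state space under the continuous evaluation map is closed, which is what upgrades the forward direction of part~\ref{qmma} from $\mathcal{Q}_\otimes$ to its closure, is a point the paper leaves implicit.

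The genuine gap is exactly where you predicted it: the converse of part~\ref{qmma}. Neither of your two suggestions works as stated. The ``structure theory of $\otimes_{\min}$-representations'' you gesture at is not a theorem you can cite --- a general representation of $A\otimes_{\min}B$ need not be spatially of tensor form, and there is no off-the-shelf approximation statement of the kind you describe. Your self-contained fallback, approximating $\rho$ by product states $\rho_1\otimes\rho_2$ (even composed with finite-dimensional compressions), is false: the weak-$*$ closed convex hull of product states is the set of separable states, which is in general a proper subset of $\mathscr{S}(C^*(\Gamma)\otimes_{\min}C^*(\Gamma))$, so entangled correlations would be missed entirely. The argument that actually works, and that the paper uses, is to fix a faithful representation $C^*(\Gamma)\subseteq\B(\H)$, so that $C^*(\Gamma)\otimes_{\min}C^*(\Gamma)$ is concretely represented on $\H\otimes\H$ with the fixed observables $e^x_a\otimes\mathbbm{1}$ and $\mathbbm{1}\otimes e^y_b$, and then invoke the density of mixed \emph{vector} states (proposition~\ref{vectorstates}, a Hahn--Banach argument): for any $\eps>0$ there are convex weights $\lambda_j$ and unit vectors $\xi_j\in\H\otimes\H$ --- arbitrary, generally entangled vectors of the tensor product Hilbert space, not product vectors --- with $\bigl|\rho(e^x_a\otimes e^y_b)-\sum_j\lambda_j\langle\xi_j,(e^x_a\otimes e^y_b)\xi_j\rangle\bigr|<\eps$ for all $x,y,a,b$. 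Each vector state here lies in $\mathcal{Q}_\otimes(\Gamma)$ by construction, and the convex combination does too by lemma~\ref{Qconvex}, so $P$ lies in the closure. Without this lemma (or an equivalent), your proof of the converse of~\ref{qmma} does not go through.
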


\begin{proof}
\begin{enumerate}
\item[\ref{qmma}] Suppose that $P(a,b|x,y)$ is quantum with the tensor product assumption, so that there are Hilbert spaces $\H_A$ and $\H_B$, a unit vector $\psi\in\H_A\otimes\H_B$ and observables $A^x_a$ and $B^y_b$ such that
\beqn
\label{Prep}
P(a,b|x,y)=\langle\psi,(A^x_a\otimes B^x_b)\psi\rangle.
\eeqn
Then as outlined before, Alice's POVMs define a ucp map $\Phi_A:C^*(\Gamma)\ra\B(\H_A)$ such that\eq{POVMsucps} holds, and likewise for Bob in terms of $\Phi_B:C^*(\Gamma)\ra\B(\H_B)$. But then by proposition~\ref{minproducp} on minimal tensor products of ucp maps, also
\beq
\Phi_A\otimes_{\min}\Phi_B:C^*(\Gamma)\otimes_{\min} C^*(\Gamma)\lra\B(\H_A\otimes\H_B),\qquad \gamma_A\otimes\gamma_B\mapsto \Phi_A(\gamma_A)\otimes\Phi_B(\gamma_B)
\eeq
is a ucp map, so that
\beq
\rho(\gamma_A\otimes\gamma_B)\equiv\langle\psi,(\Phi_A(\gamma_A)\otimes_{\min}\Phi_B(\gamma_B))\psi\rangle \qquad\forall\gamma_A,\gamma_B\in C^*(\Gamma)
\eeq
defines a state on $C^*(\Gamma)\otimes_{\min}C^*(\Gamma)$. By construction, this state satisfies\eq{qrepmin}.

For the converse implication, fix first any faithful representation $C^*(\Gamma)\subseteq\B(\H)$ for some Hilbert space $\H$. Then, the $e^x_a$ are concrete operators in $\B(\H)$. And by the definition of the minimal tensor product, $C^*(\Gamma)\otimes_{\min}C^*(\Gamma)$ is exactly the $C^*$-algebra generated by the joint observables $e^x_a\otimes e^y_b$ acting on $\H\otimes\H$. Now suppose that $P(a,b|x,y)$ is given in terms of a state $\rho\in\mathscr{S}(C^*(\Gamma)\otimes_{\min} C^*(\Gamma))$ satisfying\eq{qrepmin}. By proposition~\ref{vectorstates} on the density of vector states, one can hence find, for every $\eps>0$, a finite collection of unit vectors $\xi_1,\ldots,\xi_n\in\H\otimes\H$ together with non-negative coefficients $\lambda_i$ summing to $1$ such that
\beq
\bigg|\rho(e^x_a\otimes e^y_b)-\sum_j\lambda_j\langle\xi_j,(e^x_a\otimes e^y_b)\xi_j\rangle\bigg|<\eps\:\qquad\forall x,y,a,b.
\eeq
Hence with $A^x_a\equiv e^x_a$ and $B^y_b\equiv e^y_b$, the given conditional probability distribution $P(x,y|a,b)$ can be approximated arbitrarily well by quantum correlations with the tensor product assumption coming from a mixed state $\sum_j\lambda_j\langle\xi_j,\,\cdot\,\xi_j\rangle$. 

\item[\ref{qmmb}] Suppose that $P(a,b|x,y)$ is quantum with the commutativity assumption, so that there is a Hilbert space $\H$, a unit vector $\psi\in\H$ and observables $A^x_a\in\B(\H)$ and $B^y_b\in\B(\H)$ satisfying the commutativity assumption\eq{commu} such that
\beq
P(a,b|x,y)=\langle\psi,A^x_a B^x_b\psi\rangle.
\eeq
Then again, Alice's POVMs define a ucp map $\Phi_A:C^*(\Gamma)\ra\B(\H)$ such that\eq{POVMsucps} holds, and likewise for Bob in terms of $\Phi_B:C^*(\Gamma)\ra\B(\H)$. Now the assertion follows from corollary~\ref{maxproducp} on the maximal tensor products of ucp maps.

The converse here is simpler than in part~\ref{qmma}: the GNS representation of the given state $\rho$ has all the desired properties. Alice's observables $A^x_a$ are implemented as projection operators $e^x_a\otimes\mathbbm{1}$, and likewise for Bob's $B^y_b=\mathbbm{1}\otimes e^y_b$, so that the measurements are actually projective.

About closedness and convexity, recall that the state space $\mathscr{S}(C^*(\Gamma)\otimes_{\max}C^*(\Gamma))$ is convex and compact in the weak $*$-topology. The projection from $\mathscr{S}(C^*(\Gamma)\otimes_{\max}C^*(\Gamma))$ down to the joint probability space $\R^{k^nm^2}$ is given by evaluation on the elements $e^x_a\otimes e^y_b$, which makes it linear and continuous. Hence its image, which is $\mathcal{Q}_{c}(\Gamma)$, is also convex and compact, and therefore closed.
\end{enumerate}
\end{proof}

Note that in both cases, the proof exhibits the existence of a universal quantum system in the following sense: there is a Hilbert space together with \emph{fixed projective measurements} which can reproduce all quantum correlations as the state varies over all unit vectors in the Hilbert space.

In particular, we have obtained the result that the set of quantum correlations does not depend on whether the parties are allowed to use any POVMs or whether they are restricted to projective measurements. While this is clear with the tensor product assumption by adjoining one ancilla for Alice and one ancilla for Bob, it is physically less intuitive with the commutativity assumption.

From our point of view, proposition~\ref{qcminmaxpovm} provides the most natural framework for the study of quantum correlations. This is apparent not only from the connection between Tsirelson's Problem and the QWEP conjecture which we will describe in section~\ref{mainsec}, but also from the following two remarks which sketch how two other important results on quantum correlations naturally fit into our formalism.

\begin{rem}[The semidefinite hierarchy]
\label{hierarchy}
Building on work of Wehner~\cite{We}, Navascu\'es, Pironio and Ac\'in~\cite{NPA} have developed a hierarchy of semidefinite programs characterizing $\mathcal{Q}_c(\Gamma)$ for any $\Gamma$. (See also~\cite{DLTW} for overlapping work.) We will now argue that this result naturally belongs into our framework; one indication for this is already the proof of~\cite[Thm.~8]{NPA} contains a rediscovery of the GNS representation.

Within our formalism, the hierarchy works roughly as follows. We write $L_n$ (``$L$evel $n$'') for the linear span of the products of up to $n$ generators $e^x_a\otimes\mathbbm{1}$ and $\mathbbm{1}\otimes e^y_b$. The observables $e^x_a\otimes e^y_b$ are in $L_2$. This gives an increasing sequence $(L_n)_{n\in\N}$ of subspaces such that $\cup_n L_n$ is dense in $C^*(\Gamma)\otimes_{\max}C^*(\Gamma)$. By proposition~\ref{qcminmaxpovm}\ref{qmmb} and the Hahn-Banach theorem, a given $P(a,b|x,y)$ lies in $\mathcal{Q}_c(\Gamma)$ if and only if there is a positive linear functional $\mu:L_2\to\C$ satisfying
$$
\mu(e^x_a\otimes e^y_b) = P(a,b|x,y).
$$
It is known that positivity of $\mu$ is equivalent to $\mu(\gamma^* \gamma)\geq 0$ for every $\gamma\in L_n$ and every $n\in\N$~\cite{Schm}. From this, we conclude that $P(a,b|x,y)$ lies in $\mathcal{Q}_c(\Gamma)$ iff there exists, for each $n\geq 1$, some linear map $\mu_n:L_{2n}\to\C$ with $\mu_n(\gamma^*\gamma)\geq 0$ for every $\gamma\in L_n$ and $\mu_n(e^x_a\otimes e^y_b)=P(a,b|x,y)$. For if this is the case, then the restrictions $\mu_{n|{L_2}}$ form a uniformly bounded sequence of functionals, which has an accumulation point $\mu$ with all the required properties.

The question of existence of such a $\mu_n$ for a fixed $n$ is a semidefinite programming problem: it is equivalent to the existence of a positive semidefinite inner product
\beqn
\label{sdh}
s_n : L_n \times L_n \to \C ,\quad (\gamma_1,\gamma_2)\mapsto s_n(\gamma_1,\gamma_2) 
\eeqn
which is required to satisfy the linear relations $s_n(\gamma_1,\gamma_2)=s_n(\gamma_1',\gamma_2')$ whenever $\gamma_1^*\gamma_2=\gamma_1'^*\gamma_2'$ and also $s_n(e^x_a\otimes\mathbbm{1},\mathbbm{1}\otimes e^y_b)=P(a,b|x,y)$.

In conclusion, $P(a,b|x,y)\in\mathcal{Q}_c(\Gamma)$ holds if and only if each semidefinite program in this infinite sequence of semidefinite programs has a solution. This result has already found many applications; see e.g.~\cite{Rand,PV,ETC} for a small selection.
\end{rem}

\begin{rem}[Why the CHSH scenario is simple]
\label{CHSH}
The formulation of Bell scenarios in terms of group $C^*$-algebras has many other advantages. One of them is that it can be easily seen why the CHSH scenario is simple in many regards; e.g.~recall that all quantum correlations can be attained via qubit systems~\cite{Mas}. The reason in the present setting is that the corresponding group $\Z_2\ast\Z_2$ is isomorphic to the semidirect product~\cite[11.63]{Rot}
\beqn
\label{sdp}
\Z_2\ast\Z_2\cong \Z\rtimes\Z_2,
\eeqn
which implies that all its irreducible representations are two-dimensional (i.e.~qubits)~\cite{Beh}. On the other hand, all other nontrivial groups $\Gamma$ of the form\eq{freeprod} contain the free group $\F_2$ as a subgroup (see lemma~\ref{ubgroups}). This prevents them from being written in a form like~\ref{sdp} and witnesses the high complexity of their representation theory.
\end{rem}

One can also formulate the main part of theorem~\ref{qcminmaxpovm} dually in terms of Bell operators and their maximal quantum values. The intuition is that when $\overline{\mathcal{Q}_{\otimes}(\Gamma)}\neq\mathcal{Q}_c(\Gamma)$, then the Hahn-Banach theorem guarantees the existence of a Bell inequality certifying this, i.e.~a Bell inequality whose maximal quantum value with the tensor product assumption is \emph{strictly smaller} than its maximal quantum value with the commutativity assumption. Note that we use the term ``Bell inequality'' in a rather loose sense not referring to the refutation of local realism, but only as a synonym for a linear combination of outcome probabilities; more formally, if $C^{a,b}_{x,y}$ is any tensor of $\R$-valued coefficients, then we consider the linear combination
\beq
\sum_{a,b,x,y}C^{a,b}_{x,y}P(a,b|x,y)
\eeq
as defining a linear functional from the conditional probability distributions $P(\cdot,\cdot|\cdot,\cdot)$ to $\R$.

\begin{prop}
\label{bominmax}
The maximal absolute value of such a functional on the set\ldots
\begin{enumerate}
\item \ldots $\overline{\mathcal{Q}_{\otimes}(\Gamma)}$ is given by
\beqn
\label{borepmin}
\left|\left|\sum_{a,b,x,y}C^{a,b}_{x,y}e^x_a\otimes e^y_b\right|\right|_{\min}
\eeqn
where $||\cdot||_{\min}$ is the norm on $C^*(\Gamma)\otimes_{\min}C^*(\Gamma)$.
\item \ldots $\mathcal{Q}_c(\Gamma)$ is given by
\beqn
\label{borepmax}
\left|\left|\sum_{a,b,x,y}C^{a,b}_{x,y}e^x_a\otimes e^y_b\right|\right|_{\max}
\eeqn
where $||\cdot||_{\max}$ is the norm on $C^*(\Gamma)\otimes_{\max}C^*(\Gamma)$.
\end{enumerate}
\end{prop}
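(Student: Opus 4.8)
The plan is to read this off directly from Proposition~\ref{qcminmaxpovm} together with one elementary fact about $C^*$-algebras: for a self-adjoint element $T$ of a unital $C^*$-algebra $\mathcal{A}$, the set of values $\{\rho(T):\rho\in\mathscr{S}(\mathcal{A})\}$ is exactly the interval $[\min\sigma(T),\max\sigma(T)]$, and hence $\sup_{\rho\in\mathscr{S}(\mathcal{A})}|\rho(T)|=\|T\|$. Accordingly, first I would introduce the element
\[
T\equiv\sum_{a,b,x,y}C^{a,b}_{x,y}\,e^x_a\otimes e^y_b ,
\]
regarded once as an element of $C^*(\Gamma)\otimes_{\min}C^*(\Gamma)$ and once as an element of $C^*(\Gamma)\otimes_{\max}C^*(\Gamma)$ (the two completions carrying the norms $\|\cdot\|_{\min}$ and $\|\cdot\|_{\max}$ respectively). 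Since the coefficients $C^{a,b}_{x,y}$ are real and each $e^x_a$ is a self-adjoint idempotent in $\C^m\cong C^*(\Z_m)$, the element $T$ is self-adjoint in both algebras.

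Next, for part~(a): by Proposition~\ref{qcminmaxpovm}\ref{qmma}, the set $\overline{\mathcal{Q}_\otimes(\Gamma)}$ is precisely the image of the state space $\mathscr{S}(C^*(\Gamma)\otimes_{\min}C^*(\Gamma))$ under the evaluation map $\rho\mapsto\bigl(\rho(e^x_a\otimes e^y_b)\bigr)_{a,b,x,y}$. Consequently, for any coefficient tensor $C$,
\[
\sup_{P\in\overline{\mathcal{Q}_\otimes(\Gamma)}}\Bigl|\sum_{a,b,x,y}C^{a,b}_{x,y}P(a,b|x,y)\Bigr|=\sup_{\rho\in\mathscr{S}(C^*(\Gamma)\otimes_{\min}C^*(\Gamma))}|\rho(T)| ,
\]
and the elementary fact above, applied to the self-adjoint $T$, identifies the right-hand side with $\|T\|_{\min}$, which is\eq{borepmin}. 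The supremum is moreover attained, since $\mathscr{S}(C^*(\Gamma)\otimes_{\min}C^*(\Gamma))$ is weak-$*$ compact and the evaluation map is weak-$*$-to-norm continuous, so the word ``maximal'' is justified. Part~(b) is then the verbatim same argument with ``$\min$'' replaced by ``$\max$'' throughout and Proposition~\ref{qcminmaxpovm}\ref{qmma} replaced by Proposition~\ref{qcminmaxpovm}\ref{qmmb}; note that $\|T\|_{\min}\leq\|T\|_{\max}$, consistent with $\overline{\mathcal{Q}_\otimes(\Gamma)}\subseteq\mathcal{Q}_c(\Gamma)$.

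The only step needing a little care — and the natural place for a slip — is the inequality $\sup_\rho|\rho(T)|\geq\|T\|$ in the elementary fact: for $\lambda\in\sigma(T)$ of maximal modulus one restricts to the commutative $C^*$-subalgebra generated by $T$ and $\mathbbm{1}$, chooses a character sending $T$ to $\lambda$, and extends it by Hahn--Banach to a norm-one functional on the whole algebra, which is automatically a state (being positive because it has norm $1$ and sends $\mathbbm{1}$ to $1$). The reverse inequality $|\rho(T)|\leq\|T\|$ is immediate from $-\|T\|\,\mathbbm{1}\leq T\leq\|T\|\,\mathbbm{1}$. Everything else is bookkeeping; in effect the proposition is nothing but Proposition~\ref{qcminmaxpovm} re-expressed in the dual language of Bell functionals. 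One should merely double-check that no hidden constant is lost in passing between the expectation values $P(a,b|x,y)$ and the pairing $\rho(T)$, which is clear since $\rho(T)=\sum_{a,b,x,y}C^{a,b}_{x,y}\,\rho(e^x_a\otimes e^y_b)$ by linearity of $\rho$.
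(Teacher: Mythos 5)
Your proposal is correct and follows essentially the same route as the paper: Proposition~\ref{qcminmaxpovm} identifies $\overline{\mathcal{Q}_\otimes(\Gamma)}$ and $\mathcal{Q}_c(\Gamma)$ as images of the respective state spaces under evaluation on the $e^x_a\otimes e^y_b$, and the norm of the self-adjoint element $\sum C^{a,b}_{x,y}e^x_a\otimes e^y_b$ equals the supremum of $|\rho(\cdot)|$ over states. The extra details you supply (self-adjointness of $T$, the Hahn--Banach argument for the elementary fact, and weak-$*$ compactness to justify attainment of the maximum) are all sound and merely make explicit what the paper's one-line proof leaves implicit.
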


\begin{proof}
This is clear from the previous theorem by recalling that the norm of a self-adjoint element in a $C^*$-algebra equals the maximum of the absolute value of this element under evaluation on all $C^*$-algebraic states.
\end{proof}

\section{From Kirchberg's conjecture to Tsirelson's problem}
\label{mainsec}

We can now immediately apply the material collected in appendix~\ref{appqwep} to the results established in the previous section in order to obtain our first main theorem.

\begin{thm}
\label{QWEPtoTP}
Let $\Gamma$ be any bipartite Bell scenario. If the QWEP conjecture is true, then $\overline{\mathcal{Q}_{\otimes}(\Gamma)}=\mathcal{Q}_c(\Gamma)$.
\end{thm}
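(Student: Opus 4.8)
The plan is to funnel everything through proposition~\ref{qcminmaxpovm} and then invoke the reformulation of the QWEP conjecture supplied by appendix~\ref{appqwep}. First I would record what proposition~\ref{qcminmaxpovm} says in the present situation: $\overline{\mathcal{Q}_\otimes(\Gamma)}$ is exactly the image of the state space $\mathscr{S}(C^*(\Gamma)\otimes_{\min}C^*(\Gamma))$ under the affine map $\rho\mapsto\big(\rho(e^x_a\otimes e^y_b)\big)_{a,b,x,y}$, whereas $\mathcal{Q}_c(\Gamma)$ is the image of $\mathscr{S}(C^*(\Gamma)\otimes_{\max}C^*(\Gamma))$ under the very same map. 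The canonical quotient $\ast$-homomorphism $C^*(\Gamma)\otimes_{\max}C^*(\Gamma)\twoheadrightarrow C^*(\Gamma)\otimes_{\min}C^*(\Gamma)$ pulls states back to states and fixes each $e^x_a\otimes e^y_b$, which re-proves $\overline{\mathcal{Q}_\otimes(\Gamma)}\subseteq\mathcal{Q}_c(\Gamma)$; and, crucially, the two images coincide as soon as this homomorphism is injective, i.e.\ as soon as $C^*(\Gamma)\otimes_{\min}C^*(\Gamma)=C^*(\Gamma)\otimes_{\max}C^*(\Gamma)$. So the whole theorem reduces to the implication
\[
\textrm{QWEP}\ \Longrightarrow\ C^*(\Gamma)\otimes_{\min}C^*(\Gamma)=C^*(\Gamma)\otimes_{\max}C^*(\Gamma)\quad\textrm{for }\Gamma=\Z_m\ast\cdots\ast\Z_m .
\]

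Next I would dispose of the degenerate scenarios and then quote the appendix for the rest. For $m=1$ the statement is vacuous, and for the CHSH scenario $\Gamma=\Z_2\ast\Z_2$ it holds unconditionally, since $\Z_2\ast\Z_2\cong\Z\rtimes\Z_2$ is amenable, hence $C^*(\Z_2\ast\Z_2)$ is nuclear and has a unique $C^*$-tensor square (cf.\ remark~\ref{CHSH}). For all remaining $\Gamma$, the required implication is precisely what is set up in appendix~\ref{appqwep}, whose argument I would summarize as follows: (i) $C^*(\Gamma)$ is a full free product of copies of the nuclear algebra $C^*(\Z_m)\cong\C^m$ and therefore has the local lifting property; (ii) the QWEP conjecture is equivalent, by Kirchberg, to $C^*(\F_\infty)$ having the weak expectation property; (iii) combining (i) and (ii) via Kirchberg's theorem that LLP-$\otimes$-WEP forces the two tensor norms to agree yields $C^*(\Gamma)\otimes_{\min}C^*(\F_\infty)=C^*(\Gamma)\otimes_{\max}C^*(\F_\infty)$; and (iv) pushing this equality forward along $\mathrm{id}_{C^*(\Gamma)}\otimes\pi$, where $\pi\colon C^*(\F_\infty)\twoheadrightarrow C^*(\Gamma)$ comes from a surjection of a free group onto $\Gamma$, gives the desired $C^*(\Gamma)\otimes_{\min}C^*(\Gamma)=C^*(\Gamma)\otimes_{\max}C^*(\Gamma)$. (The fact that $\Gamma$ contains a copy of $\F_2$, lemma~\ref{ubgroups}, is what makes this propagation non-vacuous and is essential for the \emph{converse} direction, but is not strictly needed here.)

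Granting the appendix, the proof of the theorem itself is then one line: fix $\Gamma$; if QWEP holds then $C^*(\Gamma)\otimes_{\min}C^*(\Gamma)=C^*(\Gamma)\otimes_{\max}C^*(\Gamma)$, so the two state spaces occurring above are literally the same convex compact set, hence have the same image under $\rho\mapsto\big(\rho(e^x_a\otimes e^y_b)\big)$, namely $\overline{\mathcal{Q}_\otimes(\Gamma)}=\mathcal{Q}_c(\Gamma)$. I expect the genuine obstacle to sit entirely in step (iv) of the appendix argument: the minimal tensor product is not exact, so one cannot simply quotient a relation of the form $\|\cdot\|_{\min}=\|\cdot\|_{\max}$ down along $\pi$. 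The way around it is to observe that, once $C^*(\Gamma)\otimes_{\min}C^*(\F_\infty)$ and $C^*(\Gamma)\otimes_{\max}C^*(\F_\infty)$ are known to be the \emph{same} $C^*$-algebra, the kernels of the two induced quotient maps onto $C^*(\Gamma)\otimes_{\min}C^*(\Gamma)$ and $C^*(\Gamma)\otimes_{\max}C^*(\Gamma)$ are both equal to the norm-closure of the common algebraic ideal $C^*(\Gamma)\odot\ker\pi$ --- here one uses that the minimal tensor product is \emph{injective} and that the maximal tensor product is \emph{exact} --- so the two quotients coincide. Verifying the LLP claim in step (i) for free products of finite cyclic groups is the other point that requires care, but it is standard $C^*$-algebra input rather than anything special to Bell scenarios.
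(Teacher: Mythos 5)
Your top-level reduction is exactly the paper's proof: by proposition~\ref{qcminmaxpovm}, both $\overline{\mathcal{Q}_\otimes(\Gamma)}$ and $\mathcal{Q}_c(\Gamma)$ are the images of the respective state spaces under $\rho\mapsto(\rho(e^x_a\otimes e^y_b))$, so everything hinges on the implication $\textrm{QWEP}\Rightarrow C^*(\Gamma)\otimes_{\min}C^*(\Gamma)=C^*(\Gamma)\otimes_{\max}C^*(\Gamma)$, which the paper delegates to theorem~\ref{FGamma}; your separate treatment of $\Z_2\ast\Z_2$ via amenability is a sensible addition since theorem~\ref{FGamma} excludes that case. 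The difficulty is your sketch of how the appendix delivers this implication, which is not the paper's argument and whose step (iv) contains a genuine gap. You identify $\ker(\mathrm{id}\otimes_{\min}\pi)$ with the closure of $C^*(\Gamma)\otimes_{\mathrm{alg}}\ker\pi$; that identity is precisely the assertion that the fixed factor $C^*(\Gamma)$ is an \emph{exact} $C^*$-algebra, and it does not follow from injectivity of $\otimes_{\min}$ (which concerns subalgebras, not quotients) combined with exactness of $\otimes_{\max}$ (which only handles the other kernel). Since $C^*(\Gamma)$ contains $C^*(\F_2)$ as a $C^*$-subalgebra (proposition~\ref{groupinc} together with lemma~\ref{ubgroups}) and $C^*(\F_2)$ is not exact, while exactness passes to subalgebras, the required kernel computation for $\otimes_{\min}$ is unavailable. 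Your own diagnosis that the minimal tensor product fails to be exact is correct, but the proposed repair does not circumvent the problem --- it restates it.

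The paper's route avoids quotienting altogether by running the comparison in the opposite direction. From QWEP it first obtains $C^*(\F_k)\otimes_{\min}C^*(\F_k)=C^*(\F_k)\otimes_{\max}C^*(\F_k)$ via the subgroup inclusion $\F_k\subseteq\F_2$ (ping-pong lemma plus corollary~\ref{subgroupiso}), with no appeal to WEP/LLP machinery. It then uses the explicitly constructed ucp section $\Psi:C^*(\Gamma)\to C^*(\F_k)$ of the canonical surjection (the flag-variety lifting lemma) and forms the commutative square whose vertical arrows are $\Psi\otimes_{\max}\Psi$ and $\Psi\otimes_{\min}\Psi$. Since $\Psi\otimes_{\max}\Psi$ admits the $*$-homomorphism $C^*(p)\otimes_{\max}C^*(p)$ as a left inverse, it is injective, and since the bottom row of the square is an equality, the canonical surjection $C^*(\Gamma)\otimes_{\max}C^*(\Gamma)\twoheadrightarrow C^*(\Gamma)\otimes_{\min}C^*(\Gamma)$ must be injective as well. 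If you wish to salvage your own route through $C^*(\F_\infty)$, the same device works: embed $C^*(\Gamma)\otimes_{\max}C^*(\Gamma)$ into $C^*(\Gamma)\otimes_{\max}C^*(\F_\infty)$ via $\mathrm{id}\otimes_{\max}\Psi$ (injective because $\mathrm{id}\otimes_{\max}\pi$ is a left inverse) and chase the corresponding square, rather than attempting to compute kernels of the induced quotient maps.
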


\begin{proof}
By theorem~\ref{FGamma}, the QWEP conjecture implies that the minimal tensor product and the maximal tensor product appearing in theorem~\ref{qcminmaxpovm} coincide.
\end{proof}

The ramifications of this result are twofold. Firstly and obviously, it is a conditional solution of Tsirelson's problem, given that one has some cofindence in the validity of the QWEP conjecture. If one is not willing to attribute much likelihood to the validity of QWEP, this may seem like a rather useless statement. However from this point of view, theorem~\ref{QWEPtoTP}, together with its upcoming siblings~\ref{QWEPtoSTTP} and~\ref{QWEPtoSDTP}, may provide a possible means of disproving the QWEP conjecture by finding a counterexample to Tsirelson's problem. Moreover, we believe that our approach in terms of maximal group $C^*$-algebras is an appropriate playground for trying to do so: the observables of each party are defined in terms of unitary representations of groups, and this opens the door for the application of results and methods from the theory of infinite-dimensional unitary representations of discrete groups. A particularly striking case is the one where both Alice and Bob can choose between one $2$-outcome and one $3$-outcome measurement. The corresponding group is the \emph{modular group} $\Gamma=\Z_2\ast\Z_3\cong PSL_2(\Z)$~\cite[11.64(ii)]{Rot}, a central and widely studied object in the theory of modular forms~\cite{123MF}.

The rest of the main part of this paper is devoted to deriving some variations on the main theme given by theorem~\ref{QWEPtoTP}.

\begin{rem}[The multipartite case]
\label{multipart}
For more than two parties involved in a Bell scenario, there are many possible and nontrivial combinations of the tensor product assumption with the commutativity assumption. For example when Alice, Bob and Charlie share a tripartite state, one possible assumption might be that Alice operates on a Hilbert space $\H_A$, Bob operates on a Hilbert space $\H_B$, while Charlie's observables live on the joint Hilbert space $\H_A\otimes\H_B$ and commute with those of Alice and Bob. Alternatively, one could assume that Charlie's observables only live on Bob's $\H_B$, where they commute with Bob's operators, while Alice operates on the separate Hilbert space $\H_A$, so that again $\H_A\otimes\H_B$ describes the total system. In principle, these two options might give rise to diferent sets of tripartite quantum correlations---even so if the QWEP conjecture is true!

In terms of $C^*$-algebra tensor products, these two possibilities correspond to looking at the tensor products
\beqn
\label{triparta}
\left(C^*(\Gamma)\otimes_{\mathrm{min}} C^*(\Gamma)\right)\otimes_{\mathrm{max}} C^*(\Gamma)
\eeqn
and 
\beqn
\label{tripartb}
C^*(\Gamma)\otimes_{\mathrm{min}}\left(C^*(\Gamma)\otimes_{\mathrm{max}} C^*(\Gamma)\right),
\eeqn
respectively, such that a statement analogous to proposition~\ref{qcminmaxpovm} holds. These two tensor products might in principle be different; assuming that the QWEP conjecture is true and using the commutativity of $\otimes_{\mathrm{min}}$ as well as the obvious analogue of theorem~\ref{FGamma}, asking whether~(\ref{triparta}) equals~(\ref{tripartb}) for $\Gamma\neq\Z_2\ast\F_2$ can be formulated as asking whether
\beq
C^*(\F_2\times\F_2)\otimes_{\mathrm{max}} C^*(\F_2)\stackrel{?}{=}C^*(\F_2\times\F_2)\otimes_{\mathrm{min}} C^*(\F_2).
\eeq
And although this problem seems related to the QWEP conjecture itself, according to~\cite[p.~15]{Oza} no formal connection has been found.
\end{rem}

\paragraph{Spatiotemporal correlations.} We now describe a stronger version of Tsirelson's problem, an affirmative answer to which would also follow from a proof of the QWEP conjecture. The initial motivation for considering something more than just spatial quantum correlations $P(x,y|a,b)$ is that the expressions\eq{borepmin} and\eq{borepmax} only probe the norms $||\cdot||_{\min}$ and $||\cdot||_{\max}$ on a tiny part of the group $C^*$-algebra, since each tensor factor $e^x_a$ is only a generator of the free product\eq{Cfreeprod} (as opposed to a product of more than one generator), and it is unclear whether any potential difference between $||\cdot||_{\min}$ and $||\cdot||_{\max}$ would manifest itself on this small subspace. In principle it may be conceivable that the norms $||\cdot||_{\min}$ and $||\cdot||_{\max}$ differ only on some ``higher'' part of the algebras; in general, one has to consider arbitrary elements of $\C[\Gamma]\otimes_{\mathrm{alg}}\C[\Gamma]$. An arbitary element of $\Gamma$ is a product of generators, so that an arbitrary element of $\C[\Gamma]\otimes_{\mathrm{alg}}\C[\Gamma]$ is a linear combination of tensor products of products of generators. More formally, it is not difficult to see that an arbitrary element of $\C[\Gamma]\otimes_{\mathrm{alg}}\C[\Gamma]$ can be written in the form
\beq
\sum_{\{a_i\},\{b_i\},\{x_i\},\{y_i\}}C^{a_1,\ldots,a_t;b_1,\ldots,b_t}_{x_1,\ldots,x_t;y_1,\ldots,y_t}\: e^{x_1}_{a_1}\cdots e^{x_t}_{a_t}\otimes e^{y_1}_{b_1}\cdots e^{y_t}_{b_t}
\eeq
for some $t\in \N$ specifying the highest relevant degree and some tensor of complex coefficients $C^{a_1,\ldots,a_t;b_1,\ldots,b_t}_{x_1,\ldots,x_t;y_1,\ldots,y_t}$. Using the shorthand $\vec{x}\equiv (x_1,\ldots,x_t)$, and similarly for $\vec{y}$, $\vec{a}$ and $\vec{b}$, we write this expression more succinctly as
\beqn
\label{genelem}
\sum_{\vec{a},\vec{b},\vec{x},\vec{y}}C^{\vec{a};\vec{b}}_{\vec{x};\vec{y}}\: e^{\vec{x}}_{\vec{a}}\otimes e^{\vec{y}}_{\vec{b}},
\eeqn
where it is understood that $e^{\vec{x}}_{\vec{a}}$, and likewise $e^{\vec{y}}_{\vec{b}}$, now is a product of generators.

By~\ref{FGamma}, QWEP is true if and only if $||\cdot||_{\min}$ and $||\cdot||_{\max}$ coincide on all elements of the form\eq{genelem}. Hence if we would be able to find a physical interpretation of the norm of an element like\eq{genelem} (or for a sufficiently big class of such elements), we would then obtain an extended version of Tsirelson's problem fully equivalent to QWEP.

One possible approach in this direction is based on \emph{spatiotemporal correlations}. Let us again suppose that Alice and Bob have access to $k$ measurements each, where each measurement has $m$ possible outcomes. Additionally, we now restrict these measurements to be projective, so that the $A^x_a$ and $B^y_b$ are now projection operators on their respective Hilbert spaces. 

Following~\cite{Pop}, the main idea now is to consider the case where Alice and Bob can apply these measurements not only once, but any number of times in temporal succession. In other words, they do not discard their system after they have measured, but they simply measure it again and again, each time using their free will to choose a new measurement setting. We assume furthermore that the quantum system employed by Alice and Bob does not have any nontrivial dynamics on the relevant timescales (besides the projection onto a measurement eigenstate due to the application of a measurement).

So when Alice chooses a sequence of measurement settings $\vec{x}\equiv(x_1,\ldots,x_t)$, her outcome probabilities will be governed by the \emph{history operators}
\beq
A^{\vec{x}}_{\vec{a}}\equiv A^{x_t}_{a_t}\cdots A^{x_1}_{a_1}.
\eeq
For each sequence of settings $\vec{x}$, this defines a POVM $\left(A^{\vec{x}\:\ast}_{\vec{a}} A^{\vec{x}}_{\vec{a}}\right)_{\vec{a}}$ indexed by the possible outcome sequences $\vec{a}\in[m]^t$. If Bob similarly has projective measurements which he applies not only once, but $t$ times in temporal succession with measurement settings $\vec{y}\equiv(y_1,\ldots,y_t)$, the joint outcome probabilities on an initial state defined by the vector $\psi\in\H_A\otimes\H_B$ are given by
\beqn
\label{sttensor}
P(\vec{a},\vec{b}|\vec{x},\vec{y})=\langle\psi,\left(A^{x_1}_{a_1}\ldots A^{x_t}_{a_t}\ldots A^{x_1}_{a_1}\otimes B^{y_1}_{b_1}\ldots B^{y_t}_{b_t}\ldots B^{y_1}_{b_1}\right)\psi\rangle.
\eeqn
The obvious question now is, which conditional probability distributions $P(\vec{a},\vec{b}|\vec{x},\vec{y})$ can occur in this way for some unit vector $\psi$ and projective measurements $A^x_a$, $B^y_b$? See~\cite{2D} for an answer to this question in the single-party case with $\Gamma=\Z_2\ast\Z_2$.

The set of all $P(\vec{a},\vec{b}|\vec{x},\vec{y})$ which can be written in the form\eq{sttensor} is a set $\mathcal{ST}_{\otimes}^t(\Gamma)\subset\R^{m^{2t}k^{2t}}$. Using the same proof as for lemma~\ref{Qconvex}, it can be shown that $\mathcal{ST}_{\otimes}^t(\Gamma)$ is convex.

The formula\eq{sttensor} has been derived under the tensor product assumption. With the commutativity assumption, there is only a single Hilbert space $\H$ for Alice and Bob, on which their projection operators $A^x_a$ and $B^y_b$ live. Of course, the same formula as\eq{sttensor} holds, except that the tensor product sign has to be omitted. The set of all such conditional probability distribution $P(\vec{a},\vec{b}|\vec{x},\vec{y})$ is another subset of $\R^{m^{2t}k^{2t}}$ which we will denote by $\mathcal{ST}_c^t(\Gamma)$. Note that we keep $t$ fixed throughout.

\begin{prop}
\label{stqcminmax}
In the bipartite Bell scenario $\Gamma$, a given spatiotemporal probability distribution $P(\vec{a},\vec{b}|\vec{x},\vec{y})$ is \ldots
\begin{enumerate}
\item \ldots in $\overline{\mathcal{ST}_\otimes^t(\Gamma)}$ if and only if there is a $C^*$-algebraic state $\rho\in\mathscr{S}(C^*(\Gamma)\otimes_{\min} C^*(\Gamma))$ such that
\beqn
\label{stqrepmin}
P(\vec{a},\vec{b}|\vec{x},\vec{y})=\rho(e^{x_1}_{a_1}\ldots e^{x_t}_{a_t}\ldots e^{x_1}_{a_1}\otimes e^{y_1}_{b_1}\ldots e^{y_t}_{b_t}\ldots e^{y_1}_{b_1}).
\eeqn
\item \ldots in $\mathcal{ST}_c^t(\Gamma)$ if and only if there is a $C^*$-algebraic state $\rho\in\mathscr{S}(C^*(\Gamma)\otimes_{\max} C^*(\Gamma))$ such that
\beqn
\label{stqrepmax}
P(\vec{a},\vec{b}|\vec{x},\vec{y})=\rho(e^{x_1}_{a_1}\ldots e^{x_t}_{a_t}\ldots e^{x_1}_{a_1}\otimes e^{y_1}_{b_1}\ldots e^{y_t}_{b_t}\ldots e^{y_1}_{b_1}).
\eeqn
Furthermore, $\mathcal{ST}_c^t(\Gamma)$ is a closed and convex set.
\end{enumerate}
\end{prop}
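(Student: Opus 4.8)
The plan is to run the argument of proposition~\ref{qcminmaxpovm} essentially verbatim, the one essential change being that a \emph{projective} $m$-outcome measurement $\{P_1,\dots,P_m\}$ (mutually orthogonal projections with $\sum_a P_a=\mathbbm 1$) induces not merely a ucp map but a genuine $*$-homomorphism $\C^m\to\B(\H)$, $e_a\mapsto P_a$, since the defining relations $e_ae_b=\delta_{ab}e_a$ of $\C^m$ are respected; by recursively applying corollary~\ref{extfreeprod} (exactly as in the proof of proposition~\ref{POVMchar}), a $k$-tuple of projective measurements on $\H$ thus yields a $*$-homomorphism $\pi\colon C^*(\Gamma)\to\B(\H)$ with $\pi(e^x_a)=P^x_a$. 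This upgrade from ucp to homomorphism is exactly what the spatiotemporal setting demands, because the history operators are \emph{products} $A^{x_t}_{a_t}\cdots A^{x_1}_{a_1}$ of the $A^x_a$, and only a multiplicative map reproduces them from the corresponding product of generators in $C^*(\Gamma)$. Note moreover that, the $e^x_a$ being projections, the element $e^{x_1}_{a_1}\cdots e^{x_t}_{a_t}\cdots e^{x_1}_{a_1}$ in\eq{stqrepmin} and\eq{stqrepmax} is precisely $\big(e^{x_t}_{a_t}\cdots e^{x_1}_{a_1}\big)^{*}\big(e^{x_t}_{a_t}\cdots e^{x_1}_{a_1}\big)$ written out, so evaluating it on a vector state $\psi$ returns exactly the squared norm appearing in\eq{sttensor}.

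For part (a): in the forward direction, Alice's and Bob's projective measurements give $*$-homomorphisms $\pi_A\colon C^*(\Gamma)\to\B(\H_A)$ and $\pi_B\colon C^*(\Gamma)\to\B(\H_B)$; by proposition~\ref{minproducp}, $\pi_A\otimes_{\min}\pi_B$ is a ucp map on $C^*(\Gamma)\otimes_{\min}C^*(\Gamma)$, and $\rho(\gamma)\equiv\langle\psi,(\pi_A\otimes_{\min}\pi_B)(\gamma)\psi\rangle$ is a state satisfying\eq{stqrepmin}. For the converse, fix a faithful representation $C^*(\Gamma)\subseteq\B(\H)$, so that $C^*(\Gamma)\otimes_{\min}C^*(\Gamma)$ acts on $\H\otimes\H$ and the $e^x_a$ are concrete projections; given a state $\rho$ obeying\eq{stqrepmin} and $\eps>0$, use proposition~\ref{vectorstates} to approximate $\rho$ within $\eps$, on the finitely many elements $e^{\vec x}_{\vec a}\otimes e^{\vec y}_{\vec b}$, by a convex combination $\sum_j\lambda_j\langle\xi_j,\,\cdot\,\xi_j\rangle$ of vector states on $\H\otimes\H$. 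Taking $A^x_a\equiv e^x_a$ and $B^y_b\equiv e^y_b$ (which are projections), each term $\langle\xi_j,(e^{\vec x}_{\vec a}\otimes e^{\vec y}_{\vec b})\xi_j\rangle$ is of the form\eq{sttensor} and hence lies in $\mathcal{ST}_\otimes^t(\Gamma)$; by the convexity of $\mathcal{ST}_\otimes^t(\Gamma)$ (noted above, proved as for lemma~\ref{Qconvex}) so does the combination, and therefore $P\in\overline{\mathcal{ST}_\otimes^t(\Gamma)}$.

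For part (b): the forward direction is the same with corollary~\ref{maxproducp} replacing proposition~\ref{minproducp}: the commuting projective measurements ($[A^x_a,B^y_b]=0$) give $*$-homomorphisms $\pi_A,\pi_B\colon C^*(\Gamma)\to\B(\H)$ with commuting ranges, hence a $*$-homomorphism on $C^*(\Gamma)\otimes_{\max}C^*(\Gamma)$ with $\gamma_A\otimes\gamma_B\mapsto\pi_A(\gamma_A)\pi_B(\gamma_B)$, whose vector state at $\psi$ is the desired $\rho$ --- here one uses commutativity of the $A$'s with the $B$'s to rewrite $\big(B^{\vec y}_{\vec b}\big)^{*}\big(A^{\vec x}_{\vec a}\big)^{*}A^{\vec x}_{\vec a}B^{\vec y}_{\vec b}$ as $\big(A^{\vec x}_{\vec a}\big)^{*}A^{\vec x}_{\vec a}\,\big(B^{\vec y}_{\vec b}\big)^{*}B^{\vec y}_{\vec b}$, matching\eq{stqrepmax}. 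The converse is immediate from the GNS representation of $\rho$, in which $e^x_a\otimes\mathbbm 1$ and $\mathbbm 1\otimes e^y_b$ act as commuting projections reproducing $P$ in the form\eq{sttensor}. Finally, closedness and convexity of $\mathcal{ST}_c^t(\Gamma)$ follow exactly as in proposition~\ref{qcminmaxpovm}\ref{qmmb}: the state space $\mathscr{S}(C^*(\Gamma)\otimes_{\max}C^*(\Gamma))$ is weak-$*$ compact and convex, and evaluation on the finitely many elements $e^{\vec x}_{\vec a}\otimes e^{\vec y}_{\vec b}$ is a continuous linear map whose image is $\mathcal{ST}_c^t(\Gamma)$. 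No step is a genuine obstacle; the only point requiring real care is the bookkeeping that makes the restriction to projective measurements indispensable --- for general POVMs the induced maps fail to be multiplicative and the history-operator identities above break down.
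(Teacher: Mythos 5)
Your proposal is correct and is essentially the paper's own proof written out in full: the paper simply states that the argument of proposition~\ref{qcminmaxpovm} goes through verbatim once the ucp maps induced by POVMs are replaced by the $*$-homomorphisms induced by projective measurements, which is exactly the upgrade you identify and exploit. One small citation quibble: to assemble the $k$ single-measurement $*$-homomorphisms $\C^m\to\B(\H)$ into a $*$-homomorphism on $\C^m\ast_1\ldots\ast_1\C^m$ you should invoke the universal property of the unital free product (equivalently, via remark~\ref{fourier}, the universal property of $C^*(\Gamma)$ applied to the unitary representation of $\Gamma$ determined by the projective measurements), not corollary~\ref{extfreeprod}, which only produces a ucp extension and would not preserve the multiplicativity your argument needs.
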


\begin{proof}
Except for the fact that working with projective measurements replaces the ucp maps by $*$-homomorphisms, this is totally analogous to the proof of proposition~\ref{qcminmaxpovm}.
\end{proof}

Just like in section~\ref{Tsipovm}, we can also find a dual version of this as an analogue to proposition~\ref{bominmax}. By now it should be obvious how to do this, so we will not embark on the details here. What is important to note is that the algebra elements which the state gets evaluated on in equations\eq{stqrepmin} and\eq{stqrepmax} is a tensor product of \emph{palindromes} in the generators. Hence the linear subspace of $\C[G]\otimes_{\mathrm{alg}}\C[G]$ on which one can probe the norms $||\cdot||_{\min}$ and $||\cdot||_{\max}$ by looking at spatiotemporal correlations is exactly the linear subspace spanned by the tensor products of palindromes in the generators $e^x_a$.

Again since the commutativity assumption subsumes the tensor product assumption, we certainly have $\overline{\mathcal{ST}_\otimes^t(\Gamma)}\subseteq\mathcal{ST}_c^t(\Gamma)$, so that the obvious question arises whether these two sets coincide for every choice of the number of measurements $t\in\N$,

\beqn
\label{STTP}
\boxed{\mathbf{STTP}(\Gamma):\quad\overline{\mathcal{ST}_\otimes^t(\Gamma)}\stackrel{?}{=}\mathcal{ST}_c^t(\Gamma)\quad\forall t}
\eeqn

\begin{cor}
\label{QWEPtoSTTP}
Let $\Gamma$ be any bipartite Bell scenario. If the QWEP conjecture is true, then $\overline{\mathcal{ST}_\otimes^t(\Gamma)}=\mathcal{ST}_c^t(\Gamma)$ holds for all $t$.
\end{cor}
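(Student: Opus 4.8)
The plan is to deduce this corollary from the QWEP conjecture exactly as Theorem~\ref{QWEPtoTP} was deduced, now invoking Proposition~\ref{stqcminmax} in place of Proposition~\ref{qcminmaxpovm}. By Proposition~\ref{stqcminmax}, the set $\overline{\mathcal{ST}_\otimes^t(\Gamma)}$ is precisely the image of $\mathscr{S}(C^*(\Gamma)\otimes_{\min}C^*(\Gamma))$ under evaluation on the family of algebra elements
\beq
e^{x_1}_{a_1}\cdots e^{x_t}_{a_t}\cdots e^{x_1}_{a_1}\otimes e^{y_1}_{b_1}\cdots e^{y_t}_{b_t}\cdots e^{y_1}_{b_1},
\eeq
while $\mathcal{ST}_c^t(\Gamma)$ is the image of $\mathscr{S}(C^*(\Gamma)\otimes_{\max}C^*(\Gamma))$ under evaluation on the same elements. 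So the two sets agree provided the natural quotient map $C^*(\Gamma)\otimes_{\max}C^*(\Gamma)\to C^*(\Gamma)\otimes_{\min}C^*(\Gamma)$ is an isomorphism, since then the two state spaces are canonically identified and evaluation on any fixed family of elements produces the same image.

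The remaining step is therefore to see that the QWEP conjecture forces $\|\cdot\|_{\min}=\|\cdot\|_{\max}$ on $C^*(\Gamma)\otimes_{\mathrm{alg}}C^*(\Gamma)$ for $\Gamma$ the free product of $k$ copies of $\Z_m$ as in\eq{gammanot}. This is exactly the content of Theorem~\ref{FGamma} from the appendix, which is already available to me: it asserts that under QWEP the minimal and maximal tensor products of $C^*(\Gamma)$ with itself coincide (the point being that $C^*(\F_2)$ surjects onto $C^*(\Gamma)$, so QWEP for the former transfers, and via the Fourier identification\eq{freefourier} the algebra $C^*(\Gamma)$ is the one built from $\C^m$'s that the spatiotemporal correlations live on). Thus I would simply write: by Proposition~\ref{stqcminmax}, the sets $\overline{\mathcal{ST}_\otimes^t(\Gamma)}$ and $\mathcal{ST}_c^t(\Gamma)$ are determined by the minimal and maximal tensor products respectively, and by Theorem~\ref{FGamma} the QWEP conjecture makes these tensor products equal, whence the two sets agree for every $t$.

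There is essentially no obstacle here: unlike the spatial case, the spatiotemporal correlations already probe arbitrary tensor products of palindromes, but the proof does not need a converse direction and so the only input is the single equality of the two tensor product norms. I would also remark that this means Corollary~\ref{QWEPtoSTTP} is in fact formally implied by Theorem~\ref{QWEPtoTP}'s proof method rather than by its statement, since the statement of Theorem~\ref{QWEPtoTP} concerns only the restricted family of generators $e^x_a\otimes e^y_b$; the honest dependency is on Theorem~\ref{FGamma}, which is why the corollary is placed here after Proposition~\ref{stqcminmax}.

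\begin{proof}
By Theorem~\ref{FGamma}, the QWEP conjecture implies that the quotient map $C^*(\Gamma)\otimes_{\max}C^*(\Gamma)\to C^*(\Gamma)\otimes_{\min}C^*(\Gamma)$ is an isomorphism, so that the state spaces $\mathscr{S}(C^*(\Gamma)\otimes_{\min}C^*(\Gamma))$ and $\mathscr{S}(C^*(\Gamma)\otimes_{\max}C^*(\Gamma))$ coincide. By proposition~\ref{stqcminmax}, both $\overline{\mathcal{ST}_\otimes^t(\Gamma)}$ and $\mathcal{ST}_c^t(\Gamma)$ are the image of this common state space under evaluation on the fixed family of algebra elements appearing in\eq{stqrepmin} and\eq{stqrepmax}. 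Hence $\overline{\mathcal{ST}_\otimes^t(\Gamma)}=\mathcal{ST}_c^t(\Gamma)$ for every $t$.
\end{proof}
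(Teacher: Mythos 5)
Your proof is correct and follows exactly the paper's own route: the paper likewise deduces the corollary immediately from Theorem~\ref{FGamma} (QWEP implies $C^*(\Gamma)\otimes_{\min}C^*(\Gamma)=C^*(\Gamma)\otimes_{\max}C^*(\Gamma)$) combined with Proposition~\ref{stqcminmax}, which identifies both sets as images of the respective state spaces under evaluation on the palindrome elements. Your additional remark that the honest dependency is on Theorem~\ref{FGamma} rather than on the statement of Theorem~\ref{QWEPtoTP} is accurate.
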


\begin{proof}
This is clear by theorem~\ref{FGamma} and proposition~\ref{stqcminmax}.
\end{proof}

Unfortunately, it is presently unclear whether the subspace spanned by the tensor products of palindromes is big enough to allow for a converse implication. The propositions~\ref{statecomp} and~\ref{unbounded} on the comparison between $\otimes_{\min}$ and $\otimes_{\max}$ may be relevant for a potential proof of this. Another interesting observation is that the palindromes in the generators correspond exactly to the diagonal entries in the hierarchy of semidefinite programs characterizing quantum correlations~\cite{NPA}.

Finally, we would like to demonstrate that spatiotemporal quantum correlations can be interesting in that they may reveal nonlocality in a system where spatial correlations alone do not. While an example of this has also been given in~\cite{Pop}, ours is significantly simpler.

\begin{ex}
\label{Wstate}
Suppose that Alice and Bob share a three-qubit W state with two qubits owned by Alice and one qubit by Bob. Hence, we consider the Hilbert spaces
\beq
\mathcal{H}_A\equiv\C^2\otimes\C^2,\qquad \mathcal{H}_B\equiv\C^2
\eeq
together with, in the obvious notation, the unit vector
\beq
W\equiv\frac{1}{\sqrt{3}}e_{00}\otimes e_1+\frac{1}{\sqrt{3}}\left(e_{01}+e_{10}\right)\otimes e_0.
\eeq
Further, we suppose that Alice can choose between four projective measurements. We write these in terms of $\pm$-valued observables instead of projection operators,
\beq
A_1\equiv \sigma_z\otimes\mathbbm{1},\qquad A_2\equiv \sigma_x\otimes\mathbbm{1},\qquad A_3\equiv\mathbbm{1}\otimes\sigma_z,\qquad A_4\equiv\mathbbm{1}\otimes\sigma_x,
\eeq
while Bob has access to the two $\pm 1$-valued measurements
\beq
B_1\equiv\frac{\sigma_z-\sigma_x}{\sqrt{2}},\qquad B_2\equiv\frac{\sigma_z+\sigma_x}{\sqrt{2}}.
\eeq
We now claim that when both parties are allowed to conduct only a single measurement, the ensuing correlations $P(a_1,b|x_1,y)$ admit a local hidden variable model. Indeed, in this case Alice's measurements $A_3$ and $A_4$ are redundant, since they give the same statistics as $A_1$ and $A_2$, respectively. Hence in this case, Alice and Bob effectively operate in a CHSH scenario, and therefore it is enough to check the $8$ variants of the CHSH inequality on the reduced state on Alice's first qubit and Bob's qubit. A direct calculation shows that no CHSH violation arises.

However, when Alice is allowed to conduct two sequential measurements, while Bob stays with measuring only once, the corresponding conditional probability distribution
\beq
P(a_1,a_2,b|x_1,x_2,y)
\eeq
does not admit a local hidden variable model. In order to see this, let us consider only the case where Alice first measures $A_1$, while subsequently measuring $A_3$ or $A_4$; the idea here is that Alice's first measurement is supposed to steer the remaining two qubits into the maximally entangled vector $\frac{1}{\sqrt{2}}(e_0\otimes e_1+e_1\otimes e_0)$, which is successful with a probability of $2/3$. This fails with a probability of $1/3$, after which the resulting state is a product state, so that Alice's second measurement is useless and does not even have to be conducted. Hence, there are only \emph{three} sensibly different outcomes for Alice: ``$+$'' (respectively ``$-$'') for the case that the initial measurement of $A_1$ succeeds and the second measurement gives $+1$ (respectively $-1$); and ``$\emptyset$'' for the case that the initial measurement of $A_1$ fails. A calculation of all joint outcome probabilities results in the following table, writh $\gamma_{\pm}\equiv\frac{1}{6}\left(1\pm\frac{1}{\sqrt{2}}\right)$,
\beqn
\label{Wstateprob}
		\begin{tabular}{cc|cc|cc}
		\\
		{} && \multicolumn{2}{c|}{$B_1$} & \multicolumn{2}{|c}{$B_2$}\\
		{} && \uncircledentry{$+$} & \uncircledentry{$-$} & \uncircledentry{$+$} & \uncircledentry{$-$}\\

		\hline \multirow{3}{*}{$A_1,A_3$} & \uncircledentry{$+$} & \uncircledentry{$\gamma_-$} & \uncircledentry{$\gamma_+$} & \circledentry{$\gamma_-$} & \uncircledentry{$\gamma_+$} \\
		 & \uncircledentry{$-$} & \uncircledentry{$\gamma_+$} & \circledentry{$\gamma_-$} & \uncircledentry{$\gamma_+$} & \uncircledentry{$\gamma_-$} \\
		 & \uncircledentry{$\emptyset$} & \uncircledentry{$\gamma_+$} & \circledentry{$\gamma_-$} & \uncircledentry{$\gamma_+$} & \uncircledentry{$\gamma_-$} \\

		\hline \multirow{3}{*}{$A_1,A_4$} & \uncircledentry{$+$} & \circledentry{$\gamma_-$} & \uncircledentry{$\gamma_+$} & \squaredentry{$\gamma_+$} & \uncircledentry{$\gamma_-$} \\
		 & \uncircledentry{$-$} & \uncircledentry{$\gamma_+$} & \uncircledentry{$\gamma_-$} & \uncircledentry{$\gamma_-$} & \uncircledentry{$\gamma_+$} \\
		 & \uncircledentry{$\emptyset$} & \uncircledentry{$\gamma_+$} & \uncircledentry{$\gamma_-$} & \uncircledentry{$\gamma_+$} & \uncircledentry{$\gamma_-$} \\

		{}
\end{tabular}
\eeqn
Now it is simple to see that the inequality
\beq
P(+,+|A_1,A_4;B_2)\leq P(+,+|A_1,A_4;B_1)+P(+,+|A_1,A_3;B_2)+P(-,-|A_1,A_3;B_1)+P(-,\emptyset|A_1,A_3;B_1),
\eeq
which corresponds to the marked entries in the above table, must hold in any hidden variable model. (It is essentially the quantitative version of Hardy's nonlocality~\cite{Braun}.) However, since $\gamma_+>4\gamma_-$, this inequality does not hold for the table\eq{Wstateprob}, and hence\eq{Wstateprob} is nonlocal.

Summarizing this example, we have shown that this system of an initial state together with projective measurements reveals its nonlocality only after applying some measurements in temporal succession.
\end{ex}

Clearly this example is contrived in the sense that the ``temporal'' aspect of these correlations is not genuinely temporal, since Alice's two successive measurements (first $A_1$, then $A_3$ or $A_4$) operate on different qubits. It would be interesting to find more authentic examples; see~\cite{Pop}. Also, so far we have not been looking for examples where $t$ sequential measurements on either side reveal nonlocality, although $t-1$ sequential measurements on either side do not, for any larger value of $t$.

It might also be worthwhile to consider the case where the measuremets of Alice and Bob are allowed to use generalized measurements instead of projective measurements only. Because then one has to consider Kraus operators, which are usually not self-adjoint, one will not be able to derive a correspondence between a system of measurements and a ucp map from a maximal group $C^*$-algebra. In other words, extending proposition~\ref{POVMchar} to the case of generalized measurements described in terms of Kraus operators may not be possible within our approach.

\section{From Tsirelson's problem to Kirchberg's conjecture}
\label{reverseimp}

In this section, we will define another version of Tsirelson's problem and show that a positive answer to this version would imply a positive answer to the QWEP conjecture. Similar to the duality between theorem~\ref{qcminmaxpovm} and theorem~\ref{bominmax}, we again offer a ``primal'' and a ``dual'' version, starting with the former. Thie relevant concept here is the notion of \emph{steering}~\cite{Schr35}, i.e.~the phenomenon that a measurement on one part of a bipartite quantum system collapses the state of the other part; the state to which the other part collapses to is called the~\emph{steered state}. For a more contemporary account of steering in the context of entanglement and nonlocality, see~\cite{WJD}.

\paragraph{Steering data.} Before explaining the relevant concepts for the bipartite case, we start with the simpler setting of a single party, say Alice. Let us presume that Alice shares a bipartite state $\rho$ with a neutral party, which we call The Verifier. The Verifier operates with a quantum system of fixed dimension $d$, while Alice's system size is arbitrary and may be infinite-dimensional. As before, Alice has access to $k$ POVMs with $m$ outcomes each described in terms of positive operators $A^x_a$. Getting an outcome $a$ upon measuring $x$ will collapse the state of The Verifier's system to the state
\beqn
\label{steering}
\alpha^x_a(\cdot)\equiv\rho(\cdot\otimes A^x_b).
\eeqn
where we think of the latter $\alpha$ as standing for Alice, and we have intentionally omitted the normalization, so that the probability of getting the outcome $x$ for measurement choice $a$ can be recovered as $P(a|x)=\alpha^x_a(\mathbbm{1})$. Such a $\alpha^x_a$ is a linear positive functional on The Verifier's observable algebra with the property that $\alpha^x_a(\mathbbm{1})\leq 1$. To such a functional we will refer as a \emph{subnormalized state}, although it is, strictly speaking, not a state.

We consider ensembles of subnormalized states $(\alpha^x_a)^{x\in [k]}_{a\in[m]}$, indexed by the possible measurement settings $x\in[k]$ and the possible measurement outcomes $a\in[m]$. Every instancve of steering\eq{steering} defines such an ensemble. How should one think of such an ensemble? We have already noted that $\alpha^x_a(\mathbbm{1})=P(a|x)$. In particular, when The Verifier's system is trivial in the sense that it has dimension $d=1$, then the subnormalized steered state equals Alice's outcome probability, $\alpha^x_a=P(a|x)$. For this reason, we think of steering data as a quantum analogue or noncommutative analogue of the conditional probability distribution $P(x|a)$; mathematically speaking, it corresponds to an element of the $d$th matrix level of an operator system~\cite{Paul}. Although operator systems are secretly lurking behind our formalism, we will restrain from using this terminology and formulate our results in pedestrian terms.

The conditional probability distribution $P(a|x)$ has the properties
\begin{enumerate}
\item positivity: $P(a|x)\geq 0$ for all $x,a$.
\item normalization: $\sum_a P(a|x)=1$ for all $x$.
\item no-signaling: $\sum_a P(a|x)$ is independent of $x$. (Implied by normalization.)
\end{enumerate}
As can be readily verified, the analogous properties of an ensemble $(\alpha^x_a)^{x\in [k]}_{a\in[m]}$ arising from\eq{steering} are
\begin{enumerate}
\item positivity: $\alpha^x_a\geq 0$ for all $x,a$.
\item normalization: $\sum_a\alpha^x_a(\mathbbm{1})=1$ for all $x$.
\item no-signaling: $\sum_a\alpha^x_a$ is independent of $x$. (Not implied by normalization if $d\geq 2$.)
\end{enumerate}
If an ensemble $(\alpha^x_a)^{x\in [k]}_{a\in[m]}$ satisfies these conditions, we call it \emph{steering data}. The following result shows that any kind of steering data arises from actual steering in the form\eq{steering}; similar considerations have already made by Schr\"odinger~\cite{Schr36}, who considered the case where the joint state is fixed and assumed to be pure.

\begin{prop}[{\cite[3.3]{HJW}}]
\label{msd}
For any steering data $(\alpha^x_a)^{x\in [k]}_{a\in[m]}$ on $M_d(\C)$, there is a bipartite state $\rho$ on $M_d(\C)\otimes M_d(\C)$, shared between The Verifier and Alice, together with $k$ POVMs of Alice given in terms of positive operators $A^x_a\in M_d(\C)$, such that this steering data is of the form\eq{steering}.
\end{prop}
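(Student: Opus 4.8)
The plan is to recover the data from a single Schmidt-decomposed purification, which is essentially the Hughston--Jozsa--Wootters construction \cite[3.3]{HJW}. First I would translate the steering axioms into matrix language: each $\alpha^x_a$ is a positive linear functional on $M_d(\C)$, so there is a unique $\rho^x_a\in M_d(\C)$ with $\rho^x_a\ge 0$ and $\alpha^x_a(Y)=\tr(\rho^x_a Y)$ for all $Y$. Normalization together with no-signaling says precisely that $\sigma:=\sum_a\rho^x_a$ is a density matrix which does not depend on $x$; in particular $0\le\rho^x_a\le\sigma$, so $\mathrm{supp}\,\rho^x_a\subseteq\mathrm{supp}\,\sigma$ for all $x,a$. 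These are the only features of the data that will be used.

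Next I would fix the shared state. Pick an orthonormal eigenbasis $(e_i)_{i=1}^d$ of $\C^d$ with $\sigma e_i=\lambda_i e_i$, set $\Psi:=\sum_{i\,:\,\lambda_i>0}\sqrt{\lambda_i}\;e_i\otimes e_i\in\C^d\otimes\C^d$, and let $\rho\in\mathscr{S}(M_d(\C)\otimes M_d(\C))$ be the vector state $\rho(\,\cdot\,)=\langle\Psi,\,\cdot\,\Psi\rangle$. Tracing out Alice shows that the Verifier's marginal of $\rho$ is $\sigma$, which is anyway forced by normalization. The computation that drives the proof is that, for $Y$ acting on the Verifier's tensor factor and $A$ on Alice's, $\langle\Psi,(Y\otimes A)\Psi\rangle=\sum_{i,j}\sqrt{\lambda_i\lambda_j}\,\langle e_j,Ye_i\rangle\,\langle e_j,Ae_i\rangle$, whereas $\alpha^x_a(Y)=\tr(\rho^x_a Y)=\sum_{i,j}\langle e_j,Ye_i\rangle\,\langle e_i,\rho^x_a e_j\rangle$. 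Since the entries $\langle e_j,Ye_i\rangle$ are free to vary, the desired identity $\rho(Y\otimes A^x_a)=\alpha^x_a(Y)$ for all $Y$ holds if and only if $A^x_a$ is the entrywise complex conjugate in the basis $(e_i)$ of $S^x_a:=\sigma^{-1/2}\rho^x_a\sigma^{-1/2}$, where $\sigma^{-1/2}$ denotes the generalized inverse square root of $\sigma$ (diagonal with entries $\lambda_i^{-1/2}$ on $\mathrm{supp}\,\sigma$ and $0$ elsewhere).

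I would then define $A^x_a:=\overline{S^x_a}$ and verify that the family $\{A^x_a\}_a$ is a POVM. Positivity: $S^x_a\ge 0$ because it is a congruence of $\rho^x_a\ge 0$ by the Hermitian operator $\sigma^{-1/2}$, and entrywise complex conjugation preserves positive semidefiniteness. Completeness: on $\mathrm{supp}\,\sigma$ one has $\sum_a S^x_a=\sigma^{-1/2}\big(\sum_a\rho^x_a\big)\sigma^{-1/2}=\sigma^{-1/2}\sigma\sigma^{-1/2}=P$, the orthogonal projection onto $\mathrm{supp}\,\sigma$ --- this is exactly where the no-signaling relation $\sum_a\rho^x_a=\sigma$ and its $x$-independence enter --- and since $P$ is real in the basis $(e_i)$ we get $\sum_a A^x_a=P$. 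To upgrade this to $\sum_a A^x_a=\mathbbm{1}$ I would, for one fixed outcome $a_0$, replace $A^x_{a_0}$ by $A^x_{a_0}+(\mathbbm{1}-P)$; because both tensor legs of $\Psi$ lie in $P\,\C^d$, we have $(Y\otimes(\mathbbm{1}-P))\Psi=0$, so this modification alters no expectation value $\rho(Y\otimes A^x_a)$ while turning $\{A^x_a\}_a$ into a genuine POVM. This produces $\rho$ and the $A^x_a$ with the required property.

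The one genuinely delicate point --- the step I would be most careful about --- is the possible degeneracy of $\sigma$: one has to invoke $\rho^x_a\le\sigma$ to confine $\rho^x_a$ to $\mathrm{supp}\,\sigma$ so that $\sigma^{-1/2}\rho^x_a\sigma^{-1/2}$ makes sense, and then extend the POVM off the support as above. If $\sigma$ happens to be invertible all of this is vacuous and the $A^x_a$ are simply $\overline{\sigma^{-1/2}\rho^x_a\sigma^{-1/2}}$; the remaining content is the bookkeeping carried out above.
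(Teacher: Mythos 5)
Your proposal is correct and follows essentially the same route as the paper: both are the Hughston--Jozsa--Wootters construction, purifying the Verifier's marginal $\sigma$ and taking Alice's POVM elements to be (a transpose/conjugate of) $\sigma^{-1/2}\rho^x_a\sigma^{-1/2}$. The only cosmetic difference is that the paper restricts Alice's Hilbert space to the range of $\sigma$ while you keep $\C^d$ and absorb $\mathbbm{1}-P$ into one POVM element, and your explicit Schmidt-basis computation replaces the paper's trace-formula bookkeeping.
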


\begin{proof}
We closely follow the reasoning of~\cite{HJW}.

For the sake of illustration, we start by considering the case that The Verifier's reduced state $\rho_V\equiv\sum_a\alpha^x_a\in\mathscr{S}(M_d(\C))$ is the totally mixed state $\rho_V=\frac{1}{d}\tr$. In this case, we take the total state $\rho$ to be the maximally entangled state
\beq
\rho:M_d(\C)\otimes M_d(\C)\lra\C,\qquad X\otimes Y\mapsto\frac{1}{d}\tr(X^tY),
\eeq
where $X^t$ denotes the transpose of $X$. We choose Alice's observables $A^x_a$ as the unique matrices in $M_d(\C)$ satisfying
\beqn
\label{defnsteerobs}
\frac{1}{d}\tr(A^x_a\,Y^t)=\alpha^x_a(Y)\quad\forall \,Y\in M_d(\C).
\eeqn
Since $\tr(A^x_aY^t)=\tr((A^x_a)^tY)$, one can interpret this as saying that $A^x_a$ is the (transpose of the) density matrix associated to the subnormalized state $\alpha^x_a$. In particular, the $A^x_a$ are automatically going to be positive.

Now $\sum_aA^x_a=\mathbbm{1}$ holds by\eq{defnsteerobs} and the assumption that $\rho_V$ is totally mixed. By\eq{steering}, we then obtain for the steered state
\beq
\rho(X\otimes A^x_a)=\frac{1}{d}\tr(X^tA^x_a)\stackrel{(\ref{defnsteerobs})}{=}\alpha^x_a(X)\quad\forall X\in M_d(\C),
\eeq
as desired.

In the general case when $\rho_V\equiv\sum_a\alpha^x_a$ is arbitrary, we start by defining the positive matrix $A\in M_d(\C)$ to be the (transpose of the) density matrix associated to $\rho_V$,
\beq
\frac{1}{d}\tr(AX^t)=\rho_V(X)\quad\forall X\in M_d(\C).
\eeq
If $P_A$ is the range projection of $A$, then we take Alice's state space to be $\mathcal{H}=P_A\C^d$. For the initial state shared between The Verifier and Alice, we take
\beq
\rho:M_d(\C)\otimes \B(\H)\lra\C,\qquad X\otimes Y\mapsto\frac{1}{d}\tr(X^tA^{\frac{1}{2}}YA^{\frac{1}{2}}),
\eeq
Getting to Alice's observables, we first choose the positive $\widehat{A}^x_a$ to be the unique matrices satisyfing
\beq
\frac{1}{d}\tr(\widehat{A}^x_a\,Y^t)=\alpha^x_a(Y)\quad\forall \,Y\in M_d(\C),
\eeq
which we think of as operators $\widehat{A}^x_a\in\B(\H)$, since their range is a subspace of $\H=P_A\C^d$. The $\widehat{A}^x_a$ cannot yet serve as the components of Alice's POVMs, since the sum $\sum_a\widehat{A}^x_a=A$ is not as required for a POVM. However upon setting
\beq
A^x_a\equiv A^{-\frac{1}{2}}\widehat{A}^x_a A^{-\frac{1}{2}} \in\B(\H),
\eeq
which makes sense since $A$ is invertible on $\H$, we have $\sum_a A^x_a=A^{-\frac{1}{2}}\sum_a\widehat{A}^x_a A^{-\frac{1}{2}}=\mathbbm{1}_{\H}$, so that the $A^x_a$ form a POVM. Finally,
\beq
\rho(X\otimes A^x_a)=\frac{1}{d}\tr(X^tA^{\frac{1}{2}}A^x_aA^{\frac{1}{2}})=\frac{1}{d}\tr(X^t\widehat{A}^x_a)=\alpha^x_a(X),
\eeq
again as desired.
\end{proof}

\paragraph{Bipartite steering data.} Now Bob enters the picture. We now have to consider tripartite states $\rho$ shared between Alice, Bob, and The Verifier. Again we will restrict The Verifier to a $d$-dimemsional qudit, while Alice and Bob may operate with any finite-dimensional or infinite-dimensional quantum systems. And again, Alice has access to $k$ POVMs, with $m$ outcomes each, defined in terms the positive operators $A^x_a$, and similarly Bob has access to $k$ POVMs defined in terms of the positive operators $B^y_b$. And as in the case of steering data, we consider what happens to The Verifier's system after a measurement conducted by Alice or by Bob. Similar to\eq{steering}, a measurement by Alice or by Bob steers The Verifier's system to one of the subnormalized states
\beqn
\label{bisteerdatat}
\alpha^x_a(\,\cdot\,)\equiv\rho(\,\cdot\otimes A^x_a\otimes\mathbbm{1}),\qquad \beta^y_b(\,\cdot\,)\equiv\rho(\,\cdot\otimes\mathbbm{1}\otimes B^y_b),
\eeqn
Now $(\alpha^x_a)^{x\in [k]}_{a\in[m]}$ and $(\beta^y_b)^{y\in [k]}_{b\in[m]}$ are \emph{each} a set of steering data, satisfying the no-signaling condition 
\beqn
\label{nosigsd}
\sum_a\alpha^x_a=\sum_b\beta^y_b\quad\forall\, x,y\in [k].
\eeqn
We will refer to such a collection $(\alpha^x_a,\beta^y_b)$ of subnormalized states as \emph{bipartite steering data} on $M_d(\C)$. We will write $\mathcal{SD}_\otimes^d(\Gamma)$ for the set of all bipartite steering data on $M_d(\C)$ in the Bell scenario $\Gamma$, again defined by the number of possible measurement settings and possible measurement outcomes to be $k$ and $m$, respectively. For $d=1$, bipartite steering data is simply given by the marginal probabilities $P(a|x)$ and $P(b|y)$; Alice and Bob do not measure jointly. Just like in the single-party case the steering data $\rho^x_a$ was a quantum generalization of the conditional probability distribution $P(a|x)$, the bipartite steering data of the two-party case generalizes the marginal probability distributions $P(a|x)$ and $P(b|y)$.

Note that in the definition\eq{bisteerdatat}, it has been implicitly assumed that Alice's observables act on a Hilbert space different from those of Bob's observables, and the total system is given in terms of tensor products of Hilbert spaces. Again we consider as an alternative hypothesis the assumption that Alice's observables simply commute with Bob's observables\eq{commu}, all of them living on the same Hilbert space $\mathcal{H}$. Then upon a mesurement by Alice or by Bob, the unnormalized steered states are given by
\beqn
\label{bisteerdatac}
\alpha^x_a(\,\cdot\,)\equiv\rho(\,\cdot\otimes A^x_a),\qquad \beta^y_b(\,\cdot\,)\equiv\rho(\,\cdot\otimes B^y_b),
\eeqn
where the first factor refers to The Verifier's system. Again, the collection $(\alpha^x_a,\beta^y_b)$ satisfies the no-signaling requirement~(\ref{nosigsd}), and therefore qualifies as bipartite steering data. We denote the set of bipartite steering data which can be written in the form\eq{bisteerdatac} by $\mathcal{SD}_c^d(\Gamma)$.

\begin{ex}
For every nontrivial Bell scenario $\Gamma$, there exists bipartite steering data $\left(\alpha^x_a,\beta^y_b\right)$ which is not in $\mathcal{SD}_c^d(\Gamma)$. For example, assume that all subnormalized states $\alpha^x_1,\ldots,\alpha^x_m$ for some fixed $x\in[k]$, are pure in the sense that they cannot be decomposed into a nontrivial positive linear combination of other subnormalized states. Then we obtain that for every $a\in[m]$ and $y\in[k]$ the decomposition
\beqn
\label{steerex}
\alpha^x_a(\,\cdot\,)=\sum_b\rho(\,\cdot\otimes A^x_a B^y_b)
\eeqn
has the property that all its contributing subnormalized states $\rho(\,\cdot\,\otimes A^x_a B^y_b)$ are necessarily scalar multiples of $\alpha^x_a$. The missing scalar factor can be determined by plugging in $\mathbbm{1}$, which evaluates\eq{steerex} to $P(a|x)=\sum_bP(a,b|x,y)$. In terms of the outcome probabilities $P(a,b|x,y)$, we hence obtain from the purity assumption
\beq
\rho(\,\cdot\otimes A^x_a B^y_b)=\alpha^x_a(\,\cdot\,)\frac{P(a,b|x,y)}{P(a|x)}\quad\forall\, y\in[k],a,b\in[m].
\eeq
This in turn implies that
\beq
\beta^y_b(\,\cdot\,)=\sum_a\rho(\,\cdot\otimes A^x_a B^y_b)=\sum_a\alpha^x_a(\,\cdot\,)\frac{P(a,b|x,y)}{P(a|x)}\quad\forall y\in[k],b\in[m].
\eeq
So in particular, if the given bipartite steering data is supposed to lie in $\mathcal{SD}_c^d(\Gamma)$, then purity of all $\alpha^x_1,\ldots,\alpha^x_m$ for some $x\in[k]$ implies that every $\beta^y_b$ is a stochastic mixture of the $\alpha^x_a$. This may be considered an incarnation of entanglement monogamy: in any non-degenerate case, purity of the $\alpha^x_1,\ldots,\alpha^x_m$ necessitates strong entanglement between Alice and The Verifier, which in turn excludes the possibility of entanglement between Bob and The Verifier. If Alice can steer The Verifier's system to an ensemble of pure states, then Bob will not be able to steer The Verifier's system to a different ensemble of pure states.
\end{ex}

Hence, in contrast to proposition~\ref{msd}, deciding when given bipartite steering data $(\alpha^x_a,\beta^y_b)$ can come from a tripartite quantum state is nontrivial. 

We now formulate the analogue to proposition~\ref{qcminmaxpovm} for bipartite steering data. Again we note that it is unclear whether the set $\mathcal{SD}_\otimes^d(\Gamma)$ is closed, and that any potential distinction between $\mathcal{SD}_\otimes^d(\Gamma)$ and its topological closure $\overline{\mathcal{SD}_\otimes^d(\Gamma)}$ would not be physically relevant.

\begin{prop}
\label{bsdt}
Bipartite steering data $(\alpha^x_a,\beta^y_b)$ lies in \ldots
\begin{enumerate}
\item\label{bsda} \ldots $\overline{\mathcal{SD}_\otimes^d(\Gamma)}$ if and only if there is a $C^*$-algebraic state $\rho\in\mathscr{S}(M_d(C^*(\Gamma)\otimes_{\min}C^*(\Gamma)))$ such that
\beqn
\label{srepmin}
\alpha^x_a(\,\cdot\,)=\rho(\,\cdot\otimes e^x_a\otimes\mathbbm{1}),\qquad \beta^y_b(\,\cdot\,)=\rho(\,\cdot\otimes\mathbbm{1}\otimes e^y_b).
\eeqn
The set $\mathcal{SD}_\otimes^d(\Gamma)$, and therefore also $\overline{\mathcal{SD}_\otimes^d(\Gamma)}$, is convex.
\item\label{bsdb} \ldots $\mathcal{SD}_c^d(\Gamma)$ if and only if there is a $C^*$-algebraic state $\rho\in\mathscr{S}(M_d(C^*(\Gamma)\otimes_{\max}C^*(\Gamma)))$ such that
\beqn
\label{srepmax}
\alpha^x_a(\,\cdot\,)=\rho(\,\cdot\otimes e^x_a\otimes\mathbbm{1}),\qquad \beta^y_b(\,\cdot\,)=\rho(\,\cdot\otimes\mathbbm{1}\otimes e^y_b).
\eeqn
The set $\mathcal{SD}_c^d(\Gamma)$ is closed and convex.
\end{enumerate}
\end{prop}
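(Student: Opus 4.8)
\emph{Proof plan.} The plan is to follow the proof of proposition~\ref{qcminmaxpovm} essentially verbatim, treating The Verifier's fixed matrix algebra $M_d(\C)$ as an inert spectator factor carried along by tensoring all maps with $\id_{M_d(\C)}$. The preliminary observation is that, by example~\ref{matrixA} and nuclearity of $M_d(\C)$, the algebra $M_d(C^*(\Gamma)\otimes_{\min}C^*(\Gamma))$ is unambiguously $M_d(\C)\otimes(C^*(\Gamma)\otimes_{\min}C^*(\Gamma))$, and likewise with $\otimes_{\max}$; moreover tensoring a ucp map with $\id_{M_d(\C)}$ again gives a ucp map between the corresponding matrix amplifications (appendix~\ref{ucp}).

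For part~\ref{bsda}, forward direction: if $(\alpha^x_a,\beta^y_b)$ arises as in\eq{bisteerdatat} then, after purifying, we may take the tripartite state to be a unit vector $\psi\in\C^d\otimes\H_A\otimes\H_B$. By proposition~\ref{POVMchar}, Alice's and Bob's POVMs induce ucp maps $\Phi_A:C^*(\Gamma)\to\B(\H_A)$ and $\Phi_B:C^*(\Gamma)\to\B(\H_B)$ with $A^x_a=\Phi_A(e^x_a)$, $B^y_b=\Phi_B(e^y_b)$; by proposition~\ref{minproducp}, $\Phi_A\otimes_{\min}\Phi_B$ is ucp into $\B(\H_A\otimes\H_B)$, hence so is $\id_{M_d(\C)}\otimes(\Phi_A\otimes_{\min}\Phi_B)$ into $M_d(\C)\otimes\B(\H_A\otimes\H_B)=\B(\C^d\otimes\H_A\otimes\H_B)$. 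Composing with the vector state of $\psi$ defines a state $\rho\in\mathscr{S}(M_d(C^*(\Gamma)\otimes_{\min}C^*(\Gamma)))$, and a short computation shows it satisfies\eq{srepmin}. For the converse, fix a faithful representation $C^*(\Gamma)\subseteq\B(\H)$; by spatiality of the minimal tensor product, $M_d(C^*(\Gamma)\otimes_{\min}C^*(\Gamma))$ is concretely realized on $\C^d\otimes\H\otimes\H$, with $X\otimes e^x_a\otimes\mathbbm{1}$ and $X\otimes\mathbbm{1}\otimes e^y_b$ the corresponding operators. Given a state $\rho$ realizing\eq{srepmin}, proposition~\ref{vectorstates} provides, for every $\eps>0$, a finite convex combination of vector states $\sum_j\lambda_j\langle\xi_j,\,\cdot\,\xi_j\rangle$ with $\xi_j\in\C^d\otimes\H\otimes\H$ agreeing with $\rho$ up to $\eps$ on the finitely many elements $X\otimes e^x_a\otimes\mathbbm{1}$, $X\otimes\mathbbm{1}\otimes e^y_b$ with $X$ ranging over a basis of $M_d(\C)$. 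Taking the density matrix $\sum_j\lambda_j|\xi_j\rangle\langle\xi_j|$ as the tripartite state and the fixed operators $A^x_a\equiv e^x_a$, $B^y_b\equiv e^y_b$ as the POVMs produces bipartite steering data in $\mathcal{SD}_\otimes^d(\Gamma)$ that $\eps$-approximates $(\alpha^x_a,\beta^y_b)$, placing the latter in $\overline{\mathcal{SD}_\otimes^d(\Gamma)}$. Convexity of $\mathcal{SD}_\otimes^d(\Gamma)$ follows by the direct-sum construction used for lemma~\ref{Qconvex}, now holding The Verifier's space $\C^d$ fixed and forming $\H_{A,1}\oplus\H_{A,2}$ and $\H_{B,1}\oplus\H_{B,2}$: in the decomposition $\C^d\otimes(\H_{A,1}\oplus\H_{A,2})\otimes(\H_{B,1}\oplus\H_{B,2})=\bigoplus_{i,j\in\{1,2\}}\C^d\otimes\H_{A,i}\otimes\H_{B,j}$ one places $\sqrt{\lambda}\,\psi_1$ in the $(1,1)$-block and $\sqrt{1-\lambda}\,\psi_2$ in the $(2,2)$-block.

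For part~\ref{bsdb}, forward direction: if $(\alpha^x_a,\beta^y_b)$ arises as in\eq{bisteerdatac} from a state $\rho$ on $M_d(\C)\otimes\B(\H)$ with commuting POVMs, then $\Phi_A,\Phi_B:C^*(\Gamma)\to\B(\H)$ as above have commuting ranges, so corollary~\ref{maxproducp} assembles them into a ucp map $C^*(\Gamma)\otimes_{\max}C^*(\Gamma)\to\B(\H)$; tensoring with $\id_{M_d(\C)}$ and composing with $\rho$ gives a state on $M_d(C^*(\Gamma)\otimes_{\max}C^*(\Gamma))$ satisfying\eq{srepmax}. For the converse, run the GNS construction for the given state $\rho$ on $M_d(\C)\otimes D$ with $D\equiv C^*(\Gamma)\otimes_{\max}C^*(\Gamma)$, obtaining $(\H_\rho,\pi,\Omega)$. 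Restricting $\pi$ to the subalgebra $M_d(\C)\otimes\mathbbm{1}$ and using that every representation of $M_d(\C)$ is an amplification of the defining one exhibits $\H_\rho\cong\C^d\otimes\H'$ with $M_d(\C)$ acting on the first leg; since $\mathbbm{1}\otimes D$ then lies in the commutant $\mathbbm{1}_{\C^d}\otimes\B(\H')$, we get $\pi|_{\mathbbm{1}\otimes D}=\id_{\C^d}\otimes\sigma$ for a representation $\sigma:D\to\B(\H')$. Restricting $\sigma$ to the two commuting copies of $C^*(\Gamma)$ inside $D$ yields commuting projection-valued POVMs $A^x_a\equiv\sigma(e^x_a\otimes\mathbbm{1})$ and $B^y_b\equiv\sigma(\mathbbm{1}\otimes e^y_b)$ on $\H'$, and then $\rho(X\otimes e^x_a\otimes\mathbbm{1})=\langle\Omega,(X\otimes A^x_a)\Omega\rangle$, so the vector state of $\Omega$ on $M_d(\C)\otimes\B(\H')$ realizes $(\alpha^x_a,\beta^y_b)$ in the form\eq{bisteerdatac}, with projective measurements. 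Finally, $\mathscr{S}(M_d(C^*(\Gamma)\otimes_{\max}C^*(\Gamma)))$ is convex and weak-$*$ compact, and the map sending a state $\rho$ to $\bigl(\rho(\,\cdot\otimes e^x_a\otimes\mathbbm{1}),\,\rho(\,\cdot\otimes\mathbbm{1}\otimes e^y_b)\bigr)$ into the finite-dimensional space of tuples of functionals on $M_d(\C)$ is linear and continuous; its image, which by the equivalence just proven equals $\mathcal{SD}_c^d(\Gamma)$, is therefore convex and compact, hence closed.

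The only step that needs genuine care is the converse of part~\ref{bsdb}: one must extract from the abstract GNS representation of a state on $M_d(C^*(\Gamma)\otimes_{\max}C^*(\Gamma))$ a concrete model in which The Verifier stays pinned at dimension $d$. This rests entirely on the standard fact that every representation of the simple algebra $M_d(\C)$ is a multiple of its defining representation on $\C^d$, which lets one ``factor out'' the $\C^d$ and pass to its commutant; with that in hand, the rest is a routine transcription of the proofs of propositions~\ref{qcminmaxpovm} and~\ref{stqcminmax}, with $\id_{M_d(\C)}$ riding along harmlessly.
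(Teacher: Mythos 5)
Your proposal is correct and follows essentially the same route as the paper's proof: the same purification and ucp-map constructions for the forward directions, the same faithful-representation-plus-density-of-vector-states argument for the converse of part (a), the same GNS-based decomposition $\widehat{\H}\cong\C^d\otimes\H'$ for the converse of part (b) (the paper derives it via the matrix-unit relations $e_{ij}e_{i'j'}=\delta_{ji'}e_{ij'}$, which is just the explicit form of your appeal to the fact that every representation of $M_d(\C)$ is a multiple of the defining one), and the same compactness argument for closedness and convexity.
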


\begin{proof}
\begin{enumerate}
\item[\ref{bsda}] This is mostly analogous to the proof of proposition~\ref{qcminmaxpovm}\ref{qmma}, but nevertheless we spell out most of the details. Suppose that the given steering data is quantum with the tensor product assumption, so that there exist Hilbert spaces $\H_A$ and $\H_B$, a unit vector $\psi$ (again, density matrices can be purified) defined over the total tripartite Hilbert space
\beq
\psi\in\C^d\otimes\H_A\otimes\H_B
\eeq
and observables $A^x_a$ and $B^y_b$ such that
\beq
\alpha^x_a(\cdot)=\langle\psi,(\cdot\otimes A^x_a\otimes\mathbbm{1})\psi\rangle,\qquad \beta^y_b(\cdot)=\langle\psi,(\cdot\otimes \mathbbm{1}\otimes B^y_b)\psi\rangle
\eeq
holds. Then again as in the proof of proposition~\ref{qcminmaxpovm}, Alice's observables define a ucp map $\Phi_A:C^*(\Gamma)\ra\B(\H_A)$, and likewise for Bob in terms of $\Phi_B:C^*(\Gamma)\ra\B(\H_B)$. But then also
\beq
\mathrm{id}\otimes_{\min}\Phi_A\otimes_{\min}\Phi_B:M_d\otimes_{\min} C^*(\Gamma)\otimes_{\min} C^*(\Gamma)\lra \B(\C^d\otimes\H_A\otimes\H_B)
\eeq
is a ucp map, so that
\beq
\rho(\,\cdot\otimes\cdot\otimes\cdot)\equiv \langle\psi,(\,\cdot\otimes\Phi_A(\,\cdot\,)\otimes\Phi_B(\,\cdot\,))\,\psi\rangle
\eeq
defines a state in $\mathscr{S}(M_d(C^*(\Gamma)\otimes_{\min}C^*(\Gamma)))$. By construction, this state satisfies~\ref{srepmin}.

For the converse implication, suppose that the bipartite steering data $(\alpha^x_a,\beta^y_b)$ is given in terms of a state $\rho\in\mathscr{S}(M_d(C^*(\Gamma)\otimes_{\min}C^*(\Gamma)))$ satisfying\eq{srepmin}. Choose some faithful representation $C^*(\Gamma)\subseteq\B(\H)$. This induces a faithful representation
\beq
M_d(C^*(\Gamma)\otimes_{\min}C^*(\Gamma))\subseteq\B(\C^d\otimes\H\otimes\H).
\eeq
Concerning observables, we again make the obvious choice $A^x_a\equiv e^x_a$ and $B^y_b\equiv e^y_b$. We will show the existence of a mixed state on $\C^n\otimes\H\otimes\H$ with associated bipartite steering data $(\widetilde{\rho}^x_a,\widetilde{\rho}^y_b)$ which approximates the given\eq{srepmin} up to $\eps$,
\beq
||\widetilde{\rho}^x_a-\alpha^x_a||<\eps\quad\forall x,a,\qquad ||\widetilde{\rho}^y_b-\beta^y_b||<\eps\quad\forall y,b.
\eeq
This set of conditions is guaranteed to be satisfied if, for an appropriate set of hermitians $V_1,\ldots,V_n\in M_d(\C)$ 
\beq
|\widetilde{\rho}^x_a(V_i)-\alpha^x_a(V_i)|<\eps\quad\forall\, i,x,a,\qquad |\widetilde{\rho}^y_b(V_i)-\beta^y_b(V_i)|<\eps\quad\forall\, i,y,b,.
\eeq
Then the assertion follows again from the density of vector states (proposition~\ref{vectorstates}) by approximating the action of $\rho$ on the operators $V_i\otimes A^x_a\otimes\mathbbm{1}$ and $V_i\otimes\mathbbm{1}\otimes B^y_b$.

We now turn to convexity of $\mathcal{SD}^d_{\otimes}(\Gamma)$; the proof here is essentially identical to the one of~(\ref{Qconvex}), except that we now have to take care of an additional tensor factor. Given two sets of bipartite steering data $(\alpha^x_{a,1},\beta^y_{b,1})$ and $(\alpha^x_{a,2},\beta^y_{b,2})$ coming from Hilbert spaces $\H_{A,1}$, $\H_{B,1}$ and $\H_{A,2}$, $\H_{B,2}$, respectively, with observables $A^x_{a,i}$ and $B^y_{b,i}$ and unit vectors $\psi_i\in\C^n\otimes\H_{A,i}\otimes\H_{B,i}$, we consider the direct sum Hilbert spaces $\H_{A,1}\oplus\H_{A,2}$ and $\H_{B,1}\oplus\H_{B,2}$ on which the direct sum observables 
\beq
A^x_a\equiv A^x_{a,1}\oplus A^x_{a,2},\qquad B^y_b\equiv B^y_{b,1}\oplus B^y_{b,2}
\eeq
act. The associated tripartite Hilbert space can be decomposed as
\begin{align}
\begin{split}
\label{Hdecompsd}
\C^d\otimes(\H_{A,1}\oplus\H_{A,2})\otimes(\H_{B,1}\oplus\H_{B,2})\:\:=&\phantom{\oplus}\:\:(\C^d\otimes\H_{A,1}\otimes\H_{B,1})\\
&\oplus(\C^d\otimes\H_{A,1}\otimes\H_{B,2})\\&\oplus(\C^d\otimes\H_{A,2}\otimes\H_{B,1})\\&\oplus(\C^d\otimes\H_{A,2}\otimes\H_{B,2})
\end{split}
\end{align}
so that one can consider on this total Hilbert space the unit vector 
\beq
\psi\equiv\sqrt{\lambda}\,\psi_1\oplus 0\oplus 0\oplus\sqrt{1-\lambda}\,\psi_2.
\eeq
By construction, this gives rise to the convex combination bipartite steering data
\beq
(\lambda\alpha^x_{a,1}+(1-\lambda)\alpha^x_{a,2}\,,\:\lambda\beta^y_{b,1}+(1-\lambda)\beta^y_{b,2}).
\eeq

\item[\ref{bsdb}] Suppose that the given bipartite steering data is quantum with the commutativity assumption, so that there is a Hilbert space $\H$, a unit vector $\psi\in\H$ (assuming purity without loss of generality) and observables $A^x_a$ and $B^y_b$ such that
\beq
\alpha^x_a(\cdot)\equiv\langle\psi,\left(\,\cdot\otimes A^x_a\right)\psi\rangle,\qquad \beta^y_b(\cdot)\equiv\langle\psi,\left(\,\cdot\otimes B^y_b\right)\psi\rangle.
\eeq
Then again, Alice's POVMs define a ucp map $\Phi_A:C^*(\Gamma)\ra\B(\H)$ such that\eq{POVMsucps} holds, and likewise for Bob in terms of $\Phi_B:C^*(\Gamma)\ra\B(\H)$. Now again, the assertion follows from corollary~\ref{maxproducp} on maximal tensor products of ucp maps.

For the converse, we assume that the state $\rho\in\mathscr{S}(M_d(C^*(\Gamma)\otimes_{\max}C^*(\Gamma)))$ satisfying\eq{srepmax} is given. Then the GNS construction associated to $\rho$ defines a representation of $M_d(C^*(\Gamma)\otimes_{\max}C^*(\Gamma))$ on a Hilbert space $\widehat{\H}$ which has a distinguished cyclic vector $\psi\in\widehat{\H}$. By noting that the standard basis matrices $e_{ij}\in M_d(\C)$ generate $M_d(\C)$ and satisfy the relations
\beq
e_{ij}e_{i'j'}=\delta_{ji'}e_{ij'},
\eeq
while their action on $\widehat{\H}$ commutes with the action of $C^*(\Gamma)\otimes_{\max}C^*(\Gamma)$, it follows that $\widehat{\H}$ canonically decomposes as, for some Hilbert space $\H$,
\beq
\widehat{\H}=\underbrace{\H\oplus\ldots\oplus\H}_{d\textrm{ summands}}\cong\C^d\otimes\H
\eeq
 where $M_d(\C)$ only acts on the first tensor factor in the standard representation, while $C^*(\Gamma)\otimes_{\max}C^*(\Gamma)$ acts on the second tensor factor. By choosing the observables $A^x_a\equiv e^x_a\otimes\mathbbm{1}\in\B(\H)$ and $B^y_b\equiv \mathbbm{1}\otimes e^y_b\in\B(\H)$ together with the distinguished cyclic unit vector $\psi\in\widehat{\H}$ as the tripartite initial state $\rho$, the given bipartite steering data is of the form\eq{bisteerdatac}.

Given this, closedness and convexity again follow from compactness and convexity of the state space $\mathscr{S}(M_d(C^*(\Gamma)\otimes_{\max}C^*(\Gamma)))$ equipped with the weak $*$-topology. 
\end{enumerate}
\end{proof}

Of course now the obvious question arises, namely the \emph{steering data variant} of Tsirelson's problem:
\beqn
\label{SDTP}
\boxed{\mathbf{SDTP}(\Gamma):\quad\overline{\mathcal{SD}_\otimes^d(\Gamma)}\stackrel{?}=\mathcal{SD}_c^d(\Gamma)\quad\forall d}
\eeqn
Let us emphasize again that any potential difference between $\overline{\mathcal{SD}_\otimes^d(\Gamma)}$ and $\mathcal{SD}_c^d(\Gamma)$ would manifest itself only when the joint Hilbert space is infinite-dimensional~\cite{SW}, and even if the difference exists it is unclear whether physical implementations witnessing this difference would be realistic.

Now we proceed towards stating the dual version of proposition~\ref{bsdt}. This dual version can conveniently be interpreted in terms of nonlocal games of the following form: Alice and Bob share some tripartite entanglement with The Verifier. At each round of the game, The Verifier randomly selects either Alice or Bob, each with probability $1/2$. Suppose that he has selected Alice. Then he sends to her a number $x\in[k]$, choosing uniformly at random. Upon receiving $x$, Alice conducts the measurement corresponding to the measurement setting $x$ and obtains an outcome $a\in[m]$. She then sends this outcome $a$ back to The Verifier. As the last step of the round of the game, The Verifier conducts, on his $d$-dimensional quantum system, the measurement $V^x_a$, which is any hermitian matrix $V^x_a\in M_d(\C)$. Finally, he records the outcome of his measurement as the score of the round. Similarly, if The Verifier has initially selected Bob, then he sends to him a uniformly chosen random number $y\in[k]$, Bob conducts measurement $y$ and obtains an outcome $b$, which he communicates back to The Verifier, who then measures some hermitian matrix $W^y_b\in M_d(\C)$ and records this outcome as the score of the round.

By this description, a particular such \emph{bipartite steering game} in the Bell scenario $\Gamma$ is defined by the collection of measurements conducted by The Verifier, which we call the \emph{dual bipartite steering data} $(V^x_a,W^y_b)$. Given some tripartite state $\rho$ with bipartite steering data $(\alpha^x_a,\beta^y_b)$, a simple calculation shows that the average score (or \emph{value}) of the game is given by
\beq
\omega(V^x_a,W^y_b,\rho)=\frac{1}{2k}\sum_{x,a}\alpha^x_a(V^x_a)+\frac{1}{2k}\sum_{y,b}\beta^y_b(W^y_b)
\eeq
We can now take the \emph{tensor product value} $\omega_\otimes$ of the game to be given by the supremum of all values of the game achievable by bipartite steering data with the tensor product assumption, $(\alpha^x_a,\beta^y_b)\in\mathcal{SD}^d_\otimes(\Gamma)$,
\beqn
\label{valuetp}
\omega_\otimes(V^x_a,W^y_b)\equiv\frac{1}{2k}\sup_{(\alpha^x_a,\beta^y_b)\in\mathcal{SD}^d_\otimes(\Gamma)}\left(\sum_{x,a}\alpha^x_a(V^x_a)+\sum_{y,b}\beta^y_b(W^y_b)\right)
\eeqn
Similarly, we define the \emph{commutative value} $\omega_c$ of the game to be given by the maximum of all values of the game achievable by bipartite steering data with the commutativity assumption,
\beqn
\label{valuec}
\omega_c(V^x_a,W^y_b)\equiv\frac{1}{2k}\max_{(\alpha^x_a,\beta^y_b)\in\mathcal{SD}^d_c(\Gamma)}\left(\sum_{x,a}\alpha^x_a(V^x_a)+\sum_{y,b}\beta^y_b(W^y_b)\right)
\eeqn
By the closedness statement of proposition~\ref{bsdt}\ref{bsdb} together with boundedness of $\mathcal{SD}^d_c(\Gamma)$, it follows that $\mathcal{SD}^d_c(\Gamma)$ is actually compact. That is why the maximum in\eq{valuec} is known to be a maximum, and not just a supremum.

\begin{thm}
\label{steeringgame}
The maximal value of the bipartite steering game $(V^x_a,W^y_b)$ is given by \ldots
\begin{enumerate}
\item \ldots with the tensor product assumption,
\beq
\omega_\otimes(V^x_a,W^y_b)=\frac{1}{2k}\left|\left|\sum_{a,x}V^x_a\otimes e^x_a\otimes\mathbbm{1}+\sum_{b,y}W^y_b\otimes\mathbbm{1}\otimes e^y_b\right|\right|_{M_d(C^*(\Gamma)\otimes_{\min}C^*(\Gamma))}.
\eeq
\item \ldots with the commutativity assumption,
\beq
\omega_c(V^x_a,W^y_b)=\frac{1}{2k}\left|\left|\sum_{a,x}V^x_a\otimes e^x_a\otimes\mathbbm{1}+\sum_{b,y}W^y_b\otimes\mathbbm{1}\otimes e^y_b\right|\right|_{M_d(C^*(\Gamma)\otimes_{\max}C^*(\Gamma))}.
\eeq
\end{enumerate}
\end{thm}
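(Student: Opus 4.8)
The plan is to reduce both claims to a single norm computation, in the same way that propositions~\ref{qcminmaxpovm} and~\ref{bominmax} reduce the earlier game‑value statements. Set
\beqn
T\;\equiv\;\sum_{a,x}V^x_a\otimes e^x_a\otimes\mathbbm{1}\;+\;\sum_{b,y}W^y_b\otimes\mathbbm{1}\otimes e^y_b,
\eeqn
an element of $M_d(C^*(\Gamma)\otimes_{\min}C^*(\Gamma))$ in the first case and of $M_d(C^*(\Gamma)\otimes_{\max}C^*(\Gamma))$ in the second. Since the $V^x_a$ and $W^y_b$ are hermitian and the $e^x_a$ are self‑adjoint projections, $T=T^*$ in either algebra.

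First I would express the value of the game for a fixed strategy through $T$. By proposition~\ref{bsdt}, bipartite steering data $(\alpha^x_a,\beta^y_b)$ with the tensor‑product assumption corresponds (after passing to the closure) exactly to a $C^*$‑algebraic state $\rho$ on $M_d(C^*(\Gamma)\otimes_{\min}C^*(\Gamma))$ via $\alpha^x_a(\,\cdot\,)=\rho(\,\cdot\otimes e^x_a\otimes\mathbbm{1})$ and $\beta^y_b(\,\cdot\,)=\rho(\,\cdot\otimes\mathbbm{1}\otimes e^y_b)$, and likewise with the commutativity assumption and $\otimes_{\max}$; moreover every such state yields admissible steering data. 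Substituting $V^x_a$ and $W^y_b$ and using linearity of $\rho$,
\beqn
\sum_{x,a}\alpha^x_a(V^x_a)+\sum_{y,b}\beta^y_b(W^y_b)\;=\;\rho(T).
\eeqn
Hence $\omega_\otimes(V^x_a,W^y_b)=\tfrac{1}{2k}\sup_{\rho}\rho(T)$ over states of the minimal tensor product, using that $\rho\mapsto\rho(T)$ is weak‑$*$ continuous so that the supremum over $\mathcal{SD}^d_\otimes(\Gamma)$ agrees with the one over $\overline{\mathcal{SD}^d_\otimes(\Gamma)}$, identified by proposition~\ref{bsdt}\ref{bsda}; and $\omega_c(V^x_a,W^y_b)=\tfrac{1}{2k}\max_{\rho}\rho(T)$ over states of the maximal tensor product, the maximum being attained because that state space is weak‑$*$ compact and $\mathcal{SD}^d_c(\Gamma)$ is already closed by proposition~\ref{bsdt}\ref{bsdb}.

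Second, I would close the argument with the standard fact, already used in the proof of proposition~\ref{bominmax}, that for a self‑adjoint element of a unital $C^*$‑algebra the norm equals the supremum of $|\rho(\,\cdot\,)|$ over all states. Applied to $T$, this turns the two suprema above into $\|T\|_{M_d(C^*(\Gamma)\otimes_{\min}C^*(\Gamma))}$ and $\|T\|_{M_d(C^*(\Gamma)\otimes_{\max}C^*(\Gamma))}$ respectively, and dividing by $2k$ gives the asserted formulas.

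Given proposition~\ref{bsdt}, the proof is essentially a transcription of those of propositions~\ref{qcminmaxpovm} and~\ref{bominmax}, so I do not expect a serious obstacle. The points that need care are: (i) invoking proposition~\ref{bsdt} with the \emph{amplified} algebra $M_d(C^*(\Gamma)\otimes C^*(\Gamma))$ and checking that $T$ genuinely lies there (it does, since $V^x_a,W^y_b\in M_d(\C)$ and $e^x_a\in C^*(\Gamma)$); (ii) the passage from a supremum over $\mathcal{SD}^d_\otimes(\Gamma)$ to one over the corresponding state space, which relies on the closure in proposition~\ref{bsdt}\ref{bsda}; and (iii) matching the sign convention for the verifier's recorded score with the $C^*$‑norm, i.e.\ reading ``maximal value'' so that $\sup_\rho|\rho(T)|$ rather than $\sup_\rho\rho(T)$ is the relevant quantity --- exactly the normalization implicit in the dual statement, proposition~\ref{bominmax}.
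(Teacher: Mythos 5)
Your proposal is correct and follows essentially the same route as the paper: identify steering data with states on $M_d(C^*(\Gamma)\otimes_{\min/\max}C^*(\Gamma))$ via proposition~\ref{bsdt}, observe that the game value of a strategy is $\tfrac{1}{2k}\rho(T)$, and invoke the fact that the norm of a self-adjoint element is the supremum of $|\rho(\,\cdot\,)|$ over states. The paper's own proof is exactly this one-line reduction; your point (iii) about reading ``maximal value'' as the maximal absolute value, in line with proposition~\ref{bominmax}, correctly flags the only subtlety the paper leaves implicit.
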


\begin{proof}
Just like in the proof of theorem~\ref{bominmax}, this is clear from the previous theorem by recalling that the norm of a self-adjoint element in a $C^*$-algebra equals the maximum of the absolute value of this element under all states.
\end{proof}

\begin{thm}
\label{QWEPtoSDTP}
The following conjectural statements are equivalent for every Bell scenario $\Gamma\neq\Z_2\ast\Z_2$:
\begin{enumerate}
\item\label{qbsda} The QWEP conjecture\eq{qwep} holds,
\item\label{qbsdb} $C^*(\Gamma)\otimes_{\min} C^*(\Gamma)=C^*(\Gamma)\otimes_{\max} C^*(\Gamma)$,
\item\label{qbsdc} $\overline{\mathcal{SD}_\otimes^d(\Gamma)}=\mathcal{SD}_c^d(\Gamma)\quad\forall d$.
\item\label{qbsdd} For every dimension $d\in\N$ and every bipartite steering game $(V^x_a,W^y_b)$, its tensor product value and its commutative value coincide:
\beq
\omega_\otimes(V^x_a,W^y_b)=\omega_c(V^x_a,W^y_b).
\eeq
\end{enumerate}
\end{thm}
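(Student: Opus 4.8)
The plan is to run the chain of implications (a)$\Leftrightarrow$(b), (b)$\Rightarrow$(c), (c)$\Leftrightarrow$(d) and (c)$\Rightarrow$(b); all of these are short except (c)$\Rightarrow$(b), which is the real content. For (a)$\Leftrightarrow$(b) I would simply invoke Theorem~\ref{FGamma}, which asserts that for every $\Gamma$ in our family other than $\Z_2\ast\Z_2$ the equality $C^*(\Gamma)\otimes_{\min}C^*(\Gamma)=C^*(\Gamma)\otimes_{\max}C^*(\Gamma)$ is equivalent to the QWEP conjecture. The exceptional scenario must be barred because $\Z_2\ast\Z_2\cong\Z\rtimes\Z_2$ is amenable, so that $C^*(\Z_2\ast\Z_2)$ is nuclear and (b) holds there unconditionally; for every other $\Gamma$ one has an embedding $\F_2\hookrightarrow\Gamma$ by Lemma~\ref{ubgroups}, and it is exactly this embedding that allows the tensor-product equality to be pushed down from $C^*(\Gamma)$ to $C^*(\F_2)$ and hence to QWEP.

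The implication (b)$\Rightarrow$(c) is immediate: if the canonical surjection $C^*(\Gamma)\otimes_{\max}C^*(\Gamma)\to C^*(\Gamma)\otimes_{\min}C^*(\Gamma)$ is an isomorphism, then so is its amplification to $M_d(\,\cdot\,)$, hence the two state spaces appearing in Proposition~\ref{bsdt} coincide, and with them their images under the (identical) steering-data map --- which is (c). For (c)$\Leftrightarrow$(d) I would use Hahn--Banach duality. By Proposition~\ref{bsdt} the sets $\overline{\mathcal{SD}_\otimes^d(\Gamma)}\subseteq\mathcal{SD}_c^d(\Gamma)$ are compact and convex, so they coincide if and only if their support functions agree in every direction; a direction is a linear functional on the finite-dimensional ambient space of steering data, and every such functional has the form $(\alpha^x_a,\beta^y_b)\mapsto\sum_{x,a}\alpha^x_a(V^x_a)+\sum_{y,b}\beta^y_b(W^y_b)$ for hermitian $V^x_a,W^y_b\in M_d(\C)$, i.e.\ it is $2k$ times the value functional of a bipartite steering game. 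By the definitions\eq{valuetp} and\eq{valuec}, the support functions of the two sets in that direction are $2k\,\omega_\otimes(V^x_a,W^y_b)$ and $2k\,\omega_c(V^x_a,W^y_b)$, so the sets agree for all $d$ iff all game values agree, which is (d). (Theorem~\ref{steeringgame} moreover recasts (d) as the statement that $\|z\|_{\min}=\|z\|_{\max}$ for every self-adjoint ``first-order'' element $z=\sum_{x,a}V^x_a\otimes e^x_a\otimes\mathbbm{1}+\sum_{y,b}W^y_b\otimes\mathbbm{1}\otimes e^y_b$ of $M_d(C^*(\Gamma)\otimes C^*(\Gamma))$.)

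The crux is (c)$\Rightarrow$(b). The key observation is that the bipartite steering data of a state $\rho$ on $M_d(C^*(\Gamma)\otimes_\bullet C^*(\Gamma))$ is exactly the restriction of $\rho$ to $M_d(\C)\otimes S_2$, where $S_2\equiv\operatorname{span}\{\mathbbm{1}\otimes\mathbbm{1},\,e^x_a\otimes\mathbbm{1},\,\mathbbm{1}\otimes e^y_b\}$ is the ``first-order'' operator system inside $C^*(\Gamma)\otimes_\bullet C^*(\Gamma)$, and that conversely every such restriction extends to a state (Arveson extension, as underlies Proposition~\ref{bsdt}). Hence (c) says precisely that the two operator-system structures on $S_2$ --- the one from $\otimes_{\min}$ and the one from $\otimes_{\max}$ --- have identical matrix state spaces at every level $d$, i.e.\ that the identity is a complete order isomorphism between them. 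From this I would recover (b) by a dilation argument: the elementary identity presenting $e^x_a\otimes e^y_b$ as an off-diagonal corner of $\bigl(\begin{smallmatrix}0&e^x_a\otimes\mathbbm{1}&0\\ e^x_a\otimes\mathbbm{1}&0&\mathbbm{1}\otimes e^y_b\\ 0&\mathbbm{1}\otimes e^y_b&0\end{smallmatrix}\bigr)^2$, iterated along the paths of a suitable finite directed graph, shows that any prescribed element of $\C[\Gamma]\otimes_{\mathrm{alg}}\C[\Gamma]$ occurs as a scalar corner of a fixed noncommutative polynomial in a single first-order element over some $M_D(\C)$; tracking these corners through the complete order isomorphism then forces the identity on $S_2$ to extend to a $*$-isomorphism $C^*(\Gamma)\otimes_{\min}C^*(\Gamma)\to C^*(\Gamma)\otimes_{\max}C^*(\Gamma)$, which is (b). In the operator-system language this paper otherwise avoids, the content of this last step is the identification of $C^*(\Gamma)\otimes_{\min}C^*(\Gamma)$ and $C^*(\Gamma)\otimes_{\max}C^*(\Gamma)$ with the $C^*$-envelopes of the minimal and of the commuting operator-system tensor squares of $S_\Gamma=\operatorname{span}\{e^x_a\}$, plus functoriality of the $C^*$-envelope; I expect this to be the main obstacle --- carrying out the routing construction cleanly and then passing to the limit over all matrix levels $d$ at once, in the spirit of the hierarchy argument of Remark~\ref{hierarchy}.
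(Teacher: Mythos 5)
Your reduction scheme is sound and matches the paper up to the last step: (a)$\Leftrightarrow$(b) is Theorem~\ref{FGamma}, (b)$\Rightarrow$(c) is Proposition~\ref{bsdt}, and your Hahn--Banach/support-function argument for (c)$\Leftrightarrow$(d) is exactly the duality the paper uses (Theorem~\ref{steeringgame} plus the definitions of $\omega_\otimes$ and $\omega_c$). You also correctly isolate the intermediate statement that everything hinges on: that (d) is equivalent to $\|z\|_{\min}=\|z\|_{\max}$ for every self-adjoint $z\in M_d(S)$, where $S=\mathrm{lin}_\C\{\mathbbm{1}\otimes\mathbbm{1},\,e^x_a\otimes\mathbbm{1},\,\mathbbm{1}\otimes e^y_b\}$, i.e.\ that the minimal and maximal matrix norms agree on the first-order operator system at every matrix level.

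The genuine gap is in how you get from there back to (b). Your proposed ``routing'' argument --- realizing $e^x_a\otimes e^y_b$ as an off-diagonal corner of the square of a first-order self-adjoint element and iterating along paths of a graph --- does not work as stated: equality of the \emph{norms} of a self-adjoint matrix $M\in M_3(S)$ in the two tensor products gives no control over the individual entries of $M^2$, which live outside $S$; to extract such multiplicative information you need a \emph{map} between the two completions, not just coincidence of numerical invariants, and your sketch never produces one. (Your closing appeal to $C^*$-envelopes is also off target: the relevant universal object here is not the $C^*$-envelope of $S_{\min}$ or $S_{\max}$, which would be a quotient, but the full group $C^*$-algebra tensor product containing $S$.) The paper closes the loop with Proposition~\ref{pisierobs} (Pisier's criterion): since $S$ contains the Fourier-transform unitaries of Remark~\ref{fourier}, which generate $C^*(\Gamma)\otimes C^*(\Gamma)$, the agreement of the self-adjoint norms on $M_d(S_{\min})$ and $M_d(S_{\max})$ for all $d$ yields, via the Arveson extension theorem~\ref{arvext}, a ucp map $C^*(\Gamma)\otimes_{\min}C^*(\Gamma)\to C^*(\Gamma)\otimes_{\max}C^*(\Gamma)$ that is the identity on $S$; those generating unitaries then lie in its multiplicative domain (Proposition~\ref{multdomain}), forcing the map to be a $*$-homomorphism inverse to the canonical surjection. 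That Arveson-plus-multiplicative-domain mechanism is the missing idea; with it, your argument becomes the paper's proof.
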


\begin{proof}
The equivalence of~\ref{qbsda} and~\ref{qbsdb} is theorem~\ref{FGamma}. Given that~\ref{qbsdb} holds, then~\ref{qbsdc} follows immediately from proposition~\ref{bsdt}. The implication from~\ref{qbsdc} to~\ref{qbsdd} follows from the definitions\eq{valuetp},\eq{valuec}.

Hence, it remains to prove that~\ref{qbsdd} implies~\ref{qbsdb}. To this end, we consider the linear subspace $S\subseteq C^*(\Gamma)\otimes_{\mathrm{alg}}C^*(\Gamma)$ defined as the linear span of the generators,
\beq
S\equiv\mathrm{lin}_{\C}\left\{\mathbbm{1}\otimes\mathbbm{1},e^x_a\otimes\mathbbm{1},\mathbbm{1}\otimes e^y_b\right\}.
\eeq
We write $S_{\min}$ for the normed space defined by restricting the minimal tensor norm $||\cdot||_{\min}$ from $C^*(\Gamma)\otimes_{\min}C^*(\Gamma)$ to $S$. More generally, we take $M_d(S_{\min})$ to be the normed space defined by restricting the norm from $M_d(C^*(\Gamma)\otimes_{\min}C^*(\Gamma))$ to $M_d(S)$. Similarly, we write $S_{\max}$ and $M_d(S_{\max})$ for the normed spaces obtained by restricting the maximal tensor norm $||\cdot||_{\max}$ from $C^*(\Gamma)\otimes_{\max}C^*(\Gamma)$ and $M_d(C^*(\Gamma)\otimes_{\max}C^*(\Gamma))$ to $S$ and $M_d(S)$, respectively.

Since the unitaries considered in remark~\ref{fourier} are contained in $S$ and generate $C^*(\Gamma)$, proposition~\ref{pisierobs} shows that~\ref{qbsdb} follows as soon as the norms on the self-adjoint parts of $M_d(S_{\min})$ and $M_d(S_{\max})$ coincide for all $d$. But now since an arbitrary element of the self-adjoint part of $M_d(S)$ is of the form
\beq
\sum_{a,x}V^x_a\otimes e^x_a\otimes\mathbbm{1}+\sum_{b,y}W^y_b\otimes\mathbbm{1}\otimes e^y_b,
\eeq
this actually follows from the assumption~\ref{qbsdd} together with theorem~\ref{steeringgame}.
\end{proof}

So, remarkably, for $\Gamma\neq\Z_2\ast\Z_2$ the conjecture $\mathbf{SDTP}(\Gamma)$\eq{SDTP} is equivalent to the QWEP conjecture, even though in our bipartite steering scenarios, Alice and Bob do not even measure jointly! Moreover, here it is sufficient to consider a \emph{single} scenario $\Gamma$. In particular, we therefore know that a positive answer to $\mathbf{SDTP}(\Gamma)$ for \emph{some} Bell scenario $\Gamma$ (except CHSH) implies a positive answer to the original Tsirelson's problem\eq{TP} for \emph{all} $\Gamma$. In other words, within in the framework of bipartite steering it becomes possible to choose our favorite Bell scenario $\Gamma$ (besides CHSH) in which to attack the QWEP Conjecture and all other versions of Tsirelson's problem at the same time.

\newpage
\appendix

\section{Complete positivity}
\label{ucp}

This section introduces unital completely positive maps and discusses some of their properties. All of the following sections of the appendix build on this material. When $A$ is a $C^*$-algebra, then $M_n(A)$ stands for the $C^*$-algebra of $n\times n$-matrices with entries in $A$. All $C^*$-algebras and all $*$-homomorphisms are unital, even when not explicitly mentioned, except for the ideal $J$ in the proof of proposition~\ref{statecomp}. As in the main text, the topology on a state space $\mathscr{S}(A)$ of a $C^*$-algebra $A$ is always the weak $*$-topology.

\begin{defn}
\label{ucpdefn}
Let $A$ and $B$ be unital $C^*$-algebras. A linear map $\Phi:A\ra B$ is called \emph{unital completely positive} (\textbf{ucp map}) if it satisfies
\begin{enumerate}
\item\label{uni} unitality: $\Phi(\mathbbm{1}_A)=\mathbbm{1}_B$.
\item\label{cp} complete positivity: for any $n\in\N$, the entrywise operating map
\beq
\Phi_n\::\:M_n(A)\lra M_n(B),\qquad\left(\begin{array}{ccc}a_{11}&\ldots&a_{1n}\\\vdots&&\vdots\\a_{n1}&\ldots&a_{nn}\end{array}\right)\mapsto \left(\begin{array}{ccc}\Phi(a_{11})&\ldots&\Phi(a_{1n})\\\vdots&&\vdots\\\Phi(a_{n1})&\ldots&\Phi(a_{nn})\end{array}\right)
\eeq
is positive. i.e.~for any $a\in M_n(A)$, it holds that
\beqn
\label{ncp}
a\geq 0\:\:\mathrm{for}\:\: a\in M_n(A)\Longrightarrow\:\Phi_n(a)\geq 0\:\:\mathrm{in}\:\:M_n(B). 
\eeqn
\end{enumerate}
\end{defn}

\begin{prop}[{e.g.~\cite[2.1]{Oza}}]
\label{ucpcon}
With this notation, a unital map $\Phi:A\ra B$ is completely positive if and only if for all $n\in\N$ and all $a\in M_n(a)$ with $a=a^*$,
\beqn
\label{contract}
\Phi(a)^*=\Phi(a)\qquad\textrm{and}\qquad||\Phi_n(a)||\leq ||a||.
\eeqn
\end{prop}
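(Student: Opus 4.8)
The plan is to prove both implications by elementary spectral arguments, treating each matrix amplification $\Phi_n$ on its own. This is legitimate because complete positivity of $\Phi$ is by definition positivity of every $\Phi_n$, and because $(\Phi_n)_m = \Phi_{nm}$, so the conditions in~(\ref{contract}) imposed on $\Phi$ are precisely the union over $n$ of the analogous conditions for the individual maps $\Phi_n$. Thus it suffices to show, for a single unital linear map $\Psi \colon A \to B$, that $\Psi$ is positive if and only if $\Psi$ sends self-adjoint elements to self-adjoint elements and is contractive on the self-adjoint part; applying this to $\Psi = \Phi_n$ for every $n$ then gives the proposition.

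For the forward direction I would assume $\Phi$ is ucp, so each $\Phi_n$ is unital and positive, and check the two conditions. First, a unital positive map preserves adjoints: a self-adjoint element is a difference $a = a_+ - a_-$ of positive elements, and positive elements are mapped to positive, hence self-adjoint, elements, so $\Phi_n(a)^* = \Phi_n(a)$ for $a = a^*$. Second, for the norm bound, a self-adjoint $a$ with $\|a\| \le 1$ has spectrum in $[-1,1]$, hence $-\mathbbm{1} \le a \le \mathbbm{1}$; applying the positive unital map $\Phi_n$ preserves both inequalities, so $-\mathbbm{1} \le \Phi_n(a) \le \mathbbm{1}$, and since $\Phi_n(a)$ is self-adjoint this forces $\|\Phi_n(a)\| \le 1$. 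Rescaling gives $\|\Phi_n(a)\| \le \|a\|$ for all self-adjoint $a$, which is~(\ref{contract}).

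For the converse I would assume the conditions in~(\ref{contract}) and fix $n$, writing $\Psi = \Phi_n$; it is unital, adjoint-preserving, and contractive on self-adjoint elements. Given $a \ge 0$ in $M_n(A)$ with $\|a\| \le 1$, the element $a - \mathbbm{1}$ is self-adjoint with spectrum in $[-1,0]$, so $\|a - \mathbbm{1}\| \le 1$; by unitality and the contraction hypothesis, $\|\Psi(a) - \mathbbm{1}\| = \|\Psi(a - \mathbbm{1})\| \le 1$. Since $\Psi(a)$ is self-adjoint, $\Psi(a) - \mathbbm{1}$ is self-adjoint of norm at most $1$, hence $\sigma(\Psi(a) - \mathbbm{1}) \subseteq [-1,1]$, i.e. $\sigma(\Psi(a)) \subseteq [0,2] \subseteq [0,\infty)$, so $\Psi(a) \ge 0$. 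Scaling covers arbitrary $a \ge 0$, so $\Phi_n$ is positive; as $n$ was arbitrary, $\Phi$ is completely positive.

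I do not expect a genuine obstacle here: the argument only invokes the standard facts that a self-adjoint element with spectrum in $[0,\infty)$ is positive and that $0 \le a \le \mathbbm{1}$ is equivalent to $\sigma(a) \subseteq [0,1]$, together with the bookkeeping observation that the stated conditions on $\Phi$ unwind to positivity of each $\Phi_n$. The only point worth stating carefully is the translation between order and norm for self-adjoint elements via continuous functional calculus, which is what makes both directions go through; this is also why the self-adjointness-preservation clause in~(\ref{contract}) cannot be dropped, since it is needed to apply the spectral characterization of positivity to $\Psi(a)$.
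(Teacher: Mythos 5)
Your proof is correct and follows essentially the same route as the paper's: self-adjointness preservation via the decomposition $a=a_+-a_-$, the norm bound from applying the positive unital map to $\|a\|\cdot\mathbbm{1}\pm a\geq 0$, and the converse by translating a positive element by the identity and reading off the spectrum of its image from the contraction hypothesis. The only differences are cosmetic (normalizing to $\|a\|\leq 1$ rather than carrying $\|a\|$ through, and spelling out the reduction to a single map $\Phi_n$ via $(\Phi_n)_m=\Phi_{nm}$).
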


\begin{proof}
If $\Phi_n$ maps positive matrices to positive matrices, we get that $\Phi(a)^*=\Phi(a)$ by writing $a$ as a difference of two positive elements. Furthermore, $||a||\cdot\mathbbm{1}-a\geq 0$ for self-adjoint $a$ gives, by linearity and unitality,
$$
\Phi_n(||a||\cdot\mathbbm{1}-a) = ||a||\cdot\mathbbm{1} - \Phi_n(a) \geq 0
$$
which shows $\Phi_n$ to be norm-nonincreasing, so that both parts of\eq{contract} have been shown. Conversely, assume that\eq{contract} holds. If $a\in M_n(A)$ is positive, then $||a-||a||\cdot\mathbbm{1}||\leq ||a||$, and by assumption
\beq
||\Phi_n(a-||a||\cdot\mathbbm{1})||\leq ||a||,
\eeq
showing that $||\Phi_n(a)-||a||\cdot\mathbbm{1}||\leq||a||$ by unitality, so that the spectrum of the self-adjoint element $\Phi_n(a)$ lies in the interval $[0,2||a||]$, so that $\Phi_n(a)\geq 0$.
\end{proof}

\begin{lem}
\label{commucp}
When $A$ is a commutative unital $C^*$-algebra, then a linear map $\Phi:A\ra B$ is ucp as soon as it satisfies requirements~\ref{ucpdefn}\ref{uni} and~\ref{ucpdefn}\ref{cp} for $n=1$.
\end{lem}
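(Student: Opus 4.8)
The plan is to pass to the commutative picture. By Gelfand duality we may assume $A=C(X)$ for a compact Hausdorff space $X$, so that $M_n(A)\cong C(X,M_n(\C))$ and an element $a=(a_{ij})\in M_n(A)$ is positive if and only if the matrix $a(x)=(a_{ij}(x))\in M_n(\C)$ is positive for every $x\in X$. Before the main argument I would isolate two routine facts: (i) a positive unital map on a $C^*$-algebra is bounded --- for self-adjoint $a$ one has $-\|a\|\mathbbm{1}\le a\le\|a\|\mathbbm{1}$, hence $\|\Phi(a)\|\le\|a\|$, and the general case follows by splitting into real and imaginary parts --- so $\Phi$ and each entrywise amplification $\Phi_n$ are norm-continuous; and (ii) the positive cone of any $C^*$-algebra, in particular of $M_n(B)$, is norm-closed.

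The heart of the proof is the approximation claim that every positive $a\in C(X,M_n(\C))$ is a uniform limit of finite sums $\sum_k f_k\otimes P_k$ with $f_k\in C(X)$ and $P_k\in M_n(\C)$ both positive. Given $\eps>0$, uniform continuity of the map $x\mapsto a(x)$ on the compact space $X$ provides a finite open cover $\{U_k\}$ with $\|a(x)-a(x')\|<\eps$ whenever $x,x'\in U_k$; choosing $x_k\in U_k$, a partition of unity $\{f_k\}$ subordinate to $\{U_k\}$, and $P_k:=a(x_k)\ge 0$, one gets $\|a(x)-\sum_k f_k(x)P_k\|\le\sum_k f_k(x)\|a(x)-a(x_k)\|\le\eps$ for all $x$, since $f_k(x)\ne 0$ forces $x\in U_k$ and $\sum_k f_k\equiv 1$.

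It remains to see that $\Phi_n$ carries each such approximant into the positive cone. Since the entries of $f\otimes P$ are $p_{ij}f$ with $p_{ij}\in\C$, linearity gives $\Phi_n(f\otimes P)=\bigl(p_{ij}\Phi(f)\bigr)_{ij}$, which is the product of the scalar matrix $\mathrm{diag}(\Phi(f),\ldots,\Phi(f))\in M_n(B)$ --- positive because $\Phi(f)\ge 0$ by positivity of $\Phi$ in the $n=1$ sense --- with the matrix $P\in M_n(\C)\subseteq M_n(B)$, which is positive; these two commute because $P$ has complex-scalar entries, so their product is positive. Summing over $k$ and letting $\eps\to 0$, continuity of $\Phi_n$ and closedness of the positive cone give $\Phi_n(a)\ge 0$. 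As $n$ and $a\ge 0$ were arbitrary, $\Phi$ is completely positive; unitality is the hypothesis, so $\Phi$ is ucp. I do not anticipate any genuine difficulty: the one place needing care is the partition-of-unity approximation together with the legitimacy of the limiting step, which is precisely why the boundedness and closedness observations are extracted first.
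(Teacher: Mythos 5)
Your proof is correct and follows essentially the route the paper intends: Gelfand duality, the pointwise characterization of positivity in $\mathscr{C}(X,M_n(\C))$, and the partition-of-unity approximation of positive matrix functions by sums $\sum_k f_k\otimes P_k$ (which is also the argument in the cited reference of Paulsen). You have simply written out in full the details that the paper's two-line sketch leaves to the reader; the boundedness and closed-cone observations you isolate are exactly what is needed to justify the limiting step.
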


\begin{proof}
Write $A\cong\mathscr{C}(X)$ for a compact Hausdorff space $X$, so that $M_n(A)\cong\mathscr{C}(X,M_n(\C))$, and note that an element of $\mathscr{C}(X,M_n(\C))$ is positive if and only if it is positive upon evaluation on all $x\in X$. For an alternative proof, see~\cite[3.11]{Paul}.
\end{proof}

Now let $\widehat{\H}$ be a Hilbert space and $\H\subseteq\widehat{\H}$ a closed subspace, $P_{\mathcal{H}}\in\B(\widehat{\H})$ the orthogonal projection onto this subspace, and $\pi:A\ra\B(\widehat{\H})$ some $*$-homomorphism. Then it is straightforward to show that the \emph{compression} of $\pi$ to $\mathcal{H}$, defined by the assignment
\beqn
\label{compression}
a\mapsto P_{\mathcal{H}}\pi(a)P_{\mathcal{H}},
\eeqn
is a ucp map from $A$ to $\B(\mathcal{H})$. The notorious Stinespring dilation theorem~\cite{Stine} states that every ucp map $A\to\B(\H)$ is of this form for some embedding $\H\subseteq\widehat{\H}$:

\begin{thm}[Stinespring dilation theorem]
\label{spthm}
Let $\Phi:A\ra\B(\H)$ be a ucp map. Then there is a Hilbert space $\widehat{\H}$ together with an isometric embedding $\H\subseteq\widehat{\H}$ and a $*$-homomorphism $\pi:A\ra\B(\widehat{\H})$ such that
\beqn
\label{sprep}
\Phi(a)=P_{\H}\pi(a)P_{\H}\quad\forall a\in A.
\eeqn
\end{thm}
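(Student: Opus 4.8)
The plan is to carry out the classical GNS-type construction underlying Stinespring's theorem. First I would form the algebraic tensor product $A\odot\H$ of $A$ with $\H$ as a complex vector space and equip it with the sesquilinear form determined on elementary tensors by
\[
\langle a\otimes\xi,\, b\otimes\eta\rangle \;:=\; \langle\xi,\, \Phi(a^*b)\,\eta\rangle_{\H},
\]
extended sesquilinearly. The crucial point---which I expect to be the main obstacle---is that this form is positive semidefinite, and this is exactly where complete positivity of $\Phi$ for \emph{arbitrary} $n$ (not merely $n=1$) enters: given $\sum_{i=1}^n a_i\otimes\xi_i$, the matrix $[a_i^*a_j]_{ij}\in M_n(A)$ is positive, since it equals $R^*R$ for the matrix $R\in M_n(A)$ whose first row is $(a_1,\dots,a_n)$ and whose remaining entries vanish; hence $[\Phi(a_i^*a_j)]_{ij}\in M_n(\B(\H))$ is positive by definition~\ref{ucpdefn}\ref{cp}, and pairing it against the vector $(\xi_1,\dots,\xi_n)\in\H^{\oplus n}$ yields precisely $\sum_{i,j}\langle\xi_i,\Phi(a_i^*a_j)\xi_j\rangle\ge 0$.

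Next I would pass to the separated completion: let $N=\{v\in A\odot\H : \langle v,v\rangle=0\}$, which by the Cauchy--Schwarz inequality for semidefinite forms is a linear subspace, and let $\widehat{\H}$ be the Hilbert space completion of $(A\odot\H)/N$. For $a\in A$ define $\pi(a)$ on $A\odot\H$ by $\pi(a)(b\otimes\xi)=(ab)\otimes\xi$; one checks directly that $\pi$ is an algebra homomorphism compatible with the involutions. To see that $\pi(a)$ descends to a bounded operator on $\widehat{\H}$, I would invoke the other delicate estimate: from $\|a\|^2\mathbbm{1}-a^*a\ge 0$ one gets $[\,b_i^*(\|a\|^2\mathbbm{1}-a^*a)b_j\,]_{ij}\ge 0$ in $M_n(A)$, and applying $\Phi_n$ and pairing against $(\xi_i)$ gives $\langle\pi(a)v,\pi(a)v\rangle\le\|a\|^2\langle v,v\rangle$ for $v=\sum_i b_i\otimes\xi_i$. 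In particular $\pi(a)$ preserves $N$ and extends to $\widehat{\H}$ with $\|\pi(a)\|\le\|a\|$; unitality of $\Phi$ makes $\pi$ unital, hence a genuine $*$-homomorphism.

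Finally I would exhibit the embedding and verify the compression formula. The map $\iota:\H\to\widehat{\H}$, $\xi\mapsto[\mathbbm{1}_A\otimes\xi]$, is isometric because $\langle\mathbbm{1}\otimes\xi,\mathbbm{1}\otimes\eta\rangle=\langle\xi,\Phi(\mathbbm{1})\eta\rangle=\langle\xi,\eta\rangle$ by unitality; identifying $\H$ with its image and letting $P_{\H}$ be the orthogonal projection onto it, one has for all $\xi,\eta\in\H$ and $a\in A$
\[
\langle\iota\eta,\,\pi(a)\iota\xi\rangle=\langle\mathbbm{1}\otimes\eta,\,a\otimes\xi\rangle=\langle\eta,\,\Phi(a)\xi\rangle,
\]
which, under the identification, is exactly the statement $P_{\H}\pi(a)P_{\H}=\Phi(a)$, i.e.~\eq{sprep}. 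The only genuinely nontrivial verifications are the positivity of the sesquilinear form and the boundedness estimate for $\pi(a)$; everything else is routine bookkeeping. If desired one can additionally replace $\widehat{\H}$ by the closed span of $\pi(A)\iota\H$ to obtain a minimal Stinespring dilation, though this refinement is not needed for the statement.
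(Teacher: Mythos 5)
Your proposal is correct and follows essentially the same route as the paper's own proof: the same sesquilinear form on $A\otimes_{\C}\H$, positivity via complete positivity applied to the matrix $[a_i^*a_j]=R^*R$, quotient by the null space and completion, left multiplication for $\pi$ with the $\|a\|^2\mathbbm{1}-a^*a\geq 0$ estimate for boundedness and well-definedness, and the embedding $\xi\mapsto\mathbbm{1}\otimes\xi$ for the compression identity. No gaps; the remark about minimality is a harmless extra.
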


We record the well-known proof here because a similar construction will be used later in the proof of theorem~\ref{spthm}.

\begin{proof}
The Hilbert space $\widehat{\H}$ will be constructed from the tensor product $A\otimes_{\C}\H$, a tensor product of complex vector spaces, in several steps. On $A\otimes_{\C}\H$, we define an inner product as
\beqn
\label{spsp}
\langle a\otimes\xi,a'\otimes\xi'\rangle\equiv \langle \xi,\Phi(a^*a')\xi'\rangle_\H
\eeqn
and extending by sesquilinearity. That $\Phi$ is ucp is now crucial for checking that this inner product is positive semi-definite: the scalar product of any tensor $\sum_{i=1}^n a_i\otimes\xi_i$ with itself evaluates to
\beqn
\label{stinespring}
\left\langle\sum_{i=1}^n a_i\otimes\xi_i,\sum_{i=1}^n a_i\otimes\xi_i\right\rangle=\sum_{i,j=1}^n\langle\xi_i,\Phi(a_i^*a_j)\xi_j\rangle_\H .
\eeqn
The expression on the right-hand side can be seen as being of the form $\langle\xi,\Phi_n(a^*a)\xi\rangle_{\oplus^n\H}$, with
\beq
\xi\equiv\left(\xi_1,\ldots,\xi_n\right)\in\oplus^n\H
\eeq
and $a\in M_n(A)$ being the matrix
\beq
a\equiv\left(\begin{array}{ccc}a_1 & \cdots & a_n\\ 0 & \cdots & 0\\ \vdots && \vdots\\ 0 & \cdots & 0 \\\end{array}\right).
\eeq
Therefore complete positivity of $\Phi$ yields $\Phi_n(a^*a)\geq 0$, which now implies that the scalar product\eq{spsp} is positive semi-definite. 

A standard calculation using the Cauchy-Schwarz inequality tells us that the null set
\beq
\mathcal{N}\equiv\left\{x\in A\otimes_{\C}\H\:|\langle x,x\rangle=0\:\right\}
\eeq
is a linear subspace of $A\otimes_{\C}\H$. Hence the quotient $(A\otimes_{\C}\H)/\mathcal{N}$ carries an induced inner product, which is positive definite by construction. The completion of this quotient, with respect to the norm induced by the inner product, is therefore a Hilbert space. This will be the desired Hilbert space $\widehat{\H}$. The assignment
\beq
\xi\mapsto \mathbbm{1}\otimes\xi+\mathcal{N}
\eeq
embeds $\H$ isometrically as a subspace of $\widehat{\H}$. By completeness of $\H$, this subspace is closed.

Furthermore, every element $a'\in A$ naturally acts on $\widehat{\H}$ via
\beqn
\label{stspact}
a'(a\otimes\xi+\mathcal{N})\equiv a'a\otimes\xi+\mathcal{N}
\eeqn
and extending by linearity. This is well-defined regarding the quotiening with respect to $\mathcal{N}$ since whenever\eq{stinespring} vanishes, then so does the expression
\beq
\left\langle\sum_{i=1}^n a'a_i\otimes\xi_i,\sum_{i=1}^n a'a_i\otimes\xi_i\right\rangle=\sum_{i,j=1}^n\langle\xi_i,\Phi(a_i^*a'^*a'a_j)\xi_j\rangle_\H\leq||a'^*a'||\langle\xi_i,\Phi(a_i^*a_j)\xi_j\rangle_\H=0
\eeq
where the estimate again uses the assumption of $\Phi$ being ucp. The same estimate shows that the action of $a'$ is bounded. Given this, it is immediate that\eq{stspact} defines a $*$-homomorphism $\pi:A\ra\B(\widehat{\H})$. 

Finally, we verify that these data satisfy the desired equation~(\ref{sprep}). For any $\xi\in\mathcal{H}$, which is of the form $\mathbbm{1}\otimes\xi+\mathcal{N}$ when considered as an element of $\widehat{\H}$,
\beq
\langle\mathbbm{1}\otimes \xi,P_{\H}\pi(a)P_{\H}(\mathbbm{1}\otimes\xi)\rangle_{\widehat{\H}}=\langle P_{\H}(\mathbbm{1}\otimes\xi),\pi(a)(\mathbbm{1}\otimes\xi)\rangle_{\widehat{\H}}=\langle\mathbbm{1}\otimes\xi,a\otimes\xi\rangle_{\widehat{\H}}\stackrel{(\ref{spsp})}{=}\langle\xi,\Phi(a)\xi\rangle_{\H}
\eeq
which therefore shows that\eq{sprep} does indeed hold.
\end{proof}

This result may be viewed as a classification theorem of the ucp maps $A\ra\B(\H)$. It is a very powerful principle which lets us prove other statements about ucp maps as simple corollaries:

\begin{prop}[{\cite[3.3]{Paul}}]
\label{CS}
If $\Phi:A\ra B$ is ucp, then
\beqn
\label{ncCS}
\Phi(a)^*\Phi(a)\leq\Phi(a^*a).
\eeqn
\end{prop}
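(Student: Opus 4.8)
The plan is to deduce this directly from the Stinespring dilation theorem (theorem~\ref{spthm}), by the standard argument. First I would reduce to the case $B=\B(\H)$: fix a faithful unital representation $\iota\colon B\hookrightarrow\B(\H)$ (the universal representation, say, or the GNS representation of a faithful state). Since $\iota$ is an isometric $*$-homomorphism, an element of $B$ is positive precisely when its image under $\iota$ is positive in $\B(\H)$; and the composite $\iota\circ\Phi\colon A\to\B(\H)$ is again ucp, because applying the $*$-homomorphism $\iota$ entrywise carries positive matrices over $B$ to positive matrices over $\B(\H)$. Hence it suffices to prove the inequality for ucp maps $\Phi\colon A\to\B(\H)$.

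Next I would invoke theorem~\ref{spthm} to write $\Phi(a)=P_\H\,\pi(a)\,P_\H$ for some Hilbert space $\widehat{\H}\supseteq\H$, the orthogonal projection $P_\H\in\B(\widehat{\H})$ onto $\H$, and a unital $*$-homomorphism $\pi\colon A\to\B(\widehat{\H})$. For any $\xi\in\H$, viewed as a vector in $\widehat{\H}$ with $P_\H\xi=\xi$, I would then compute
\beq
\langle\xi,\Phi(a)^*\Phi(a)\,\xi\rangle=\|\Phi(a)\xi\|^2=\|P_\H\,\pi(a)\,\xi\|^2\le\|\pi(a)\,\xi\|^2=\langle\xi,\pi(a^*a)\,\xi\rangle=\langle\xi,\Phi(a^*a)\,\xi\rangle,
\eeq
using that $\|P_\H\|\le 1$, that $\pi$ is a $*$-homomorphism so $\pi(a)^*\pi(a)=\pi(a^*a)$, and that $\langle\xi,\pi(a^*a)\,\xi\rangle=\langle\xi,P_\H\pi(a^*a)P_\H\,\xi\rangle=\langle\xi,\Phi(a^*a)\,\xi\rangle$ because $P_\H\xi=\xi$. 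Since $\xi\in\H$ was arbitrary, this gives $\Phi(a)^*\Phi(a)\le\Phi(a^*a)$ in $\B(\H)$, hence in $B$.

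There is essentially no serious obstacle here; the only points meriting a moment's care are the reduction to $B=\B(\H)$ (equivalently, that positivity in $B$ may be tested after a faithful representation, and that composition with a $*$-homomorphism preserves complete positivity), and the observation that the compression step strictly loses norm, i.e.~$\|P_\H\eta\|\le\|\eta\|$ --- this is exactly what converts the equality $\pi(a)^*\pi(a)=\pi(a^*a)$ for the Stinespring dilation into the asserted inequality for $\Phi$. Alternatively one may phrase the middle computation operator-theoretically as $\Phi(a)^*\Phi(a)=P_\H\pi(a)^*P_\H\pi(a)P_\H\le P_\H\pi(a)^*\pi(a)P_\H=\Phi(a^*a)$, using $0\le P_\H\le\mathbbm{1}$ together with the fact that conjugation $X\mapsto\pi(a)^*X\pi(a)$ preserves the order on $\B(\widehat\H)$; I would keep whichever of the two formulations reads more cleanly in context.
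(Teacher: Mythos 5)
Your argument is correct and is essentially the paper's own: the paper likewise embeds $B$ into some $\B(\H)$, applies the Stinespring dilation theorem~\ref{spthm}, and (in the proof of proposition~\ref{multdomain}, to which it defers the details) establishes the inequality via the identity $\Phi(a^*a)-\Phi(a)^*\Phi(a)=\left[(\mathbbm{1}-P_\H)\pi(a)P_\H\right]^*\left[(\mathbbm{1}-P_\H)\pi(a)P_\H\right]\geq 0$, which is just a rearrangement of your operator-theoretic formulation. Both of your phrasings (the vector-state estimate and the compression inequality) are valid and equivalent to this.
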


\begin{proof}
Upon embedding $B$ into some $\B(\H)$, this is now a direct application of~\ref{spthm}. (For details, see the proof of the following proposition.) For proofs not relying on the Stinespring dilation theorem~\ref{spthm}, see~\cite[2.8]{Choi} for the original very general proof, or~\cite[3.3]{Paul} for another elegant and general argument.
\end{proof}

The inequality\eq{ncCS} may viewed as a noncommutative generalization of the fact that a random variable has positive variance, i.e.~that the square of its expectation value is bounded by the expectation value of its square.

The following proposition is a special case of Choi's theory of multiplicative domains~\cite[3.]{Choi}.

\begin{prop}
\label{multdomain}
Let $\Phi:A\ra B$ be ucp. Then the set of all $a\in A$ which saturate\eq{ncCS}, i.e.~which satisfy both the equations
\beqn
\label{mdass}
\Phi(a^*a)=\Phi(a)^*\Phi(a)\qquad\textrm{and}\qquad \Phi(aa^*)=\Phi(a)\Phi(a)^*
\eeqn
is called the \emph{multiplicative domain} of $A$. If $a\in A$ lies in the multiplicative domain and $x\in A$ is arbitrary, then
\beqn
\label{mdres}
\Phi(ax)=\Phi(a)\Phi(x)\qquad\textrm{and}\qquad\Phi(xa)=\Phi(x)\Phi(a).
\eeqn
Moreover, the multiplicative domain is a $C^*$-subalgebra of $A$, and the restriction of $\Phi$ to the multiplicative domain is a $*$-homomorphism.
\end{prop}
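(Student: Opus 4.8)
The plan is to deduce everything from the Stinespring dilation theorem~\ref{spthm}. First I would fix a faithful representation $B\subseteq\B(\H)$, so that $\Phi$ becomes a ucp map $A\to\B(\H)$, and then apply theorem~\ref{spthm} to obtain an enveloping Hilbert space $\widehat\H\supseteq\H$, with orthogonal projection $P\equiv P_\H\in\B(\widehat\H)$, together with a $*$-homomorphism $\pi:A\to\B(\widehat\H)$ satisfying $\Phi(a)=P\pi(a)P$ for all $a\in A$. Set $Q\equiv\mathbbm{1}-P$.

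The crux is that each of the two defining conditions\eq{mdass} is equivalent to an invariance property of $\pi(a)$. Since $Q$ is a projection, a direct computation gives
\beq
\Phi(a^*a)-\Phi(a)^*\Phi(a)=P\pi(a)^*\pi(a)P-P\pi(a)^*P\pi(a)P=\big(Q\pi(a)P\big)^*\big(Q\pi(a)P\big)\geq 0,
\eeq
which incidentally re-proves proposition~\ref{CS}, and this element vanishes precisely when $Q\pi(a)P=0$, i.e.\ when $\pi(a)P=P\pi(a)P$. Symmetrically, $\Phi(aa^*)=\Phi(a)\Phi(a)^*$ holds if and only if $Q\pi(a)^*P=0$, i.e.\ (by taking adjoints) $P\pi(a)=P\pi(a)P$. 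Combining the two, $a$ lies in the multiplicative domain if and only if $\pi(a)P=P\pi(a)P=P\pi(a)$, that is, if and only if $\pi(a)$ commutes with $P$. Equivalently, the multiplicative domain is exactly $\pi^{-1}\!\big(\{P\}'\big)$, the preimage under $\pi$ of the commutant of $P$ in $\B(\widehat\H)$.

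All three assertions then follow quickly. The commutant $\{P\}'$ is a unital $C^*$-subalgebra of $\B(\widehat\H)$, and $\pi$ is a unital (hence norm-continuous) $*$-homomorphism, so the preimage $\pi^{-1}(\{P\}')$ is a unital $C^*$-subalgebra of $A$; this gives the subalgebra claim, including norm-closedness. For\eq{mdres}, let $a$ be in the multiplicative domain and $x\in A$ arbitrary; then
\begin{align*}
\Phi(ax)&=P\pi(a)\pi(x)P=P\pi(a)P\pi(x)P=\Phi(a)\Phi(x),\\
\Phi(xa)&=P\pi(x)\pi(a)P=P\pi(x)P\pi(a)P=\Phi(x)\Phi(a),
\end{align*}
using respectively $P\pi(a)=P\pi(a)P$ and $\pi(a)P=P\pi(a)P$ (together with $\pi$ being multiplicative and $PP=P$). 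Finally, applying\eq{mdres} with $x$ also in the multiplicative domain shows that $\Phi$ is multiplicative there; combined with unitality of $\Phi$ and the fact that completely positive maps preserve adjoints (proposition~\ref{ucpcon}), the restriction of $\Phi$ to the multiplicative domain is a $*$-homomorphism.

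I do not anticipate a real obstacle: the whole argument hinges on the displayed identity turning saturation of the Kadison--Schwarz inequality into the condition $Q\pi(a)P=0$. The only point requiring a little care is that \emph{both} equations in\eq{mdass} are genuinely needed, since together they upgrade ``$\pi(a)$ leaves $\H$ invariant'' to the stronger ``$\pi(a)$ commutes with $P$'', which is precisely what makes the two-sided statement\eq{mdres} and the subalgebra property go through.
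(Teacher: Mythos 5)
Your proposal is correct and follows essentially the same route as the paper: Stinespring dilation, the identity $\Phi(a^*a)-\Phi(a)^*\Phi(a)=\left[(\mathbbm{1}-P)\pi(a)P\right]^*\left[(\mathbbm{1}-P)\pi(a)P\right]$, the observation that the two saturation conditions together force $\pi(a)$ to commute with $P$, and then the computation $P\pi(a)\pi(x)P=P\pi(a)P\pi(x)P$ for\eq{mdres}. Your packaging of the multiplicative domain as $\pi^{-1}\bigl(\{P\}'\bigr)$ is a slightly cleaner way to state what the paper phrases as ``$\pi(a)$ is block-diagonal,'' but the argument is the same.
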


\begin{proof}
Upon choosing an embedding $B\subseteq \B(\H)$, we can assume without loss of generality that the codomain of $\Phi$ is $\B(\H)$. After enlarging $\H$, theorem~\ref{spthm} guarantees that $\Phi$ has the form
\beq
\Phi(a)=P\pi(a)P
\eeq
for some orthogonal projection $P\in\B(\H)$ and some representation $\pi:A\ra\B(\H)$. Hence,
\beq
\Phi(a^*a)=P\pi(a)^*\pi(a)P,\qquad \Phi(a)^*\Phi(a)=P\pi(a)^*P\pi(a)P.
\eeq
Comparing these two quantities gives
\beq
\Phi(a^*a)-\Phi(a)^*\Phi(a)=P\pi(a)^*(\mathbbm{1}-P)\pi(a)P=\left[(\mathbbm{1}-P)\pi(a)P\right]^*\left[(\mathbbm{1}-P)\pi(a)P\right],
\eeq
which in particular proves\eq{ncCS}. This expression vanishes if and only if
\beqn
\label{pap}
(\mathbbm{1}-P)\pi(a)P=0.
\eeqn
By the assumption\eq{mdass}, the same holds if we replace $a$ by $a^*$, and hence we obtain~(\ref{pap}) also for $a^*$. Taking the adjoint of this gives the equation
\beqn
\label{pap2}
P\pi(a)(\mathbbm{1}-P)=0.
\eeqn
Now we know that $\pi(a)$ commutes with $P$, since
\beq
P\pi(a)\stackrel{(\ref{pap2})}{=}P\pi(a)P\stackrel{(\ref{pap})}{=}\pi(a)P.
\eeq
So in terms of the direct sum decomposition $\mathcal{H}=P\H\oplus(\mathbbm{1}-P)\H$ into closed orthogonal subspaces, this means that $a$ lies in the multiplicative domain of $\Phi$ if and only if the operator $\pi(a)$ is block-diagonal. This implies in particular that the multiplicative domain is a $C^*$-subalgebra of $A$ and that the restriction of $\Phi$ to this $C^*$-subalgebra is a $*$-homomorphism.

Checking that this implies\eq{mdres} is now simple:
\beq
\Phi(ax)=P\pi(ax)P=PP\pi(a)\pi(x)P=P\pi(a)P\pi(x)P=\Phi(a)\Phi(x).
\eeq
This ends the proof.
\end{proof}

We record two more statements for further use and refer to the literature for their proofs.

\begin{thm}[{Arveson extension theorem~\cite[1.2.3]{Arv}}]
\label{arvext}
Let $S\subseteq A$ be a self-adjoint subspace containing $\mathbbm{1}_A$. Then any ucp map $\Phi:S\ra\B(\H)$ can be extended to a ucp map $\widehat{\Phi}:A\ra\B(\H)$,
\beq
\xymatrix{S\ar[rr]^\Phi\ar@{^{(}->}[dr] && \B(\H) \\
& A\ar@{-->}[ur]_{\exists\:\widehat{\Phi}}}
\eeq
\end{thm}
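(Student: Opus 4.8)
The plan is to reduce the statement, by a compactness argument, to the case where the target is a full matrix algebra $M_n(\C)$, and there to translate the extension of ucp \emph{maps} into the extension of positive linear \emph{functionals}, which is classical.

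For the reduction: given a finite-dimensional subspace $\mathcal{K}\subseteq\H$ with orthogonal projection $P_{\mathcal{K}}$, the compression $\Phi_{\mathcal{K}}(s)\equiv P_{\mathcal{K}}\Phi(s)P_{\mathcal{K}}$ is a ucp map $S\to\B(\mathcal{K})$, by the same reasoning that shows the compression\eq{compression} to be ucp (applied entrywise on matrix amplifications). Assuming the theorem for the finite-dimensional target $\B(\mathcal{K})\cong M_{\dim\mathcal{K}}(\C)$, extend each $\Phi_{\mathcal{K}}$ to a ucp map $A\to\B(\mathcal{K})$ and view it as a completely positive contraction $\widehat{\Phi}_{\mathcal{K}}:A\to\B(\H)$ (now no longer unital). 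The net $(\widehat{\Phi}_{\mathcal{K}})$, indexed by finite-dimensional subspaces ordered by inclusion, lies in the set of completely positive contractions $A\to\B(\H)$, which is compact in the point-weak-$*$ topology since it is a closed subset of $\prod_{a\in A}\{T\in\B(\H):\|T\|\leq\|a\|\}$ with the product of weak-$*$ topologies (here proposition~\ref{ucpcon} supplies the norm bound). Passing to a convergent subnet yields a completely positive contraction $\widehat{\Phi}:A\to\B(\H)$; it is unital since $\widehat{\Phi}_{\mathcal{K}}(\mathbbm{1}_A)=P_{\mathcal{K}}\to\mathbbm{1}_\H$, and it restricts to $\Phi$ on $S$ since $\widehat{\Phi}_{\mathcal{K}}(s)=P_{\mathcal{K}}\Phi(s)P_{\mathcal{K}}\to\Phi(s)$. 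So it remains to treat $\B(\H)=M_n(\C)$.

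In the matrix case, I would invoke the standard linear bijection between linear maps $\phi:S\to M_n(\C)$ and linear functionals $f_\phi$ on the operator system $M_n(S)\subseteq M_n(A)$, given up to normalization by $f_\phi([s_{ij}])\equiv\sum_{i,j}\langle e_i,\phi(s_{ij})e_j\rangle$ with $e_1,\ldots,e_n$ the standard basis of $\C^n$; under this correspondence $\phi$ is completely positive exactly when $f_\phi$ is positive (see e.g.~\cite{Paul}). Since the correspondence intertwines restriction to $S$ with restriction to $M_n(S)$, and since $\mathbbm{1}_A\in S$ forces $M_n(\C)\otimes\mathbbm{1}_A\subseteq M_n(S)$ so that the values of $f_\Phi$ already determine $\Phi(\mathbbm{1}_A)$, extending the ucp map $\Phi$ amounts to extending the positive functional $f_\Phi$ from $M_n(S)$ to $M_n(A)$, with unitality coming for free. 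But a positive functional on a unital operator system is bounded with norm equal to its value at the unit, so Hahn--Banach supplies a norm-preserving extension $\tilde f$ to $M_n(A)$ with $\tilde f(\mathbbm{1})=f_\Phi(\mathbbm{1})=\|f_\Phi\|=\|\tilde f\|$; by the standard fact that a bounded functional $g$ on a unital $C^*$-algebra with $g(\mathbbm{1})=\|g\|$ is positive, $\tilde f$ is positive, and translating it back through the bijection gives the desired ucp map $\widehat{\Phi}:A\to M_n(\C)$ extending $\Phi$.

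The step I expect to be most delicate is the passage from the matrix target back to general $\H$: one must verify that point-weak-$*$ limits of completely positive contractions are again completely positive --- which is where proposition~\ref{ucpcon} is convenient, since it recasts complete positivity purely in terms of self-adjointness and the norm bounds $\|\Phi_n(a)\|\leq\|a\|$, conditions manifestly preserved under limits --- and that the limit genuinely restricts to $\Phi$ on $S$ and is unital. A net-free alternative would be to factor a weak-$*$-continuous embedding of $\B(\H)$ through an inverse limit of matrix algebras and push the matrix-level extension through, but the net argument seems more self-contained. The remaining ingredients --- the map/functional dictionary and the M.~Riesz-type extension of positive functionals --- are classical and essentially formal.
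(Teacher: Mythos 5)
The paper does not prove this statement: it is quoted from the literature (``we record two more statements for further use and refer to the literature for their proofs''), so there is no in-paper argument to compare against. On its own merits, your proposal is correct and is essentially the standard proof (the one in Paulsen's book and in Arveson's original paper): first the Krein-type extension of the associated positive functional on $M_n(S)$ for a matrix target, using the dictionary between completely positive maps into $M_n(\C)$ and positive functionals on $M_n(S)$ together with the fact that a positive functional on an operator system attains its norm at the unit, and then the passage to a general $\B(\H)$ by compressing to finite-dimensional subspaces and taking a point-weak-$*$ limit of the extensions over the net of such subspaces. Both nontrivial ingredients you invoke --- that positivity of $f_\phi$ on $M_n(S)$ is equivalent to complete positivity of $\phi$ (which for targets $M_n(\C)$ reduces to $n$-positivity), and that the completely positive contractions $A\to\B(\H)$ form a point-weak-$*$ compact set --- are correctly identified and correctly used. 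One cosmetic point: proposition~\ref{ucpcon} as stated applies to \emph{unital} maps, whereas the maps $\widehat{\Phi}_{\mathcal{K}}$ regarded as maps into $\B(\H)$ are not unital (they send $\mathbbm{1}_A$ to $P_{\mathcal{K}}$); but this is harmless, since complete positivity is preserved under point-weak-$*$ limits directly, the condition $\sum_{i,j}\langle\xi_i,\phi(a_i^*a_j)\xi_j\rangle\geq 0$ being closed under such limits for each finite choice of $a_i$ and $\xi_i$. With that trivial adjustment the argument is complete.
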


\begin{prop}[{\cite{OS}}]
\label{extfreeprod}
For ucp maps $\Phi:A\ra \B(\H)$ and $\chi:B\ra \B(\H)$, there is a ucp map $A\ast_1 B\ra \B(\H)$ which extends $\Phi$ and $\chi$:
\beq
\xymatrix{A\ar@{_{(}->}[d]\ar[rrd]^\Phi \\
A\ast_1 B\ar@{-->}[rr]^(.4){\exists} && \B(\H)\\
B\ar@{^{(}->}[u]\ar[rru]_\chi}
\eeq
\end{prop}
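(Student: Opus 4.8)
The plan is to reduce the statement to the construction of a \emph{single dilation of $\Phi$ and $\chi$ at once}. Concretely, suppose we can produce a Hilbert space $\widehat{\H}$ with an isometric inclusion $\H\subseteq\widehat{\H}$, with orthogonal projection $P_{\H}$, together with a unital $*$-homomorphism $\sigma:A\ast_1 B\ra\B(\widehat{\H})$ such that $P_{\H}\,\sigma(a)\,P_{\H}=\Phi(a)$ for all $a\in A$ and $P_{\H}\,\sigma(b)\,P_{\H}=\chi(b)$ for all $b\in B$. Then $\Psi\equiv P_{\H}\,\sigma(\,\cdot\,)\,P_{\H}$ solves the problem: it is ucp because it is a compression\eq{compression} of a $*$-homomorphism, and its restrictions to $A$ and to $B$ are $\Phi$ and $\chi$ by hypothesis. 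Moreover, by the universal (coproduct) property of the unital free product, giving such a $\sigma$ is the same as giving unital $*$-homomorphisms $\sigma_A:A\ra\B(\widehat{\H})$ and $\sigma_B:B\ra\B(\widehat{\H})$ \emph{into the same} algebra $\B(\widehat{\H})$, whose compressions to $\H$ are $\Phi$ and $\chi$. So everything comes down to building one Hilbert space $\widehat{\H}\supseteq\H$ on which both $A$ and $B$ act by unital $*$-representations with these prescribed compressions.

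To build it, first apply the Stinespring dilation theorem~\ref{spthm} separately to $\Phi$ and to $\chi$: this gives Hilbert spaces $\H\subseteq\widehat{\H}_A$ and $\H\subseteq\widehat{\H}_B$ and unital $*$-homomorphisms $\pi_A:A\ra\B(\widehat{\H}_A)$, $\pi_B:B\ra\B(\widehat{\H}_B)$ with $\Phi=P_{\H}\pi_A(\,\cdot\,)P_{\H}$ and $\chi=P_{\H}\pi_B(\,\cdot\,)P_{\H}$. Writing $\mathcal{K}_A\equiv\widehat{\H}_A\ominus\H$, $\mathcal{K}_B\equiv\widehat{\H}_B\ominus\H$, one then forms the \emph{free product of the two dilations amalgamated over $\H$}: the Hilbert space built from $\H$ together with the alternating ``words'' in $\mathcal{K}_A$ and $\mathcal{K}_B$, on which each $\pi_A(a)$ (written as a $2\times 2$ operator matrix along $\widehat{\H}_A=\H\oplus\mathcal{K}_A$) acts by the standard prescription that creates or annihilates a $\mathcal{K}_A$-letter at the front of a word, and symmetrically for $\pi_B$. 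This is exactly the construction underlying the free product of unital completely positive maps; it produces unital $*$-representations $\sigma_A$, $\sigma_B$ of $A$, $B$ on one common space $\widehat{\H}\supseteq\H$ with $P_{\H}\sigma_A(\,\cdot\,)P_{\H}=\Phi$ and $P_{\H}\sigma_B(\,\cdot\,)P_{\H}=\chi$. Feeding $(\sigma_A,\sigma_B)$ into the universal property of $\ast_1$ yields $\sigma$, and $\Psi\equiv P_{\H}\sigma(\,\cdot\,)P_{\H}$ is the desired ucp extension.

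The hard part is precisely this amalgamated free-product construction and the verification that it does what is claimed: one has to specify how $A$ acts on words that begin with a $\mathcal{K}_B$-letter (and vice versa) so that the action is bounded, unital, and multiplicative, i.e.\ so that $\sigma_A$ and $\sigma_B$ really are $*$-homomorphisms, and one has to check that the copy of $\H$ inside $\widehat{\H}$ is literally common to both representations and is returned to $\Phi$, resp.\ $\chi$, under compression. This is fiddly bookkeeping of the Voiculescu/Boca type rather than a conceptual difficulty. A lighter-looking alternative packaging: regard $S\equiv A+B$ as an operator subsystem of $A\ast_1 B$ (note $A\cap B=\C\mathbbm{1}$ inside $A\ast_1 B$, so $a+b\mapsto\Phi(a)+\chi(b)$ is a well-defined unital map $\Psi_0:S\ra\B(\H)$); if one can show $\Psi_0$ is completely positive, the Arveson extension theorem~\ref{arvext} extends it to a ucp $\Psi:A\ast_1 B\ra\B(\H)$ restricting to $\Phi$ and $\chi$. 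But verifying complete positivity of $\Psi_0$ again forces one to exhibit a joint dilation, so the two routes converge on the same technical core.
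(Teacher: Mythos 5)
The paper does not actually prove this proposition: it cites~\cite{OS} and remarks that Boca's stronger theorem would also do, so there is no in-text argument to compare against. Your reduction is the correct one and is exactly the route of the cited references: by the coproduct property of $\ast_1$, a unital $*$-homomorphism $\sigma:A\ast_1 B\to\B(\widehat{\H})$ is the same as a pair of unital $*$-representations of $A$ and $B$ on a common $\widehat{\H}$, and compressing any such $\sigma$ to $\H$ gives a ucp map (as in\eq{compression}) whose restrictions to $A$ and $B$ are the prescribed compressions. So the statement does follow once a \emph{joint} dilation of $\Phi$ and $\chi$ exists, and your observation that the Arveson-extension route through the operator system $A+B\subseteq A\ast_1 B$ collapses onto the same technical core is also accurate.

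The gap is that this technical core is precisely where all the content of the proposition lives, and you defer it entirely. The two separate Stinespring dilations are free, but the ``free product of the two dilations amalgamated over $\H$'' is not routine bookkeeping: since $\H$ is a genuine subspace rather than a single distinguished vector, one cannot simply take the usual free product of pointed Hilbert spaces, and one must prescribe how $\pi_A(a)$ acts on words whose first letter lies in $\mathcal{K}_B$ in such a way that the action is simultaneously multiplicative, $*$-preserving, bounded, and still compresses to $\Phi(a)$ on $\H$. Arranging all of these at once is exactly the nontrivial step carried out in~\cite{Boca} (and, in the dilation-free formulation, in~\cite{OS}); a naive choice typically breaks either multiplicativity or the compression identity. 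As it stands your text asserts that ``this is exactly the construction underlying the free product of unital completely positive maps'' without exhibiting it, so the proposal is a correct strategy and an accurate map of where the difficulty sits, but not yet a proof.
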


Alternatively, we also could have used a stronger result of Boca~\cite{Boca}, who has shown in particular that ucp maps $A\to C$ and $B\to C$ with some $C^*$-algebra $C$ as codomain extend to a ucp map $A\ast_1 B\to C$.

\section{Tensor products of $C^*$-algebras}
\label{tensor}

\paragraph{The algebraic tensor product.} Let $A\otimes_\C B$ denote the vector space tensor product of the two unital $C^*$-algebras $A$ and $B$. On elementary tensors, one can define a multiplication by
\beq
(a\otimes b)(a'\otimes b')\equiv aa'\otimes bb',
\eeq
and this extends to a multiplication on all of $A\otimes_\C B$ by bilinearity. This multiplication is associative and has a unit element given by $\mathbbm{1}_A\otimes \mathbbm{1}_B$, hence it turns $A\otimes_{\C}B$ into an algebra over $\C$. There is an involution on this algebra defined as the antilinear extension of
\beqn
\label{multtens}
(a\otimes b)^*\equiv a^*\otimes b^*.
\eeqn
Hence this construction gives $A\otimes_\C B$ the structure of a $*$-algebra. $A\otimes_\C B$ together with this $*$-algebra structure is also known as the \emph{algebraic tensor product} of $A$ and $B$ and denoted by $A\otimes_{\mathrm{alg}}B$. We will freely confuse $A$ and $B$ with their isomorphic images under the $*$-homomorphisms
\beq
A\lra A\otimes_{\mathrm{alg}}B,\quad a\mapsto a\otimes\mathbbm{1};\qquad B\lra A\otimes_{\mathrm{alg}}B,\quad b\mapsto \mathbbm{1}\otimes b.
\eeq
With this, the $*$-algebra $A\otimes_{\mathrm{alg}}B$ has the following universal property:

\begin{prop}
\label{algtens}
Any $*$-homomorphism 
\beq
\pi:A\otimes_{\mathrm{alg}}B\lra C,
\eeq
for some $*$-algebra $C$, restricts to $*$-homomorphisms $\pi_A:A\ra C$ and $\pi_B:B\ra C$ with commuting ranges. Conversely, given any such $\pi_A$ and $\pi_B$ with commuting ranges, there is a unique such $\pi$ extending these.
\end{prop}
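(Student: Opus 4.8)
The plan is to verify both directions directly from the definitions; everything reduces to the universal property of the vector-space tensor product together with the explicit algebra and $*$-structure on $A \otimes_{\mathrm{alg}} B$ recorded above.

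First I would treat the forward direction. Given a $*$-homomorphism $\pi: A \otimes_{\mathrm{alg}} B \to C$, put $\pi_A(a) \equiv \pi(a \otimes \mathbbm{1}_B)$ and $\pi_B(b) \equiv \pi(\mathbbm{1}_A \otimes b)$. These are $*$-homomorphisms because $a \mapsto a \otimes \mathbbm{1}_B$ and $b \mapsto \mathbbm{1}_A \otimes b$ are themselves unital $*$-homomorphisms into $A \otimes_{\mathrm{alg}} B$; for instance $\pi_A(aa') = \pi(aa' \otimes \mathbbm{1}) = \pi\big((a\otimes\mathbbm{1})(a'\otimes\mathbbm{1})\big) = \pi_A(a)\pi_A(a')$, and unitality and the involution are checked identically. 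Their ranges commute because
\beq
\pi_A(a)\pi_B(b) = \pi\big((a\otimes\mathbbm{1})(\mathbbm{1}\otimes b)\big) = \pi(a \otimes b) = \pi\big((\mathbbm{1}\otimes b)(a\otimes\mathbbm{1})\big) = \pi_B(b)\pi_A(a).
\eeq

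For the converse, suppose $\pi_A: A \to C$ and $\pi_B: B \to C$ are $*$-homomorphisms with commuting ranges. The assignment $(a,b) \mapsto \pi_A(a)\pi_B(b)$ is $\C$-bilinear, so the universal property of the vector-space tensor product $A \otimes_\C B$ furnishes a unique linear map $\pi: A \otimes_{\mathrm{alg}} B \to C$ with $\pi(a \otimes b) = \pi_A(a)\pi_B(b)$ on elementary tensors. It remains to check that $\pi$ is multiplicative and $*$-preserving, for which it suffices to check on elementary tensors and extend by (anti)linearity. Multiplicativity reads $\pi\big((a \otimes b)(a' \otimes b')\big) = \pi_A(aa')\pi_B(bb') = \pi_A(a)\,\pi_B(b)\,\pi_A(a')\,\pi_B(b') = \pi(a\otimes b)\,\pi(a'\otimes b')$, where the middle equality uses that $\pi_B(b)$ commutes with $\pi_A(a')$. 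The involution is handled by $\pi\big((a\otimes b)^*\big) = \pi(a^* \otimes b^*) = \pi_A(a)^*\pi_B(b)^* = \big(\pi_B(b)\pi_A(a)\big)^* = \big(\pi_A(a)\pi_B(b)\big)^* = \pi(a \otimes b)^*$, again using commutativity of the ranges, and unitality is $\pi(\mathbbm{1}_A \otimes \mathbbm{1}_B) = \pi_A(\mathbbm{1}_A)\pi_B(\mathbbm{1}_B) = \mathbbm{1}_C$.

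Uniqueness in the converse is the single line $\pi'(a \otimes b) = \pi'\big((a \otimes \mathbbm{1})(\mathbbm{1} \otimes b)\big) = \pi_A(a)\pi_B(b)$, valid for any $*$-homomorphism $\pi'$ extending $\pi_A$ and $\pi_B$, which determines $\pi'$ on the linear span of the elementary tensors, i.e.~on all of $A \otimes_{\mathrm{alg}} B$. There is no real obstacle here; the only point worth flagging is that commutativity of the two ranges is used \emph{twice} in the converse — once for multiplicativity and once for the involution — and that without it the formula $\pi(a \otimes b) = \pi_A(a)\pi_B(b)$ would not even define an algebra homomorphism.
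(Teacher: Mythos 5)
Your proof is correct and follows essentially the same route as the paper's: both directions rest on the identity $(a\otimes\mathbbm{1})(\mathbbm{1}\otimes b)=a\otimes b=(\mathbbm{1}\otimes b)(a\otimes\mathbbm{1})$ and the universal property of the vector-space tensor product, with the paper merely leaving to the reader the verification of multiplicativity and $*$-preservation that you spell out. No gaps.
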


\begin{proof}
The first assertion is clear since
\beqn
\label{commtens}
(a\otimes\mathbbm{1})\cdot(\mathbbm{1}\otimes b)=a\otimes b=(\mathbbm{1}\otimes b)\cdot(a\otimes\mathbbm{1})
\eeqn
holds in $A\otimes_{\mathrm{alg}}B$, so that the ranges commute. For the second assertion, the requirement that $\pi$ extends $\pi_A$ and $\pi_B$ means precisely that
\beq
\pi(a\otimes \mathbbm{1})=\pi_A(a),\qquad \pi(\mathbbm{1}\otimes b)=\pi_B(b).
\eeq
By\eq{commtens} and multiplicativity of $\pi$, this forces
\beq
\pi(a\otimes b)=\pi_A(a)\pi_B(b),
\eeq
which in turn extends uniquely to a linear map on all of $A\otimes_{\mathrm{alg}}B$ by the universal property of $A\otimes_\C B$. Using the commutativity of the ranges of $\pi_A$ and $\pi_B$, it is simple to check that this is a $*$-homomorphism.
\end{proof}

\begin{ex}
\label{matrixA}
As a basic example of this concept, we consider $M_n(\C)\otimes_{\mathrm{alg}}A$ for any unital $C^*$-algebra $A$. The $C^*$-algebra of $n\times n$-matrices $M_n(A)$ contains an isomorphic copy of $M_n(\C)$. It also contains $A$ as a unital subalgebra, included via the embedding $a\mapsto a\mathbbm{1}_n$. Since these two copies commute, we get an induced $*$-homomorphism $M_n(\C)\otimes_{\mathrm{alg}}A\ra M_n(A)$. It is simple to construct an inverse $*$-homomorphism, thereby showing a natural isomorphism $M_n(\C)\otimes_{\mathrm{alg}}A\cong M_n(A)$. This is why we also write $M_n(\C)\otimes A$ for $M_n(A)$; in fact, $M_n(\C)\otimes A$ coincides with both the minimal tensor product $M_n(\C)\otimes_{\min}A$ and the maximal tensor product $M_n(\C)\otimes_{\max} A$, which are constructions to be introduced in the following paragraphs.
\end{ex}

\paragraph{Cross norms.} A priori, the algebraic tensor product $A\otimes_{\mathrm{alg}}B$ does not come equipped with a norm, and in particular it does not possess the structure of a $C^*$-algebra. So the obvious question is now, is there a canonical way to turn $A\otimes_{\mathrm{alg}}B$ into a $C^*$-algebra? A common procedure for turning any $*$-algebra into a $C^*$-algebra is to specify a $C^*$-norm, i.e.~a norm satisfying the conditions
\beq
||xy||\leq ||x||\cdot ||y||,\qquad ||x||^2\leq||x^*x||,
\eeq
and then taking the Banach space completion of the $*$-algebra with respect to this norm. Since the multiplication and the involution extend to the completion by continuity, such a completion automatically carries the structure of a $C^*$-algebra.

In fact, it can be shown~\cite{KR2},~\cite[p.~216]{Tak} that any $C^*$-norm on $A\otimes_{\mathrm{alg}}B$ is a \emph{cross norm}, which means that
\beq
||a\otimes b||=||a||\cdot ||b||
\eeq
holds for all $a\in A$ and $b\in B$.

\begin{lem}
\label{linin}
Let $a_1,\ldots,a_n\in\B(\H)$ be linearly independent. Then there are unit vectors $\xi_1,\ldots,\xi_n\in\H$ such that the $n\times n$-matrix $(\langle\xi_i,a_j\xi_i\rangle)_{i,j=1}^n$ has full rank $n$.
\end{lem}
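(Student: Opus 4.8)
The plan is to prove the claim by induction on $n$, using the following reformulation: I want to choose unit vectors $\xi_1,\dots,\xi_n$ such that the matrix $M=(\langle\xi_i,a_j\xi_i\rangle)_{i,j}$ is nonsingular. The key observation is that for a fixed unit vector $\xi$, the row vector $(\langle\xi,a_1\xi\rangle,\dots,\langle\xi,a_n\xi\rangle)\in\C^n$ is the image of $\xi$ under the map $\xi\mapsto(\langle\xi,a_j\xi\rangle)_j$, and linear independence of $a_1,\dots,a_n$ should force these row vectors to span all of $\C^n$ as $\xi$ ranges over unit vectors. Concretely, the base case $n=1$ is immediate: since $a_1\neq 0$, there is a unit vector $\xi_1$ with $\langle\xi_1,a_1\xi_1\rangle\neq 0$ (if $\langle\xi,a_1\xi\rangle=0$ for all $\xi$, polarization gives $a_1=0$).

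For the inductive step, suppose $\xi_1,\dots,\xi_{n-1}$ have been chosen so that the $(n-1)\times(n-1)$ matrix built from $a_1,\dots,a_{n-1}$ has full rank $n-1$. I would first note that I may assume, after a change of basis in the span of $a_1,\dots,a_n$ (which does not affect linear independence and transforms $M$ by an invertible matrix on the right), that the vectors $r_i\equiv(\langle\xi_i,a_1\xi_i\rangle,\dots,\langle\xi_i,a_n\xi_i\rangle)$ for $i=1,\dots,n-1$ already span the coordinate subspace $\C^{n-1}\times\{0\}$; equivalently, there is a nonzero linear functional, say corresponding to $a_n$ after the basis change, vanishing on $\xi_1,\dots,\xi_{n-1}$ in the sense $\langle\xi_i,a_n\xi_i\rangle=0$ for $i<n$ while the truncated rows stay independent. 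The matrix $M$ then has full rank iff I can pick a unit vector $\xi_n$ with $\langle\xi_n,a_n\xi_n\rangle\neq 0$: by cofactor expansion along the last column, $\det M$ equals $\langle\xi_n,a_n\xi_n\rangle$ times the nonzero minor coming from the first $n-1$ rows and columns, up to terms that vanish because of the zeros I arranged. Such a $\xi_n$ exists since $a_n\neq 0$ (again by polarization), and one must only check that choosing $\xi_n$ does not disturb the earlier vectors — which it does not, since they were fixed.

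The step I expect to be the main obstacle is making the "change of basis to arrange zeros" argument fully rigorous: one needs that the $(n-1)\times n$ matrix with rows $r_1,\dots,r_{n-1}$ has rank exactly $n-1$ (guaranteed by the inductive hypothesis after discarding the last column) and then that one can complete to an invertible $n\times n$ transformation on the $a$-side so that the last column of the transformed $(n-1)\times n$ block is zero while the first $(n-1)\times(n-1)$ block stays invertible — this is just linear algebra, but it has to interact correctly with the requirement that the $a_j$ remain a basis of their span so that the final nonvanishing claim $\langle\xi_n,a_n\xi_n\rangle\neq 0$ still follows from $a_n\neq 0$. An alternative, perhaps cleaner, route that avoids the basis bookkeeping is to argue by contradiction directly: if no choice of $\xi_1,\dots,\xi_n$ makes $M$ invertible, consider the algebraic variety (in the Hermitian-form parameters) where $\det M=0$; show it is all of the parameter space, differentiate, and extract a nontrivial linear relation $\sum_j c_j\langle\xi,a_j\xi\rangle=0$ valid for all unit $\xi$, hence for all $\xi$ by homogeneity, hence $\sum_j c_j a_j=0$ by polarization, contradicting independence. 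I would present whichever of these two I can make shortest; the induction is likely the more elementary and self-contained option.
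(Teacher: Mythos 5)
Your proposal is correct, and both of the routes you sketch can be completed. The paper's own proof is essentially your ``alternative'' route, but in a cleaner form that needs neither a determinant variety nor any differentiation: one simply observes that the set of vectors $\bigl(\langle\xi,a_j\xi\rangle\bigr)_{j=1}^n\in\C^n$, as $\xi$ ranges over unit vectors, must span all of $\C^n$ --- for if its span were proper, a nonzero annihilating functional $(\lambda_j)$ would give $\sum_j\lambda_j\langle\xi,a_j\xi\rangle=\langle\xi,(\sum_j\lambda_j a_j)\xi\rangle=0$ for every $\xi$, whence $\sum_j\lambda_j a_j=0$ by polarization (this is where the complex scalars matter), contradicting linear independence; one then picks $n$ spanning vectors and the corresponding $\xi_i$. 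Your induction is a genuinely different organization of the same underlying fact (linear independence of operators passes to linear independence of the quadratic forms $\xi\mapsto\langle\xi,a_j\xi\rangle$), and the step you flag as the main obstacle does go through: the $(n-1)\times n$ matrix of rows $r_1,\dots,r_{n-1}$ has a one-dimensional kernel whose spanning vector $c$ must have $c_n\neq 0$ (otherwise it would be a kernel vector of the invertible truncated block), so replacing $a_n$ by $a_n'=\sum_j c_ja_j$ is a legitimate change of basis fixing $a_1,\dots,a_{n-1}$, zeroing the last column above the diagonal, and leaving $a_n'\neq 0$ so that $\langle\xi_n,a_n'\xi_n\rangle\neq 0$ for some unit $\xi_n$. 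What the induction buys is a constructive, one-vector-at-a-time procedure; what the paper's span argument buys is brevity --- it is three lines and invokes polarization exactly once.
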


\begin{proof}
For each $\xi\in\H$, we consider $(\langle\xi,a_j\xi\rangle)_{j=1}^n$ as a vector in $\C^n$. The claim of the lemma is that the linear span of all these vectors, as indexed by $\xi$, has dimenions $n$; for then, it is possible to choose $n$ of these vectors which span the whole space. Now if the dimension would be $n-1$ or less, then there would exist a nontrivial linear relation $\sum_j\lambda_j\langle\xi,a_j\xi\rangle=0$ which would hold for all $\xi$. This would imply $\sum_j\lambda_j a_j=0$, contradicting the linear independence of the $a_j$.
\end{proof}

\paragraph{The minimal tensor product.} Upon choosing faithful representations of $A$ and $B$ on Hilbert spaces $\mathcal{H}_A$ and $\mathcal{H}_B$, respectively, we can consider $A$ as a $C^*$-subalgebra of $\mathcal{B}(\mathcal{H}_A)$ and $B$ as a $C^*$-subalgebra of $\mathcal{B}(\mathcal{H}_B)$. 

The following fact may appear intuitively obvious, but nevertheless we will offer a proof:

\begin{lem}
The embeddings $A\subseteq\B(\H_A)$ and $B\subseteq\B(\H_B)$ extend to a subalgebra inclusion
\beq
A\otimes_{\mathrm{alg}}B\subseteq\mathcal{B}(\mathcal{H}_A\otimes\mathcal{H}_B) .
\eeq
\end{lem}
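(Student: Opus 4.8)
The plan is to identify the asserted inclusion with the canonical $*$-homomorphism supplied by Proposition~\ref{algtens}, and then to show that this homomorphism is injective by a direct pairing argument on the tensor-product Hilbert space $\H_A\otimes\H_B$.

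First I would apply Proposition~\ref{algtens} to the two $*$-homomorphisms $A\to\B(\H_A\otimes\H_B)$, $a\mapsto a\otimes\mathbbm{1}$, and $B\to\B(\H_A\otimes\H_B)$, $b\mapsto\mathbbm{1}\otimes b$, whose ranges commute. This produces a $*$-homomorphism $\pi:A\otimes_{\mathrm{alg}}B\to\B(\H_A\otimes\H_B)$ with $\pi(a\otimes b)=a\otimes b$, the operator tensor product. All that remains is to verify that $\pi$ is injective, for then $\pi$ realizes $A\otimes_{\mathrm{alg}}B$ as a $*$-subalgebra of $\B(\H_A\otimes\H_B)$ extending the two given embeddings.

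For injectivity, let $c\in\ker\pi$ and write $c=\sum_{i=1}^n a_i\otimes b_i$ with $b_1,\dots,b_n\in B$ linearly independent (this is always possible by elementary linear algebra of vector-space tensor products). Then for arbitrary $\xi,\xi'\in\H_A$ and $\eta,\eta'\in\H_B$,
$$
0=\langle\xi\otimes\eta,\,\pi(c)\,(\xi'\otimes\eta')\rangle=\sum_{i=1}^n\langle\xi,a_i\xi'\rangle\,\langle\eta,b_i\eta'\rangle .
$$
Fixing $\xi,\xi'$ and letting $\eta,\eta'$ vary shows $\sum_i\langle\xi,a_i\xi'\rangle\,b_i=0$ in $\B(\H_B)$, so linear independence of the $b_i$ forces $\langle\xi,a_i\xi'\rangle=0$ for every $i$. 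Since $\xi,\xi'$ were arbitrary, $a_i=0$ for all $i$, hence $c=0$; thus $\pi$ is injective.

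There is no serious obstacle here; the only point requiring slight care is the reduction to a representation of $c$ with linearly independent $b_i$. Alternatively, one may keep the $a_i$ linearly independent and invoke Lemma~\ref{linin} to produce unit vectors $\xi_1,\dots,\xi_n\in\H_A$ for which the matrix $(\langle\xi_i,a_j\xi_i\rangle)_{i,j}$ is invertible, which is the form in which this observation gets reused when different $C^*$-norms on $A\otimes_{\mathrm{alg}}B$ are compared; either route makes the argument completely routine.
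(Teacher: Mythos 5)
Your proof is correct and follows essentially the same strategy as the paper: detect a nonzero element of $A\otimes_{\mathrm{alg}}B$ by pairing it against elementary tensors $\xi\otimes\eta$. The only cosmetic difference is that the paper keeps both families $\{a_i\}$ and $\{b_i\}$ linearly independent and invokes Lemma~\ref{linin} to build two full-rank matrices of diagonal expectation values $\langle\xi_i,a_j\xi_i\rangle$, whereas you use arbitrary matrix elements $\langle\xi,a_i\xi'\rangle$ and linear independence of the $b_i$ alone; both routes are valid, and you correctly flag the Lemma~\ref{linin} variant as the one the paper reuses later.
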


\begin{proof}
If an element $\sum_{i=1}^n a_i\otimes b_i\in A\otimes_{\mathrm{alg}}B$ is nonzero, then it can be written in a form in which both the sets $\{a_i\}$ and $\{b_i\}$ are each linearly independent. Then lemma~\ref{linin} guarantees the existence of unit vectors $\xi_1,\ldots,\xi_n\in\H_A$ and $\eta_1,\ldots,\eta_n\in\H_B$ such that the matrices $(\langle\xi_i,a_j\xi_i\rangle)_{i,j=1}^n$ and $(\langle\eta_i,b_j\eta_i\rangle)_{i,j=1}^n$ both have full rank. Then the $n\times n$-matrix with entries
\beqn
\label{prodstates}
\left\langle \xi_k\otimes\eta_l,\left(\sum_{i=1}^n a_i\otimes b_i\right)\left(\xi_k\otimes\eta_l\right)\right\rangle=\sum_{i=1}^n\langle\xi_k,a_i\xi_k\rangle\langle\eta_l,b_i\eta_l\rangle
\eeqn
is the product of the matrix $(\langle\xi_k,a_i\xi_k\rangle)_{k,i=1}^n$ with the transpose of the matrix $(\langle\eta_l,b_i\eta_l\rangle)_{l,i=1}^n$. Since the latter two matrices both have full rank $n$, the former matrix also has full rank $n$, and therefore there are choices of indices $k$ and $l$ for which\eq{prodstates} does not vanish. In particular, the operator $\sum_{i=1}^na_i\otimes b_i\in\B(\H_A\otimes\H_B)$ also does not vanish.
\end{proof}

The norm closure of $A\otimes_{\mathrm{alg}}B$ in this representation is called the \emph{minimal tensor product} $A\otimes_{\min}B$. A basic example is the situation of example~\ref{matrixA}, where $B=M_n(\C)=\B(\C^n)$.

We will soon see that, surprisingly, the isomorphism class of $A\otimes_{\min}B$ does not depened on the embeddings $A\subseteq\B(\H_A)$ and $B\subseteq\B(\H_B)$. Proving this will need a little preparation; in particular, we will need the following proposition which shows that one can approximately turn an abstract $C^*$-algebraic state into a concrete mixed state on Hilbert space, i.e.~into a density matrix. (Thanks to the GNS construction, this is known to be exactly possible when the embedding is allowed to vary; so the point of the proposition is that the embedding is fixed.)

\begin{prop}[{e.g.~\cite[4.3.10]{KR2}}]
\label{vectorstates}
Let $C\subseteq\B(\H)$ be a concretely represented unital $C^*$-algebra and let $\rho\in\mathscr{S}(C)$ be some state. Then for every finite set of elements $x_1,\ldots,x_n\in C$ and every $\eps>0$, there are coefficients $\lambda_1,\ldots,\lambda_l\geq 0$ with $\sum_j\lambda_j=1$ and vectors $\xi_1,\ldots,\xi_l\in\H$ such that
\beq
\big|\rho(x_i)-\sum_j\lambda_j\langle\xi_j,x_i\xi_j\rangle\big|<\eps,\qquad i=1,\ldots,n.
\eeq
In other words: the mixed vector states $x\mapsto\sum_j\lambda_j\langle\xi_j,x\xi_j\rangle$ are dense in $\mathscr{S}(C)$ with respect to the weak $*$-topology.
\end{prop}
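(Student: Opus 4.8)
The plan is to prove the equivalent assertion from the ``In other words'' sentence, namely that the mixed vector states are weak-$*$ dense in $\mathscr{S}(C)$; the displayed estimate is then immediate, since the sets $\{\phi : |\phi(x_i)-\rho(x_i)| < \eps,\ i=1,\dots,n\}$ form a basis of weak-$*$ neighbourhoods of $\rho$. So I would first set up notation: for a unit vector $\xi \in \H$ let $\omega_\xi(x) \equiv \langle \xi, x\xi\rangle$, which is a state on $C$, and let $V$ be the convex hull of $\{\omega_\xi : \|\xi\|=1\}$, i.e.\ the set of functionals of the form $x \mapsto \sum_j \lambda_j\langle\xi_j,x\xi_j\rangle$ with $\lambda_j \geq 0$, $\sum_j \lambda_j = 1$ and $\|\xi_j\| = 1$. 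Each such functional is again a state, so $V \subseteq \mathscr{S}(C)$, and the goal is to show that its weak-$*$ closure $K$ equals $\mathscr{S}(C)$.

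The core is a Hahn--Banach separation argument. Since $\mathscr{S}(C)$ is weak-$*$ compact, $K$ is a weak-$*$ compact convex set. Suppose, for contradiction, that some $\rho \in \mathscr{S}(C)$ is not in $K$. Using that the continuous dual of $C^*$ with the weak-$*$ topology is $C$ itself, strict separation of the compact convex set $K$ from the point $\rho$ yields an element $c \in C$ and a real number $t$ with $\mathrm{Re}\,\phi(c) \leq t < \mathrm{Re}\,\rho(c)$ for all $\phi \in K$. Passing to the self-adjoint part $a \equiv \tfrac12(c + c^*)$ and using that states satisfy $\phi(c^*) = \overline{\phi(c)}$, this becomes $\phi(a) \leq t < \rho(a)$ for every $\phi \in K$, in particular for every $\phi = \omega_\xi$.

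It then remains to derive a contradiction from $\langle\xi,a\xi\rangle \leq t$ for all unit vectors $\xi$. This inequality gives $\langle\xi,a\xi\rangle \leq t\|\xi\|^2$ for all $\xi \in \H$, which is exactly the operator inequality $a \leq t\mathbbm{1}$ in $\B(\H)$; since $C$ carries the order inherited from $\B(\H)$, we have $a \leq t\mathbbm{1}$ in $C$ as well, and applying the positive unital functional $\rho$ gives $\rho(a) \leq t$, contradicting $\rho(a) > t$. Hence $\rho \in K$, proving density; the general form of the statement, with not-necessarily-unit vectors, follows at once by absorbing the vector norms into the coefficients. There is no genuine obstacle in this argument: the only points requiring a little care are keeping the scalars real by passing to self-adjoint parts and using that states are $*$-preserving, together with the standard identification of the weak-$*$ dual of $C^*$ with $C$; everything else (that mixed vector states are states, that positivity of the numerical form is equivalent to operator positivity, and weak-$*$ compactness of $\mathscr{S}(C)$) is routine.
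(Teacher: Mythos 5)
Your proof is correct and rests on the same mechanism as the paper's own: a Hahn--Banach separation argument whose contradiction comes from the fact that the operator order is detected by vectors, so that a self-adjoint $a$ with $\langle\xi,a\xi\rangle\leq t$ for all unit vectors satisfies $\rho(a)\leq t$ for every state $\rho$. The only difference is where the separation takes place---you separate the point $\rho$ from the weak-$*$ compact convex set of mixed vector states inside the full dual of $C$, using that the continuous dual of $(C^*,\textrm{weak-}*)$ is $C$, whereas the paper reduces to the finite-dimensional real span $S=\mathrm{lin}_{\R}\{\mathbbm{1},x_1,\ldots,x_n\}$ and separates cones in $S^*$---but this is a difference of packaging, not of substance.
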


\begin{proof}
Since every $x_i$ can be written as a linear combination of two self-adjoint elements, we may assume without loss of generality the $x_i$'s to be self-adjoint; in general, this requires us to choose a different value for $\eps$.

Now consider the real linear space $S\equiv\mathrm{lin}_{\R}\{\mathbbm{1},x_1,\ldots,x_n\}$. It contains the closed convex cone $S_+\subseteq S$ of those elements which are positive in the $C^*$-algebra $C$. Every state $\rho\in\mathscr{S}(C)$ restricts to an element of the dual cone $S^*_+\subseteq S^*$. Conversely, every element of the dual cone $S^*_+\subseteq S^*$ can be extended to a (positive scalar multiple of a) state in $\mathscr{S}(C)$ by the Hahn-Banach theorem. Now every vector state $x\mapsto\langle\xi,x\xi\rangle$ certainly also lies in $S^*_+$; hence the set of unnormalized mixed states
\beq
x\mapsto\sum_{j=1}^l\lambda_j\langle\xi_j,x\xi_j\rangle,\qquad \lambda_1,\ldots,\lambda_l\geq 0\;,
\eeq
is a convex subcone of $S^*_+$. If this subcone were not dense in $S^*_+$, then the Hahn-Banach theorem would show the existence of some element of $S^{**}\cong S$ separating this proper subcone of $S^*_+$ from $S^*_+$ itself; this element $x$ would be a real linear combination of the $x_i$'s which satisfies $\langle\xi,x\xi\rangle\geq 0$ for all $\xi\in\H$, although there is some state $\rho\in\mathscr{S}(C)$ with $\rho(x)<0$. This is a contradiction.
\end{proof}

We can now apply this result to prove a concrete formula for the norm $||\cdot||_{\min}$ on $A\otimes_{\min}B$. We write $x=\sum_{i=1}^n a_i\otimes b_i$ for the elements of $A\otimes_{\mathrm{alg}}B$.

\begin{prop}
\begin{enumerate}
\item\label{mta} If $\rho_A\in\mathscr{S}(A)$ and $\rho_B\in\mathscr{S}(B)$ are any states, then
\beq
\left|(\rho_A\otimes\rho_B)(x)\right|=\left|\sum_{i=1}^n\rho_A(a_i)\rho_B(b_i)\right|\leq ||x||_{\min}
\eeq
for all $x=\sum_{i=1}^n a_i\otimes b_i\in A\otimes_{\min}B$. In other words, $\rho_A\otimes\rho_B\in\mathscr{S}(A\otimes_{\min}B)$.
\item\label{mtb} The norm of $x$ in $A\otimes_{\min}B$ is given by~\cite{Tur}
\beqn
\label{minnorm}
||x||^2_{\min}=||x^*x||_{\min}=\sup_{\rho_A,\rho_B,v}\frac{|\left(\rho_A\otimes\rho_B\right)\left(v^*x^*xv\right)|}{\left(\rho_A\otimes\rho_B\right)(v^*v)}
\eeqn
where the supremum runs over all states $\rho_A\in\mathscr{S}(A)$, $\rho_B\in\mathscr{S}(B)$ and elements $v\in A\otimes_{\mathrm{alg}}B$ having the property that $(\rho_A\otimes\rho_B)(v^*v)\neq 0$.
\end{enumerate}
\end{prop}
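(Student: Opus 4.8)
The plan is to fix, once and for all, the faithful representations $A\subseteq\B(\H_A)$ and $B\subseteq\B(\H_B)$ used to define $\|\cdot\|_{\min}$, so that $\|\cdot\|_{\min}$ is literally the operator norm on $\B(\H_A\otimes\H_B)$ restricted to $A\otimes_{\mathrm{alg}}B$; I will not be allowed to use independence of $A\otimes_{\min}B$ on this choice, since formula~\eqref{minnorm} is precisely what is meant to establish that independence afterwards. I would first prove part~\ref{mta}, on which the ``$\geq$'' half of~\ref{mtb} rests, and then split~\ref{mtb} into ``$\geq$'' (formal, from~\ref{mta}) and ``$\leq$'' (the substantive part).

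For part~\ref{mta}, given states $\rho_A\in\mathscr{S}(A)$, $\rho_B\in\mathscr{S}(B)$ and $x=\sum_{i=1}^n a_i\otimes b_i$, I would apply proposition~\ref{vectorstates} to approximate $\rho_A$ on the finitely many elements $a_1,\dots,a_n$ by a mixed vector state $\sum_j\lambda_j\langle\xi_j,(\cdot)\xi_j\rangle$ with unit vectors $\xi_j\in\H_A$, $\lambda_j\geq 0$, $\sum_j\lambda_j=1$, and likewise $\rho_B$ on $b_1,\dots,b_n$ by $\sum_k\mu_k\langle\eta_k,(\cdot)\eta_k\rangle$. Multiplying out, $\sum_i\rho_A(a_i)\rho_B(b_i)$ lies within $O(\eps)$ of $\sum_{j,k}\lambda_j\mu_k\langle\xi_j\otimes\eta_k,\,x\,(\xi_j\otimes\eta_k)\rangle$, and each summand is bounded in absolute value by $\|x\|_{\min}$ since $\xi_j\otimes\eta_k$ is a unit vector in $\H_A\otimes\H_B$; as $\sum_{j,k}\lambda_j\mu_k=1$, letting $\eps\to 0$ gives $|(\rho_A\otimes\rho_B)(x)|\leq\|x\|_{\min}$. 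Thus $\rho_A\otimes\rho_B$ extends to a unital functional of norm at most one on $A\otimes_{\min}B$, hence is a state.

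For $\|x^*x\|_{\min}\geq\sup(\cdots)$ in~\ref{mtb}, the argument is purely order-theoretic: for $v\in A\otimes_{\mathrm{alg}}B$ one has $v^*x^*xv\leq\|x^*x\|_{\min}\,v^*v$ in the $C^*$-algebra $A\otimes_{\min}B$, because $x^*x\leq\|x^*x\|_{\min}\mathbbm{1}$; applying the positive functional $\rho_A\otimes\rho_B$ from part~\ref{mta} yields $0\leq(\rho_A\otimes\rho_B)(v^*x^*xv)\leq\|x^*x\|_{\min}(\rho_A\otimes\rho_B)(v^*v)$, and dividing by $(\rho_A\otimes\rho_B)(v^*v)$ and taking the supremum gives the claim; the first equality $\|x\|^2_{\min}=\|x^*x\|_{\min}$ is just the $C^*$-identity.

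The harder direction is $\|x^*x\|_{\min}\leq\sup(\cdots)$, and here I would decompose the chosen faithful representations into cyclic subrepresentations, $\H_A=\overline{\bigoplus_i\H_A^i}$ and $\H_B=\overline{\bigoplus_j\H_B^j}$, with unit cyclic vectors $\xi_A^i$, $\xi_B^j$; then $A|_{\H_A^i}$ is (equivalent to) the GNS representation of the vector state $\rho_A^i=\langle\xi_A^i,(\cdot)\xi_A^i\rangle$, and similarly for $B$. Each $\H_A^i\otimes\H_B^j$ is invariant under $A\otimes_{\mathrm{alg}}B$ and $\H_A\otimes\H_B=\overline{\bigoplus_{i,j}\H_A^i\otimes\H_B^j}$, so $x$ acts block-diagonally and $\|x\|_{\min}^2=\sup_{i,j}\|x|_{\H_A^i\otimes\H_B^j}\|^2$. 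Within a block, $\xi_A^i\otimes\xi_B^j$ is cyclic for the action of $A\otimes_{\mathrm{alg}}B$, so $\|x|_{\H_A^i\otimes\H_B^j}\|^2=\sup_v\frac{\|xv(\xi_A^i\otimes\xi_B^j)\|^2}{\|v(\xi_A^i\otimes\xi_B^j)\|^2}$ with $v$ ranging over $A\otimes_{\mathrm{alg}}B$; rewriting numerator and denominator as $(\rho_A^i\otimes\rho_B^j)(v^*x^*xv)$ and $(\rho_A^i\otimes\rho_B^j)(v^*v)$ bounds each block norm squared by the right-hand side of~\eqref{minnorm}, and the supremum over blocks finishes the proof. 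The main obstacle is packaging this cyclic-decomposition step cleanly while keeping the whole argument logically upstream of the representation-independence of $A\otimes_{\min}B$; everything else is the $C^*$-identity, elementary positivity, and proposition~\ref{vectorstates}.
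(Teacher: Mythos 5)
Your proof is correct, and it follows the same overall strategy as the paper's (vector-state density via proposition~\ref{vectorstates} for part~\ref{mta} and for the lower bound, cyclic decomposition of the fixed faithful representations for the upper bound), but two of your steps are packaged differently and arguably more cleanly. For the ``$\geq$'' half of~\ref{mtb}, the paper verifies that $y\mapsto(\rho_A\otimes\rho_B)(v^*yv)/(\rho_A\otimes\rho_B)(v^*v)$ is a state on $A\otimes_{\min}B$ by re-running the vector-state approximation argument; you instead apply the positive functional $\rho_A\otimes\rho_B$ (already a state by part~\ref{mta}) to the operator inequality $v^*x^*xv\leq\|x^*x\|_{\min}\,v^*v$, which avoids a second approximation argument entirely. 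For the ``$\leq$'' half, the paper reduces ``without loss of generality'' to a single cyclic vector on each side and then runs a $\delta_1,\delta_2$-approximation of an almost-norming unit vector by $v(\xi\otimes\eta)$; you carry out the full decomposition $\H_A\otimes\H_B=\overline{\bigoplus_{i,j}\H_A^i\otimes\H_B^j}$ explicitly, observe that $x^*x$ is block-diagonal, and get an exact identity $\|x|_{\H_A^i\otimes\H_B^j}\|^2=\sup_v(\rho_A^i\otimes\rho_B^j)(v^*x^*xv)/(\rho_A^i\otimes\rho_B^j)(v^*v)$ per block from cyclicity of $\xi_A^i\otimes\xi_B^j$, so the only remaining supremum is over blocks. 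What your version buys is that the $\eps$-bookkeeping disappears and the ``without loss of generality'' step is made fully explicit; what the paper's version buys is that it never needs to discuss invariant subspaces or block norms. Your closing remark about keeping the argument logically upstream of representation-independence is exactly the right hygiene and matches how the paper deploys the result.
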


\begin{proof}
\begin{enumerate}
\item[\ref{mta}]\label{Parta}
We first check this when $\rho_A(a)=\langle\xi,a\xi\rangle$ for some $\xi\in\H_A$ and $\rho_B(b)=\langle\eta,b\eta\rangle$ for some $\eta\in\H_B$. Then,
\beq
(\rho_A\otimes\rho_B)(x)=\sum_i\langle\xi,a_i\xi\rangle\langle\eta,b_i\eta\rangle=\left\langle\xi\otimes\eta,\left(\sum_ia_i\otimes b_i\right)(\xi\otimes\eta)\right\rangle,
\eeq
so that in this case the statement follows from the definition of $||x||_{\min}$. Since it holds in this case, it also holds when $\rho_A$ is a convex combination of the form $\rho_A(a)=\sum_j\lambda_j\langle\xi_j,a\xi_j\rangle$, and likewise for $\rho_B$. For general $\rho_A$ and $\rho_B$, the assertion follows from proposition~\ref{vectorstates} by approximating them by such convex combinations on the given algebra elements $a_1,\ldots,a_n$ and $b_1,\ldots,b_n$.
\item[\ref{mtb}] Given $\rho_A\in\mathscr{S}(A)$, $\rho_B\in\mathscr{S}(B)$ and $v\in A\otimes_{\mathrm{alg}}B$ with $(\rho_A\otimes\rho_B)(v^*v)\neq 0$, we get the ``$\geq$'' part of\eq{minnorm} by noting that the assignment
\beq
y\mapsto\frac{\left(\rho_A\otimes\rho_B\right)\left(v^*yv\right)}{\left(\rho_A\otimes\rho_B\right)(v^*v)}
\eeq
is a state on $A\otimes_{\min}B$; this can be shown as in part~\ref{Parta} by first considering vector states and then applying proposition~\ref{vectorstates}.

For the converse direction, we sketch the existence of $\rho_A$, $\rho_B$ and $v$ for every $\eps>0$ such that the right-hand side of\eq{minnorm} is $\geq ||x^*x||_{\min}-\eps$. First of all, the vectors of the form $\sum_{j=1}^l\xi_j\otimes\eta_j$, i.e.~the finite linear combinations of elementary tensors, are dense in $\H_A\otimes\H_B$; hence for any $\delta_1>0$, there is some unit vector $\sum_{j=1}^l\xi_j\otimes\eta_j$ such that
\beq
\left\langle\left(\sum_{j=1}^l\xi_j\otimes\eta_j\right),x^*x\,\left(\sum_{j=1}^l\xi_j\otimes\eta_j\right)\right\rangle>||x^*x||_{\min}-\delta_1.
\eeq
We now consider the case that there are unit vectors $\xi\in\H_A$ and $\eta\in\H_B$ such that $A\xi$ is dense in $\H_A$ and $B\eta$ is dense in $\H_B$; since both embeddings split into a direct sum of cyclic representations~\cite[I.9.17]{Tak}, this is actually no loss of generality (see~\cite[IV.4.9.ii]{Tak} for details). By this assumption, for any $\delta_2>0$ there are elements $r_j\in A$ and $s_j\in B$ such that
\beq
||\xi_j-r_j\xi||_{\H_{A}}<\delta_2,\qquad ||\eta_j-s_j\eta||_{\H_B}<\delta_2,\qquad j=1,\ldots,l.
\eeq
Then upon setting $v\equiv\sum_{j=1}^lr_j\otimes s_j\in A\otimes_{\mathrm{alg}}B$, we find that the vector $v(\xi\otimes\eta)=\sum_{j=1}^l r_j\xi\otimes s_j\eta$ is a good approximation to $\sum_{j=1}^l\xi_j\otimes\eta_j$; in particular, the constants $\delta_1$ and $\delta_2$ can be chosen so small that
\beq 
\left\langle v(\xi\otimes\eta),x^*xv(\xi\otimes\eta)\right\rangle>\left(||x^*x||_{\min}-\eps\right)\left\langle v(\xi\otimes\eta),v(\xi\otimes\eta)\right\rangle
\eeq
Therefore choosing $\rho_A(\cdot)\equiv\langle\xi,\cdot\xi\rangle$ and $\rho_B(\cdot)\equiv\langle\eta,\cdot\eta\rangle$ together with the already defined $v$ makes the right-hand side of\eq{minnorm} greater than $||x^*x||_{\min}-\eps$, as claimed.
\end{enumerate}
\end{proof}

Note that the right-hand side of\eq{minnorm} does not depend on the chosen embeddings $A\subseteq\B(\H_A)$ and $B\subseteq\B(\H_B)$. Therefore as an immediate corollary, we obtain that the isomorphism class of the minimal tensor product $A\otimes_{\min}B$, including  the minimal tensor norm $||\cdot||_{\min}$, is independent of the chosen representations. A posteriori, this justifies the notations ``$A\otimes_{\min}B$'' and ``$||\cdot||_{\min}$'' which do not explicitly mention the embeddings.

\begin{cor}
If $A_1$, $A_2$, $B_2$ and $B_2$ are unital $C^*$-algebras, and $\pi_A:A_1\ra A_2$, $\pi_B:B_1\ra B_2$ are $*$-homomorphisms, then
\beq
\left(\pi_A\otimes_{\min}\pi_B\right)(a\otimes b)\equiv \pi_A(a)\otimes\pi_B(b)
\eeq
defines a $*$-homomorphism $\pi_A\otimes_{\min}\pi_B: A_1\otimes_{\min}B_1\lra A_2\otimes_{\min}B_2$.
\end{cor}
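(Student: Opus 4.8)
The plan is to build the map first at the purely algebraic level and then to show it is norm-nonincreasing for the minimal tensor norm, so that it extends to the completions. For the algebraic step, observe that the compositions $A_1\xrightarrow{\pi_A}A_2\hookrightarrow A_2\otimes_{\mathrm{alg}}B_2$ and $B_1\xrightarrow{\pi_B}B_2\hookrightarrow A_2\otimes_{\mathrm{alg}}B_2$ are $*$-homomorphisms with commuting ranges (since $A_2\otimes\mathbbm{1}$ commutes with $\mathbbm{1}\otimes B_2$, see\eq{commtens}), so proposition~\ref{algtens} produces a unique $*$-homomorphism $\Phi:A_1\otimes_{\mathrm{alg}}B_1\to A_2\otimes_{\mathrm{alg}}B_2$ with $\Phi(a\otimes b)=\pi_A(a)\otimes\pi_B(b)$. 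It then suffices to prove $\|\Phi(x)\|_{\min}\le\|x\|_{\min}$ for all $x\in A_1\otimes_{\mathrm{alg}}B_1$: granting this, $\Phi$ extends by uniform continuity to a bounded linear map $A_1\otimes_{\min}B_1\to A_2\otimes_{\min}B_2$, and since multiplication and involution are norm-continuous on a $C^*$-algebra, the extension is again a $*$-homomorphism, which is the assertion.

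The heart of the proof is the following claim: for \emph{any} representations $\rho_A:A_1\to\B(\mathcal{H}_A)$ and $\rho_B:B_1\to\B(\mathcal{H}_B)$ --- not assumed faithful --- the induced $*$-homomorphism $x\mapsto(\rho_A\otimes\rho_B)(x)$ from $A_1\otimes_{\mathrm{alg}}B_1$ to $\B(\mathcal{H}_A\otimes\mathcal{H}_B)$ (well-defined by proposition~\ref{algtens}, as $\rho_A(A_1)\otimes\mathbbm{1}$ commutes with $\mathbbm{1}\otimes\rho_B(B_1)$) satisfies $\|(\rho_A\otimes\rho_B)(x)\|\le\|x\|_{\min}$. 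To prove this, choose faithful representations $\iota_A:A_1\hookrightarrow\B(\mathcal{K}_A)$ and $\iota_B:B_1\hookrightarrow\B(\mathcal{K}_B)$ and form $\widehat\rho_A\equiv\iota_A\oplus\rho_A$ on $\mathcal{K}_A\oplus\mathcal{H}_A$ and $\widehat\rho_B\equiv\iota_B\oplus\rho_B$ on $\mathcal{K}_B\oplus\mathcal{H}_B$, which are faithful since $\iota_A,\iota_B$ are. Decomposing $(\mathcal{K}_A\oplus\mathcal{H}_A)\otimes(\mathcal{K}_B\oplus\mathcal{H}_B)$ into its four summands, one checks on elementary tensors (and extends by linearity) that $(\widehat\rho_A\otimes\widehat\rho_B)(x)$ is block-diagonal with blocks $(\iota_A\otimes\iota_B)(x)$, $(\iota_A\otimes\rho_B)(x)$, $(\rho_A\otimes\iota_B)(x)$ and $(\rho_A\otimes\rho_B)(x)$; since the norm of a block-diagonal operator is the supremum of the block norms, $\|(\widehat\rho_A\otimes\widehat\rho_B)(x)\|\ge\|(\rho_A\otimes\rho_B)(x)\|$. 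As $\|\cdot\|_{\min}$ can be computed using any pair of faithful representations, as noted following equation\eq{minnorm}, we conclude $\|x\|_{\min}=\|(\widehat\rho_A\otimes\widehat\rho_B)(x)\|\ge\|(\rho_A\otimes\rho_B)(x)\|$, proving the claim.

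To finish, fix faithful representations $A_2\subseteq\B(\mathcal{H}_A)$ and $B_2\subseteq\B(\mathcal{H}_B)$, so that $A_2\otimes_{\min}B_2$ is by definition the closure of $A_2\otimes_{\mathrm{alg}}B_2$ inside $\B(\mathcal{H}_A\otimes\mathcal{H}_B)$ and $\|\cdot\|_{\min}$ there is the operator norm; apply the claim to $\rho_A\equiv\pi_A$ and $\rho_B\equiv\pi_B$, now regarded as (generally non-faithful) representations of $A_1$ on $\mathcal{H}_A$ and of $B_1$ on $\mathcal{H}_B$. Then $(\rho_A\otimes\rho_B)(x)$ is exactly the element $\Phi(x)\in A_2\otimes_{\mathrm{alg}}B_2$, whose norm in $\B(\mathcal{H}_A\otimes\mathcal{H}_B)$ is $\|\Phi(x)\|_{\min}$, so the claim yields $\|\Phi(x)\|_{\min}\le\|x\|_{\min}$. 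The one genuinely substantive point is the claim, and in particular the realization that one should \emph{not} attempt to invoke minimality of $\|\cdot\|_{\min}$ among $C^*$-norms on $A_1\otimes_{\mathrm{alg}}B_1$ (which has not been established in this appendix), but instead absorb a possibly non-faithful representation as a direct summand of a faithful one and use that the norm of a block-diagonal operator is the maximum of the block norms; everything else --- the algebraic universal property and the continuity of the extension --- is routine.
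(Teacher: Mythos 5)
Your proof is correct and follows essentially the same route as the paper's: both establish the map on $A_1\otimes_{\mathrm{alg}}B_1$ via the universal property, then absorb the (possibly non-faithful) representations as direct summands of faithful ones and combine the independence of $\|\cdot\|_{\min}$ from the choice of faithful embedding with the observation that a block-diagonal operator's norm dominates each block's norm. The only cosmetic difference is that the paper factors $\pi_A\otimes\pi_B$ as $(\id\otimes\pi_B)\circ(\pi_A\otimes\id)$ and treats one tensor factor at a time as a compression onto a direct summand, whereas you handle both factors simultaneously with a four-block decomposition.
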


\begin{proof}
By the universal property of proposition~\ref{algtens}, this is clear on the level of the algebraic tensor product $\otimes_{\mathrm{alg}}$ in place of $\otimes_{\min}$. What remains to be shown is continuity of the resulting $*$-homorphism $\pi_A\otimes_{\mathrm{alg}}\pi_B$ with respect to $||\cdot||_{\min}$. We will use the result that the $||\cdot||_{\min}$-norm on $A_i\otimes_{\mathrm{alg}} B_j$ does not depend on the choice of embeddings.

Since $\pi_A\otimes_{\mathrm{alg}}\pi_B$ factors as
\beq
\pi_A\otimes_{\mathrm{alg}}\pi_B=(\id_{A_1}\otimes_{\mathrm{alg}}\pi_B)\circ(\pi_A\otimes_{\mathrm{alg}}\id_{B_2}),
\eeq
it is enough to consider only the case $B_1=B_2=B$ and $\pi_B=\id_B$. Then when $A_1\subseteq\B(\H_1)$ and $A_2\subseteq\B(\H_2)$ are any embeddings, then so is
\beq
A_1\lra \B(\H_1\oplus\H_2),\qquad a\mapsto a\oplus\pi_A(a).
\eeq
With this, the $*$-homomorphism $\pi_A$ simply becomes the compression onto the subspace $\H_2\subseteq\H_1\oplus\H_2$. Similarly, we can identify $\pi_A\otimes\id_B$ as the compression onto the subspace $\H_2\otimes\H_B\subseteq \left(\H_1\oplus\H_2\right)\otimes\H_B$, as in the diagram
\beq
\xymatrix{ A_1\otimes_{\mathrm{alg}}B \ar[rr]^{\pi_A\otimes_{\mathrm{alg}}\id_B} \ar@{^{(}->}[d] && A_2\otimes_{\mathrm{alg}}B \ar@{^{(}->}[d] \\
\B\left((\H_1\oplus\H_2)\otimes\H_B\right) \ar[rr]^(.56){\textrm{compression}} && \B\left(\H_2\otimes\H_B\right) }
\eeq
This implies $||\pi_A\otimes_{\mathrm{alg}}\id_B||_{\min}\leq 1$.
\end{proof}

It has further been shown by Takesaki~\cite{Takpap} that the cross norm $||\cdot||_{\min}$ is in fact the smallest possible $C^*$-norm on the $*$-algebra $A\otimes_{\mathrm{alg}}B$. Since $||a\otimes b||\leq||a||\cdot||b||$ needs to hold for any $C^*$-norm on $A\otimes_{\mathrm{alg}}B$, and by Takesaki's result the minimal $C^*$-norm already saturates this inequality, this implies the already mentioned result that every $C^*$-norm on $A\otimes_{\mathrm{alg}}B$ is a cross norm.

\begin{prop}
\label{minproducp}
If $\Phi_A:A\ra\B(\H_A)$ and $\Phi_B:B\ra\B(\H_B)$ are ucp maps, then the map defined by
\beq
\Phi_A\otimes_{\min}\Phi_B:A\otimes_{\min}B\lra\B(\H_A\otimes\H_B),\qquad a\otimes b\mapsto \Phi_A(a)\otimes\Phi_B(b)
\eeq
is also ucp.
\end{prop}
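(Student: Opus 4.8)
The plan is to derive this from the Stinespring dilation theorem together with the preceding corollary on $\otimes_{\min}$ of $*$-homomorphisms, so that no new computation with the defining complete-positivity condition \eqref{ncp} is needed. First I would dilate: by Theorem~\ref{spthm} applied to $\Phi_A$ there are a Hilbert space $\widehat{\H}_A$ with an isometric embedding $\H_A\subseteq\widehat{\H}_A$ and a unital $*$-homomorphism $\pi_A:A\ra\B(\widehat{\H}_A)$ with $\Phi_A(a)=P_A\,\pi_A(a)\,P_A$, where $P_A\in\B(\widehat{\H}_A)$ is the orthogonal projection onto $\H_A$; likewise for $\Phi_B$, giving $\widehat{\H}_B$, $\pi_B$ and $P_B$.

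Next I would form the tensor product of the two dilations. By the corollary just above, that tensor products of $*$-homomorphisms are $\otimes_{\min}$-continuous (applied with codomains $\B(\widehat{\H}_A)$ and $\B(\widehat{\H}_B)$), the assignment $a\otimes b\mapsto\pi_A(a)\otimes\pi_B(b)$ extends to a unital $*$-homomorphism
\[
\pi_A\otimes_{\min}\pi_B:\;A\otimes_{\min}B\lra \B(\widehat{\H}_A)\otimes_{\min}\B(\widehat{\H}_B)\subseteq\B(\widehat{\H}_A\otimes\widehat{\H}_B),
\]
the last inclusion being the very definition of the minimal tensor product. Since $P_A\otimes P_B$ is exactly the orthogonal projection of $\widehat{\H}_A\otimes\widehat{\H}_B$ onto the closed subspace $\H_A\otimes\H_B$, the ``compression'' construction recorded just before Theorem~\ref{spthm} (applied to the identity $*$-homomorphism of $\B(\widehat{\H}_A\otimes\widehat{\H}_B)$) shows that
\[
\Psi:\;A\otimes_{\min}B\lra\B(\H_A\otimes\H_B),\qquad \Psi(x)\equiv(P_A\otimes P_B)\,(\pi_A\otimes_{\min}\pi_B)(x)\,(P_A\otimes P_B),
\]
is ucp, being the composition of the unital $*$-homomorphism $\pi_A\otimes_{\min}\pi_B$ (which is in particular ucp) with a ucp compression map, and composites of ucp maps are ucp.

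Finally I would identify $\Psi$ with $\Phi_A\otimes_{\min}\Phi_B$. On an elementary tensor,
\[
\Psi(a\otimes b)=(P_A\otimes P_B)(\pi_A(a)\otimes\pi_B(b))(P_A\otimes P_B)=\big(P_A\pi_A(a)P_A\big)\otimes\big(P_B\pi_B(b)P_B\big)=\Phi_A(a)\otimes\Phi_B(b),
\]
which is precisely the prescribed formula; since $\Psi$ is linear and bounded it agrees with the prescription $a\otimes b\mapsto\Phi_A(a)\otimes\Phi_B(b)$ on the dense subalgebra $A\otimes_{\mathrm{alg}}B$, so $\Phi_A\otimes_{\min}\Phi_B$ is well-defined and continuous on $A\otimes_{\min}B$ and equals $\Psi$, hence is ucp. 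I do not expect a genuine obstacle; the only points meriting a moment's care are that $P_A\otimes P_B$ really is the projection onto $\H_A\otimes\H_B\subseteq\widehat{\H}_A\otimes\widehat{\H}_B$ (a routine fact about tensor products of projections), that compression commutes with tensoring in the sense used in the last display, and that the cited corollary genuinely furnishes a $*$-homomorphism out of the \emph{minimal} — not merely algebraic — tensor product, which ultimately rests on Takesaki's minimality theorem invoked earlier in this appendix.
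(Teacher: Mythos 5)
Your proposal is correct and follows essentially the same route as the paper: Stinespring-dilate $\Phi_A$ and $\Phi_B$, apply the preceding corollary to obtain the $*$-homomorphism $\pi_A\otimes_{\min}\pi_B$ into $\B(\widehat{\H}_A\otimes\widehat{\H}_B)$, and compress to $\H_A\otimes\H_B$. The only difference is that you spell out the verification on elementary tensors and the identification of $P_A\otimes P_B$ as the relevant projection, which the paper leaves implicit.
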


\begin{proof}
Applying the Stinespring dilation theorem~\ref{spthm} yields Hilbert spaces $\widehat{\H}_A\supseteq\H_A$ and $\widehat{\H}_B\supseteq\H_B$ together with $*$-homomorphisms $\pi_A$ and $\pi_B$ forming commutative diagrams
\beq
\xymatrix{ & \B(\widehat{\H}_A) \ar@{->>}[dd] &&& \B(\widehat{\H}_B) \ar@{->>}[dd] \\
A \ar[ur]^{\pi_A} \ar[dr]_{\Phi_A} &&& B \ar[ur]^{\pi_B} \ar[dr]_{\Phi_B} \\
 & \B(\H_A) &&& \B(\H_B) }
\eeq
where the vertical maps stand for compressions to the respective subspace. The previous corollary then defines the tensor product $*$-homomorphism $A\otimes_{\min}B\lra\B(\widehat{\H}_A)\otimes_{\min}\B(\widehat{\H}_B)\subseteq\B(\widehat{\H}_A\otimes\widehat{\H}_B)$. This map can be composed with the compression to the subspace $\H_A\otimes\H_B\subseteq\widehat{\H}_A\otimes\widehat{\H}_B$ to give the desired $\Phi_A\otimes_{\min}\Phi_B$, which maps $a\otimes b\mapsto  \Phi_A(a)\otimes\Phi_B(b)$ as desired.
\end{proof}

This finishes our overview of the minimal tensor product, including all those results relevant for the main text.

\paragraph{The maximal tensor product.}

There is another canonical norm on $A\otimes_{\mathrm{alg}}B$ defined as follows:
\beqn
\label{maxdef}
||x||_{\max}=\sup_{\pi}||\pi(x)||,\quad x\in A\otimes_{\mathrm{alg}}B,
\eeqn
where the supremum runs over all representations $\pi:A\otimes_{\mathrm{alg}}B\ra\mathcal{B}(\mathcal{H})$; by the universal property~(\ref{algtens}), such a representation is equivalent to two representations $\pi_A:A\ra\B(\H)$ and $\pi_B:B\ra\B(\H)$ with commuting ranges. 

The formula (\ref{maxdef}) can immediately be shown to define a $C^*$-norm. The completion of $A\otimes_{\mathrm{alg}}B$ with respect to this norm is a $C^*$-algebra, taken to be the \emph{maximal tensor product} $A\otimes_{\mathrm{max}}B$. Essentially by definition, it satisfies the following universal property, which transfers~\ref{algtens} from the world of $*$-algebras to the world of $C^*$-algebras:

\begin{prop}
\label{maxtens}
Any $*$-homomorphism
\beq
\pi:A\otimes_{\max}B\lra C,
\eeq
for some $C^*$-algebra $C$, restricts to $*$-homomorphisms $\pi_A:A\ra C$ and $\pi_B:B\ra C$ with commuting ranges. Conversely, given any such $\pi_A$ and $\pi_B$ with commuting ranges, there is a unique $\pi$ extending these.
\end{prop}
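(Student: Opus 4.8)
The plan is to prove Proposition~\ref{maxtens} essentially by unwinding the definition of $\|\cdot\|_{\max}$ and invoking the universal property of the algebraic tensor product (Proposition~\ref{algtens}). The statement has two directions, and the first one is almost immediate.

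\textbf{The easy direction.} Suppose $\pi:A\otimes_{\max}B\to C$ is a $*$-homomorphism into a $C^*$-algebra $C$. First I would recall that $A\otimes_{\mathrm{alg}}B$ sits densely inside $A\otimes_{\max}B$, and that the two canonical embeddings $A\hookrightarrow A\otimes_{\mathrm{alg}}B$, $a\mapsto a\otimes\mathbbm{1}$ and $B\hookrightarrow A\otimes_{\mathrm{alg}}B$, $b\mapsto\mathbbm{1}\otimes b$ are isometric (since $\|\cdot\|_{\max}$ is a cross norm, as noted in the excerpt) and hence extend to embeddings into $A\otimes_{\max}B$. Composing with $\pi$ gives $*$-homomorphisms $\pi_A\equiv\pi(\,\cdot\otimes\mathbbm{1}):A\to C$ and $\pi_B\equiv\pi(\mathbbm{1}\otimes\,\cdot):B\to C$, and their ranges commute because $(a\otimes\mathbbm{1})(\mathbbm{1}\otimes b)=(\mathbbm{1}\otimes b)(a\otimes\mathbbm{1})$ already in $A\otimes_{\mathrm{alg}}B$, exactly as in~\eqref{commtens}, and $\pi$ is multiplicative. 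This is the verbatim analogue of the first half of Proposition~\ref{algtens}.

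\textbf{The converse direction.} Now suppose we are given $*$-homomorphisms $\pi_A:A\to C$ and $\pi_B:B\to C$ with commuting ranges. By Proposition~\ref{algtens} there is a unique $*$-homomorphism $\pi_0:A\otimes_{\mathrm{alg}}B\to C$ with $\pi_0(a\otimes b)=\pi_A(a)\pi_B(b)$; uniqueness of the eventual $\pi$ is then forced since $A\otimes_{\mathrm{alg}}B$ is dense in $A\otimes_{\max}B$ and $\pi$ is continuous. So the whole content is that $\pi_0$ is bounded with respect to $\|\cdot\|_{\max}$ (indeed norm-nonincreasing), for then it extends by continuity to the completion $A\otimes_{\max}B$, and the extension is automatically a $*$-homomorphism since multiplication and involution are continuous. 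To bound $\pi_0$: embed $C\subseteq\B(\H)$ faithfully for some Hilbert space $\H$, so that $\pi_A,\pi_B$ become representations $A\to\B(\H)$, $B\to\B(\H)$ with commuting ranges; by the universal property built into the definition~\eqref{maxdef} of $\|\cdot\|_{\max}$, the induced representation $\pi_0:A\otimes_{\mathrm{alg}}B\to\B(\H)$ satisfies $\|\pi_0(x)\|\le\|x\|_{\max}$ for all $x$, simply because $\pi_0$ is one of the representations over which the supremum in~\eqref{maxdef} is taken. Since a faithful embedding is isometric, $\|\pi_0(x)\|_C=\|\pi_0(x)\|_{\B(\H)}\le\|x\|_{\max}$, which is precisely the required continuity.

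I do not expect any genuine obstacle here: the proposition is designed to hold ``essentially by definition,'' and the only point requiring a moment's care is that to bound $\pi_0$ one must first represent $C$ faithfully on a Hilbert space (so that $\|\cdot\|_{\max}$, which is defined via Hilbert-space representations, can be applied), and then use that a $*$-homomorphism between $C^*$-algebras is automatically norm-nonincreasing together with the isometry of the faithful embedding. If one wanted to avoid even invoking a faithful representation of $C$, an alternative is to observe directly that $x\mapsto\|\pi_0(x)\|_C$ defines a $C^*$-seminorm on $A\otimes_{\mathrm{alg}}B$ dominated by $\|\cdot\|_{\max}$ by the defining supremum; but the faithful-representation route is cleaner and parallels the proof of Proposition~\ref{minproducp}. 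I would write the argument in four or five lines accordingly.
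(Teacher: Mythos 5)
Your proposal is correct and follows exactly the route the paper intends: its own proof consists of the single line ``By proposition~\ref{algtens} and the definition in equation~(\ref{maxdef}),'' and your argument simply fills in those details --- the restriction to the tensor factors for the easy direction, and, for the converse, the passage through a faithful representation $C\subseteq\B(\H)$ so that the extension $\pi_0$ from proposition~\ref{algtens} becomes one of the representations over which the supremum in~(\ref{maxdef}) is taken, giving $\|\pi_0(x)\|\leq\|x\|_{\max}$ and hence continuous extension to the completion. No gaps; the one point genuinely requiring care (representing $C$ on a Hilbert space before invoking the definition of $\|\cdot\|_{\max}$) is handled correctly.
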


\begin{proof}
By proposition~\ref{algtens} and the definition in equation\eq{maxdef}.
\end{proof}

Before proceeding with properties of the maximal tensor product, we prove a tensor product version of the Stinespring dilation theorem~\ref{spthm}. We begin with a lemma.

\begin{lem}[{compare~\cite[1.2]{Schurproducts}}]
\label{entrywise}
If $\mathcal{A},\mathcal{B}\in M_n(\B(\H))$ are matrices of operators, such that both $\mathcal{A}\geq 0$ and $\mathcal{B}\geq 0$ are positive and all entries pairwise commute, i.e.~$\mathcal{A}_{ij}\mathcal{B}_{kl}=\mathcal{B}_{kl}\mathcal{A}_{ij}$ for all indices $i$, $j$, $k$, $l$, then the entrywise product
\beq
\left(\mathcal{A}\odot\mathcal{B}\right)_{ij}\equiv\mathcal{A}_{ij}\mathcal{B}_{ij}\:\in M_n(\B(\H))
\eeq
is also positive, i.e.~$\mathcal{A}\odot\mathcal{B}\geq 0$ in $M_n(\B(\H))$.
\end{lem}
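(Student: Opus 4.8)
The plan is to prove this as a noncommutative version of the Schur product theorem, where the classical argument (that the Hadamard product of positive semidefinite matrices is positive, via the fact that it is a principal submatrix of the Kronecker product) gets adapted to operators using the pairwise commutation hypothesis. First I would recall the well-known fact that a matrix $\mathcal{A}\in M_n(\B(\H))$ is positive if and only if $\mathcal{A}$ is a sum of matrices of the form $(\mathcal{A})_{ij} = c_i^* c_j$ for operators $c_1,\dots,c_n\in\B(\H)$; equivalently, it suffices to treat the case $\mathcal{A}_{ij}=c_i^*c_j$, $\mathcal{B}_{ij}=d_i^*d_j$, and extend by bilinearity at the end, since the entrywise product distributes over sums. (Here one should be slightly careful: the decomposition into rank-one-type pieces may require passing to $M_n(\B(\H))\cong\B(\H^{\oplus n})$ and using that a positive operator is $T^*T$ for some $T$, whose rows give the $c_i$.)

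Next, with $\mathcal{A}_{ij}=c_i^*c_j$ and $\mathcal{B}_{ij}=d_i^*d_j$, the entrywise product is $(\mathcal{A}\odot\mathcal{B})_{ij} = c_i^*c_j\,d_i^*d_j$. The commutation hypothesis — every entry of $\mathcal{A}$ commutes with every entry of $\mathcal{B}$ — should be leveraged to rewrite this as $c_i^* d_i^* c_j d_j = (c_id_i)^*(c_jd_j)$, using that $c_j$ commutes with $d_i^*$. But one must check the commutation is inherited by the $c_i,d_j$ themselves and not merely by the products $c_i^*c_j$, $d_i^*d_j$; the cleanest route is to not decompose at all but instead argue directly on $\mathcal{A},\mathcal{B}$ via the Kronecker/tensor trick: the Kronecker product $\mathcal{A}\otimes\mathcal{B}\in M_{n^2}(\B(\H))$ — with entries $\mathcal{A}_{ij}\mathcal{B}_{kl}$, which makes sense and is positive because the two commuting families generate commuting $C^*$-algebras, so $\mathcal{A}\otimes\mathcal{B}$ is the image of a positive element under a $*$-homomorphism from $M_n\otimes_{\max} M_n \otimes_{\max}(\text{stuff})$ into $\B(\H)$, or more concretely is positive since it factors through $M_{n^2}$ acting with the commuting ranges. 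Then $\mathcal{A}\odot\mathcal{B}$ is obtained from $\mathcal{A}\otimes\mathcal{B}$ by compressing to the ``diagonal'' subspace spanned by vectors indexed by pairs $(i,i)$, i.e.~it is $V^*(\mathcal{A}\otimes\mathcal{B})V$ for the isometry $V: \H^{\oplus n}\to\H^{\oplus n^2}$, $V(\xi_i)_i = (\xi_i\delta_{jk})_{(j,k)}$ — and compressions of positive operators are positive.

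The key steps in order: (1) make precise, using the commutation hypothesis, that $\mathcal{A}\otimes\mathcal{B}:=(\mathcal{A}_{ij}\mathcal{B}_{kl})_{(i,k),(j,l)}$ defines a positive element of $M_{n^2}(\B(\H))$ — here the point is that $\mathcal{A}$, viewed inside $M_n(\B(\H))\subseteq M_{n^2}(\B(\H))$ via the embedding that tensors on $M_n(\C)$ on one factor, and $\mathcal{B}$ via the other factor, commute with each other, so their product in $M_{n^2}(\B(\H))$ of two commuting positive elements is positive; (2) write $\mathcal{A}\odot\mathcal{B} = V^*(\mathcal{A}\otimes\mathcal{B})V$ for the obvious diagonal-embedding isometry $V$; (3) conclude positivity since $X\geq 0\Rightarrow V^*XV\geq 0$. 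I expect step (1) to be the main obstacle: one has to argue carefully that the entrywise commutation of $\mathcal{A}$ and $\mathcal{B}$ translates into commutation of the two copies sitting inside $M_{n^2}(\B(\H))$, and that the product of two commuting positive operators is positive (true: $PQ = P^{1/2}QP^{1/2}\geq 0$ when $[P,Q]=0$). Everything else is the routine Schur-product bookkeeping, and I would not spell out the index juggling in full.
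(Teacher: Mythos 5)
Your proof is correct, but it is organized differently from the paper's. You run the classical Schur-product argument: embed $\mathcal{A}$ and $\mathcal{B}$ into $M_{n^2}(\B(\H))$ as $\mathcal{A}\otimes\mathbbm{1}_n$ and $\mathbbm{1}_n\otimes\mathcal{B}$, observe that these two positive elements commute precisely because of the entrywise commutation hypothesis, invoke the fact that a product of commuting positive operators is positive (via $PQ=P^{1/2}QP^{1/2}$), and compress to the diagonal by an isometry $V$. The paper instead computes $\langle\xi,(\mathcal{A}\odot\mathcal{B})\xi\rangle$ directly: it inserts the square roots $\sqrt{\mathcal{A}},\sqrt{\mathcal{B}}$, notes that their entries still satisfy the cross-commutation (being norm-limits of polynomials in the entries of $\mathcal{A}$, resp.\ $\mathcal{B}$), and rearranges the quadratic form into a manifest sum of squares $\langle\mathcal{C}\xi,\mathcal{C}\xi\rangle$ with $\mathcal{C}_{lk,j}=\sqrt{\mathcal{B}}_{lj}\sqrt{\mathcal{A}}_{kj}$. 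The two arguments are secretly the same computation --- the paper's $\mathcal{C}$ is exactly your $(\sqrt{\mathcal{A}}\otimes\sqrt{\mathcal{B}})V$, and both hinge on the principle that square roots of commuting positives commute --- but your packaging is more modular and makes the analogy with the scalar Schur product theorem transparent, while the paper's is a self-contained index computation that avoids introducing the auxiliary space $\H^{\oplus n^2}$ explicitly. You were also right to abandon your first idea of writing $\mathcal{A}_{ij}=c_i^*c_j$, $\mathcal{B}_{ij}=d_i^*d_j$: the hypothesis only gives commutation of the entries of $\mathcal{A}$ with the entries of $\mathcal{B}$, not of the factors $c_i$ with the $d_j$, so that route does have a genuine gap, and the pivot to the Kronecker/compression argument repairs it.
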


\begin{proof}
By the assumption of commutativity, we conclude that also all the entries of the square roots of $\mathcal{A}$ and $\mathcal{B}$ do pairwise commute, i.e.~$\sqrt{\mathcal{A}}_{\,ij}\sqrt{\mathcal{B}}_{\,kl}=\sqrt{\mathcal{B}}_{\,kl}\sqrt{\mathcal{A}}_{\,ij}$, since these entries are limits of polynomials in the $\mathcal{A}_{ij}$ and the $\mathcal{B}_{kl}$, respectively. Using this, the scalar product $\langle\xi,(\mathcal{A}\odot\mathcal{B})\xi\rangle$ on any $\xi\in\oplus^n\H$ can be evaluated to
\begin{align*}
\langle\xi,(\mathcal{A}\odot\mathcal{B})\xi\rangle_{\oplus^n\H}=&\sum_{i,j}\langle\xi_i,\mathcal{A}_{ij}\mathcal{B}_{ij}\xi_j\rangle_{\H}=\sum_{i,j,k,l}\langle\xi_i,\sqrt{\mathcal{A}}_{\,ik}\sqrt{\mathcal{A}}_{\,kj}\sqrt{\mathcal{B}}_{\,il}\sqrt{\mathcal{B}}_{\,lj}\xi_j\rangle_{\H}\\
=&\sum_{i,j,k,l}\langle\xi_i,\sqrt{\mathcal{A}}_{\,ik}\sqrt{\mathcal{B}}_{\,il}\sqrt{\mathcal{B}}_{\,lj}\sqrt{\mathcal{A}}_{\,kj}\xi_j\rangle_{\H}=\sum_{i,j,k,l}\langle \sqrt{\mathcal{B}}_{\,li}\sqrt{\mathcal{A}}_{\,ki}\xi_i,\sqrt{\mathcal{B}}_{\,lj}\sqrt{\mathcal{A}}_{\,kj}\xi_j\rangle_{\H}
,
\end{align*}
where the last step relies on $\sqrt{\mathcal{A}}_{\,ki}^{\,*}=\sqrt{\mathcal{A}}_{\,ik}$, and similarly $\sqrt{\mathcal{B}}_{\,ki}^{\,*}=\sqrt{\mathcal{B}}_{\,ik}$. Writing $\mathcal{C}_{lk,j}\equiv \sqrt{\mathcal{B}}_{\,lj}\sqrt{\mathcal{A}}_{\,kj}\in M_{n^2,n}(\B(\H))$, this expression reads as
\beq
\langle\xi,(\mathcal{A}\odot\mathcal{B})\xi\rangle_{\oplus^n\H}=\langle\mathcal{C}\xi,\mathcal{C}\xi\rangle_{\oplus^{n^2}\H}\geq 0
\eeq
and hence gives the assertion.
\end{proof}

\begin{thm}[double Stinespring theorem]
\label{spthm2}
Let $\Phi_A:A\ra\B(\H)$ and $\Phi_B:B\ra\B(\H)$ be ucp maps with commuting ranges. Then there is a Hilbert space $\widehat{\H}$ together with an isometric embedding $\H\subseteq\widehat{\H}$ and a $*$-homomorphism $\pi:A\otimes_{\mathrm{max}}B\ra\B(\widehat{\H})$ such that
\beqn
\label{sprep2}
\Phi_A(a)\Phi_B(b)=P_{\H}\pi(a\otimes b)P_{\H}\quad\forall a\in A,\:b\in B.
\eeqn
\end{thm}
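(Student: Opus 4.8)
The plan is to run the GNS-style dilation from the proof of Theorem~\ref{spthm} one level higher, building $\widehat{\H}$ out of the complex vector space tensor product $A\otimes_{\C}B\otimes_{\C}\H$. On it I would put the sesquilinear form determined by
\beq
\langle a\otimes b\otimes\xi,\ a'\otimes b'\otimes\xi'\rangle\ \equiv\ \langle\xi,\ \Phi_A(a^*a')\,\Phi_B(b^*b')\,\xi'\rangle_{\H}.
\eeq
Once this is shown to be positive semi-definite, the rest copies the template of Theorem~\ref{spthm}: quotient by the null space $\mathcal{N}=\{x:\langle x,x\rangle=0\}$ (a subspace by Cauchy--Schwarz), complete to obtain $\widehat{\H}$, embed $\H\hookrightarrow\widehat{\H}$ isometrically as a closed subspace via $\xi\mapsto\mathbbm{1}\otimes\mathbbm{1}\otimes\xi+\mathcal{N}$ (using $\Phi_A(\mathbbm{1})=\Phi_B(\mathbbm{1})=\mathbbm{1}$ and completeness of $\H$), and let $A$, $B$ act by left multiplication on their respective tensor leg.

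The one genuinely new point --- and the main obstacle --- is positive semi-definiteness of this form, which is exactly where Lemma~\ref{entrywise} enters. For a tensor $\sum_{i=1}^n a_i\otimes b_i\otimes\xi_i$ the square norm equals $\sum_{i,j}\langle\xi_i,\Phi_A(a_i^*a_j)\Phi_B(b_i^*b_j)\xi_j\rangle_{\H}=\langle\xi,(\mathcal{M}_A\odot\mathcal{M}_B)\xi\rangle_{\oplus^n\H}$ with $\xi=(\xi_1,\dots,\xi_n)$, where $\mathcal{M}_A=(\Phi_A(a_i^*a_j))_{ij}$ and $\mathcal{M}_B=(\Phi_B(b_i^*b_j))_{ij}$. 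Since $(a_i^*a_j)_{ij}\geq 0$ in $M_n(A)$, complete positivity of $\Phi_A$ gives $\mathcal{M}_A\geq 0$, and likewise $\mathcal{M}_B\geq 0$; the commuting-ranges hypothesis makes every entry of $\mathcal{M}_A$ commute with every entry of $\mathcal{M}_B$, so Lemma~\ref{entrywise} yields $\mathcal{M}_A\odot\mathcal{M}_B\geq 0$ and hence $\langle x,x\rangle\geq 0$.

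The remaining verifications parallel Theorem~\ref{spthm}, with the same lemma doing the extra work. Well-definedness and boundedness of the left action of $a'\in A$ on $\widehat{\H}$ follow from the operator inequality $(a_i^*a'^*a'a_j)_{ij}\leq\|a'\|^2\,(a_i^*a_j)_{ij}$ in $M_n(A)$ (conjugate the positive matrix $\|a'\|^2\mathbbm{1}-a'^*a'\geq 0$ by the row $(a_1,\dots,a_n)$), complete positivity of $\Phi_A$, and one more appeal to Lemma~\ref{entrywise} to push this inequality through the entrywise product with $\mathcal{M}_B$; this bounds the new square norm by $\|a'\|^2$ times the old one. The same works for $B$, and one checks directly that the resulting maps $\pi_A:A\ra\B(\widehat{\H})$ and $\pi_B:B\ra\B(\widehat{\H})$ are unital $*$-homomorphisms with commuting ranges. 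By the universal property of the maximal tensor product (Proposition~\ref{maxtens}) they assemble into a single $*$-homomorphism $\pi:A\otimes_{\max}B\ra\B(\widehat{\H})$ with $\pi(a\otimes b)=\pi_A(a)\pi_B(b)$. Finally, for $\xi,\xi'\in\H$ the computation
\beq
\langle\xi,\,P_{\H}\pi(a\otimes b)P_{\H}\,\xi'\rangle=\langle\mathbbm{1}\otimes\mathbbm{1}\otimes\xi,\ a\otimes b\otimes\xi'\rangle=\langle\xi,\,\Phi_A(a)\Phi_B(b)\,\xi'\rangle
\eeq
holds for all $\xi,\xi'\in\H$, which gives the desired identity in Theorem~\ref{spthm2}.
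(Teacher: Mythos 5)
Your proposal is correct and follows essentially the same route as the paper: the same sesquilinear form on $A\otimes_{\C}B\otimes_{\C}\H$, the same appeal to Lemma~\ref{entrywise} for positive semi-definiteness and for bounding the left actions, and the same compression computation at the end. The only cosmetic difference is that you assemble $\pi$ from separate actions $\pi_A,\pi_B$ via the universal property of $\otimes_{\max}$, whereas the paper defines the action of elementary tensors $a'\otimes b'$ directly; these amount to the same thing.
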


\begin{proof}
This is totally analogous to the proof of theorem~\ref{spthm}. The Hilbert space $\widehat{\H}$ will be constructed from the tensor product $A\otimes_{\C}B\otimes_{\C}\H$, a tensor product of complex vector spaces, in several steps. On $A\otimes_{\C}B\otimes_{\C}\H$, we define an inner product as
\beqn
\label{spsp2}
\langle a\otimes b\otimes\xi,a'\otimes b'\otimes\xi'\rangle\equiv \langle \xi,\Phi_A(a^*a')\Phi_B(b^*b')\xi'\rangle_\H
\eeqn
and extending by sesquilinearity. That $\Phi_A$ and $\Phi_B$ are ucp is now crucial for checking that this inner product is positive semi-definite: the scalar product of any tensor $\sum_{i=1}^n a_i\otimes b_i\otimes\xi_i$ with itself evaluates to
\beqn
\label{stinespring2}
\left\langle\sum_{i=1}^n a_i\otimes b_i\otimes\xi_i,\sum_{i=1}^n a_i\otimes b_i\otimes\xi_i\right\rangle=\sum_{i,j=1}^n\langle\xi_i,\Phi_A(a_i^*a_j)\Phi_B(b_i^*b_j)\xi_j\rangle_\H.
\eeqn
As in the proof of~\ref{spthm}, the operator matrix $\mathcal{A}_{ij}\equiv\Phi_A(a_i^*a_j)\in M_n(A)$ is positive, and likewise is $\mathcal{B}_{ij}\equiv\Phi_B(b_i^*b_j)\in M_n(B)$. By lemma~\ref{entrywise} and the assumption of commuting ranges, this then shows non-negativity of\eq{stinespring2}, so that\eq{spsp2} is positive semi-definite. Again, a standard calculation using the Cauchy-Schwarz inequality tells us that the null set
\beq
\mathcal{N}\equiv\left\{x\in A\otimes_{\C}B\otimes_{\C}\H\:|\langle x,x\rangle=0\:\right\}
\eeq
is a linear subspace of $A\otimes_{\C}B\otimes_{\C}\H$. Hence the quotient $(A\otimes_{\C}B\otimes_{\C}\H)/\mathcal{N}$ carries an induced inner product, which is positive definite by construction. The completion of this quotient, with regard to the norm induced by the inner product, is therefore a Hilbert space. This will be the desired Hilbert space $\widehat{\H}$. The assignment
\beq
\xi\mapsto \mathbbm{1}\otimes\mathbbm{1}\otimes\xi+\mathcal{N}
\eeq
embeds $\H$ isometrically as a subspace of $\widehat{\H}$. By completeness of $\H$, this subspace is closed.

Furthermore, every element of $A\otimes_{\mathrm{alg}}B$ naturally acts on $\widehat{\H}$; we define this action on elementary tensors $a'\otimes b'\in A\otimes_{\mathrm{alg}}B$ as
\beqn
\label{stspact2}
(a'\otimes b')(a\otimes b\otimes\xi+\mathcal{N})\equiv a'a\otimes b'b\otimes\xi+\mathcal{N}
\eeqn
and extending by linearity. This is well-defined regarding the quotiening with respect to $\mathcal{N}$ since whenever\eq{stinespring2} vanishes, then so does the expression
\begin{align*}
\left\langle\sum_{i=1}^n a'a_i\otimes b'b_i\otimes\xi_i,\sum_{i=1}^n a'a_i\otimes b'b_i\otimes\xi_i\right\rangle&=\sum_{i,j=1}^n\langle\xi_i,\Phi_A(a_i^*a'^*a'a_j)\Phi_B(b_i^*b'^*b'b_j)\xi_j\rangle_\H\\[.5cm]
&\leq||a'^*a'||\sum_{i,j=1}^n\langle\xi_i,\Phi_A(a_i^*a_j)\Phi_B(b_ib'^*b'^*b_j)\xi_j\rangle_\H\\[.5cm]
&\leq||a'^*a'||\cdot||b'^*b'||\sum_{i,j=1}^n\langle\xi_i,\Phi_A(a_i^*a_j)\Phi_B(b_i^*b_j)\xi_j\rangle_\H=0
\end{align*}
where the two estimates again use the assumption of $\Phi_A$ and $\Phi_B$ being ucp as well as the commutativity of their ranges needed for the applications of lemma~\ref{entrywise}. The same estimate shows that the action of $a'\otimes b'$ is bounded. It is straightforward to check that\eq{stspact2} defines a $*$-homomorphism $\pi:A\otimes_{\mathrm{alg}}B\ra\B(\widehat{\H})$. 

Finally, we verify that these data satisfy the desired equation\eq{sprep2}. For any $\xi\in\mathcal{H}$, which is of the form $\mathbbm{1}\otimes\mathbbm{1}\otimes\xi+\mathcal{N}$ when considered as an element of $\widehat{\H}$,
\begin{align*}
\langle\mathbbm{1}\otimes\mathbbm{1}\otimes \xi,P_{\H}\pi(a\otimes b)P_{\H}(\mathbbm{1}\otimes\mathbbm{1}\otimes\xi)\rangle_{\widehat{\H}}&=\langle P_{\H}(\mathbbm{1}\otimes\mathbbm{1}\otimes\xi),\pi(a\otimes b)(\mathbbm{1}\otimes\mathbbm{1}\otimes\xi)\rangle_{\widehat{\H}}\\[.3cm]
&=\langle\mathbbm{1}\otimes\mathbbm{1}\otimes\xi,a\otimes b\otimes\xi\rangle_{\widehat{\H}}=\langle\xi,\Phi_A(a)\Phi_B(b)\xi\rangle_{\H}
\end{align*}
which therefore shows that\eq{sprep2} does indeed hold.
\end{proof}

As a direct corollary, we obtain a well-known result~\cite[IV.4.23(ii)]{Tak} analogous to proposition~\ref{minproducp}. For another alternative proof of this, see~\cite[12.8]{Paul}.

\begin{cor}
\label{maxproducp}
If $\Phi_A:A\ra\B(\H)$ and $\Phi_B:B\ra\B(\H)$ are ucp maps with commuting ranges, then the map
\beq
\Phi_A\otimes_{\max}\Phi_B:A\otimes_{\max}B\lra\B(\H),\qquad a\otimes b\mapsto \Phi_A(a)\Phi_B(b)
\eeq
is well-defined and ucp.
\end{cor}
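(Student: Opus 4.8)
The plan is to read this off directly from the double Stinespring theorem~\ref{spthm2}, which was proven precisely for this purpose. Applying that theorem to the pair $\Phi_A,\Phi_B$ (whose ranges commute by hypothesis) yields a Hilbert space $\widehat{\H}$, an isometric embedding $\H\subseteq\widehat{\H}$ with orthogonal projection $P_{\H}\in\B(\widehat{\H})$, and a $*$-homomorphism $\pi:A\otimes_{\max}B\to\B(\widehat{\H})$ with $\Phi_A(a)\Phi_B(b)=P_{\H}\pi(a\otimes b)P_{\H}$ for all $a\in A$, $b\in B$. I would then simply \emph{define} $\Phi_A\otimes_{\max}\Phi_B$ to be the compression of $\pi$ to $\H$, namely $x\mapsto P_{\H}\pi(x)P_{\H}$ for $x\in A\otimes_{\max}B$. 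This is visibly a well-defined bounded linear map on all of $A\otimes_{\max}B$, being the composite of a $*$-homomorphism with a compression, and by\eq{sprep2} it carries each elementary tensor $a\otimes b$ to $\Phi_A(a)\Phi_B(b)$; hence it is the asserted extension, and ``well-defined'' is taken care of.

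It then remains only to observe that this compression is ucp. Unitality is immediate: $P_{\H}\pi(\mathbbm{1}_A\otimes\mathbbm{1}_B)P_{\H}=P_{\H}\mathbbm{1}_{\widehat{\H}}P_{\H}=\mathbbm{1}_{\H}$. Complete positivity is the general fact, recorded around equation\eq{compression}, that any compression $a\mapsto P\pi(a)P$ of a $*$-homomorphism is ucp: at matrix level $n$, the map $(\Phi_A\otimes_{\max}\Phi_B)_n$ is the compression of the $*$-homomorphism $\pi_n:M_n(A\otimes_{\max}B)\to M_n(\B(\widehat{\H}))$ by the projection $\mathrm{diag}(P_{\H},\ldots,P_{\H})$, which is positive since $*$-homomorphisms preserve positivity and conjugation by a projection preserves positivity.

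There is essentially no obstacle here: all the real work — in particular the positivity of the entrywise product of commuting positive operator matrices, handled in lemma~\ref{entrywise} — was already absorbed into the proof of theorem~\ref{spthm2}. The one point worth a remark is that $\pi$ is defined on the \emph{maximal} tensor product, not merely on $A\otimes_{\mathrm{alg}}B$; but this is automatic, since $\pi$ is a representation of the $*$-algebra $A\otimes_{\mathrm{alg}}B$ and so, by the very definition\eq{maxdef} of $||\cdot||_{\max}$, satisfies $||\pi(x)||\le||x||_{\max}$ and extends by continuity. For completeness I would also point to the alternative references~\cite[IV.4.23(ii)]{Tak} and~\cite[12.8]{Paul} cited in the surrounding text, which obtain the same statement by other routes.
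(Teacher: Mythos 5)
Your proposal is correct and follows essentially the same route as the paper: apply the double Stinespring theorem~\ref{spthm2} to obtain the dilation $\pi$ on $A\otimes_{\max}B$ (the universal property~\ref{maxtens} being what puts $\pi$ on the maximal tensor product), and then observe that the compression of a $*$-homomorphism as in\eq{compression} is ucp. The paper's proof is just a terser statement of exactly these ingredients, so there is nothing to add.
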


\begin{proof}
Upon remembering that a compression like\eq{compression} is always ucp, this is a direct consequence of the double Stinespring dilation theorem~\ref{spthm2} and the universal property of proposition~\ref{maxtens}.
\end{proof}

\paragraph{Comparison between $\otimes_{\min}$ and $\otimes_{\max}$.} In general, the maximal and the minimal tensor product of two $C^*$-algebras are different: in general, the norm of an ``entangled'' operator $\sum_{i=1}^n a_i\otimes b_i$ (with $n>1$) depends on the choice of cross norm on $A\otimes_{\mathrm{alg}}B$, with $||\cdot||_{\min}$ and $||\cdot||_{\max}$ being the two extreme cases.

Since $||\cdot||_{\min}\leq||\cdot||_{\max}$, or alternatively by the universal property of $A\otimes_{\max}B$ from proposition~\ref{maxtens}, there is a natural comparison map
\beqn
\label{natcomp}
A\otimes_{\max}B\twoheadrightarrow A\otimes_{\min}B
\eeqn
which is a $*$-homomorphism. The comparison map is necessarily surjective since the image of a $*$-ho\-mo\-mor\-phism is a sub-$C^*$-algebra of the codomain, and in this case the dense subalgebra $A\otimes_{\mathrm{alg}}B\subseteq A\otimes_{\max}B$ lies in the image of the $*$-homomorphism. Therefore, the image of the $*$-homomorphism is the whole codomain $A\otimes_{\max}B$.

The following example where $A\otimes_{\min}B\neq A\otimes_{\max}B$ is a modified version of~\cite[11.3.14]{KR2}, the original proof of which appeared in~\cite{Takpap}.

\begin{ex}
\label{minmax}
Let $G=\mathbb{F}_2$ be the free group on two generators $g_1,g_2$. The Hilbert space $\ell^2(G)$ is defined to be the vector space of all square-summable $\C$-valued functions on $G$ equipped with the usual componentwise scalar product. It has an orthonormal basis given in terms of the set $\{\delta_g,\: g\in G\}$ of indicator functions
\beq
\delta_g(h)=\left\{\begin{array}{cl}1& \textrm{for }\:h=g,\\ 0 & \textrm{for }\:h\neq g.\end{array}\right.
\eeq
Now every group element $x\in G$ acts unitarily on $\ell^2(G)$ via left multiplicaton $L_x$ and via right multiplication $R_x$,
\begin{align}
\begin{split}
\label{LR}
&L_x:\delta_g\mapsto\delta_{xg}\\
&R_x:\delta_g\mapsto\delta_{gx^{-1}}
\end{split}
\end{align}
where both operators are defined in terms of their action on basis elements. Since $L_{xy}=L_xL_y$ and $R_{xy}=R_xR_y$, we have two unitary representations of $G$ on $\ell^2(G)$, known as the \emph{left regular representation} and \emph{right regular representation}, respectively. The left regular representation $\{L_x,\: x\in G\}$ generates a $C^*$-algebra, called the \emph{reduced group $C^*$-algebra} $C^*_r(G)$. Similarly, the right regular representation $\{R_x,\: x\in G\}$ generates a $C^*$-algebra which is isomorphic to $C^*_r(G)$; this isomorphism follows from the fact that the left and right regular representation are unitarily equivalent via the unitary map defined by $\delta_g\mapsto\delta_{g^{-1}}$. Until here, this has been the standard construction of $C^*_r(G)$.

Now we can note that these two copies of $C^*_r(G)$ commute by construction,
\beq
L_xR_{y}=R_{y}L_x\qquad\forall\: x\in G,\:y\in G.
\eeq
Therefore, these two copies define a representation $\pi$ of $C_r^*(G)\otimes_{\max} C_r^*(G)$. Due to the way it has been constructed, this representation is also known as the \emph{biregular representation}.

Consider now the self-adjoint operator
\beq
\Delta\equiv L_{g_1}\otimes L_{g_1}+L_{g_1}^*\otimes L_{g_1}^*+L_{g_2}\otimes L_{g_2}+L_{g_2}^*\otimes L_{g_2}^*
\eeq
as an element of $C_r^*(G)\otimes_{\mathrm{alg}} C_r^*(G)$. In the biregular representation, this element has the form
\beq
\pi(\Delta)=L_{g_1}R_{g_1}+L_{g_1}^*R_{g_1}^*+L_{g_2}R_{g_2}+L_{g_2}^*R_{g_2}^*
\eeq
The definition\eq{LR} implies that $\pi(\Delta)\delta_{e}=4\delta_{e}$, where $e\in G$ is the neutral element, and hence $||\pi(\Delta)||\geq 4$. But since $\Delta$ is a sum of $4$ unitaries, this lower bound has to be tight. Since the biregular representation is a representation of $C_r^*(G)\otimes_{\max}C_r^*(G)$, we can conclude that
\beqn
\label{Amax}
||\Delta||_{\max}=4\qquad\textrm{in}\qquad C_r^*(G)\otimes_{\max}C_r^*(G).
\eeqn

On the other hand, we can also consider $\Delta$ as an element of $C_r^*(G)\otimes_{\min}C_r^*(G)=C_r^*(G\times G)$. (This equality holds since two tensor copies of the right regular representation of $G$ give the right regular representation of $G\times G$.) Under the diagonal homomorphism $G\ra G\times G$, which induces an embedding $C_r(G)\ra C_r^*(G\times G)$, the element $\Delta$ is the image of
\beq
\widehat{\Delta}=L_{g_1}+L_{g_1}^*+L_{g_2}+L_{g_2}^*
\eeq
and therefore $||\Delta||_{\min}=||\widehat{\Delta}||$. But now a direct calculation by the methods of~\cite{PP} shows that $||\widehat{\Delta}||=\sqrt{12}<4$, which gives the conclusion upon comparison with\eq{Amax}. (As a curiosity, note that $\widehat{\Delta}/4$ is the generator of the random walk on the Cayley graph of $\F_2$. Related to this, $\mathbbm{1}-\widehat{\Delta}/4$ is the Cayley graph's Laplace operator, which explains our funny notation.)

More generally, one can show that a similar phenomenon occurs for any group $G$ which is not amenable; see~\cite{Lance},~\cite{AD} for details.
\end{ex}


Deciding whether $A\otimes_{\min}B$ and $A\otimes_{\max}B$ agree for given $A$ and $B$ is often a very difficult problem. There is a useful criterion for equality due to Pisier~\cite[p.~6]{Pis2}. This criterion starts from a given set of unitaries $\{u_i\}\subseteq A$ which generates $A$ as a $C^*$-algebra, and a set of unitaries $\{v_j\}\subseteq B$ which generates $B$ as a $C^*$-algebra. Pisier then considers the vector space
\beq
S\equiv\mathrm{lin}_\C\left\{\mathbbm{1}\otimes\mathbbm{1},u_i\otimes\mathbbm{1},u_i^*\otimes\mathbbm{1},\mathbbm{1}\otimes v_j,\mathbbm{1}\otimes v_j^*\right\}\subseteq A\otimes_{\mathrm{alg}}B.
\eeq
The space $M_n(S)$ of $n\times n$-matrices over $S$ inherits a norm $||\cdot||_{\min}$ from its embedding into $M_n(A\otimes_{\min}B)$ and another norm $||\cdot||_{\max}$ from its embedding into $M_n(A\otimes_{\max}B)$. It is clear that the latter dominates the former,
\beq
||x||_{\min}\leq ||x||_{\max}\quad \forall x\in M_n(S).
\eeq
Possibly surprisingly, if these two norms coincide for all $n\in\N$, then the two tensor products coincide:

\begin{prop}[{\cite[p.~6]{Pis2}}]
\label{pisierobs}
In this situation, $A\otimes_{\max}B=A\otimes_{\min}B$ if and only if $||x||_{\min}=||x||_{\max}$ for all $x\in M_n(S)$ with $x^*=x$ and all $n\in\N$.
\end{prop}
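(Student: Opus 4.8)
The plan is to prove the nontrivial implication ($\Leftarrow$); the direction ($\Rightarrow$) is immediate, since if $A\otimes_{\min}B=A\otimes_{\max}B$ then $\|\cdot\|_{\min}$ and $\|\cdot\|_{\max}$ literally agree on $A\otimes_{\mathrm{alg}}B$, hence on every $M_n(S)$. So assume $\|x\|_{\min}=\|x\|_{\max}$ for all self-adjoint $x\in M_n(S)$ and all $n$. The first step is to upgrade this to a statement about the operator system structure of $S$: since $S$ is unital and self-adjoint, applying the hypothesis to the self-adjoint element $\bigl(\begin{smallmatrix}0 & x\\ x^{*} & 0\end{smallmatrix}\bigr)\in M_{2n}(S)$ gives $\|x\|_{\min}=\|x\|_{\max}$ for \emph{all} $x\in M_n(S)$, so that the formal identity $S_{\min}\ra S_{\max}$ is a unital complete isometry. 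A unital complete isometry between operator systems is automatically a complete order isomorphism (equivalently, equality of the self-adjoint norms at every matrix level forces the positive cones of $M_n(S_{\min})$ and $M_n(S_{\max})$ to coincide, via the usual characterization $y\geq 0 \iff y=y^{*}$ and $\|\,\|y\|\cdot\mathbbm{1}-y\|\leq \|y\|$). Call the resulting complete order isomorphism $\iota:S_{\min}\ra S_{\max}$; note $\iota$ and $\iota^{-1}$ are ucp.

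Next I would fix an arbitrary pair of representations $\pi:A\ra\B(\H)$ and $\sigma:B\ra\B(\H)$ with commuting ranges, and aim to show $\big\|\sum_k\pi(a_k)\sigma(b_k)\big\|\leq\big\|\sum_k a_k\otimes b_k\big\|_{\min}$ for every $z=\sum_k a_k\otimes b_k\in A\otimes_{\mathrm{alg}}B$; taking the supremum over all such $(\pi,\sigma)$ then yields $\|z\|_{\max}\leq\|z\|_{\min}$ by the very definition of $\|\cdot\|_{\max}$, which together with the trivial reverse inequality proves the Proposition. By the universal property of the maximal tensor product (proposition~\ref{maxtens}), $(\pi,\sigma)$ induces a $*$-homomorphism $A\otimes_{\max}B\ra\B(\H)$; its restriction to the operator system $S_{\max}$ is ucp, and precomposing with $\iota$ produces a ucp map $\phi:S_{\min}\ra\B(\H)$ which sends $u_i\otimes\mathbbm{1}\mapsto\pi(u_i)$, $\mathbbm{1}\otimes v_j\mapsto\sigma(v_j)$, and the adjoints accordingly. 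Now the Arveson extension theorem~\ref{arvext}, applied to $S_{\min}\subseteq A\otimes_{\min}B$, extends $\phi$ to a ucp map $\Phi:A\otimes_{\min}B\ra\B(\H)$.

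The heart of the argument is then that $\Phi$ is forced to be a $*$-homomorphism. Each generator $u_i\otimes\mathbbm{1}$ is a unitary of $A\otimes_{\min}B$ and $\Phi(u_i\otimes\mathbbm{1})=\pi(u_i)$ is a unitary of $\B(\H)$, so $\Phi\big((u_i\otimes\mathbbm{1})^{*}(u_i\otimes\mathbbm{1})\big)=\mathbbm{1}=\Phi(u_i\otimes\mathbbm{1})^{*}\Phi(u_i\otimes\mathbbm{1})$ and likewise with the factors reversed; by the multiplicative-domain criterion (proposition~\ref{multdomain}) this puts $u_i\otimes\mathbbm{1}$ in the multiplicative domain of $\Phi$, and the same applies to $u_i^{*}\otimes\mathbbm{1}$, $\mathbbm{1}\otimes v_j$ and $\mathbbm{1}\otimes v_j^{*}$. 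Since the multiplicative domain is a $C^{*}$-subalgebra (again proposition~\ref{multdomain}) and these elements generate $A\otimes_{\min}B$ as a $C^{*}$-algebra, the multiplicative domain is all of $A\otimes_{\min}B$, so $\Phi$ is a $*$-homomorphism. In particular $\Phi$ is contractive and, by multiplicativity together with the known values on generators, $\Phi(a\otimes b)=\pi(a)\sigma(b)$ for all $a\in A$, $b\in B$; hence $\big\|\sum_k\pi(a_k)\sigma(b_k)\big\|=\|\Phi(z)\|\leq\|z\|_{\min}$, completing the proof.

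I expect the main obstacle to be exactly this last step, and it is also what dictates the form of the hypothesis: the whole scheme collapses unless $S$ is spanned by \emph{unitary} generators, since that is what forces $\Phi$ to be multiplicative on them through the multiplicative-domain criterion. (This is the reason that, in the intended application to theorem~\ref{QWEPtoSDTP}, one passes to the Fourier-transformed unitaries $\sum_a e^{2\pi i a/m}e^x_a$ of remark~\ref{fourier} rather than to the projections $e^x_a$ themselves.) A secondary, essentially routine point is the passage from "equality of self-adjoint norms at all matrix levels" to a genuine complete order isomorphism $S_{\min}\cong S_{\max}$, which is needed so that $\phi=(\text{restriction})\circ\iota$ is ucp as a map out of $S_{\min}$ and so that Arveson's theorem applies.
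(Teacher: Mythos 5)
Your proof is correct and uses essentially the same mechanism as the paper's: the norm hypothesis on the matrix levels of $S$ yields a ucp map out of $S_{\min}$ (the paper gets this directly from proposition~\ref{ucpcon}, you via the standard $2\times 2$ self-adjointification and a complete order isomorphism), which is then extended by Arveson's theorem~\ref{arvext} and shown to be multiplicative by the multiplicative-domain argument of proposition~\ref{multdomain} applied to the unitary generators. The only difference is packaging: the paper builds a single ucp map $A\otimes_{\min}B\ra A\otimes_{\max}B$ that is the identity on $S$ and exhibits it as an inverse to the canonical surjection\eq{natcomp}, whereas you run the argument once per pair of commuting representations and take a supremum, which amounts to the same thing.
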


\begin{proof}
The ``only if'' part is clear, while the ``if'' part is nontrivial. By proposition~\ref{ucpcon} and Arveson's extension theorem~\ref{arvext}, the assumption guarantees the existence of a ucp map
\beq
\Psi:A\otimes_{\min}B\lra A\otimes_{\max}B
\eeq
which is the identity on $S$. By this property, all the unitaries $u_i\otimes\mathbbm{1}$ and $\mathbbm{1}\otimes v_j$ lie in the multiplicative domain of $\Psi$ in the sense of proposition~\ref{multdomain}. But since these unitaries generate $A\otimes_{\min}B$ as a $C^*$-algebra, proposition~\ref{multdomain} shows that $\Psi$ is in fact a $*$-homomorphism. By construction, it is inverse to\eq{natcomp}, so that\eq{natcomp} is actually an isomorphism.
\end{proof}

The potential difference between $A\otimes_{\min} B$ and $A\otimes_{\max} B$ gets nicely reflected in the state spaces of these two $C^*$-algebras:

\begin{prop}
\label{statecomp}
The natural projection
\beqn
\label{natcomp2}
A\otimes_{\max}B\twoheadrightarrow A\otimes_{\min}B
\eeqn
induces an inclusion of state spaces
\beqn
\label{stateface}
\mathscr{S}(A\otimes_{\min}B)\hookrightarrow \mathscr{S}(A\otimes_{\max}B).
\eeqn
\begin{enumerate}
\item\label{sca} Under this inclusion, the convex set $\mathscr{S}(A\otimes_{\min}B)$ becomes a closed face of $\mathscr{S}(A\otimes_{\max}B)$.
\item\label{scb} $A\otimes_{\max}B\neq A\otimes_{\min}B$ if and only if there is a state $\rho\in\mathscr{S}(A\otimes_{\max}B)\setminus\mathscr{S}(A\otimes_{\min}B)$.
\item\label{scc} Let $G$ be a finite group acting by $*$-(anti-)automorphisms on $A\otimes_{\mathrm{alg}}B$. Then $A\otimes_{\max}B\neq A\otimes_{\min}B$ if and only if there is a $G$-invariant state $\rho\in\mathscr{S}(A\otimes_{\max}B)\setminus\mathscr{S}(A\otimes_{\min}B)$.
\end{enumerate}
\end{prop}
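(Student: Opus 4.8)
Write $q\colon A\otimes_{\max}B\twoheadrightarrow A\otimes_{\min}B$ for the canonical comparison $*$-homomorphism~\ref{natcomp2}, and let $J\equiv\ker q$, a closed two-sided ideal of $A\otimes_{\max}B$. The inclusion~\ref{stateface} is just precomposition with $q$: since $q$ is surjective, $\rho\mapsto\rho\circ q$ is injective on states, and its image is precisely the set $F$ of states on $A\otimes_{\max}B$ annihilating $J$ (a state annihilating $J$ descends to a unital functional of norm $1$ on $(A\otimes_{\max}B)/J\cong A\otimes_{\min}B$, hence to a state, and conversely). So the whole proposition reduces to studying $F=\{\rho\in\mathscr{S}(A\otimes_{\max}B):\rho|_J=0\}$.

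For part~\ref{sca}, closedness is immediate because $F=\bigcap_{j\in J}\{\rho:\rho(j)=0\}$ is an intersection of weak-$*$ closed hyperplanes. For the face property: if $\rho=\lambda\rho_1+(1-\lambda)\rho_2$ with $\rho_1,\rho_2\in\mathscr{S}(A\otimes_{\max}B)$, $\lambda\in(0,1)$, and $\rho\in F$, then for every positive $j\in J$ we have $\lambda\rho_1(j)+(1-\lambda)\rho_2(j)=0$ with both summands $\geq0$, forcing $\rho_1(j)=\rho_2(j)=0$; since $J$ is a $C^*$-subalgebra (being a closed ideal) it is the linear span of its positive elements, so $\rho_1,\rho_2$ annihilate $J$ and lie in $F$.

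Part~\ref{scb} is then essentially the tautology ``$q$ is an isomorphism iff $J=0$'' (cf.\ the discussion around~\ref{natcomp}: $A\otimes_{\max}B=A\otimes_{\min}B$ means exactly that~\ref{natcomp2} is injective). If $J=0$ then $F=\mathscr{S}(A\otimes_{\max}B)$ and there is no separating state. If $J\neq0$, pick $j\in J$, $j\neq0$; then $j^*j\in J$ is positive with $\|j^*j\|=\|j\|^2>0$, so some state $\rho$ of $A\otimes_{\max}B$ satisfies $\rho(j^*j)=\|j^*j\|>0$, hence $\rho\notin F$. This gives the equivalence in~\ref{scb}.

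For part~\ref{scc}, the first point is that every $*$-automorphism and every $*$-anti-automorphism $\alpha$ of $A\otimes_{\mathrm{alg}}B$ is isometric for both $\|\cdot\|_{\min}$ and $\|\cdot\|_{\max}$: for $\|\cdot\|_{\min}$ because $x\mapsto\|\alpha(x)\|_{\min}$ is again a $C^*$-norm (using $\alpha(x^*x)=\alpha(x)\alpha(x)^*$ in the anti-case) and $\|\cdot\|_{\min}$ is the smallest one by Takesaki's theorem, and for $\|\cdot\|_{\max}$ directly from~\ref{maxdef} (an anti-automorphism sends a representation $\pi$ of $A\otimes_{\mathrm{alg}}B$ to the genuine $*$-representation $x\mapsto\pi(\alpha(x^*))$ on the same Hilbert space). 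Hence the $G$-action extends to $*$-(anti-)automorphisms of both $A\otimes_{\max}B$ and $A\otimes_{\min}B$, intertwined by $q$; in particular $J$, and therefore $F$, is $G$-invariant. The ``if'' direction of~\ref{scc} is then trivial via~\ref{scb}. For ``only if'', take $\rho_0\in\mathscr{S}(A\otimes_{\max}B)\setminus F$ by~\ref{scb} and set $\rho\equiv\frac1{|G|}\sum_{g\in G}\rho_0\circ g$; each $\rho_0\circ g$ is a state (a $*$-anti-automorphism is unital and positive since $\alpha(x^*x)=\alpha(x)\alpha(x)^*\geq0$), so $\rho$ is a $G$-invariant state. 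If $\rho$ lay in $F$, then since $F$ is a \emph{face} by~\ref{sca} and $\rho$ is a convex combination with strictly positive weights of the $\rho_0\circ g$, each $\rho_0\circ g$ would lie in $F$; taking $g=e$ contradicts $\rho_0\notin F$. So $\rho$ is the desired $G$-invariant state in $\mathscr{S}(A\otimes_{\max}B)\setminus\mathscr{S}(A\otimes_{\min}B)$.

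The passage to the quotient and the hyperplane description of $F$ are routine; the step requiring the most care is the equivariance bookkeeping in~\ref{scc}, namely verifying that automorphisms \emph{and} anti-automorphisms extend to both completions compatibly with $q$, so that the averaging takes place inside $\mathscr{S}(A\otimes_{\max}B)$ and the face structure of~\ref{sca} can be invoked. I expect that to be the main (and only mild) obstacle.
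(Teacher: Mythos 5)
Your proof follows essentially the same route as the paper's: identify the image of the state-space inclusion with the states annihilating the kernel ideal $J$, deduce closedness and the face property from positivity and the fact that $J$ is spanned by its positive elements, get part~\ref{scb} by picking a state not vanishing on $J$, and handle part~\ref{scc} by averaging over $G$ and invoking the face property from part~\ref{sca}. All of that is correct and matches the paper (which in fact writes even less detail for the equivariance bookkeeping than you do).

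One small slip in your equivariance check: for a $*$-anti-automorphism $\alpha$, the map $x\mapsto\pi(\alpha(x^*))$ is \emph{conjugate}-linear, so it is not a $*$-representation and cannot be fed into the supremum defining $\|\cdot\|_{\max}$. The claim it is meant to justify is still true; the standard fix is either to pass to the conjugate Hilbert space (i.e.\ use $x\mapsto\overline{\pi(\alpha(x^*))}$, the transpose, which is linear, multiplicative and $*$-preserving with the same norm), or to argue as you did for $\|\cdot\|_{\min}$: $x\mapsto\|\alpha(x)\|_{\max}$ is again a $C^*$-norm on $A\otimes_{\mathrm{alg}}B$, hence dominated by $\|\cdot\|_{\max}$ by maximality, and applying the same to $\alpha^{-1}$ gives equality.
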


\begin{proof}
\begin{enumerate}
\item[\ref{sca}]
Clearly if two states differ on $A\otimes_{\min}B$, they also differ on $A\otimes_{\max}B$ by surjectivity of\eq{natcomp2}, so it is rather trivial that\eq{stateface} is injective. Since the state space of a $C^*$-algebra is compact in the weak $*$-topology, continuity of\eq{stateface} shows that its image is closed. It remains to check that its image is a face. To this end, note that $\mathscr{S}(A\otimes_{\min}B)$ consists of exactly those elements of $\mathscr{S}(A\otimes_{\max}B)$ which vanish on the kernel of\eq{natcomp2}; call this kernel $J$. Then suppose that
\beq
\rho=\lambda \rho_0+(1-\lambda)\rho_1
\eeq
is a state $\rho\in\mathscr{S}(A\otimes_{\min}B)$ which is a convex combination of states $\rho_0,\rho_1\in\mathscr{S}(A\otimes_{\max}B)$ with $\lambda\in(0,1)$. Let $j\in J$ be a positive element. Then $\rho(j)=0$, and hence positivity show that $\rho_0(j)=0=\rho_1(j)$ as well. But now since all elements of $J$ are linear combinations of positives, it follows that $\rho_0$ and $\rho_1$ vanish on all of $J$, so that both $\rho_0$ and $\rho_1$ lie in $\mathscr{S}(A\otimes_{\min}B)$.
\item[\ref{scb}] While the ``if'' direction is trivial, the ``only if'' direction is also immediate by choosing any state $\rho\in\mathscr{S}(A\otimes_{\max}B)$ which does not identically vanish on the ideal $J$.
\item[\ref{scc}] Only the ``only if'' direction is nontrivial. Starting with some state $\rho_0\in\mathscr{S}(A\otimes_{\max}B)\setminus\mathscr{S}(A\otimes_{\min}B)$, our goal is to turn it into a $G$-invariant state. Since $G$ acts by (anti-)automorphisms on both $A\otimes_{\min}B$ and $A\otimes_{\max}B$, the translated state $\rho_0(g(\cdot))$ is also not in $\mathscr{S}(A\otimes_{\min}B)$ for any $g\in G$. Then since $\mathscr{S}(A\otimes_{\min}B)$ is a face of $\mathscr{S}(A\otimes_{\max}B)$ by part (a), the averaged state
\beq
\rho(\cdot)\equiv\frac{1}{|G|}\sum_{g\in G}\rho_0(g(\cdot))
\eeq
is also not in $\mathscr{S}(A\otimes_{\min}B)$. It is $G$-invariant by construction.
\end{enumerate}
\end{proof}

Another criterion for the comparison between $A\otimes_{\min}B$ and $A\otimes_{\max}B$ is the following:

\begin{prop}
\label{unbounded}
$A\otimes_{\max}B=A\otimes_{\min}B$ if and only if there is some $\lambda\in\R$ with
\beq
||x||_{\max}\leq\lambda\cdot||x||_{\min}\quad\forall x\in A\otimes_{\mathrm{alg}}B.
\eeq
\end{prop}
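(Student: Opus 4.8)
The easy direction is immediate: if $A\otimes_{\max}B=A\otimes_{\min}B$, meaning the maximal and minimal $C^*$-norms on $A\otimes_{\mathrm{alg}}B$ coincide, then $\lambda=1$ does the job. So the content lies entirely in the converse, and the plan is to feed the norm estimate into the comparison $*$-homomorphism $\pi\colon A\otimes_{\max}B\twoheadrightarrow A\otimes_{\min}B$ of \eq{natcomp} and deduce that $\pi$ is injective, hence an isomorphism.

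First I would record that $\pi$ restricts to the identity on the dense subalgebra $A\otimes_{\mathrm{alg}}B$, so that $\|\pi(x)\|_{\min}=\|x\|_{\min}$ for every $x$ there; the hypothesis then reads precisely $\|x\|_{\max}\le\lambda\,\|\pi(x)\|_{\min}$ for all $x\in A\otimes_{\mathrm{alg}}B$. Next I would promote this inequality to all of $A\otimes_{\max}B$ by a density argument: both $x\mapsto\|x\|_{\max}$ and $x\mapsto\|\pi(x)\|_{\min}$ are continuous with respect to $\|\cdot\|_{\max}$, the latter because any $*$-homomorphism of $C^*$-algebras is norm-nonincreasing, so the estimate passes to limits of elements of $A\otimes_{\mathrm{alg}}B$. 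At that point $\pi$ is bounded below by $\lambda^{-1}$, hence injective; being also surjective (as noted in the text right after \eq{natcomp}) and a $*$-homomorphism, it is a $*$-isomorphism, and $*$-isomorphisms of $C^*$-algebras are isometric. Unwinding this back on $A\otimes_{\mathrm{alg}}B$ gives $\|x\|_{\max}=\|\pi(x)\|_{\min}=\|x\|_{\min}$, i.e.\ the two tensor products agree.

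The only nontrivial ingredient is the standard fact that an injective $*$-homomorphism between $C^*$-algebras is automatically isometric (equivalently, a $*$-homomorphism is either isometric or has nontrivial kernel), which I would simply cite, e.g.\ from \cite{KR2}. One should be a little careful that the hypothesis is only given for elements of the algebraic tensor product, so the density step cannot be skipped: controlling $\pi$ only on a dense subspace does not by itself rule out a kernel. It is also worth pointing out why the obvious shortcut fails—although the norm of a normal element in a $C^*$-algebra is its spectral radius, the spectrum of an element of $A\otimes_{\mathrm{alg}}B$ genuinely differs between $A\otimes_{\max}B$ and $A\otimes_{\min}B$ in general, since $\pi$ is merely a quotient map and an element can be invertible downstairs without being invertible upstairs; hence one cannot conclude $\|\cdot\|_{\min}=\|\cdot\|_{\max}$ for normal, and thus for all, elements for free, and the boundedness hypothesis really is needed.
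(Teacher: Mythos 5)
Your proof is correct and follows essentially the same route as the paper: both arguments observe that the hypothesis makes $\|\cdot\|_{\min}$ and $\|\cdot\|_{\max}$ equivalent norms on $A\otimes_{\mathrm{alg}}B$, so the comparison $*$-homomorphism\eq{natcomp2} has trivial kernel and is therefore a $*$-isomorphism (hence isometric, so one can in fact take $\lambda=1$). Your version merely spells out the density step and the appeal to the fact that injective $*$-homomorphisms are isometric, which the paper leaves implicit.
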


\begin{proof}
Again, the ``only if'' part is trivial since one can just choose $\lambda=1$. For the ``if'' part, note that the assumption implies the equivalence of norms
\beq
||x||_{\min}\leq ||x||_{\max}\leq\lambda\cdot||x||_{\min},
\eeq
which means that the completions of $A\otimes_{\mathrm{alg}}B$ with respect to these norms actually coincide, as vector spaces. Hence the comparison map~\ref{natcomp2} has trivial kernel, and is therefore a $*$-isomorphism. (In particular, this implies that one can take $\lambda=1$.)
\end{proof}

In other words, if $A\otimes_{\min}B$ and $A\otimes_{\max}B$ are different, then they are so different that the set of quotients $\frac{||x||_{\max}}{||x||_{\min}}$ with $x\in A\otimes_{\mathrm{alg}}B$ is unbounded.

\section{Maximal group $C^*$-algebras}
\label{maxgroup}

We have seen that there are two canonical ways to define a tensor product of $C^*$-algebras: a concrete version $\otimes_{\min}$, where $A\otimes_{\min}B$ directly comes equipped with a faithful representation induced by respective faithful representations for $A$ and $B$; and an abstract version $\otimes_{\max}$, where $A\otimes_{\max}B$ can be defined in terms of a universal property.

We have encountered reduced group $C^*$-algebras in example~\ref{minmax}. These are defined directly in terms of a faithful representation. Which raises the question, does the dichotomy between concrete and abstract also pertain to the theory of group $C^*$-algebras? In other words, is there an abstract version of group $C^*$-algebras? The answer turns out to be affirmative, and this section introduces the corresponding concept of \emph{maximal group $C^*$-algebra}. 

In the following, we write $\mathcal{U}(A)$ for the group of unitary elements of a unital $C^*$-algebra $A$. All group we consider will automatically assumed to be discrete, i.e.~without any topology.

\begin{prop}
\label{maxgroupunivprop}
Let $G$ be any group. Then there is a $C^*$-algebra $C^*(G)$ together with a unitary representation $\eta_G:G\ra\mathcal{U}(C^*(G))$ having the following universal property: for any unitary representation $\pi:G\ra\mathcal{U}(\mathcal{A})$ on a unital $C^*$-algebra $A$, there is a unique $*$-homomorphism $\widehat{\pi}:C^*(G)\ra A$ such that the diagram
\beq
\xymatrix{G\ar[rr]^\pi\ar[d]_{\eta_G} && \mathcal{U}(\mathcal{A}) \ar@{^{(}->}[d] \\
 C^*(G)\ar@{-->}[rr]_{\exists ! \:\widehat{\pi}} && A }
\eeq
commutes.
\end{prop}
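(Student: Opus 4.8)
The plan is to construct $C^*(G)$ as a completion of the complex group algebra $\C[G]$ with respect to a universal $C^*$-seminorm, in direct analogy with the construction of the maximal tensor product in\eq{maxdef}. First I would form the group algebra $\C[G]$, the $*$-algebra of finitely supported $\C$-valued functions on $G$ with convolution product and the involution $\left(\sum_g c_g g\right)^* \equiv \sum_g \overline{c_g}\, g^{-1}$; this is a unital $*$-algebra with unit the neutral element $e\in G$, and $G$ embeds into its unitary group $\mathcal{U}(\C[G])$ via $g\mapsto g$. The key observation is that every unitary representation $\pi:G\to\mathcal{U}(A)$ on a unital $C^*$-algebra extends uniquely to a $*$-homomorphism $\C[G]\to A$ by linearity, since the relations defining $\C[G]$ (associativity, $g\cdot g^{-1} = e$, $g^* = g^{-1}$) are preserved.

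Next I would define, for $x\in\C[G]$,
\beq
\|x\|_{\max} \equiv \sup_\pi \|\pi(x)\|,
\eeq
where the supremum runs over all $*$-homomorphisms $\pi:\C[G]\to\B(\H)$ arising from unitary representations of $G$. One checks this supremum is finite: writing $x = \sum_g c_g g$ as a finite sum, $\|\pi(x)\| \leq \sum_g |c_g|\,\|\pi(g)\| = \sum_g |c_g|$ since each $\pi(g)$ is unitary, so $\|x\|_{\max} \leq \sum_g |c_g| < \infty$. The standard verification that $\|\cdot\|_{\max}$ satisfies the $C^*$-identity and submultiplicativity is routine (each $\pi$ is a $*$-homomorphism into a $C^*$-algebra). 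One must also note $\|\cdot\|_{\max}$ is a genuine norm, not merely a seminorm: this follows because the left regular representation $g\mapsto L_g$ on $\ell^2(G)$ is faithful on $\C[G]$ (as in example~\ref{minmax}), so $\|x\|_{\max} \geq \|L_x\| > 0$ for $x\neq 0$. Then I set $C^*(G)$ to be the completion of $\C[G]$ with respect to $\|\cdot\|_{\max}$, which is a $C^*$-algebra since the multiplication and involution extend continuously, and $\eta_G:G\to\mathcal{U}(C^*(G))$ is the composite $G\hookrightarrow\C[G]\hookrightarrow C^*(G)$.

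Finally I would verify the universal property. Given $\pi:G\to\mathcal{U}(A)$, embed $A\subseteq\B(\H)$ faithfully; then $\pi$ extends to a $*$-homomorphism $\C[G]\to A\subseteq\B(\H)$, which by the very definition of $\|\cdot\|_{\max}$ satisfies $\|\pi(x)\|\leq\|x\|_{\max}$, hence extends continuously to $\widehat\pi:C^*(G)\to A$. That $\widehat\pi$ is a $*$-homomorphism and makes the diagram commute is clear by continuity and density of $\C[G]$. Uniqueness follows because any two $*$-homomorphisms $C^*(G)\to A$ agreeing on $\eta_G(G)$ agree on the dense subalgebra generated by $\eta_G(G)$, which is the image of $\C[G]$, hence agree everywhere. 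The main obstacle — really the only subtle point — is justifying that $\|\cdot\|_{\max}$ is finite-valued and a norm rather than a seminorm; both are handled by the unitarity bound $\|\pi(g)\|=1$ and by invoking faithfulness of the left regular representation on $\C[G]$ respectively. Everything else is a mechanical transcription of the $\otimes_{\max}$ construction.
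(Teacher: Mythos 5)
Your proposal is correct and follows essentially the same route as the paper: form $\C[G]$, define the universal $C^*$-norm as a supremum over all unitary representations, check finiteness via unitarity, complete, and read off the universal property. The only additions beyond the paper's own argument are your explicit verification that the seminorm is a genuine norm (via faithfulness of the left regular representation) and the uniqueness check for $\widehat{\pi}$, both of which the paper leaves implicit.
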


\begin{proof}
We start to construct $C^*(G)$ by taking the ordinary group algebra $\C[G]$, which is defined as the vector space of all formal linear combinations of the elements of $G$ with multiplication induced from the multiplication on $G$ in the obvious way. Like in example~\ref{minmax}, we denote the trivial formal linear combination $1\cdot g$ for an element $g\in G$ also by $\delta_g$. In this notation, the multiplication in $\C[G]$ is defined by bilinearity and the equation
\beq
\delta_g\cdot\delta_h\equiv\delta_{gh}
\eeq
There is an involution $*$ on $\C[G]$ defined to be the antilinear extension of the assignment
\beq
\delta_g\mapsto \delta_g^*\equiv\delta_{g^{-1}}.
\eeq
In other words, $*$ inverts all group elements and acts by complex conjugation on the coefficients of any formal linear combination:
\beq
\left(\sum_g x_g\delta_g\right)^*=\sum_g \overline{x}_g\delta_{g^{-1}}.
\eeq
Then by definition, we have that $\delta_e$ is the unit of the $*$-algebra $\C[G]$, and the $\delta_g$ are all unitary:
\beq
\delta_g^*\delta_g=\delta_e=\delta_g\delta_g^*.
\eeq
Hence there is a unitary representation
\beq
\widetilde{\eta}_G:G\ra\mathcal{U}(\C[G]),\qquad g\mapsto\delta_g.
\eeq
and by construction, this essentially satisfies the desired universal property. The only problematic issue is that $\C[G]$ is not a $C^*$-algebra, but only a $*$-algebra---we have not even defined a norm yet! So given some element $\sum_g x_g\delta_g\in\C[G]$, what should its norm possibly be so that the desired universal property holds after completion with respect to the norm? Since every $*$-homomorphism like $\widehat{\pi}$ is automatically norm-nonincreasing, we need to have
\beq
\left|\left|\sum_g x_g\delta_g\right|\right|_{\C[G]}\geq\left|\left|\sum_g x_g\pi(g)\right|\right|_A\quad\forall\pi.
\eeq
This should motivate the definition
\beq
\left|\left|\sum_g x_g\delta_g\right|\right|_{\C[G]}\equiv\sup_{A,\,\pi:G\ra\mathcal{U}(A)}\:\left|\left|\sum_g x_g\pi(g)\right|\right|_A.
\eeq
It is not difficult to see that this defines a $C^*$-norm. Indeed, first of all the left-hand side is finite since $||\sum x_g\pi(g)||\leq\sum |x_g|$ by unitarity. Submultiplicativity of this norm and the $C^*$-condition are immediate. The completion of $\C[G]$ is therefore a $C^*$-algebra, the \emph{maximal group $C^*$-algebra} $C^*(G)$.

Now when $\pi:G\ra\mathcal{U}(\mathcal{H})$ is any unitary representation, we get an induced $*$-algebra representation
\beq
\C[G]\ra\mathcal{B}(\mathcal{H}),\qquad \sum_g x_g\delta_g\mapsto\sum_g x_g\pi(g).
\eeq
Then by the definition of $||\cdot||_{\C[G]}$, this representation is continuous with respect to this norm, and thererfore extends uniquely to the completion $C^*(G)$ to yield a $*$-homomorphism $\widehat{\pi}:C^*(G)\ra\mathcal{U}(\mathcal{H})$. The diagram above commutes since $\widehat{\pi}(\delta_g)=\pi(g)$.
\end{proof}

Since $C^*(G)$ has been defined as the completion of $\C[G]$ with respect to the maximal $C^*$-norm on $\C[G]$, it is known as the \emph{maximal group $C^*$-algebra} of $G$. Note that the results of example~\ref{minmax} do not apply here, since~\ref{minmax} is about \emph{reduced} group $C^*$-algebras, a significantly different case. In general, $C^*_r(G)$ and $C^*(G)$ are different; they coincide if and only if $G$ is amenable~\cite[7.3.9]{Pedersen}.

Another thing to note is that the universal property of the maximal group $C^*$-algebra implies that the operation of taking the maximal group $C^*$-algebra is functorial: given groups $G$ and $H$ together with a group homomorphism $f:G\ra H$, we have that $H$ is a subgroup of the unitary group $\mathcal{U}(C^*(H))$ by definition of $C^*(H)$. Hence, $f$ also defines a group homomorphism $G\ra\mathcal{U}(C^*(H))$. But by the universal property of $C^*(G)$, this induces a $*$-homomorphism $C^*(f):C^*(G)\ra C^*(H)$.

The definition of $C^*(G)$ has been quite abstract. Is it possible to make this more concrete? For example, what is an element of $C^*(G)$, actually? This question turns out to be rather subtle. If we ask this for the case of the ordinary group algebra $\C[G]$, then the answer is rather simple: elements of $\C[G]$ are precisely the formal linear combinations $x=\sum_{g\in G}x_g\delta_g$ with finite support. In other words, an element $x\in\C[G]$ is determined by its coefficients $x_g$. A straightforward calculation shows that the coefficient $x_g$ in turn is determined by $x$ via the formula
\beq
x_g=\langle\delta_g,\pi_L(x)\delta_e\rangle
\eeq
where $\pi_L:\C[G]\ra\B(\ell^2(G))$ is the left regular representation as defined in example~\ref{minmax}. This formula can also be used to define coefficients $x_g$ for any element $x\in C^*(G)$, since by the universal property of $C^*(G)$ the left regular representation extends to $\pi_L:C^*(G)\ra\B(\ell^2(G))$. As outlined in example~\ref{minmax}, the image of this representation is precisely the reduced group $C^*$-algebra $C^*_r(G)$. So if the associated projection map
\beq
\xymatrix{C^*(G)\ar@{->>}[r] & C^*_r(G)\subseteq\B(\ell^2(G))}
\eeq
is not injective, then there are nontrivial elements $x\in C^*(G)$ with trivial coefficients, $x_g=0\:\:\forall g\in G$. In conclusion: to an element of $C^*(G)$ one can assign coefficients on the elements of $G$, but these coefficients will determine the element only if $C^*(G)=C^*_r(G)$.

By what we already know, it is not difficult to see that there are groups $G$ with $C^*(G)\neq C^*_r(G)$. If $C^*(\F_2)\neq C^*_r(\F_2)$, we are done; so let us assume that $C^*(\F_2)=C^*_r(\F_2)$ were true. Then we would deduce from the results of example~\ref{minmax} and the upcoming lemma~\ref{groupmax},
\begin{align*}
C^*(\F_2\times\F_2)=C^*(\F_2)\otimes_{\max}C^*(\F_2)&=C^*_r(\F_2)\otimes_{\max}C^*_r(\F_2)\qquad\qquad\textrm{(hypothetically)}\\[.3cm]
&\neq C^*_r(\F_2)\otimes_{\min}C^*_r(\F_2)=C^*_r(\F_2\times\F_2)
\end{align*}
and we would have that $C^*(\F_2\times\F_2)\neq C^*_r(\F_2\times\F_2)$, so that $G=\F_2\times\F_2$ would be our example. In fact, it is known that $C^*(\F_2)\neq C^*_r(\F_2)$~\cite{Lance}, so this reasoning was purely hypothetical.

\begin{lem}
\label{groupmax}
\beqn
C^*(G_1\times G_2)=C^*(G_1)\otimes_{\max} C^*(G_2)
\eeqn
\end{lem}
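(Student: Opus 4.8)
The plan is to observe that both $C^*$-algebras in the statement are solutions of the same universal problem---classifying pairs of commuting unitary representations of $G_1$ and $G_2$---and then to make this precise by exhibiting mutually inverse $*$-homomorphisms between them, built from the universal properties of Propositions~\ref{maxgroupunivprop} and~\ref{maxtens}.

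First I would construct a $*$-homomorphism $\Psi:C^*(G_1)\otimes_{\max}C^*(G_2)\ra C^*(G_1\times G_2)$. The inclusions $\iota_1:G_1\ra G_1\times G_2$, $g\mapsto(g,e)$, and $\iota_2:G_2\ra G_1\times G_2$, $h\mapsto(e,h)$, are group homomorphisms, so by functoriality of the maximal group $C^*$-algebra they induce $*$-homomorphisms $C^*(\iota_1):C^*(G_1)\ra C^*(G_1\times G_2)$ and $C^*(\iota_2):C^*(G_2)\ra C^*(G_1\times G_2)$. Their ranges commute: on the generating unitaries one has $C^*(\iota_1)(\eta_{G_1}(g))\,C^*(\iota_2)(\eta_{G_2}(h))=\eta_{G_1\times G_2}(g,e)\,\eta_{G_1\times G_2}(e,h)=\eta_{G_1\times G_2}(g,h)=\eta_{G_1\times G_2}(e,h)\,\eta_{G_1\times G_2}(g,e)$, and since the relative commutant of the self-adjoint set $C^*(\iota_2)(\eta_{G_2}(G_2))$ is a norm-closed $*$-subalgebra containing the generators of $C^*(\iota_1)(C^*(G_1))$---and then the same argument once more with the roles reversed---the full ranges commute. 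Proposition~\ref{maxtens} then yields a unique $*$-homomorphism $\Psi$ with $\Psi(a\otimes b)=C^*(\iota_1)(a)\,C^*(\iota_2)(b)$.

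Conversely, I would construct $\Phi:C^*(G_1\times G_2)\ra C^*(G_1)\otimes_{\max}C^*(G_2)$ from the universal property of $C^*(G_1\times G_2)$. Composing $\eta_{G_1}$ with the canonical embedding $C^*(G_1)\ra C^*(G_1)\otimes_{\max}C^*(G_2)$, $a\mapsto a\otimes\mathbbm{1}$, and $\eta_{G_2}$ with $b\mapsto\mathbbm{1}\otimes b$, gives unitary representations $\rho_1,\rho_2$ of $G_1,G_2$ with commuting ranges by\eq{commtens}. Because the ranges commute, $(g,h)\mapsto\rho_1(g)\rho_2(h)$ is a unitary representation of $G_1\times G_2$---commutativity of the ranges is exactly what is needed to verify the homomorphism property, since $\rho_1(g)\rho_2(h)\rho_1(g')\rho_2(h')$ equals $\rho_1(gg')\rho_2(hh')$ precisely when $\rho_2(h)\rho_1(g')=\rho_1(g')\rho_2(h)$---so Proposition~\ref{maxgroupunivprop} produces a unique $*$-homomorphism $\Phi$ with $\Phi(\eta_{G_1\times G_2}(g,h))=\eta_{G_1}(g)\otimes\eta_{G_2}(h)$.

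It then remains to check $\Phi\circ\Psi=\id$ and $\Psi\circ\Phi=\id$. Both composites are $*$-homomorphisms, so it suffices to verify them on a generating set of each $C^*$-algebra. On $\eta_{G_1\times G_2}(g,h)$ one computes $\Psi(\Phi(\eta_{G_1\times G_2}(g,h)))=\Psi(\eta_{G_1}(g)\otimes\eta_{G_2}(h))=C^*(\iota_1)(\eta_{G_1}(g))\,C^*(\iota_2)(\eta_{G_2}(h))=\eta_{G_1\times G_2}(g,h)$, and symmetrically on the elementary tensors $\eta_{G_1}(g)\otimes\mathbbm{1}$ and $\mathbbm{1}\otimes\eta_{G_2}(h)$, whose linear span is dense in $C^*(G_1)\otimes_{\max}C^*(G_2)$. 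Hence $\Phi$ and $\Psi$ are mutually inverse $*$-isomorphisms. The only step I expect to require genuine care rather than bookkeeping is the passage from ``the images of the group elements commute'' to ``the images of the entire group $C^*$-algebras commute'': this uses that the relative commutant of a self-adjoint subset of a $C^*$-algebra is a norm-closed $*$-subalgebra, applied twice. Everything else is a direct application of the universal properties already established in the appendix.
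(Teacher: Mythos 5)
Your proof is correct and is essentially the paper's argument made explicit: the paper simply observes that both $C^*$-algebras represent the same functor (pairs of commuting unitary representations of $G_1$ and $G_2$, equivalently unitary representations of $G_1\times G_2$) and concludes, whereas you unfold this into the two universal-property-induced maps and verify on generators that they are mutually inverse. The extra care you take with the relative commutant to pass from commuting generators to commuting ranges is a correct and worthwhile detail that the paper leaves implicit.
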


\begin{proof}
The left-hand side has the universal property of extending any unitary representation $G_1\times G_2\lra A$ to a $*$-homomorphism $C^*(G_1\times G_2)\lra A$. On the other hand, the right-hand side has the universal property of extending any two commuting unitary representations $G_1\lra A$ and $G_2\lra A$ to a $*$-homomorphism $C^*(G_1)\otimes_{\max} C^*(G_2)\lra A$. Since a representation $G_1\times G_2\lra A$ is the same thing as a pair of commuting representations $G_1\lra A$ and $G_2\lra A$, the assertion follows.
\end{proof}

\paragraph{Induced representations.}

When $G$ is a discrete group and $H\subseteq G$ is any subgroup, then there is a natural and simple technique for constructing a unitary representation of $G$ from a unitary representation of $H$, known by the name \emph{induction of representations}~\cite[ch.~6]{Fol}. We will introduce this technique in the proof of the following proposition. 

\begin{prop}
\label{groupinc}
Let $G$ be a discrete group and $H\subseteq G$ any subgroup. Then the natural $*$-homomorphism $C^*(H)\ra C^*(G)$ is an inclusion, i.e.~$C^*(H)$ is a $C^*$-subalgebra of $C^*(G)$.
\end{prop}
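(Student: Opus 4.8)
The plan is to show that the functorially induced $*$-homomorphism $\iota: C^*(H)\ra C^*(G)$ — the one coming from the subgroup inclusion $H\hookrightarrow G$ via the universal property of proposition~\ref{maxgroupunivprop}, which sends $\delta_h\mapsto\delta_h$ — is isometric. Since any $*$-homomorphism between $C^*$-algebras is automatically norm-nonincreasing, once we know $\iota$ does not shrink norms it is an isometry, hence injective with closed range, so that it identifies $C^*(H)$ with a $C^*$-subalgebra of $C^*(G)$. As $\C[H]$ is dense in $C^*(H)$ and $\iota$ is the continuous extension of $\delta_h\mapsto\delta_h$, it suffices to prove $\|x\|_{C^*(H)}=\|\iota(x)\|_{C^*(G)}$ for every $x=\sum_h x_h\delta_h\in\C[H]$.

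First I would dispose of the easy inequality $\|\iota(x)\|_{C^*(G)}\leq\|x\|_{C^*(H)}$. Recalling from the proof of proposition~\ref{maxgroupunivprop} that $\|x\|_{C^*(H)}=\sup_\pi\big\|\sum_h x_h\pi(h)\big\|$ with the supremum over all unitary representations $\pi$ of $H$, and likewise $\|\iota(x)\|_{C^*(G)}=\sup_\sigma\big\|\sum_h x_h\sigma(h)\big\|$ over all unitary representations $\sigma$ of $G$, the inequality is immediate: every unitary representation $\sigma$ of $G$ restricts to a unitary representation $\sigma|_H$ of $H$, so the supremum defining the $C^*(G)$-norm runs over a subfamily of the representations relevant for the $C^*(H)$-norm.

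The substance of the proof is the reverse inequality, for which I would invoke \emph{induction of representations}. Given a unitary representation $\pi:H\ra\mathcal{U}(\mathcal{K})$, form the induced representation $\sigma\equiv\mathrm{Ind}_H^G\pi$ of $G$ on the Hilbert space $\mathcal{H}_\sigma$ of functions $\xi:G\ra\mathcal{K}$ satisfying $\xi(hg)=\pi(h)\xi(g)$ for all $h\in H$, $g\in G$, with $\sum_{Hg\in H\backslash G}\|\xi(g)\|^2<\infty$ (the summand depending only on the coset since $\pi$ is unitary), and with $G$ acting by right translation $(\sigma(g_0)\xi)(g)=\xi(gg_0)$; one checks directly that this is a well-defined unitary representation. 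The key point is that the functions supported on the subgroup $H$ form a closed subspace $\mathcal{K}_0\subseteq\mathcal{H}_\sigma$, isometrically isomorphic to $\mathcal{K}$ via $v\mapsto\xi_v$ where $\xi_v(h)=\pi(h)v$ for $h\in H$ and $\xi_v(g)=0$ for $g\notin H$; and since $H$ is a \emph{subgroup}, $\mathcal{K}_0$ is invariant under $\sigma(h_0)$ for every $h_0\in H$, with $\sigma(h_0)\xi_v=\xi_{\pi(h_0)v}$. Hence for $x\in\C[H]$ the operator $\sum_h x_h\sigma(h)$ leaves $\mathcal{K}_0$ invariant and restricts there, under the identification $\mathcal{K}_0\cong\mathcal{K}$, to $\sum_h x_h\pi(h)$; so $\big\|\sum_h x_h\sigma(h)\big\|\geq\big\|\sum_h x_h\pi(h)\big\|$. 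Taking the supremum over all $\pi$ yields $\|\iota(x)\|_{C^*(G)}\geq\|x\|_{C^*(H)}$, which together with the first step gives the desired equality.

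I expect the only real work to be the bookkeeping in the induced-representation step: fixing a consistent convention for cosets, verifying that right translation is well-defined and unitary on $\mathcal{H}_\sigma$, and checking the $H$-invariance of $\mathcal{K}_0$ (which is precisely where closure of $H$ under the group operation enters). None of this is deep, but the conventions must be kept straight. Alternatively, one could simply cite the standard theory of induced representations (e.g.~\cite[ch.~6]{Fol}) and extract the single fact that $\pi$ is contained as a subrepresentation in $(\mathrm{Ind}_H^G\pi)|_H$, which is all that the norm comparison requires.
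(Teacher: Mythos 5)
Your proposal is correct and follows essentially the same route as the paper: reduce to comparing the two sup-norms on $\C[H]$, observe that restriction of representations gives the easy inequality, and obtain the reverse inequality by inducing an arbitrary unitary representation of $H$ up to $G$ and locating the original representation as a subrepresentation of the restriction of the induced one. The only differences are cosmetic conventions (right cosets and right translation with square-summability built in, versus the paper's left cosets, left translation, and finite support followed by completion), and your explicit remark that one only needs containment as a subrepresentation, not a literal extension, is a slightly more careful phrasing of the same point.
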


\begin{proof}
First of all, this is indeed a nontrivial statement, since although there clearly is an inclusion of complex group algebras $\C[H]\subseteq\C[G]$, it is not a priori clear why the completion of $\C[H]$ to $C^*(H)$ should not be ``bigger'' than the completion of $\C[H]$ with respect to the norm induced by the inclusion into $\C[G]$. Since the first completion is with respect to the norm arising by maximizing over all unitary representations of $H$, while the second completion is with respect to the norm arising by maximizing over all unitary representations of $G$, it is enough to show that every unitary representation of $H$ can be extended to a unitary representation of $G$, and then the two norms on $\C[H]$ coincide. We follow~\cite[ch.~6]{Fol} in outlining how to achieve this such extensions by constructing induced representations. Let $\pi:H\ra\B(\H)$ be any unitary representation. Then consider the vector space of functions
\beqn
\label{indrep}
\H_0\equiv\left\{f:G\ra\H\:\bigg|\:\: \mathrm{supp}(f)_{G/H}\:\textrm{ is finite and}\quad f(gh)=\pi(h)^{-1}f(g)\quad\forall g\in G, h\in H\right\}.
\eeqn
Here, $\mathrm{supp}(f)_{G/H}$ stands for the set of left cosets in $G/H$ on which $f$ does not identically vanish. $\H_0$ carries a natural inner product given by
\beq
\langle f_1,f_2\rangle_{\H_0}\equiv\sum_{[g]\in G/H}\langle\: f_1(g),\: f_2(g)\:\rangle_{\H},
\eeq
where the sum is over all cosets $[g]\in G/H$ with $g$ being some representative of the coset $[g]$. This definition is independent of the choice of representatives $g\in [g]$ by the covariance condition $f(gh)=\pi(h)^{-1}f(g)$ together with unitarity of $\pi$. This inner product is obviously positive definite, so that $\H_0$ becomes a pre-Hilbert space.

Furthermore, $\H_0$ carries a natural unitary representation of $G$ given by the left translation operation on which any $g'\in G$ acts as
\beqn
\label{Gact}
(g'f)(g)\equiv f(g'^{-1}g),
\eeqn
since if the function $f$ satisfies the conditions in\eq{indrep}, then so does the function $g'f$. To each $\xi\in\H$ we can naturally associate an $f_\xi\in\H_0$ by setting
\beq
f_\xi(g)\equiv\left\{\begin{array}{cl}\pi(g)^{-1}\xi & \textrm{ if }g\in H\\ 0 & \textrm{ if }g\notin H  \end{array}\right.
\eeq 
which clearly fulfills the conditions in\eq{indrep}, is linear in $\xi$, and satisfies $\langle f_\xi,f_\xi\rangle_{\H_0}=\langle \xi,\xi\rangle_\H$. Hence we have a natural inclusion of $\H$ into $\H_0$. We claim that this inclusion is compatible with the action of $H$ on $\H_0$ via $\pi$ and on $\H$ via\eq{Gact}. Indeed,
\beq
(hf_\xi)(g)=f_\xi(h^{-1}g)=\left\{\begin{array}{cl}\pi(g)^{-1}\pi(h)\xi & \textrm{ if }g\in H\\ 0 & \textrm{ if }g\notin H  \end{array}\right.,\qquad f_{\pi(h)\xi}(g)=\left\{\begin{array}{cl}\pi(g)^{-1}\pi(h)\xi & \textrm{ if }g\in H\\ 0 & \textrm{ if }g\notin H  \end{array}\right.,
\eeq
which are the same element of $\H_0$. Hence the action of $G$ on $\H_0$ naturally extends the action of $H$ on $\H$. Clearly, the same still holds upon completing the pre-Hilbert space $\H_0$ to an actual Hilbert space.
\end{proof}

\begin{cor}
Suppose that $G_1$ and $G_2$ are groups and let $H_1\subseteq G_1$ and $H_2\subseteq G_2$ be any subgroups. Then,
\beq
C^*(G_1)\otimes_{\min}C^*(G_2)=C^*(G_1)\otimes_{\max}C^*(G_2)
\eeq
implies that
\beqn
\label{subgroupiso}
C^*(H_1)\otimes_{\min}C^*(H_2)=C^*(H_1)\otimes_{\max}C^*(H_2).
\eeqn
\end{cor}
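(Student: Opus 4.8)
The plan is to realize both $\|\cdot\|_{\min}$ and $\|\cdot\|_{\max}$ on $C^*(H_1)\otimes_{\mathrm{alg}}C^*(H_2)$ as the restrictions of the corresponding norms on $C^*(G_1)\otimes_{\mathrm{alg}}C^*(G_2)$, and then simply restrict the hypothesized equality of norms. By proposition~\ref{groupinc}, each $C^*(H_i)$ is a $C^*$-subalgebra of $C^*(G_i)$, so the tensor product of the two inclusions is an inclusion of $*$-algebras $C^*(H_1)\otimes_{\mathrm{alg}}C^*(H_2)\hookrightarrow C^*(G_1)\otimes_{\mathrm{alg}}C^*(G_2)$. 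Since $\C[H_i]$ is dense in $C^*(H_i)$, the $*$-algebra $C^*(H_1)\otimes_{\mathrm{alg}}C^*(H_2)$ is dense in both $C^*(H_1)\otimes_{\min}C^*(H_2)$ and $C^*(H_1)\otimes_{\max}C^*(H_2)$; hence it suffices to prove that the minimal and maximal tensor norms agree on this $*$-algebra.

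For the minimal norm this is routine: pick a faithful representation $C^*(G_i)\subseteq\B(\mathcal{K}_i)$ and restrict it to obtain a faithful representation of $C^*(H_i)$. Since the minimal tensor norm is independent of the chosen faithful representations (as noted after the formula~\eqref{minnorm}), the norm that $C^*(H_1)\otimes_{\mathrm{alg}}C^*(H_2)$ inherits as a subspace of $\B(\mathcal{K}_1\otimes\mathcal{K}_2)\supseteq C^*(G_1)\otimes_{\min}C^*(G_2)$ is exactly its own $\|\cdot\|_{\min}$, so the inclusion is isometric for $\|\cdot\|_{\min}$.

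For the maximal norm lies the only step with real content. By lemma~\ref{groupmax} we have $C^*(H_1)\otimes_{\max}C^*(H_2)\cong C^*(H_1\times H_2)$ and $C^*(G_1)\otimes_{\max}C^*(G_2)\cong C^*(G_1\times G_2)$, compatibly with the inclusion of algebraic tensor products above. Since $H_1\times H_2$ is a subgroup of $G_1\times G_2$, proposition~\ref{groupinc} applied to the product group shows that $C^*(H_1\times H_2)$ is a $C^*$-subalgebra of $C^*(G_1\times G_2)$, hence the inclusion is isometric; translated back through lemma~\ref{groupmax}, this says precisely that $C^*(H_1)\otimes_{\mathrm{alg}}C^*(H_2)\hookrightarrow C^*(G_1)\otimes_{\mathrm{alg}}C^*(G_2)$ is isometric for $\|\cdot\|_{\max}$. (Equivalently and more concretely: a pair of commuting unitary representations of $H_1$ and $H_2$ on a Hilbert space is the same datum as a unitary representation of $H_1\times H_2$, which by the induced-representation construction in the proof of proposition~\ref{groupinc} extends to a unitary representation of $G_1\times G_2$, i.e. to a pair of commuting representations of $G_1$ and $G_2$; so the supremum defining $\|x\|_{\max}$ does not grow when passing from $H_i$ to $G_i$.)

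Putting the pieces together: for every $x\in C^*(H_1)\otimes_{\mathrm{alg}}C^*(H_2)$ the hypothesis gives $\|x\|_{\min}=\|x\|_{\max}$ when $x$ is viewed inside $C^*(G_1)\otimes_{\mathrm{alg}}C^*(G_2)$, and by the two isometry statements the same equality then holds for the intrinsic minimal and maximal norms on $C^*(H_1)\otimes_{\mathrm{alg}}C^*(H_2)$; since this $*$-algebra is dense in both completions, we obtain~\eqref{subgroupiso}. One may also phrase the conclusion via proposition~\ref{unbounded}: the constant $\lambda=1$ works on the larger algebraic tensor product, hence a fortiori on the smaller one. I expect the maximal-norm compatibility---the combination of lemma~\ref{groupmax} with the subgroup-inclusion proposition~\ref{groupinc} applied to $G_1\times G_2$---to be the conceptual heart of the proof, while the minimal side and the density bookkeeping are straightforward.
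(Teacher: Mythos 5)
Your proof is correct and rests on exactly the same two ingredients as the paper's: lemma~\ref{groupmax} to identify the maximal tensor products with $C^*(H_1\times H_2)$ and $C^*(G_1\times G_2)$, and proposition~\ref{groupinc} applied to the subgroup $H_1\times H_2\subseteq G_1\times G_2$. The only organizational difference is that the paper packages the argument as a commutative square in which injectivity of $C^*(H_1\times H_2)\hookrightarrow C^*(G_1\times G_2)=C^*(G_1)\otimes_{\min}C^*(G_2)$ forces the canonical surjection $C^*(H_1)\otimes_{\max}C^*(H_2)\twoheadrightarrow C^*(H_1)\otimes_{\min}C^*(H_2)$ to be injective, which lets it skip your (correct, but not needed) verification that the inclusion is also isometric for $\|\cdot\|_{\min}$.
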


\begin{proof}
We apply lemma\eq{groupmax} and consider the diagram
\beq
\xymatrix{ C^*(H_1\times H_2)\ar@{->>}[rr]\ar@{^{(}->}[dd] && C^*(H_1)\otimes_{\min} C^*(H_2)\ar[dd] \\\\
C^*(G_1\times G_2)\ar@{=}[rr] && C^*(G_1)\otimes_{\min} C^*(G_2) }
\eeq
where the left vertical arrow is injective by proposition~\ref{groupinc}. Then by commutativity of the diagram, the upper horizontal arrow has to be injective as well. But since it is the canonical surjection from the maximal to the minimal tensor product\eq{natcomp}, it is therefore bijective, and the assertion\eq{subgroupiso} follows.
\end{proof}

\section{Formulations of Kirchberg's conjecture}
\label{appqwep}

One formulation of Kirchberg's QWEP conjecture is as follows:
\beqn
\label{qwep}
\boxed{\mathbf{QWEP}:\quad C^*(\F_2)\otimes_{\min}C^*(\F_2)\stackrel{?}{=}C^*(\F_2)\otimes_{\max}C^*(\F_2)}
\eeqn
Kirchberg has conjectured this in~\cite{Kir}, provided a long list of statements equivalent to this one, and also proved its equivalence to Connes' embedding problem, a notorious open question in the theory of von Neumann algebras (see~\cite{Con} for the original paper, or~\cite{Cap} for a recent review). In the main text, we need several (relatively simple) reformulations of\eq{qwep}, which will be developed in this section. 

In the following, $\Gamma$ denotes any group of the form
\beqn
\label{Gamma}
\Gamma=\underbrace{\Z_m\ast\ldots\ast\Z_m}_{k\textrm{ factors}}\qquad\textrm{with}\quad k,m\geq 2;\qquad\Gamma\neq\Z_2\ast\Z_2.
\eeqn
We begin with some lemmas. The first two are well-known group-theoretical statements included for the sake of completeness. They may provide a glimpse of how the methods of modern combinatorial and geometric group theory may be of high relevance to the QWEP conjecture.

The first lemma, commonly dubbed ``ping-pong lemma'' due to the way in which the group $G$ ``plays ping-pong'' with the elements of the set $X$, exists in many different variant. We use the one most convenient for our purpose.

\begin{lem}[Ping-pong lemma]
\label{ppl}
Let $l\geq 2$ and $G$ be a group with generators $g_1,\ldots,g_l$ acting on a nonempty set $X$. Suppose that there are disjoint subsets
\beq
X_1^-,\ldots,X_l^-,X_1^+,\ldots X_l^+\subseteq X
\eeq
such that
\begin{align}
\begin{split}
\label{ppc}
& g_i\left(X\setminus X_i^-\right)\subseteq X_i^+ \\
& g_i\left(X\setminus X_i^+\right)\subseteq X_i^-
\end{split}
\end{align}
for all $i=1,\ldots,l$. Then $G$ is free with generators $g_1,\ldots,g_l$.
\end{lem}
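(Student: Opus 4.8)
The plan is to prove freeness by showing that every nonempty reduced word in the generators $g_1,\ldots,g_l$ acts nontrivially on $X$; since $l\geq 2$ and $X\neq\emptyset$, this suffices. Recall that a reduced word is a product $w=g_{i_n}^{\pm}\cdots g_{i_1}^{\pm}$ in which no generator is immediately followed by its own inverse (we think of $g_i$ as having infinite order here, so the only forbidden cancellations are $g_ig_i^{-1}$ and $g_i^{-1}g_i$). The standard trick is to track which ``room'' $X_{i}^{\pm}$ a suitable starting point gets pushed into as we apply the letters of $w$ one at a time from the right.

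First I would set up the ping-pong argument. Fix a letter $g_{i_1}^{\epsilon_1}$ at the right end of $w$ and pick a base point $p\in X$ chosen to lie outside the ``wrong'' room, so that the first application of $g_{i_1}^{\epsilon_1}$ lands $p$ in a known room: by\eq{ppc}, $g_i(X\setminus X_i^-)\subseteq X_i^+$ and $g_i^{-1}(X\setminus X_i^+)\subseteq X_i^-$ (rewriting the second inclusion in\eq{ppc} with $g_i$ replaced by $g_i^{-1}$), so applying $g_i^{\pm 1}$ to anything not in the forbidden room puts us into $X_i^{\pm}$. The key inductive claim is: if $p$ avoids the forbidden room for the rightmost letter, then $w\cdot p\in X_{i_n}^{\epsilon_n}$, where $\epsilon_n\in\{+,-\}$ records the sign of the leftmost letter. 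The induction step uses that $X_{i_j}^{\epsilon_j}$ is disjoint from the forbidden room of the next letter $g_{i_{j+1}}^{\epsilon_{j+1}}$: because the word is reduced, either $i_{j+1}\neq i_j$ (and then all four rooms involved are disjoint by hypothesis), or $i_{j+1}=i_j$ with $\epsilon_{j+1}=\epsilon_j$ (and then the forbidden room for $g_{i_{j+1}}^{\epsilon_{j+1}}$ is $X_{i_j}^{-\epsilon_j}$, which is disjoint from $X_{i_j}^{\epsilon_j}$). In either case the image of $w'\cdot p$ under the next letter is again in a definite room, and the induction continues.

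To finish, I would argue that the base point $p$ can always be chosen appropriately. Since there are at least two generators, pick $j\neq i_1$ and any point of $X_j^+$ or $X_j^-$; such a point lies outside $X_{i_1}^-$ and outside $X_{i_1}^+$ by disjointness, hence certainly outside whichever of these is the forbidden room for the rightmost letter. (If some $X_j^{\pm}$ happens to be empty one simply replaces it; since all the listed rooms are disjoint subsets of the nonempty $X$, one can always find a point avoiding the single forbidden room — indeed $X$ itself minus one room is nonempty unless that room is all of $X$, which is impossible once $l\geq2$ forces at least two disjoint nonempty-or-not rooms plus their complement, but cleaner is to just pick a point in a room indexed by an index different from $i_1$, or fall back to an arbitrary point of $X$ and note it lies outside at least one of $X_{i_1}^+,X_{i_1}^-$.) Then by the claim $w\cdot p$ lies in $X_{i_n}^{\epsilon_n}$, which is a proper subset of $X$ and in particular does not contain $p$ if $p$ was chosen in a disjoint room; hence $w\cdot p\neq p$, so $w\neq e$ in $G$. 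Therefore the $g_i$ satisfy no nontrivial relation, i.e.\ $G$ is free on $g_1,\ldots,g_l$.

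The main obstacle is the bookkeeping in the base-point choice and the reduced-word case analysis: one must be careful that ``reduced'' is interpreted correctly for generators of infinite order (the factors $\Z_m$ in the intended application are finite, so in the actual use of this lemma the generators will have finite order and the notion of reduced word and the rooms $X_i^{\pm}$ will be adapted — but as stated here with two rooms per generator the lemma is the infinite-order version, and I would simply prove it as stated). Making the induction hypothesis precise enough that the disjointness of consecutive rooms is automatic, rather than hand-waved, is where care is needed; everything else is routine.
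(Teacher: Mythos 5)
Your argument is correct in substance, and it takes a genuinely different route from the paper: the paper disposes of the lemma in one line by citing the version in~\cite{LS} and checking its premise $g_i^2\neq e$, whereas you give the standard self-contained ping-pong induction. The core of your proof is sound: tracking which room $X_{i_j}^{\epsilon_j}$ the point occupies after each letter, and using reducedness to see that the current room is always disjoint from the forbidden room of the next letter (different index, or same index with same sign), is exactly the right bookkeeping, and it buys a proof that does not depend on an external reference.

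Two points need tightening. First, the inclusion $g_i^{-1}(X\setminus X_i^+)\subseteq X_i^-$ is not obtained by ``replacing $g_i$ by $g_i^{-1}$'' in the second line of\eq{ppc}; that is not a legitimate manipulation of a hypothesis. The correct observation is that, since $g_i$ acts bijectively, the \emph{first} line of\eq{ppc} is equivalent to $g_i^{-1}(X\setminus X_i^+)\subseteq X_i^-$ (rewrite it as $X\setminus X_i^+\subseteq g_i(X_i^-)$ by taking complements). This also exposes that the second line, read literally with $g_i$ rather than $g_i^{-1}$, makes the hypotheses contradictory: the two lines together force $X=X_i^-\cup X_i^+$ for every $i$, which is incompatible with pairwise disjointness of the rooms once $l\geq 2$ and $X\neq\emptyset$, so the lemma as printed is vacuous. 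Your reading (second condition about $g_i^{-1}$) is the intended one and is the one the application to $\Z_m\ast\ldots\ast\Z_m$ actually uses. Second, your base-point discussion is muddled: you must avoid \emph{two} rooms, namely $X_{i_1}^{-\epsilon_1}$ and $X_{i_n}^{\epsilon_n}$, and you need the room you pick $p$ from to be nonempty. Both issues have a clean resolution: the hypotheses together with $l\geq 2$ and $X\neq\emptyset$ force every room to be nonempty (if $X_i^+=\emptyset$ then $X\setminus X_i^-=\emptyset$, so $X=X_i^-$, which by disjointness empties all rooms of index $j\neq i$ and then forces $X=\emptyset$), and since there are $2l\geq 4$ pairwise disjoint nonempty rooms one can always place $p$ in a room distinct from the two to be avoided. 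With these repairs your proof is complete.
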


\begin{proof}
This follows for example from the version in~\cite[Prop.~12.2]{LS} by induction on $l$; the required premise $g_i^2\neq e$ holds since, by\eq{ppc}, $g_i(X_j^+)\subseteq X_i^+\:\forall i,j$, and all $X_i^-$ and $X_i^+$ are nonempty.
\end{proof}

The ping-pong lemma is of great usefulness in proving that certain subgroups of groups are free. We now directly apply this result in order to derive the second basic lemma.

\begin{lem}
\begin{enumerate}
\item\label{sga} Any $\Gamma$ as in\eq{Gamma} has a subgroup isomorphic to $\F_2$.
\item\label{sgb} For any $n\in\N$, the group $\F_2$ has a subgroup isomorphic to $\F_n$.
\end{enumerate}
\label{ubgroups}
\end{lem}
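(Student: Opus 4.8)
The plan is to prove both parts by constructing an explicit ping-pong configuration and invoking Lemma~\ref{ppl}. For part~\ref{sgb} this is the classical fact that $\F_2$ contains a copy of $\F_n$ for every $n$, and in fact a copy of $\F_\infty$; for part~\ref{sga} one reduces to the case of the two smallest nontrivial factorizations and then embeds $\F_2$ into those.

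\textbf{Part (b).} Let $a,b$ be the free generators of $\F_2$, acting on $\Z$ by $a(k)=k+1$ is too crude; instead let $\F_2$ act on itself (or on the Cayley graph / the boundary) so that the ping-pong lemma applies. Concretely, I would take the elements $w_i \equiv a^i b a^{-i}$ for $i=1,\ldots,n$ and show they generate a free subgroup. The cleanest route is to let $\F_2$ act on its Cayley graph, or simply on the set $X$ of reduced words over $\{a^{\pm1},b^{\pm1}\}$, and set $X_i^{\pm}$ to be the set of (nonempty reduced) words whose first letter, after accounting for the conjugating prefix $a^i$, is $b^{\pm1}$; one checks the inclusions\eq{ppc} by a direct bookkeeping argument on reduced words. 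Since the $w_i$ have pairwise "disjoint" normal forms in this sense, Lemma~\ref{ppl} yields that $\langle w_1,\ldots,w_n\rangle\cong\F_n$. (One may instead cite that the commutator subgroup of $\F_2$ is free of infinite rank, but giving the explicit generators keeps the appendix self-contained.)

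\textbf{Part (a).} First reduce the number of cases. For $\Gamma=\Z_m\ast\cdots\ast\Z_m$ with $k$ factors and $k,m\geq 2$, $\Gamma\neq\Z_2\ast\Z_2$, there is always a subgroup of the form $\Z_2\ast\Z_3$ or $\Z_2\ast\Z_2\ast\Z_2$: if $m\geq 3$ then two of the $\Z_m$ factors contain $\Z_2\ast\Z_3$ (using a subgroup of order $2$ in one and order $3$ in the other, plus the fact that a subgroup of a free product generated by elements from distinct factors is their free product); if $m=2$ then $k\geq 3$, and three of the factors give $\Z_2\ast\Z_2\ast\Z_2$. So it suffices to embed $\F_2$ into each of these two groups. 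For $\Z_2\ast\Z_2\ast\Z_2=\langle s,t,u\mid s^2=t^2=u^2=e\rangle$, the elements $st$ and $su$ (or $st$ and $tu$) generate a free group of rank $2$, which one verifies via ping-pong: let $\Z_2\ast\Z_2\ast\Z_2$ act on the set of reduced words, and take $X_1^\pm$, $X_2^\pm$ according to which alternating pattern a reduced word begins or ends with. For $\Z_2\ast\Z_3=\langle s,t\mid s^2=t^3=e\rangle$, the commutator $[s,t]=sts^{-1}t^{-1}$ together with $sts^{-1}t$ (or the subgroup generated by $sts$ and $t$-conjugates) is free of rank $2$; again this is a standard ping-pong argument on reduced words in the free product, distinguishing the first and last syllables.

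\textbf{Main obstacle.} The conceptually trivial but technically fiddly part is the bookkeeping with reduced words in free products: verifying the inclusions\eq{ppc} requires carefully tracking how multiplication by a generator (or by $w_i$, resp.\ by $st$, $sts^{-1}t^{-1}$, etc.) affects the leading and trailing syllables of a reduced word, and making sure the chosen sets $X_i^\pm$ are genuinely disjoint and satisfy the hypotheses of Lemma~\ref{ppl}. I expect this to be the only real work; once the ping-pong sets are set up correctly, both parts follow immediately. An alternative that sidesteps some of this is to quote known results (e.g.\ that $\Z_2\ast\Z_3\cong PSL_2(\Z)$ contains $\F_2$ as a finite-index subgroup, cf.\ the principal congruence subgroup $\Gamma(2)$), but for uniformity with the rest of the appendix I would carry out the elementary ping-pong proof.
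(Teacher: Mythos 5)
Part~\ref{sgb} of your argument is fine: your conjugates $a^iba^{-i}$ and the paper's choice $a^ib^{-i}$ are interchangeable, and the ping-pong bookkeeping on reduced words is the same in both cases. The genuine gap is in part~\ref{sga}, in the case $m\geq 3$. Your reduction to $\Z_2\ast\Z_3$ requires one factor $\Z_m$ to contain a subgroup of order $2$ and another to contain a subgroup of order $3$; by Lagrange this forces $6\mid m$, so the reduction already fails for $m=3,4,5,7,\dots$. Nor can it be repaired by locating the subgroup elsewhere: in $\Z_5\ast\Z_5$, say, every torsion element is conjugate into one of the free factors and hence has order $1$ or $5$, so this group contains no copy of $\Z_2\ast\Z_3$ (nor of $\Z_2\ast\Z_2\ast\Z_2$) at all. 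The paper therefore does not pass through small torsion factors; it plays ping-pong directly in $\Z_m\ast\Z_m$ for every $m\geq 3$, taking $g_1=ab$ and $g_2=ba$ for the cyclic generators $a,b$ and using the four classes of reduced words beginning with the syllables $b^{-1}$, $a$, $a^{-1}$, $b$ as the sets in lemma~\ref{ppl}.

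There is also a concrete error in the explicit generators you propose for a free subgroup of $\Z_2\ast\Z_3=\langle s,t\mid s^2=t^3=e\rangle$: the pair $[s,t]=sts^{-1}t^{-1}$ and $sts^{-1}t$ does not generate a free group. Writing $u=stst^{2}$ and $v=stst$ (using $s^{-1}=s$ and $t^{-1}=t^{2}$), one computes $u^{-1}v=t^{-1}$, so the subgroup they generate contains the order-$3$ element $t$ and hence has torsion. (The commutator subgroup of $\Z_2\ast\Z_3$ is indeed free of rank $2$, but this pair does not exhibit it; your fallback via $PSL_2(\Z)\supseteq\Gamma(2)$ would work.) Your $m=2$, $k\geq 3$ case is correct---$\langle st,su\rangle$ is free of rank $2$, and the paper's $g_1=abc$, $g_2=acb$ is a different but equally valid choice---but as written your part~\ref{sga} covers only $m=2$ and, modulo the wrong generating pair, those $m$ divisible by $6$.
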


\begin{proof}
The proof of this lemma covers three different cases which work in very similar ways.
\begin{enumerate}
\item[\ref{sga}] We distinguish the case $m=2$ from the case $m\geq 3$, starting with the latter. It is sufficient to consider the case $k=2$, so that $\Gamma=\Z_m\ast\Z_m$. Writing $a$ and $b$ for the cyclic generators of $\Gamma$, we consider the four classes $X_1^-,X_1^+,X_2^-,X_2^+$ of elements of $\Gamma$ which are defined as containing all those elements which can be written as reduced words of the respective forms
\beqn
b^{-1}x,\quad ax,\quad a^{-1}x,\quad bx,
\eeqn
where $x=e$ is allowed. Then with the definition
\beq
g_1\equiv ab,\qquad g_2\equiv ba,
\eeq
the elements $g_1$ and $g_2$ generate a subgroup which acts on $\Gamma$ by left multiplication. Since the premises of lemma~\ref{ppl} are satisfied, we conclude that the subgroup generated by $g_1$ and $g_2$ is a free subgroup $\F_2\subseteq\Gamma$.

Now for the case $m=2$. It is sufficient to consider $k=3$, so that $\Gamma=\Z_2\ast\Z_2\ast\Z_2$. Then elements of $\Gamma$ are words with letters from the alphabet $\{a,b,c\}$, and a word is reduced if and only if it does not contain the same letter in two neighboring positions. Similar to the previous case, we take $X_1^-,X_1^+,X_2^-,X_2^+$ to be defined to contain, respectively, all group elements given by reduced words of the forms
\beq
cbx,\quad abx,\quad bcx,\quad acx.
\eeq
The group elements
\beq
g_1\equiv abc,\qquad g_2\equiv acb
\eeq
generate a subgroup which acts on $\Gamma$ by left multiplication. Again, the hypotheses of lemma~\ref{ppl} are satisfied, and we conclude that this subgroup is a free subgroup $\F_2\subseteq\Gamma$. 

\item[\ref{sgb}] Taking $a$ and $b$ to be the generators of the group, we can similarly consider $X_1^-,\ldots X_n^-,X_1^+,\ldots,X_n^+$ to be the classes of elements given by reduced words of the forms
$$
b^{1} x,\quad\ldots,\quad b^{n} x,\quad a^1 x,\quad\ldots,\quad a^n x,
$$
respectively. Then lemma~\ref{ppl} applies to the subgroup generated by the elements
\beq
g_i\equiv a^ib^{-i},\quad i=1,\ldots,n ,
\eeq
and we conclude that this subgroup is a free subgroup $\F_n\subseteq\F_2$.
\end{enumerate}
\end{proof}

Actually, taking $n\ra\infty$ in the proof of the second part shows that $\F_2$ contains $\F_\infty$, the free group on a countable number of generators. And then by the first part, so does any $\Gamma$ of the form\eq{Gamma}.

We now return to group $C^*$-algebras. The next lemma states essentially that $C^*(\Gamma)$ has the \emph{lifting property}~\cite[3.9]{Oza}, although we will not have any need for using this terminology. While this result also follows from known theorems~\cite[3.20]{Oza}, we would like to offer an independent proof which might provide some additional insight into this particular case.

We take $p:\Z\ra\Z_m$ to be the canonical projection, so that its $k$-fold free product $p^{*k}:\F_k\ra\Gamma$ is a surjective group homomorphism mapping the canonical generators of $\F_k$ to the canonical generators of $\Gamma$. 

\begin{lem}
There is a ucp map $\Psi:C^*(\Gamma)\ra C^*(\F_k)$ such that the diagram
\beqn
\label{desdia}
\xymatrix{ & C^*(\F_k)\ar@{->>}[d]^{C^*(p^{*k})} \\
C^*(\Gamma)\ar@{-->}[ur]^{\Psi}\ar@{=}[r] & C^*(\Gamma) }
\eeqn
commutes.
\end{lem}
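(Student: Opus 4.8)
The statement asks for a ucp \emph{section} of the canonical surjection $C^*(p^{*k})\colon C^*(\F_k)\twoheadrightarrow C^*(\Gamma)$ (this is what ``$C^*(\Gamma)$ has the lifting property'' means in this context). The plan is to lift the problem to a single free factor, solve it there by a partition-of-unity argument, and then reassemble using the behaviour of ucp maps under free products.

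\emph{Step 1: the case $k=1$.} I would first construct a ucp map $\psi_0\colon C^*(\Z_m)\to C^*(\Z)$ with $C^*(p)\circ\psi_0=\id_{C^*(\Z_m)}$, where $p\colon\Z\to\Z_m$ is the projection. By the discrete Fourier transform of remark~\ref{fourier} one identifies $C^*(\Z_m)\cong\C^m$, viewed as the algebra of functions on the set of $m$-th roots of unity $\{\omega^a:a\in[m]\}$ with $\omega=e^{2\pi i/m}$, and likewise $C^*(\Z)\cong\mathscr{C}(\mathbb{T})$; under these identifications $C^*(p)$ becomes restriction of functions from $\mathbb{T}$ to the finite set of roots. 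Now choose continuous functions $f_1,\dots,f_m\colon\mathbb{T}\to[0,1]$ with $f_a(\omega^b)=\delta_{ab}$ and $\sum_a f_a\equiv 1$ (e.g.\ the piecewise-linear ``tent'' functions peaked at the roots of unity). Setting $\psi_0(e_a)\equiv f_a$ defines a unital positive linear map $\C^m\to\mathscr{C}(\mathbb{T})$, which is automatically ucp by lemma~\ref{commucp} because $\C^m$ is commutative; and by construction $C^*(p)\circ\psi_0=\id_{C^*(\Z_m)}$.

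\emph{Step 2: free products.} By remark~\ref{fourier}, $C^*(\Gamma)\cong C^*(\Z_m)\ast_1\cdots\ast_1 C^*(\Z_m)$ ($k$ factors) and $C^*(\F_k)\cong C^*(\Z)\ast_1\cdots\ast_1 C^*(\Z)$. For each $i\in[k]$ let $\psi_0^{(i)}\colon C^*(\Z_m)\to C^*(\F_k)$ be $\psi_0$ followed by the inclusion of the $i$-th free factor $C^*(\Z)\hookrightarrow C^*(\F_k)$; this is ucp. Applying the free-product extension of ucp maps --- proposition~\ref{extfreeprod} in the stronger form due to Boca~\cite{Boca}, which permits an arbitrary $C^*$-algebra (here $C^*(\F_k)$) as common codomain --- recursively over the $k$ factors produces a single ucp map
\beq
\Psi\colon C^*(\Gamma)\cong C^*(\Z_m)\ast_1\cdots\ast_1 C^*(\Z_m)\lra C^*(\F_k)
\eeq
which restricts to $\psi_0^{(i)}$ on the $i$-th factor.

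\emph{Step 3: commutativity of the diagram.} The composite $\Phi\equiv C^*(p^{*k})\circ\Psi$ is a ucp map $C^*(\Gamma)\to C^*(\Gamma)$. Since $p^{*k}$ is the free product of the maps $p$, functoriality of $C^*(-)$ gives that $C^*(p^{*k})$ restricts on the $i$-th copy of $C^*(\Z)$ to $C^*(p)$, whence by Step~1 the map $\Phi$ restricts to the identity on each of the $k$ canonical copies of $C^*(\Z_m)$ inside $C^*(\Gamma)$. In particular $\Phi$ fixes every cyclic generator $u_i$, and as $u_i$ is unitary with $\Phi(u_i)=u_i$ unitary, the identities $\Phi(u_i^*u_i)=\Phi(u_i)^*\Phi(u_i)$ and $\Phi(u_iu_i^*)=\Phi(u_i)\Phi(u_i)^*$ hold trivially, so each $u_i$ lies in the multiplicative domain of $\Phi$. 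Since the $u_i$ generate $C^*(\Gamma)$ as a $C^*$-algebra, proposition~\ref{multdomain} forces $\Phi$ to be a $*$-homomorphism; being the identity on a generating set, $\Phi=\id_{C^*(\Gamma)}$, which is the assertion. The one point requiring care --- and the reason the Boca form of the free-product extension is invoked rather than the $\B(\H)$-valued statement of proposition~\ref{extfreeprod} --- is that $\Psi$ must genuinely land in $C^*(\F_k)$, not merely in some ambient $\B(\H)\supseteq C^*(\F_k)$, for the composition $C^*(p^{*k})\circ\Psi$ to even be defined; this is the main (mild) obstacle, everything else being routine.
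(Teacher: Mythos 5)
Your proof is correct, but it takes a genuinely different route from the paper's. You build the single-factor section $\psi_0\colon\C^m\to\mathscr{C}(\mathbb{T})$ directly as a partition of unity subordinate to the $m$-th roots of unity (ucp by lemma~\ref{commucp}), and then glue the $k$ copies using Boca's free-product extension theorem for ucp maps with a general $C^*$-algebra as codomain; finally you verify the section property via the multiplicative-domain argument of proposition~\ref{multdomain}, which is needed precisely because a ucp map on a unital free product is not determined by its restrictions to the factors. The paper instead constructs a genuine $*$-homomorphism $\gamma\colon\C^m\to M_m(\mathscr{C}(S^1))$ from a loop in the flag variety $U(m)/(S^1)^{\times m}$ interpolating between cyclic shifts of the standard basis, takes its $k$-fold free product (automatic by the universal property of $\ast_1$, no ucp gluing theorem required), and obtains $\Psi$ as the compression to the $(1,1)$-matrix entry; the section property is then checked by a diagram chase using that the reference lift $s$ lands in diagonal matrices. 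The trade-off: your argument is more elementary at the level of the single factor and correctly isolates the real obstacle (landing in $C^*(\F_k)$ rather than some ambient $\B(\H)$, which is why proposition~\ref{extfreeprod} as stated does not suffice), but it outsources the gluing to Boca's theorem, an external result the paper cites but does not prove; the paper's matrix-dilation construction is self-contained modulo standard universal properties and can be read as a hands-on proof of the relevant special case of Boca's theorem. Since the paper explicitly remarks after proposition~\ref{extfreeprod} that Boca's result could be used in its place, your reliance on it is legitimate, and your multiplicative-domain verification in Step 3 is sound: $\Phi$ fixes each unitary generator $u_x$, hence each lies in the multiplicative domain, which is a $C^*$-subalgebra containing a generating set, so $\Phi$ is a $*$-homomorphism fixing generators and therefore the identity.
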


\begin{proof}
We work in the Fourier transformed picture (remark~\ref{fourier} of the main text) by using the isomorphisms
\begin{align*}
C^*(\F_k) & \cong \mathscr{C}(S^1)\ast_1\ldots\ast_1\mathscr{C}(S^1)\\
C^*(\Gamma) & \cong \C^m\ast_1\ldots\ast_1\C^m.
\end{align*}
Then the canonical surjection
\beq
C^*(p^{*k}):C^*(\F_k)\lra C^*(\Gamma)
\eeq
gets identified with the $k$-fold unital free product of the projection map
\beqn
\label{projSm}
C^*(p):\mathscr{C}(S^1)\lra\C^m,\qquad f\mapsto\left(f(e^{2\pi i\frac{1}{m}}),\ldots,f\left(e^{2\pi i\frac{m}{m}}\right)\right).
\eeqn
We will define $\Psi$ in terms of a suitable $*$-homomorphism $C^*(\Gamma)\ra M_m(C^*(\F_k))$ of which $\Psi$ will be the compression to the matrix entry $1,1$. Let us start with the case $k=1$ (albeit this case is trivial, since the algebras are commutative). Note first that specifying a unital $*$-homomorphism $\C^m\ra M_m(\C)$ is the same thing as mapping the standard basis vectors $e_i\in\C^m$ to projectors of rank $1$, such that these projectors are mutually orthogonal. Hence, the space of all these $*$-homomorphisms can be identified with the space of ordered orthonormal bases modulo phases, which is the symmetric space $U(m)/(S^1)^{\times m}$, also known as the \emph{flag variety}. In particular, this space is connected.

Furthermore, a $*$-homomorpism $\C^m\ra M_m(\mathscr{C}(S^1))\cong \mathscr{C}(S^1,M_m(\C))$ is then determined by a closed loop in the flag variety $U(m)/(S^1)^{\times m}$. By said connectedness, there is such a loop $\gamma$ having the property that at parameter $e^{2\pi i\frac{j}{m}}\in S^1$, the coresponding ordered orthonormal basis is the standard basis shifted by a cyclic permutation, $e_i\mapsto e_{(i-j+1)\!\!\mod m}$. Spelled out more explicitly, this means commutativity of the diagram
\beqn
\begin{split}
\label{lift}
\xymatrix{ & M_m(\mathscr{C}(S^1))\ar@{->>}[dd]^{M_m(C^*(p))} \\
 \C^m\ar@{^{(}->}[dr]_s\ar[ur]^(.45){\gamma} \\
 & M_m(\C^m) }
\end{split}
\eeqn
where the lower diagonal $*$-homomorphism is $s:e_i\mapsto\sum_j e_{jj}\otimes e_{(i-j+1)\!\!\mod m}$; in particular, the image of $s$ lies inside the subalgebra of diagonal matrices. The compression of $s$ to its matrix $1,1$-entry is $\mathrm{id}_{\C^m}$, and the compression of $\gamma$ to the matrix $1,1$-entry is a ucp map $\C^m\ra\mathscr{C}(S^1)$ which is a section for the projection map\eq{projSm}. 

We now proceed to the case of general $k$. Taking the $k$-fold unital free product of\eq{lift}, we obtain the left triangle of the diagram
\beq
\xymatrix@=10pt{ & M_m(\mathscr{C}(S^1))\ast_1\ldots\ast_1 M_m(\mathscr{C}(S^1)) \ar@{->>}[dd]^{M_m(C^*(p))^{*k}}\ar@{->>}[rr] && M_m(\mathscr{C}(S^1)\ast_1\ldots\ast_1\mathscr{C}(S^1))\ar@{->>}[dd]^{M_m(C^*(p^{*k}))} \ar@{->>}[rr]^(.53){1,1} && \mathscr{C}(S^1)\ast_1\ldots\ast_1\mathscr{C}(S^1) \ar@{->>}[dd]^{C^*(p^{*k})} \\ 
 \C^m\ast_1\ldots\ast_1\C^m\ar@{^{(}->}[dr]_(.45){s^{*k}}\ar[ur]^(.4){\gamma^{*k}} \\
 & M_m(\C^m)\ast_1\ldots\ast_1 M_m(\C^m) \ar@{->>}[rr] && M_m(\C^m\ast_1\ldots\ast_1\C^m) \ar@{->>}[rr]^{1,1} && \C^m\ast_1\ldots\ast_1\C^m }
\eeq
Here, the middle horizontal $*$-homomorphisms are the natural surjections projecting the unital free product to the free product with amalgamation over $M_m(\C)$, so that the middle square commutes due to the universal property of free products, and ``$1,1$'' stands for compression to the $1,1$-entry.

We now start from the $\C^m\ast_1\ldots\ast_1\C^m$ on the left and claim that the composition of $s^{*k}$ with the lower two arrows is the identity; this implies that the diagram contains\eq{desdia} as a subdiagram if one takes $\Psi$ to be the composition of $\gamma^{*k}$ with the upper two horizontal arrows, which proves the assertion.

In order to see that the composition of $s^{*k}$ with the lower horizontal arrows is the identity, we note first that its composition with the first horizontal arrow has its image inside the diagonal matrices, so that taking the compression to the $1,1$-entry gives a $*$-homomorphism. Since the restriction of this $*$-homomorphism to each factor $\C^m$ coincides with the natural inclusion, this $*$-homomorphism actually is the identity of $\C^m\ast_1\ldots\ast_1\C^m$, as has been claimed.
\end{proof}

\begin{thm}
\label{FGamma}
The following conjectural statements are equivalent:
\begin{enumerate}
\item\label{Fa} QWEP in the form\eq{qwep},
\beq
C^*(\F_2)\otimes_{\min}C^*(\F_2)=C^*(\F_2)\otimes_{\max}C^*(\F_2).
\eeq
\item\label{Fb} The equation
\beq
C^*(\Gamma)\otimes_{\min}C^*(\Gamma)=C^*(\Gamma)\otimes_{\max}C^*(\Gamma)
\eeq
holds for all $\Gamma$ of the form\eq{Gamma}.
\item\label{Fc} The equation
\beq
C^*(\Gamma)\otimes_{\min}C^*(\Gamma)=C^*(\Gamma)\otimes_{\max}C^*(\Gamma)
\eeq
holds for some $\Gamma$ of the form\eq{Gamma}.
\end{enumerate}
\end{thm}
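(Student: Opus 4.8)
The plan is to establish the cycle of implications \ref{Fb} $\Rightarrow$ \ref{Fc} $\Rightarrow$ \ref{Fa} $\Rightarrow$ \ref{Fb}. The implication \ref{Fb} $\Rightarrow$ \ref{Fc} is trivial, since by Lemma \ref{ubgroups}\ref{sga} there exists at least one group $\Gamma$ of the form \eqref{Gamma} (indeed infinitely many). So the real content is in the other two implications, and the key technical input for both is the corollary just after Proposition \ref{groupinc}, which says that if $C^*(G_1)\otimes_{\min}C^*(G_2) = C^*(G_1)\otimes_{\max}C^*(G_2)$ then the same equality holds with each $G_i$ replaced by an arbitrary subgroup $H_i \subseteq G_i$.

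For \ref{Fc} $\Rightarrow$ \ref{Fa}: assume $C^*(\Gamma)\otimes_{\min}C^*(\Gamma) = C^*(\Gamma)\otimes_{\max}C^*(\Gamma)$ for some $\Gamma$ as in \eqref{Gamma}. By Lemma \ref{ubgroups}\ref{sga}, $\Gamma$ contains a copy of $\F_2$. Applying the subgroup corollary with $G_1 = G_2 = \Gamma$ and $H_1 = H_2 = \F_2 \subseteq \Gamma$ yields $C^*(\F_2)\otimes_{\min}C^*(\F_2) = C^*(\F_2)\otimes_{\max}C^*(\F_2)$, which is \ref{Fa}.

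For \ref{Fa} $\Rightarrow$ \ref{Fb}: assume QWEP, i.e.\ the equality for $\F_2$. By Lemma \ref{ubgroups}\ref{sgb} (and the remark that $\F_2$ contains $\F_n$ for every $n$, hence $\F_\infty$), the subgroup corollary immediately upgrades this to $C^*(\F_k)\otimes_{\min}C^*(\F_k) = C^*(\F_k)\otimes_{\max}C^*(\F_k)$ for every $k$. It remains to pass from the free group $\F_k$ to the free product of cyclic groups $\Gamma = \Z_m \ast \cdots \ast \Z_m$. This is exactly where the lifting lemma preceding this theorem (the existence of the ucp splitting $\Psi : C^*(\Gamma) \to C^*(\F_k)$ with $C^*(p^{*k}) \circ \Psi = \id$) does its work: the surjection $C^*(p^{*k}) : C^*(\F_k) \twoheadrightarrow C^*(\Gamma)$ is split by a ucp map. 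The idea is then a standard argument: form the diagram
\beq
\xymatrix{
C^*(\Gamma)\otimes_{\max}C^*(\Gamma) \ar@{->>}[r] \ar[d]_{\Psi\otimes_{\max}\Psi} & C^*(\Gamma)\otimes_{\min}C^*(\Gamma) \\
C^*(\F_k)\otimes_{\max}C^*(\F_k) \ar@{=}[r] & C^*(\F_k)\otimes_{\min}C^*(\F_k) \ar[u]_{C^*(p^{*k})\otimes_{\min}C^*(p^{*k})}
}
\eeq
where the top map is the canonical comparison surjection \eqref{natcomp2}, the left vertical map is the maximal tensor product of the ucp maps $\Psi$ (well-defined and ucp by Corollary \ref{maxproducp}, noting the ranges commute), the bottom equality is the case just proven, and the right vertical map is the minimal tensor product of the $*$-homomorphisms $C^*(p^{*k})$. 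One checks this diagram commutes on the dense algebraic tensor product $\C[\Gamma]\otimes_{\mathrm{alg}}\C[\Gamma]$, because $C^*(p^{*k})\circ\Psi = \id_{C^*(\Gamma)}$ on each tensor factor. Composing around the square exhibits a ucp map $\Phi : C^*(\Gamma)\otimes_{\min}C^*(\Gamma) \to C^*(\Gamma)\otimes_{\max}C^*(\Gamma)$ which is a one-sided inverse to the comparison map \eqref{natcomp2}, so \eqref{natcomp2} is injective, hence an isomorphism; this gives \ref{Fb}.

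The step I expect to be the main obstacle is the last one, verifying that the ucp map $\Phi$ obtained by chasing the diagram is actually the identity on generators and hence a genuine two-sided inverse to \eqref{natcomp2}. The subtlety is that $\Phi$ is only ucp a priori, not multiplicative, so one cannot argue purely formally; one must use that the comparison map \eqref{natcomp2} restricted to the dense $*$-subalgebra $\C[\Gamma]\otimes_{\mathrm{alg}}\C[\Gamma]$ is the identity, that $\Phi$ agrees with a $*$-homomorphism there (or invoke the multiplicative-domain Proposition \ref{multdomain} via the unitaries of remark \ref{fourier}, as in the proof of Proposition \ref{pisierobs}), and then conclude by density and continuity that \eqref{natcomp2} has trivial kernel. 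Everything else — the triviality of \ref{Fb} $\Rightarrow$ \ref{Fc}, and the two applications of the subgroup corollary — is routine once Lemma \ref{ubgroups} and the lifting lemma are in hand.
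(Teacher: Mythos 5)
Your proposal is correct and follows essentially the same route as the paper: the subgroup corollary gives \ref{Fc}$\Rightarrow$\ref{Fa} and the upgrade from $\F_2$ to $\F_k$, and the lifting lemma's ucp section $\Psi$ closes the square for \ref{Fa}$\Rightarrow$\ref{Fb}; the paper phrases that last step as injectivity of $\Psi\otimes_{\max}\Psi$ (being a section of $C^*(p^{*k})\otimes_{\max}C^*(p^{*k})$) rather than as a one-sided ucp inverse of the comparison map, but this is the same argument. Two minor points: your displayed square has the arrows misplaced (to build $\Phi$ you need $\Psi\otimes_{\min}\Psi$ going down on the right and $C^*(p^{*k})\otimes_{\max}C^*(p^{*k})$ going back up on the left), and the multiplicative-domain worry is unnecessary, since $\Phi\circ q=\mathrm{id}$ on the dense algebraic tensor product together with contractivity of ucp maps already forces the comparison map $q$ to be injective.
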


\begin{proof}
\begin{enumerate}
\setlength\itemindent{20pt}
\item[\ref{Fa}$\Rightarrow$\ref{Fb}:] By lemma~\ref{ubgroups}\ref{sgb} and corollary~\ref{subgroupiso}, the assumption~\ref{Fa} implies that the equality
\beq
C^*(\F_k)\otimes_{\min}C^*(\F_k)=C^*(\F_k)\otimes_{\max}C^*(\F_k).
\eeq
holds as well. Now consider the diagram, using the same notation as in the preceding lemma,
\beq
\xymatrix{ C^*(\Gamma)\otimes_{\max}C^*(\Gamma) \ar@{->>}[rr] \ar[dd]_{\Psi\otimes_{\max}\Psi} && C^*(\Gamma)\otimes_{\min} C^*(\Gamma) \ar[dd]^{\Psi\otimes_{\min}\Psi} \\\\
C^*(\F_k)\otimes_{\max}C^*(\F_k) \ar@{=}[rr] && C^*(\F_k)\otimes_{\min}C^*(\F_k) }
\eeq
Then $\Psi\otimes_{\max}\Psi$ is injective, since it is a section of
\beq
C^*(p)\otimes_{\max}C^*(p):C^*(\F_k)\otimes_{\max}C^*(\F_k)\lra C^*(\Gamma)\otimes_{\max}C^*(\Gamma).
\eeq
This implies injectivity of the upper horizontal arrow by commutativity of the diagram.
\item[\ref{Fb}$\Rightarrow$\ref{Fc}:] Trivial.
\item[\ref{Fc}$\Rightarrow$\ref{Fa}:] In view of lemma~\ref{ubgroups}\ref{sga}, this is an application of corollary~\ref{subgroupiso}.
\end{enumerate}
\end{proof}

\newpage

\bibliographystyle{halpha}
\bibliography{tsirelsons_problem}

\end{document}